\documentclass[12pt,onecolumn,twoside]{IEEEtran}

\usepackage{calc}

\usepackage{multirow}
\usepackage{cite}
\usepackage{graphicx,subfigure}
\usepackage{psfrag}
\usepackage{amsmath,amssymb,amsthm}
\usepackage{color}
\usepackage{tikz} 
\usepackage{bm}
\usepackage{bbm}
\usepackage{verbatim}
\usepackage[T1]{fontenc}

\usepackage{cases}
\usepackage[noend]{algpseudocode}

\usepackage{xcolor}
\usepackage[most]{tcolorbox}
\usepackage{bm}

\usepackage{tabu}
\usepackage{mathtools}
\usepackage{xifthen}
\usepackage{booktabs}
\usepackage{algorithm}
\usepackage{algpseudocode}
\usepackage{arydshln}

\interdisplaylinepenalty=2500

\makeatletter
\def\BState{\State\hskip-\ALG@thistlm}
\makeatother







\newtheorem{theorem}{Theorem}

\newtheorem{lemma}{Lemma}
\newtheorem{definition}{Definition}




\newcommand{\rv}{\mathbf{r}}








\newcommand{\bit}{\begin{itemize}}
\newcommand{\eit}{\end{itemize}}

\newcommand{\bc}{\begin{center}}
\newcommand{\ec}{\end{center}}

\newcommand{\ba}{\begin{array}}
\newcommand{\ea}{\end{array}}

\newcommand{\beq}{\begin{equation}}
\newcommand{\eeq}{\end{equation}}

\newcommand{\beqn}{\begin{equation*}}
\newcommand{\eeqn}{\end{equation*}}

\newcommand{\bean}{\begin{eqnarray*}}
\newcommand{\eean}{\end{eqnarray*}}
\newcommand{\bea}{\begin{eqnarray}}
\newcommand{\eea}{\end{eqnarray}}


\def\av{\boldsymbol{a}}

\def\cv{\boldsymbol{c}}

\def\rv{\boldsymbol{r}}

\def\vv{\boldsymbol{v}}
\def\wv{\boldsymbol{w}}

\def\yv{\boldsymbol{y}}


\newcommand{\Ac}{{\mathcal A}}
\newcommand{\Bc}{{\mathcal B}}
\newcommand{\Cc}{{\mathcal C}}
\newcommand{\Dc}{{\mathcal D}}

\newcommand{\Fc}{{\mathcal F}}
\newcommand{\Gc}{{\mathcal G}}

\newcommand{\Ic}{{\mathcal I}}

\newcommand{\Lc}{{\mathcal L}}
\newcommand{\Nc}{{\mathcal N}}
\newcommand{\Oc}{{\mathcal O}}
\newcommand{\Pc}{{\mathcal P}}
\newcommand{\Qc}{{\mathcal Q}}
\newcommand{\Rc}{{\mathcal R}}
\newcommand{\Sc}{{\mathcal S}}
\newcommand{\Tc}{{\mathcal T}}

\newcommand{\Wc}{{\mathcal W}}
\newcommand{\Vc}{{\mathcal V}}


\newtheorem{remark}{Remark}




\algnewcommand{\IfThenElse}[3]{  \State \algorithmicif\ #1\ \algorithmicthen\ #2\ \algorithmicelse\ #3}

\newcommand{\Me}{\wv}

\newcommand{\Tx}{T}

\newcommand{\TxAccept}{\Tc_{\mathrm{TxAccept}}}

\newcommand{\TxHeightAcceptLock}{\Tc_{\mathrm{HeightAcceptLock}}}
\newcommand{\TxHeightAcceptTemp}{\Tc_{\mathrm{HeightAccept}}}

\newcommand{\MaxNumTx}{m_{\mathrm{max}}}

\newcommand{\eon}{e}

\newcommand{\eonstar}{e^{\star}}

\newcommand{\AW}{\mathrm{AW}}

\newcommand{\tx}{{tx}}     
\newcommand{\SigAccept}{\Tc_{\mathrm{SigAccept}}}

\newcommand{\idtx}{{id}\_{tx}}

\newcommand{\idtxinitial}{{id}\_{tx}\_{ini}}  
\newcommand{\siginitial}{{sig}\_{ini}}  
     
\newcommand{\sigoptuple}{{sig}\_{op}\_{tuple}}   
   
\newcommand{\sigoplist}{{sig}\_{op}\_{list}}

\newcommand{\sigop}{{sig}\_{op}}     
\newcommand{\sigvp}{{sig}\_{vp}} 
       
\newcommand{\size}{\mathrm{size}}

\newcommand{\Break}{\textbf{break}}

\newcommand{\append}{.\mathrm{append}}

\newcommand{\NumProposed}{m} 
\newcommand{\numproposedstar}{\NumProposed^{\star}} 
\newcommand{\numproposeddiamond}{\NumProposed^{\diamond}}

\newcommand{\rtuple}{)} 
\newcommand{\ltuple}{(}

\newcommand{\inset}{\in}

\newcommand{\AND}{\land}    
\newcommand{\OR}{\lor}  
\newcommand{\idoptuple}{{id}\_{op}\_{tuple}}   
\newcommand{\idoptupleprime}{{id}\_{op}\_{tuple}'}

\newcommand{\range}{\mathrm{range}}   
\newcommand{\inamount}{{in}\_{amount}}

\newcommand{\idop}{{id}\_{op}}

\newcommand{\Global}{\mathrm{global}}    
\newcommand{\tuple}{\mathrm{tuple}}

\newcommand{\pop}{.\mathrm{pop}}

\newcommand{\import}{\mathrm{import}}

\newcommand{\eqlog}{=}

\newcommand{\op}{\mathrm{op}}

\newcommand{\numsigop}{num\_{op}}

\newcommand{\Ocior}{\mathrm{Ocior}}

\newcommand{\height}{h} 
\newcommand{\heightstar}{h^\star} 
 
\newcommand{\heightdiamond}{h^{\diamond}}

\newcommand{\VP}{\mathrm{VP}}

\newcommand{\heightdict}{\mathcal{H}_\mathrm{Height}}   
\newcommand{\heightdictLock}{\mathcal{H}_{\mathrm{HeightLock}}}

\newcommand{\opindextuple}{{op}\_{index}\_{tuple}}   
\newcommand{\opindextupleprime}{{op}\_{index}\_{tuple}'}

\newcommand{\opindex}{{op}\_{index}}     
\newcommand{\opreceiver}{{op}\_{receiver}}     
\newcommand{\sender}{{sender}}   
\newcommand{\senderprime}{{sender}'}

\newcommand{\Txochildren}{\Tc_\mathrm{OChildren}}

\newcommand{\checkindicator}{{check}\_{indicator}}

\newcommand{\idtxconflict}{{id}\_{tx}\_{conflict}}

\newcommand{\eondiamond}{\eon^{\diamond}}

\newcommand{\txinitial}{{tx}\_{ini}}

\newcommand{\contenthash}{content\_h}

\newcommand{\votelts}{votel}

\newcommand{\ltsnodeindex}{j'}   
         
\newcommand{\LTSIndexBook}{\Bc_\mathrm{LTSBook}}

\newcommand{\LTSGroupAgg}{\Ac_\mathrm{lts}}  
  
\newcommand{\TSGroupAgg}{\Ac_\mathrm{ts}}

\newcommand{\txconflict}{{tx}\_{conflict}}

\newcommand{\MVBAInputMsg}{\wv}

\newcommand{\EncodedSymbol}{y}

\newcommand{\vectorcommitment}{C}                    
\newcommand{\proofpositionvc}{\omega}                    
\newcommand{\VCOpen}{\mathsf{VC.Open}}                  
\newcommand{\VCCom}{\mathsf{VC.Com}}  
\newcommand{\VCVerify}{\mathsf{VC.Verify}}

\newcommand{\VOTE}{\mathsf{VOTE}}

\newcommand{\CodedSymbols}{\mathbb{Y}_\mathrm{Symbols}}

\newcommand{\thisnodeindex}{i}         
\newcommand{\thisnodeindexstar}{\thisnodeindex^{\triangledown}}    
\newcommand{\thisnodeindexshuffle}{\thisnodeindex^{\triangledown}}     
\newcommand{\jshuffle}{j^{\triangledown}}                                                                                                                                       
                           
\newcommand{\send}{\textbf{send}}

\newcommand{\IDMVBA}{\mathrm{ID}}

\newcommand{\Output}{\textbf{output}}

\newcommand{\wait}{\textbf{wait}}                             
                        
\newcommand{\ECC}{\mathrm{ECC}}  
\newcommand{\EC}{\mathrm{EC}}  
\newcommand{\Alphabet}{\Bc}      
        
\newcommand{\MVBAOutputMsg}{\hat{\wv}}

\newcommand{\SYMBOL}{\text{``}\mathrm{SYMBOL}\text{''}}

\newcommand{\defaultvalue}{\bot}

\newcommand{\OEC}{\mathrm{OEC}}

\newcommand{\PROPOSE}{\mathsf{PROP}}

\newcommand{\Pass}{\textbf{pass}}

\newcommand{\COOL}{\mathrm{COOL}}

\newcommand{\DM}{\mathrm{DM}}

\newcommand{\networksizen}{n}                                           
\newcommand{\kencode}{k^\diamond}                                            
\newcommand{\networkfaultsizet}{t}

\newcommand{\ECCcodedsymbol}{y}

\newcommand{\ProtocolID}{\mathrm{ID}}

\newcommand{\HMDM}{\mathrm{HMDM}}          
           
\newcommand{\EONVOTE}{\mathsf{EPOCH}}

\newcommand{\nodeindexj}{j}            
\newcommand{\DoneDict}{\Dc}

\newcommand{\ABAOneSet}{\Ac_{\mathrm{ones}}}

\newcommand{\deliver}{\textbf{deliver}}                     
              
\newcommand{\alphabetsize}{q}

\newcommand{\APVA}{\mathrm{APVA}}                  
\newcommand{\Mc}{{\mathcal M}}

\newcommand{\contentop}{{content}\_{op}}     
\newcommand{\contentvp}{{content}\_{vp}} 
 
 \newcommand{\LTS}{\mathrm{LTS}}           
\newcommand{\Tproposal}{\Pc_\mathrm{Proposal}}  
\newcommand{\TMyproposal}{\Pc_\mathrm{MyProposal}}  
\newcommand{\TproposalPending}{\Pc_\mathrm{ProposalPending}}  
 
\newcommand{\heightprime}{h'}

\newcommand{\txdiamond}{{tx}^\diamond}      
\newcommand{\sigvpdiamond}{{sig}\_{vp}^\diamond}

\newcommand{\idtxdiamond}{{id}\_{tx}^\diamond}  
\newcommand{\sigdiamond}{sig^\diamond}

\newcommand{\WeightTxTwo}{\Wc_\mathrm{TxWeight2}}    
\newcommand{\WeightTxThree}{\Wc_\mathrm{TxWeight3}}

\newcommand{\idtxprime}{{id}\_{tx}'}   
\newcommand{\sigprime}{sig'}  
 
\newcommand{\idtxprimeprime}{{id}\_{tx}''}

\newcommand{\numproposedprime}{\NumProposed'} 
\newcommand{\numproposedprimeprime}{\NumProposed''} 
\newcommand{\txop}{{tx}\_{op}} 
\newcommand{\CONFLICT}{\mathsf{CONF}}  
   
\newcommand{\eonprime}{e'}

\newcommand{\contentinitial}{{content}\_{ini}}  
\newcommand{\OP}{\mathrm{OP}}

\newcommand{\vote}{vote}  
\newcommand{\contenthashinitia}{{content}\_{hash}\_{ini}}   
   
\newcommand{\EONCONFIRM}{\mathsf{EOK}}

\newcommand{\ltsgroupprime}{b'} 
\newcommand{\TS}{\mathrm{TS}}      
\newcommand{\ADKGCertificate}{\Cc_{\mathrm{adkg}}}                                                                           
   
\newcommand{\SEED}{\mathsf{SEED}}    
\newcommand{\SeedRecord}{\Sc_\mathrm{Seed}} 
\newcommand{\seed}{seed}

\newcommand{\SEEDVOTE}{\mathsf{SVOTE}}    
\newcommand{\SeedVoteRecord}{\Vc_{\mathrm{Seed}}}

\newcommand{\TnewIDSet}{\Tc_\mathrm{NewIDSet}} 
\newcommand{\TnewTxDictionary}{\Tc_\mathrm{NewTxDic}}  
\newcommand{\TnewselfQueue}{\Tc_{\mathrm{NewSelfIDQue}}} 
\newcommand{\contentoplist}{{content}\_{op}\_{list}}     
\newcommand{\contentoptuple}{content\_op\_ tuple}     
     
\newcommand{\opamount}{{op}\_{amount}}     
\newcommand{\feeamount}{{fee}}     
\newcommand{\outamount}{{out}\_{amount}}

 \newcommand{\txprime}{\mathrm{tx}'}     
       
\newcommand{\NumProposedSeed}{m_{\mathrm{seed}}}

\newcommand{\NumHeightMulticast}{h_\mathrm{dm}}

\newcommand{\IDProtocol}{\mathrm{ID}}

\newcommand{\VE}{\mathsf{VE}}       
\newcommand{\Dealer}{D}         
\newcommand{\secret}{s}         
\newcommand{\negligible}{\mathsf{negl}}          
\newcommand{\PPT}{\mathrm{PPT}}

\newcommand{\SPCSetup}{\mathsf{SHPC.Setup}}                              
\newcommand{\pp}{pp}            
\newcommand{\Group}{\mathbb{G}}

\newcommand{\SPCCommit}{\mathsf{SHPC.Commit}}                                                                                               
\newcommand{\CommitmentVector}{\vv}          
\newcommand{\CommitmentSymbol}{v}

\newcommand{\SPCOpen}{\mathsf{SHPC.Open}}

\newcommand{\PCDegCheck}{\mathsf{PC.DegCheck}}          
\newcommand{\SPCDegCheck}{\mathsf{SHPC.DegCheck}}                                                                                            
     
\newcommand{\PKIVerify}{\mathsf{PKI.Verify}}    
          
\newcommand{\SPCVerify}{\mathsf{SHPC.Verify}}                                                                                          
                                
\newcommand{\VEbEncProve}{\mathsf{VE.bEncProve}}                                                          
      
\newcommand{\CiphertextVector}{\cv}                                                    
\newcommand{\CiphertextSymbol}{c}                                                    
\newcommand{\NIZKProofVector}{\pi_{\mathrm{VE}}}                                                    
                                                    
\newcommand{\NIZK}{\mathrm{NIZK}}

\newcommand{\ECEnc}{\mathsf{ECEnc}}   
\newcommand{\ECDec}{\mathsf{ECDec}}    
\newcommand{\ECCEnc}{\mathsf{ECCEnc}}   
\newcommand{\ECCDec}{\mathsf{ECCDec}}  
\newcommand{\VEbVerify}{\mathsf{VE.bVerify}}   
\newcommand{\VEbDec}{\mathsf{VE.bDec}}

\newcommand{\partialsig}{\sigma}       
\newcommand{\finalsig}{\sigma}                                                         
\newcommand{\TSVerify}{\mathsf{TS.Verify}}

\newcommand{\sig}{sig}    
\newcommand{\Accept}{\mathsf{Accept}}
\newcommand{\CheckTx}{\mathsf{CheckTx}}   
\newcommand{\GetNewTxNoKL}{\mathsf{GetNewTx}}  
\newcommand{\GetNewTxNoKLCheck}{\mathsf{GetNewTxCheck}}                            
              
\newcommand{\content}{content}   
\newcommand{\contentprime}{content'} 
\newcommand{\contentdiamond}{content^{\diamond}} 
\newcommand{\TSCombine}{\mathsf{TS.Combine}}                                  
\newcommand{\SeedGeneration}{\mathsf{SeedGen}}        
               
\newcommand{\LTSGen}{\mathsf{LTS.DKG}}             
\newcommand{\ek}{ek}                               
\newcommand{\dk}{dk}   
\newcommand{\pkvector}{\underline{pk}}                                
\newcommand{\pk}{pk}                              
\newcommand{\sk}{sk}  
\newcommand{\true}{{true}}       
\newcommand{\false}{{false}}   
\newcommand{\pkl}{pkl}                               
\newcommand{\skl}{skl} 
\newcommand{\pklvector}{\underline{pkl}}                                 
\newcommand{\ltslayer}{\ell}    
\newcommand{\ltslayerMax}{L}    
     
\newcommand{\ltslayerTotalNodes}{u}     
\newcommand{\TSGen}{\mathsf{TS.DKG}}        
\newcommand{\Vote}{\mathsf{TS.Sign}}
\newcommand{\VoteLTS}{\mathsf{LTS.Sign}}  
\newcommand{\LTSVerify}{\mathsf{LTS.Verify}}                                  
\newcommand{\LTSCombine}{\mathsf{LTS.Combine}}                                   
\newcommand{\ltsGroupSet}{\Gc}

\newcommand{\skshare}{s}

\newcommand{\FieldZ}{\mathbb{Z}}         
\newcommand{\randomgenerator}{\mathsf{g}}             
    
\newcommand{\ltsgroup}{b}                                              
            
\newcommand{\TSthreshold}{k}              
\newcommand{\skshareLTS}{\tilde{s}}

\newcommand{\PolynomialFunction}{\phi}              
       
\newcommand{\PolynomialFunctionNew}{\varphi}              
   
\newcommand{\LTSPolynomialFunction}{\psi}

\newcommand{\PolyDegree}{d}

\newcommand{\skinput}{s}                 
\newcommand{\CommitPolySecret}{r}

\newcommand{\LTSCommitmentVector}{\tilde{\vv}}           
\newcommand{\LTSCommitmentSymbol}{\tilde{v}}    
\newcommand{\IndexInParentGroup}{\omega}                                                                                            
\newcommand{\ParentIndex}{\beta}                                                                                              
\newcommand{\SHARE}{\mathsf{SHARE}}                    
\newcommand{\checkNew}{\textbf{check}}

\newcommand{\PKISign}{\mathsf{PKI.Sign}}     
\newcommand{\digitalsig}{\sigma}        
\newcommand{\ACK}{\mathsf{ACK}}                     
\newcommand{\AVSSValidSet}{\Cc_\mathrm{ack}}      
\newcommand{\AVSSValidSetIndex}{\Ic_\mathrm{ack}}      
\newcommand{\AVSSMissSetIndex}{\Ic_\mathrm{miss}}            
\newcommand{\AVSSVDelay}{\Delta_\mathrm{delay}}       
\newcommand{\DelayPara}{\Delta_\mathrm{delay}}             
\newcommand{\RBCSEND}{\mathsf{RBC}}                      
\newcommand{\broadcast}{\textbf{broadcast}}                   
\newcommand{\return}{\textbf{return}}                    
     
\newcommand{\APBVAinputNew}{\av}          
\newcommand{\APBVAoutputNew}{\APBVAinputNew^{\star}}  
\newcommand{\RBC}{\mathsf{RBC}}   
\newcommand{\AVSSindex}{\tau}

\newcommand{\LagrangeCoefficient }{\gamma}

\newcommand{\missing}{\bot_o}                       
\newcommand{\erase}{\textbf{erase}}

\newcommand{\OciorADKG}{\mathsf{OciorADKG}}

\newcommand{\Adversary}{\Ac}        
\newcommand{\ADKG}{\mathsf{ADKG}}

\newcommand{\OciorBLS}{\mathsf{OciorBLSts}}    
                      
\newcommand{\TSSetup}{\mathsf{TS.Setup}}  
\newcommand{\LTSSetup}{\mathsf{LTS.Setup}}              
\newcommand{\PrimeOrder}{p}        
\newcommand{\FiniteFieldSize}{p}

\newcommand{\VSS}{\mathsf{VSS}}

\newcommand{\HashCommit}{h}                                
\newcommand{\HashCommitVector}{\boldsymbol{\HashCommit}}                                
\newcommand{\PolynomialFunctionAddWitness}{\check{\phi}}              
\newcommand{\skshareAddWitness}{\check{\skshare}}              
\newcommand{\Hash}{\mathsf{H}}     
\newcommand{\WitnessVectorR}{\rv}                                                        
     
\newcommand{\RandomPolyFunction}{\theta}              
\newcommand{\SPCWitnessReconstruct}{\mathsf{SHPC.WitnessReconstruct}}                                                                                          
\newcommand{\SPCPolyReconstruct}{\mathsf{SHPC.Reconstruct}}                                                                                          
\newcommand{\IndexSet}{\Tc}               
\newcommand{\CommitmentVectorOutput}{\vv^{\star}}          
               
\newcommand{\SHVSS}{\mathsf{SHVSS}}                             
\newcommand{\ASHVSS}{\mathsf{ASHVSS}}                             
\newcommand{\OciorASHVSS}{\mathsf{OciorASHVSS}}          
\newcommand{\SHPC}{\mathsf{SHPC}}                              
\newcommand{\OciorSHPC}{\mathsf{OciorSHPC}}                                 
\newcommand{\SPCWitnessCommit}{\mathsf{SHPC.WitnessCommit}}                                                                                           
\newcommand{\SPCWitnessOpen}{\mathsf{SHPC.WitnessOpen}}                                                                                           
\newcommand{\NumPoints}{M}                                                                                           
             
\newcommand{\OciorDKGtd}{\mathsf{OciorDKGtd}}          
\newcommand{\BFT}{\mathrm{BFT}}           
\newcommand{\Blocksize}{B}

\newcommand{\TX}{\mathsf{TX}}                                
                                 
\newcommand{\TXDONE}{\mathsf{APS}}
\newcommand{\Add}{.\mathrm{add}}                                                                                                                                                                                                                                                                                                                                                                               
\newcommand{\NewTXProcess}{\mathsf{NewTxProcess}} 
            
\newcommand{\RPC}{\mathrm{RPC}}             
\newcommand{\TnewIDSetOtherProposedLastEpoch}{\Tc_\mathrm{NewIDSetPOLE}}  
\newcommand{\TnewIDSetOtherProposed}{\Tc_\mathrm{NewIDSetPO}}  
\newcommand{\NumProposedIntevalRandomTxSelf}{\NumProposed_\mathrm{txself}}      
\newcommand{\NumProposedIntevalRandomTxFromOtherProp}{\NumProposed_\mathrm{txpo}}      
\newcommand{\TproposedIDSet}{\Tc_\mathrm{ProposedIDSet}} 
\newcommand{\BinaryIndicator}{indicator}  
\newcommand{\Remove}{.\mathrm{remove}}                                                                                                                                                                                                                                                                                                                                                                               
\newcommand{\Getrandom}{.\mathrm{get}\_\mathrm{random}}                                                                                                                                                                                                                                                                                                                                                                               
\newcommand{\Popleft}{.\mathrm{popleft}}                                                                                                                                                                                                                                                                                                                                                                               
\newcommand{\None}{None}                                                                                                                                                                                                                                                                                                                                                                                
     
\newcommand{\PoP}{\mathrm{PoP}}             
\newcommand{\proofofPreviousProposedTx}{proof}                                                                                                                                                                                                                                                                                                                                                                                 
                                                                                                                                                                                                                                                                                                                                                                                 
\newcommand{\myproofofPreviousProposedTx}{myproof}                                                                                                                                                                                                                                                                                                                                                                                 
\newcommand{\ProofCheck}{\mathsf{ProofCheck}} 
\newcommand{\proofcheckindicator}{{pcheck}\_{indicator}}  
\newcommand{\NumTxProposedLock}{\Mc_\mathrm{ProposedLock}}   
\newcommand{\ProofTypeConflict}{\mathsf{C}}

\newcommand{\jdiamond}{j^{\diamond}}  
\newcommand{\ConflictTxCheck}{\mathsf{ConflictTxCheck}}  
\newcommand{\conflictcheckindicator}{{ccheck}\_{indicator}}  
\newcommand{\TxConflictingDic}{\Tc_\mathrm{ConfTxDic}} 
\newcommand{\OPSetTemporary}{\Oc_\mathrm{opSetTemporary}}  
\newcommand{\acceptcheckindicator}{{acheck}\_{indicator}}

\newcommand{\allacceptcheckindicator}{{all}\_{indicator}}    
\newcommand{\ProposalProcess}{\mathsf{ProposalProcess}}  
\newcommand{\ProposalProcessHeightLock}{\mathsf{ProposalProcessHL}}  
\newcommand{\ProposalProcessHeightLockMLock}{\mathsf{ProposalProcessHLML}}  
\newcommand{\NNeedProposalVP}{\Nc_\mathrm{NeedPropVP}}  
    
\newcommand{\NNeedProposalHeightLock}{\Nc_\mathrm{NeedPropHL}}    
\newcommand{\NNeedProposalHeightLockMLock}{\Nc_\mathrm{NeedPropML}}     
\newcommand{\HeightLockUpdate}{\mathsf{HeightLockUpdate}}  
\newcommand{\inputtuple}{{input}\_{tuple}}   
\newcommand{\ProofCheckNumTxProposedLockUpdate}{\mathsf{ProofCheckNPLUpdate}}   
\newcommand{\ProofCheckNumTxProposedLockUpdateSpecicial}{\mathsf{ProofCheckNPLUpdateSpecicial}}  
\newcommand{\ProofCheckNumTxProposedLockUpdateNormal}{\mathsf{ProofCheckNPLUpdateNormal}}  
\newcommand{\PProof}{\Pc_\mathrm{proof}}

\newcommand{\PtotocolSigma}{\Sigma}

\newcommand{\belinearpairing}{\mathsf{e}}     
\newcommand{\HashZ}{\mathsf{H}_{\mathsf{z}}}     
\newcommand{\LagrangeCoefficientIntAtZero}{\lambda}          
\newcommand{\indicator}{{indicator}} 
     
\newcommand{\SigAggregationLTS}{\mathsf{SigAggregationLTS}}     
\newcommand{\numredundantnodes}{\tau} 
\newcommand{\indexinset}{\alpha}  
\newcommand{\PartialSigltsGroupSet}{\Pc}   
\newcommand{\PartialSigLayer}{\Lc}    
\newcommand{\PartialSigTree}{\Tc_{\mathrm{tree}}}    
\newcommand{\PartialSigCollection}{\Cc_{\ltslayerMax}}    
\newcommand{\PartialSigLayerSubset}{\PartialSigLayer^{\diamond}}    
\newcommand{\ResilianceCondition}{n\geq 3t+1}   
\newcommand{\tResilianceCondition}{t < \frac{n}{3}}

\newcommand{\ROM}{\mathrm{ROM}}    
\newcommand{\ACMA}{\mathrm{ACMA}}    
\newcommand{\CorruptSet}{\Cc_{\mathrm{corrupt}}}       
\newcommand{\PartialSignQuerySet}{\Qc_{\mathrm{psQuery}}}       
\newcommand{\RandomOracleQuerySet}{\Qc_{\mathrm{roQuery}}}           
\newcommand{\APS}{\mathrm{APS}}      
\newcommand{\RandomOracleH}{\mathcal{O}_{\Hash}}        
\newcommand{\LTSROBACMA}{\text{LTS-ROB-ACMA}}    
\newcommand{\LTSUNFACMA}{\text{LTS-UNF-ACMA}}      
\newcommand{\TSUNFACMA}{\text{TS-UNF-ACMA}}      
\newcommand{\TSROBACMA}{\text{TS-ROB-ACMA}}       
\newcommand{\AttestedProofofSeal}{\text{Attested Proof of Seal}}        
\newcommand{\ID}{\mathrm{ID}}         
\newcommand{\APSISent}{\Ac_{\mathrm{APSSent}}}     
\newcommand{\APSI}{\mathsf{APSI}}       
\newcommand{\TatalNumRPCnodes}{\bar{n}}

\newcommand{\AWthrehold}{\eta}

\newcommand{\SenderSet}{\Sc}    
\newcommand{\ReceiverSet}{\Rc}    
    
\newcommand{\VC}{\mathrm{VC}}   
\newcommand{\myproofProposedTx}{myprooftx}                                                                                                                                                                                                                                                                                                                                                                                  
\newcommand{\proofProposedTx}{prooftx}                                                                                                                                                                                                                                                                                                                                                                                  
\newcommand{\checkprooftxindicator}{{checkp}\_{indicator}}  
\newcommand{\CheckProofTx}{\mathsf{CheckProofTx}}      
\newcommand{\TAWOneIDSetOtherProposed}{\Tc_\mathrm{AW1IDSetPO}}  
\newcommand{\TAWOnePOProposedIDSet}{\Tc_\mathrm{AW1ProposedIDSet}}     
\newcommand{\TAWOneIDSetOtherProposedLastEpoch}{\Tc_\mathrm{AW1IDSetPOLE}}   
\newcommand{\EpochOutageThreshold}{e_\mathrm{out}}                                                                                                                                                                                                                                                                                                                                                                                   
\newcommand{\OciorHMDMHash}{\mathsf{OciorHMDMh}}      
\newcommand{\OciorHMDMIT}{\mathsf{OciorHMDMit}}      
\newcommand{\HMDMMsg}{\wv}

\newcommand{\NumTxProposedVoted}{\Mc_\mathrm{ProposedVoted}}

\newcommand{\numproposedvartriangle}{\NumProposed^{\vartriangle}}    
\newcommand{\eonvartriangle}{\eon^{\vartriangle}}   
\newcommand{\txvartriangle}{\tx^{\vartriangle}}

\newcommand{\numproposedcirc}{\NumProposed^{\circ}}    
\newcommand{\eonccirc}{\eon^{\circ}}   
\newcommand{\myLastProposal}{myproposal}                                                                                                                                                                                                                                                                                                                                                                                    
   
\newcommand{\inputtupleprimeprime}{{input}\_{tuple}''}   
\newcommand{\aux}{aux}                                                                                                                                                                                                                                                                                                                                                                                    
\newcommand{\ppinputtuple}{{pp}\_{input}\_{tuple}}    
\newcommand{\pphlinputtuple}{{in}\_{tuple}}    
\newcommand{\pkpki}{pk^{\diamond}}                              
\newcommand{\skpki}{sk^{\diamond}}
\newcommand{\PKI}{\mathrm{PKI}}    
\newcommand{\WitnessSHARE}{\mathsf{WITNESS}}                    
\newcommand{\WitnessShareValidSet}{\Wc_{wit}}                                                                                                                                                                                                                                                                                                                                                                                    
\newcommand{\MarkSet}{\Cc_{\mathrm{done}}}                                                                                                                                                                                                                                                                                                                                                                                    
\newcommand{\BOneSet}{\Bc_{\mathrm{ones}}}                                                                                                                                                                                                                                                                                                                                                                                    
\newcommand{\indexi}{i}

\algdef{SE}[SUBALG]{Indent}{EndIndent}{}{\algorithmicend\ }
\algtext*{Indent}
\algtext*{EndIndent}

\pagestyle{empty}

\begin{document}
\sloppy
\title{Ocior: Ultra-Fast Asynchronous Leaderless Consensus with Two-Round Finality, Linear Overhead, and Adaptive Security}

\author{Jinyuan Chen \\ jinyuan@ocior.com 
}

\maketitle
\pagestyle{headings}

\begin{abstract} 
In this work, we propose \emph{Ocior}, a practical \emph{asynchronous} Byzantine fault-tolerant ($\BFT$) consensus protocol that achieves the  optimal performance in resilience, communication, computation, and round complexity. 
Unlike traditional $\BFT$ consensus protocols, Ocior processes incoming transactions individually and concurrently using parallel instances of consensus.   
While leader-based consensus protocols rely on a designated leader to propose transactions, Ocior is a \emph{leaderless} consensus protocol that guarantees \emph{stable liveness}.  
A protocol is said to satisfy the \emph{stable liveness} property if it ensures the continuous processing of incoming transactions, even in the presence of an \emph{adaptive} adversary who can dynamically choose which nodes to corrupt, provided that the total number of corrupted nodes does not exceed $t$, where $\ResilianceCondition$  is the total number of consensus nodes. 
Ocior achieves:
\begin{itemize}
    \item \textbf{\emph{Optimal resilience}}: Ocior tolerates up to $t$ faulty nodes controlled by an \emph{adaptive} adversary, for $\ResilianceCondition$.
    \item \textbf{\emph{Optimal communication complexity}}: The  total expected communication per transaction is $O(n)$. 
        \item \textbf{\emph{Optimal (or near-optimal) computation complexity}}: The total  computation per transaction is $O(n)$ in the  best case, or $O(n \log^2 n)$ in the worst case. 
        \item \textbf{\emph{Optimal round complexity}}:    A legitimate \emph{two-party} transaction can be finalized with a \emph{good-case latency} of \emph{two} asynchronous rounds, for any $\ResilianceCondition$, where each round corresponds to a single \emph{one-way} communication. The \emph{good case} in terms of latency  refers to the scenario where the transaction is proposed by any (not necessarily designated) honest node. A \emph{two-party transaction} involves the transfer of digital assets from one user (or group of users) to one or more recipients.  
\end{itemize} 
To support efficient consensus, we introduce a novel non-interactive threshold signature (\emph{$\TS$}) scheme called $\OciorBLS$. It offers fast signature aggregation, and is adaptively secure under the algebraic group model and the hardness assumption of the one-more discrete logarithm problem.  
$\OciorBLS$ achieves a computation complexity of signature aggregation of only $O(n)$ in the good cases.    
Moreover, $\OciorBLS$ supports the property of \emph{Instantaneous $\TS$ Aggregation}. A $\TS$ scheme guarantees this property if it can aggregate partial signatures immediately, without waiting for all $\TSthreshold$ signatures, where $\TSthreshold$ is the threshold required to compute the final signature. This enables real-time aggregation of partial signatures as they arrive, reducing waiting time and improving responsiveness.  
Additionally, $\OciorBLS$ supports weighted signing power or voting, where nodes may possess different signing weights, allowing for more flexible and expressive consensus policies.

\end{abstract}

\section{Introduction}

Distributed Byzantine fault-tolerant ($\BFT$) consensus is a fundamental building block of blockchains and distributed systems. 
Yet, according to CoinMarketCap \cite{CoinMarketCap:2025}, the top thirty blockchain systems by market capitalization rely on consensus protocols designed under synchronous or partially synchronous assumptions. 
When the network becomes fully asynchronous, such protocols may no longer guarantee safety or liveness, creating critical risks for blockchain infrastructure. 
This motivates the design of a \emph{practical asynchronous} consensus protocol.  

\vspace{5pt}

For many Web~3.0 applications, particularly \emph{latency-sensitive services}, the traditional guarantees of safety and liveness are insufficient.
Applications such as decentralized finance (DeFi) trading, cross-chain transfers, non-fungible token (NFT) marketplaces, supply-chain tracking, gaming platforms, and real-time blockchain systems demand \emph{fast transaction finality}, a cryptographically verifiable confirmation that a transaction will be accepted irrevocably by all consensus nodes. 
High latency degrades user experience, increases risks in financial settings, and limits scalability. 
Thus, the goal is not only to design an \emph{asynchronous} consensus protocol but also a \emph{fast} asynchronous consensus protocol.

\vspace{5pt}

Despite significant progress in $\BFT$ consensus, many existing protocols remain \emph{leader-based}, relying on a designated leader to propose transactions (e.g., \cite{solana:18, YMRGA:19}). 
This design introduces systemic vulnerabilities: if the leader is faulty or under distributed denial of service (DDoS) attacks, the entire system may stall or suffer degraded performance. 
In adversarial environments with \emph{adaptive adversaries}, capable of dynamically corrupting or targeting nodes, these vulnerabilities become even more severe. 
Leader-based protocols   fail to provide \emph{stable liveness}, which we define as the ability to continuously process incoming transactions despite an adaptive adversary corrupting up to $t$ nodes  out of $\ResilianceCondition$ participants.  

\vspace{5pt}

This raises a central research question:  
\[
  \text{Can we design a \emph{fast}, \emph{adaptively secure}, \emph{asynchronous} $\BFT$ consensus protocol with \emph{stable liveness}?}
\]
We address this challenge by introducing $\Ocior$, a fast, leaderless, adaptively secure, asynchronous $\BFT$ consensus protocol that guarantees stable liveness.
Unlike traditional batch-based designs, $\Ocior$ processes transactions \emph{individually} and \emph{concurrently}, executing parallel consensus instances to maximize throughput and responsiveness.  

\vspace{5pt}

A key performance metric is \emph{transaction latency} (also called \emph{finality time} or \emph{confirmation latency}), which measures the time from transaction submission until the client receives a cryptographic acknowledgment of acceptance. 
To support lightweight and trustless verification, we introduce the notion of an \emph{$\AttestedProofofSeal$} ($\APS$), a \emph{short} cryptographic proof that a transaction has been sealed and is guaranteed to be accepted irrevocably by all consensus nodes. 
This enables external parties, including \emph{light clients}, to efficiently verify transaction acceptance without running a full node.  

\vspace{5pt}

$\Ocior$ achieves the following asymptotically optimal guarantees:
\begin{itemize}
    \item \emph{Optimal resilience}: It tolerates up to $t$ Byzantine nodes controlled by an \emph{adaptive} adversary, for $\ResilianceCondition$.
    \item \emph{Optimal communication complexity}: The total expected communication per transaction is $O(n)$. 
    \item \emph{Optimal (or near-optimal) computation complexity}:  The total computation per transaction is $O(n)$ in the best case, or $O(n \log^2 n)$ in the worst case, measured in cryptographic operations (signing, verification, hashing, and arithmetic on signature-sized values).
    \item \emph{Optimal round complexity}:     
        A legitimate \emph{two-party} transaction can be finalized in \emph{two} asynchronous rounds (good-case latency) for any $\ResilianceCondition$, where each round is a single \emph{one-way} communication. 
        Here, the \emph{good case} refers to transactions proposed by any honest node. 
        A \emph{two-party transaction} involves the transfer of digital assets from one user (or group) to one or more recipients. 
        All other transactions can be finalized in \emph{four} asynchronous rounds, with linear communication and computation overhead. 
        Following common conventions~\cite{CL:99, YMRGA:19, SKN:25, MXCSS:16, LDZ:2020}, we do not count client-to-node communication in round complexity.   
\end{itemize}

 {\renewcommand{\arraystretch}{1.5}
\begin{table}
\footnotesize  
\begin{center}
\caption{
Comparison between the proposed Ocior protocol and some other consensus protocols.  $\Delta$ is a constant but is much larger than $4$. $\Blocksize$ denotes the number of transactions in a block.  
In this comparison, we consider the optimal resilience setting of $\ResilianceCondition$ for all protocols.   
In this comparison, we just focus on the two-party transactions. 
 } 
 \label{tb:protocols}
\begin{tabular}{||c||c|c|c|c|c|c|c|c|}
\hline
Protocols & Network   &    Security           & Rounds   &   $ \! \! \! \!\!$  Total Communication$ \! \!\!$     & $ \! \! \! \!\!\!\!$   Total  Computation$ \! \! \! \!\!$   &    $\! \! \! \! \! $ Instantaneous  $  \! \! \! \! \! $ &      $ \! \! \! $Stable$\! \! \!$ &  $ \! \! \! \!\!\!\!$  Short $ \! \! \! \!\!\!\!$ \\ 
  &     &             &  $ \! \! \! \!\!\!\!$ (Finality) $\! \!\! \! \! \!\!$ & $ \! \! \! \!\!$ Per Transaction $ \! \! \! \!\!$    & Per Transaction  &   $ \! \! \! \!\!$  $\TS$  Aggr.$  \!\! \! \!\!$      &    $  \! \! \!\!\!$   Liveness  $  \!\! \! \!\!$  &    $ \! \! \!\! \! $  $\APS$  $\!\! \! \!$  \\ 
    &     &             &  $ \! \! \! \!\!$ (Good Case) $ \!\! \! \! \!\!$ &      & (Good Case)   &     &       &   \\ 
\hline 
$ \! \! \! \!\!$ Ethereum 2.0 \cite{ethereum2}$ \! \! \!\!$ &  $\! \! \!\!$ Partially Syn. $ \! \! \! \!\!$    &    Static         &  $>\Delta$  &    $O(n)$  &  -  &  -  &    $\times$  &        $\checkmark$ \\
\hline
Solana  \cite{solana:18}  &  $\! \! \!\!$ Partially Syn. $ \! \! \! \!\!$   &     Static       & $>\Delta$     &  $O(n)$  &  -  &  -  &   $\times$ &        $\checkmark$ \\
\hline
PBFT  \cite{CL:99} &  $\! \! \!\!$ Partially Syn. $ \! \! \! \!\!$    &     Static       & 3     &  $O(n^2)$     &  $ \! \! \! \!\!$  $O(\max\{n^2/B, n\})$ $ \! \! \! \!\!$  &  -  &   $\times$  &        $\times$  \\
\hline
HotStuff  \cite{YMRGA:19} &  $\! \! \!\!$ Partially Syn. $ \! \! \! \!\!$     &     Static       & 6 or 4  &  $O(n)$  &  $ \! \! \! \!\!$  $O(\max\{n^2/B, n\})$ $ \! \! \! \!\!$  &  $\times$  &   $\times$  &        $\checkmark$ \\
\hline
Hydrangea  \cite{SKN:25} &  $\! \! \!\!$ Partially Syn. $ \! \! \! \!\!$      &     Static       &  3  or 2  &  $O(\max\{n^2/B, n)$  &  $ \! \! \! \!\!$  $O(\max\{n^2/B, n\})$ $ \! \! \! \!\!$  &   -   &   $\times$  &        $\times$  \\  
\hline 
\hline
$ \! \! \! \!\!$ Avalanche  \cite{RYS+:20} $ \!\! \! \! \!\!$ &  $ \! \! \!\!$ Synchronous$ \! \! \!\!$    &    $ \! \! \!$ Adaptive $ \! \! \!$      & $O(\log n)$    &  $O(n  \cdot \log n)$  & $ \! \! \! \!\!$   $O( n \cdot \log n)$ $ \! \! \! \!\!$   &  -   &    $\checkmark$ &        $\times$ \\   
\hline
\hline
$ \! \! \! \!\!$ HoneyBadger  \cite{MXCSS:16,LDZ:2020} $ \!\! \! \! \!\!$ &  $ \! \! \!\!$Asynchronous$ \! \! \!\!$    &     $ \! \! \!$Adaptive$ \! \! \!$       & $>\Delta$    &  $O(n)$  & $ \! \! \! \!\!$   $O(\max\{n^2/B, n\})$ $ \! \! \! \!\!$   &  $\times$   &    $\checkmark$ &        $\times$ \\
\hline
 {\color{blue}Ocior} &  {\color{blue} $ \!\! \!\!$Asynchronous$ \! \! \! \!$}  &   {\color{blue} $ \! \! \!$Adaptive$ \! \! \!$}      &  {\color{blue}2}   & {\color{blue}  $O(n)$} &   {\color{blue}   $O(n)$}&   {\color{blue}   $\checkmark$}  &    {\color{blue} $\checkmark$} &    {\color{blue} $\checkmark$} \\
\hline 
\end{tabular}
\end{center}
Notes: 1. HoneyBadger relies on threshold cryptography that is \emph{statically} secure, and is therefore secure only against static adversaries~\cite{LDZ:2020}. The work in~\cite{LDZ:2020} improves HoneyBadger to achieve adaptive security by using adaptively secure threshold cryptography.   
2. Following common conventions~\cite{CL:99,YMRGA:19,SKN:25,MXCSS:16,LDZ:2020}, when measuring round complexity, we do not include the communication cost between clients and consensus nodes.  
3.  Hydrangea provides an $\APS$ called a block certificate, but it is not short.   The size of its $\APS$ is $O(n)$. 
4. Hydrangea achieves a good-case latency of three rounds in settings where the number of Byzantine nodes is greater than $\lfloor \frac{\numredundantnodes}{2} \rfloor$ for a total of $n=3t +\numredundantnodes+1$ nodes, where $\numredundantnodes\geq 0$. In particular,  when $\numredundantnodes=0$,  Hydrangea achieves a good-case latency of three rounds when  the number of Byzantine nodes is greater than $0$.    When the number of Byzantine nodes is at most   $\lfloor \frac{\tau}{2} \rfloor$, it achieves a good-case latency of two rounds.   
\end{table}
}

To support efficient consensus and fast attestation, $\Ocior$ introduces a novel \emph{non-interactive threshold signature} ($\TS$) scheme called $\OciorBLS$. 
Unlike traditional $\TS$ schemes that require $O(n^2)$ effort for signature aggregation (or $O(n \log^2 n)$ with optimizations \cite{TCZAPGD:20}), $\OciorBLS$ reduces aggregation to $O(n)$ in the best case while ensuring \emph{adaptive security}. 
Its design is based on a new idea of non-interactive \emph{Layered Threshold Signatures} ($\LTS$), where the final signature is composed through $\ltslayerMax$ layers of partial signatures, for a predefined parameter $\ltslayerMax$. 
At each layer~$\ltslayer$, a partial signature is generated from a set of partial signatures at layer~$\ltslayer+1$, for $\ltslayer \in \{1,2,\dots,\ltslayerMax-1\}$.   
In the signing phase of $\LTS$, nodes sign messages independently, producing partial signatures that are treated as the signatures of Layer~$\ltslayerMax$. 
An example is illustrated in Fig.~\ref{fig:OciorLTSexample} (Section~\ref{sec:OciorBLS}).  

\vspace{5pt}

The $\LTS$ construction enables \emph{fast and parallelizable} aggregation. 
In particular, $\OciorBLS$ supports \emph{Instantaneous TS Aggregation}, where partial signatures can be combined immediately upon arrival, without waiting for all $\TSthreshold$ shares. 
This reduces waiting time and lowers transaction latency.  
Additionally, $\OciorBLS$ supports weighted signing power (or voting), where nodes may possess different signing weights, allowing for more flexible and expressive consensus policies. 
 
\vspace{5pt}
 
 Finally, Table~\ref{tb:protocols} compares $\Ocior$ with other consensus protocols. 
Along this line of research, some works attempt to reduce round complexity, but only under synchronous or partially synchronous settings (e.g., \cite{YMRGA:19, KADCW:10, GAGMPRSTT:19, SSV:25, SKN:25}). 
In particular, the authors of~\cite{SKN:25} proved that no consensus protocol can achieve a \emph{two-round latency} when the number of Byzantine nodes exceeds 
\(\left\lfloor \frac{\numredundantnodes + 2}{2} \right\rfloor\), for a total of 
\(n = 3t + \numredundantnodes + 1\) nodes, where \(\numredundantnodes \geq 0\).  
In contrast, we show that $\Ocior$ breaks this barrier by achieving \emph{two-round latency} even in the \emph{asynchronous} setting, when the number of Byzantine nodes is up to 
\(t = \left\lfloor \frac{n - 1 - \numredundantnodes}{3} \right\rfloor\), for any \(\numredundantnodes \geq 0\).

 \newpage

 \section{System Model \label{sec:system}}

We consider an \emph{asynchronous} Byzantine fault-tolerant  consensus problem over a network consisting of $n$ consensus nodes (servers), where up to $t$ of them may be corrupted by an \emph{adaptive} adversary, assuming the optimal resilience condition $\ResilianceCondition$.   
A key challenge in this BFT consensus problem lies in achieving agreement despite the presence of corrupted (dishonest) nodes that may arbitrarily deviate from the designed protocol. 
In this setting, the consensus nodes aim to reach agreement on transactions issued by clients. These transactions may represent digital asset transfers in cryptocurrencies or the execution of smart contracts. The system model and relevant definitions are presented below. 

\vspace{3pt}

\emph{Adaptive Adversary:} We consider an adaptive adversary that can dynamically choose which nodes to corrupt, subject to the constraint that the total number of corrupted nodes does not exceed $t$. 

\vspace{3pt}

\emph{Asynchronous Network:} We assume an asynchronous network in which any two consensus nodes are connected by reliable and authenticated point-to-point channels. Messages sent between honest nodes may experience \emph{arbitrary delays} but are guaranteed to be eventually delivered. 
 
 \vspace{3pt}

\emph{Network Structure:} 
Each consensus node is fully interconnected with all other consensus nodes and also maintains connections with Remote Procedure Call ($\RPC$) nodes, following an \emph{asymmetric outbound-inbound connection} model, in which clients may also act as $\RPC$ nodes, as shown in Fig.~\ref{fig:OciorNet}. 
Specifically, each consensus node is capable of sending low-latency outbound messages regarding the finality confirmations for transactions it processes to all $\RPC$ nodes.  To enhance resilience against distributed denial of service  attacks, each consensus node limits its inbound connections to a dynamically sampled and periodically refreshed subset of $\RPC$ nodes for receiving new transaction submissions. 
Lightweight clients submit transactions and verify their finality through interactions with $\RPC$ nodes. $\RPC$ nodes propagate pending, legitimate transactions to their connected $\RPC$ nodes and to consensus nodes with inbound connections. Upon receiving a finality confirmation, an $\RPC$ node forwards it to its connected $\RPC$ nodes and relevant clients, often leveraging real-time subscription mechanisms such as WebSockets~\cite{SolanaWebsocket}.

 \vspace{3pt}

We classify transactions into two types: Two-party (Type~I)   and third-party (Type~II) transactions, defined below.

\begin{definition}[{\bf Two-Party (Type~I) Transactions}]    
A \emph{two-party transaction} involves the transfer of digital assets from one user (or a group of users) to one or more recipients. For a Type~I transaction, both the sender and the recipients must be able to verify the finality of the transaction. However, no third party is required to verify its finality.  
Multi-signature transactions, which require multiple signatures from different parties to authorize a single transaction, and multi-recipient transactions also fall under this category.  
\end{definition}

 \begin{figure} 
\centering
\includegraphics[width=6cm]{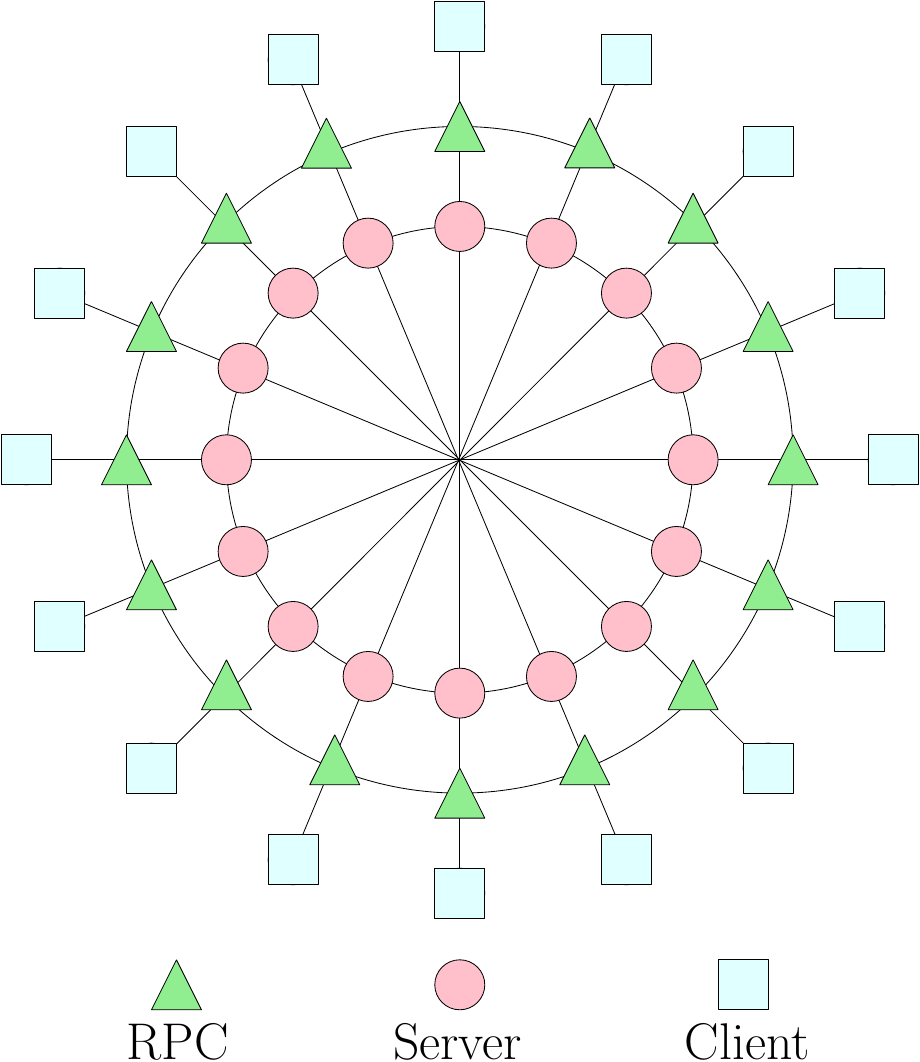}
\caption{The network architecture of Ocior, consisting of consensus nodes (servers), remote procedure call ($\RPC$) nodes and clients.}  
\label{fig:OciorNet}
\end{figure}

\begin{definition}[{\bf Third-Party (Type~II) Transactions}]    
The key distinction between Type~I and Type~II transactions is the involvement of a third party. In a Type~II transaction, it is necessary for any third party to be able to verify the finality of the transaction. 
\end{definition}

\begin{definition}[{\bf Transaction Latency and $\APS$}]    
\emph{Transaction latency}, also referred to as \emph{transaction finality time} or \emph{confirmation latency}, measures the time elapsed from the submission of a transaction to the moment when the client receives a concise cryptographic acknowledgment of its acceptance. Specifically, this acknowledgment takes the form of a short \emph{$\AttestedProofofSeal$} ($\APS$), which attests that the transaction has been sealed and is guaranteed to be irrevocably accepted by all consensus nodes, even if it has not yet appeared on the ledger. 
This proof allows any external parties, including resource-constrained \emph{light clients}, to efficiently verify the transaction's acceptance status without running a full node. 
\end{definition}

\begin{definition}[{\bf Parent Transactions}]      \label{def:parenttx}
A two-party transaction involving the transfer of digital assets from Client~$A$ to Client~$B$ is denoted by $\Tx_{A,B}$, where $A$ and $B$ represent the respective wallet addresses of Clients~$A$ and $B$.  
Once $\Tx_{A,B}$ is finalized (i.e., has received a valid $\APS$), Client~$B$ becomes the official owner of the transferred digital assets.  
If Client~$B$ subsequently initiates a new transaction $\Tx_{B,C}$ to transfer the assets received in $\Tx_{A,B}$ to Client~$C$, then $\Tx_{A,B}$ must be cited as a \emph{parent} transaction. In this case, the new transaction $\Tx_{B,C}$ is referred to as a \emph{child} transaction of $\Tx_{A,B}$. 
\end{definition}

 \vspace{1pt}    

\begin{definition}[{\bf Proof of Parents ($\PoP$)}]    
When Client~$B$ initiates a new transaction $\Tx_{B,C}$ to transfer the assets received in $\Tx_{A,B}$ to Client~$C$, the transaction $\Tx_{A,B}$ is cited as a \emph{parent} transaction.  
Our proposed consensus mechanism includes a process for proving that $\Tx_{A,B}$ is a valid parent of $\Tx_{B,C}$.  
We refer to this consensus approach as  \emph{proof of parents}. 
\end{definition}

  \vspace{1pt}

\begin{definition}[{\bf Conflicting Transactions}]    
If Client~$B$ creates two different transactions, $\Tx_{B,C}$ and $\Tx_{B,C'}$, attempting to doubly spend the asset inherited from $\Tx_{A,B}$, then $\Tx_{B,C}$ and $\Tx_{B,C'}$ are considered \emph{conflicting transactions} (i.e., an instance of \emph{double spending}).  
Any descendant of a conflicting transaction is also considered a conflicting transaction.
\end{definition}

  \vspace{1pt}    

\begin{definition}[{\bf Legitimate Transactions}]    
A transaction $\Tx_{B,C}$ that cites $\Tx_{A,B}$ as its parent is considered \emph{legitimate} if all of the following conditions are satisfied:  
\begin{itemize}
\item   \emph{Condition 1:} It must \emph{attach} a valid $\APS$ for the parent transaction $\Tx_{A,B}$, serving as a cryptographic proof of the parent transaction's finality; 
\item   \emph{Condition 2:}  $\Tx_{B,C}$ must not conflict with any other transaction that cites $\Tx_{A,B}$ as its parent;   
\item   \emph{Condition 3:}  The addresses and balances between $\Tx_{B,C}$ and its parent must be consistent; and  
\item   \emph{Condition 4:}  The signature of $\Tx_{B,C}$ must be valid, i.e., it must be correctly signed by $B$.
\end{itemize}
\end{definition}

  \vspace{1pt}

\begin{definition}[{\bf Safety and Liveness}]    
To solve the $\BFT$ consensus problem considered here, the protocol must satisfy the following conditions:
\begin{itemize}
\item {\bf \emph{Safety}}: 
Any two transactions accepted by honest nodes do not conflict. Furthermore, if two valid $\APS$s are generated for two different transactions, then those transactions must also be non-conflicting.
Finally, any node that receives a transaction together with its valid $\APS$ must accept that transaction. 
\item {\bf \emph{Liveness}}: 
If a legitimate transaction is received and proposed by at least one consensus node that remains uncorrupted throughout the protocol, and the transaction remains legitimate, then a valid $\APS$ for the transaction is eventually generated, delivered to, and accepted by all honest consensus nodes and all active $\RPC$ nodes.
\end{itemize}
\end{definition}

 \vspace{1pt}

\begin{definition}[{\bf Stable Liveness}]    
A protocol is said to have \emph{stable liveness} if liveness is guaranteed in the presence of an \emph{adaptive} adversary who can decide which nodes to control at any time, provided that the total number of corrupted nodes is bounded by $t$. 
\end{definition}

\vspace{5pt}

\begin{definition}[{\bf Type~I $\APS$:}]    
A Type~I $\APS$ for a transaction satisfies the Safety conditions: if the network generates valid $\APS$s for two transactions, then those transactions are non-conflicting. Furthermore, any node that receives a transaction together with its valid $\APS$ accepts that transaction.  For any Type~I  transaction, a Type~I $\APS$ is sufficient to verify the transaction finality. 
\end{definition}

\vspace{5pt}

\begin{definition}[{\bf Type~II $\APS$:}]    
Like a Type~I $\APS$, a Type~II $\APS$ satisfies the Safety conditions and, in addition, meets the following requirement: at least $t+1$ honest nodes have received a Type~I $\APS$ for this transaction and have accepted the transaction.  A Type~II $\APS$ is typically used for a Type~II transaction but can also serve as an $\APS$ option for a Type~I transaction. 
\end{definition}

\vspace{5pt}

\begin{remark} [{\bf Type~I   and Type~II $\APS$s:}] 
The proposed  $\Ocior$   guarantees the following two properties. 
\begin{itemize}
    \item In $\Ocior$, if a valid Type~II $\APS$ is generated for a transaction $\tx$, then eventually all honest consensus nodes and $\RPC$ nodes will receive a valid Type~II $\APS$ and accept $\tx$, even if another transaction conflicts with $\tx$ (see Theorem~\ref{thm:APSII} in Section~\ref{sec:OciorAnalysis}).    
    \item In $\Ocior$, if a valid Type~I $\APS$ is generated for a \emph{legitimate} transaction $\tx$, and $\tx$ remains legitimate, then eventually all honest consensus nodes and $\RPC$ nodes will receive a valid Type~II $\APS$ and accept $\tx$ (see Theorem~\ref{thm:APSI} in Section~\ref{sec:OciorAnalysis}).  Note that any node that receives a transaction $\tx$ together with a valid Type~I or Type~II $\APS$ will accept $\tx$, even if another transaction conflicts with it (see Theorem~\ref{thm:Safety} in Section~\ref{sec:OciorAnalysis}). 
\end{itemize}
\end{remark}

 \vspace{5pt}
 
\begin{definition}[{\bf Good-Case Round Complexity:}]    
Following common conventions~\cite{CL:99,YMRGA:19,SKN:25,MXCSS:16,LDZ:2020}, when measuring round complexity, we do not include the communication cost between clients and consensus nodes. The \emph{good-case} round complexity refers to the scenario where the transaction is proposed by any (not necessarily designated) honest node, and is measured from the round in which the node proposes the transaction to the round in which an $\APS$ of the transaction is generated.
\end{definition}

\vspace{5pt}

\begin{definition} [{\bf Honest-Majority Distributed Multicast ($\HMDM$) \cite{Chen:2020arxiv, ChenDISC:21,ChenOciorCOOL:24}}] \label{def:HMDM}
In the distributed multicast ($\DM$) problem, there are $n$ nodes in total in set $\SenderSet$ and $\TatalNumRPCnodes$ nodes in total in another set $\ReceiverSet$, where up to $t$ nodes in $\SenderSet$ may be dishonest.  In this problem, a subset of nodes in $\SenderSet$    act as senders, each multicasting an input message to the nodes in $\SenderSet$ and $\ReceiverSet$.  
A $\DM$ protocol guarantees the following property:  
\begin{itemize}
\item   {\bf Validity:} If all  honest senders input the same message $\wv$,  then every honest node in $\SenderSet$ and $\ReceiverSet$ eventually outputs $\wv$.       
\end{itemize} 
 We call a $\DM$ problem     an honest-majority $\DM$ if  at least   $t+1$ senders are honest.  
\end{definition}

\vspace{5pt}

We present two $\HMDM$ protocols: $\OciorHMDMHash$ and $\OciorHMDMIT$, given in Algorithms~\ref{algm:OciorHMDMHash} and~\ref{algm:OciorHMDMIT}, respectively. The $\OciorHMDMIT$ protocol is derived from the $\COOL$ protocol \cite{Chen:2020arxiv, ChenDISC:21, ChenOciorCOOL:24} and is information-theoretically secure and error-free; that is, it guarantees the required properties in all executions without relying on cryptographic assumptions. $\OciorHMDMIT$ achieves $\HMDM$ consensus in two rounds, with $O(n |\MVBAInputMsg| + n^2 \log \alphabetsize)$ total communication bits and $\tilde{O}(n |\MVBAInputMsg|)$ computation per node in the worst case, where $\alphabetsize$ denotes the alphabet size of the error-correcting code used.     
In contrast, the $\OciorHMDMHash$ protocol is a hash-based $\HMDM$ protocol that completes in one round, with $O(n |\MVBAInputMsg| + \kappa n^2 \log n)$ total communication bits and $\tilde{O}(|\MVBAInputMsg| + \kappa n)$ computation per node, where $\kappa$ is a security parameter. When $|\MVBAInputMsg| \geq \kappa n \log n$, the $\OciorHMDMHash$ protocol becomes an attractive option in terms of round, communication, and computation complexities.

\begin{algorithm}  
\caption{$\OciorHMDMHash$  protocol, with identifier $\ProtocolID$. Code is shown for Node~$\thisnodeindex \in [n]$.}    \label{algm:OciorHMDMHash} 
\begin{algorithmic}[1]
\vspace{5pt}       
\footnotesize
  \Statex   \emph{//   **   This asynchronous $\HMDM$  protocol  is a hash-based protocol.    **}      
\State Initially set  $\CodedSymbols \gets \{\}$; $\kencode \gets t+1$ 		 
 
\Statex
\Statex   \emph{//   ***** Code for Node~$\thisnodeindex\in \SenderSet$ for $\SenderSet:=[n]$    *****} 
\State {\bf upon} receiving input  message  $\HMDMMsg$, and if  Node~$\thisnodeindex$ is a sender {\bf do}:  
\Indent  

	\State $[\EncodedSymbol_{1}, \EncodedSymbol_{2}, \cdots, \EncodedSymbol_{n} ]   \gets \ECEnc(n, \kencode, \HMDMMsg)$       
	\State $(\vectorcommitment, \aux) \gets \VCCom([\EncodedSymbol_{1}, \EncodedSymbol_{2}, \cdots, \EncodedSymbol_{n} ] )$ 
	\State $\proofpositionvc_{\thisnodeindex} \gets \VCOpen(\vectorcommitment, \EncodedSymbol_{\thisnodeindex}, \thisnodeindex, \aux)$  
	\State $\send$ $(\SHARE,  \IDMVBA, \vectorcommitment,  \EncodedSymbol_{\thisnodeindex}, \proofpositionvc_{\thisnodeindex})$ to  all nodes
	\State  $\Output$  $\HMDMMsg$   	
\EndIndent

\Statex

 \Statex   \emph{//   ***** Code for Node~$\thisnodeindex\in \SenderSet \cup \ReceiverSet$    *****} 
\State {\bf upon} receiving   $(\SHARE, \IDMVBA, \vectorcommitment,  \EncodedSymbol, \proofpositionvc)$ from  Node~$j \in [n]$ for the first time, for some $\vectorcommitment, \EncodedSymbol, \proofpositionvc$, and if  Node~$\thisnodeindex$ is not a sender  {\bf do}:  
\Indent  
	\If {$\VCVerify(j, \vectorcommitment,  \EncodedSymbol,   \proofpositionvc) \eqlog \true$}  
	
		\IfThenElse {$\vectorcommitment\notin \CodedSymbols $}  {$\CodedSymbols[\vectorcommitment]\gets \{j: \EncodedSymbol\}$ } {$\CodedSymbols[ \vectorcommitment] \gets \CodedSymbols[ \vectorcommitment] \cup \{j: \EncodedSymbol\}$  }

		\If {$|\CodedSymbols[\vectorcommitment] |=\kencode$}  

					\State  $\hat{\MVBAInputMsg}  \gets\ECDec(n, \kencode, \CodedSymbols[\vectorcommitment])$  
					\State  $\Output$  $\hat{\MVBAInputMsg}$					  
		\EndIf

	\EndIf    
\EndIndent
\end{algorithmic}
\end{algorithm}

\begin{algorithm} 
\caption{$\OciorHMDMIT$ protocol with identifier $\ProtocolID$ .  Code is shown for    Node~$\thisnodeindex \in [n]$. }  \label{algm:OciorHMDMIT}
\begin{algorithmic}[1]
\vspace{5pt}    

\footnotesize

 \Statex   \emph{//   **   This asynchronous $\HMDM$  protocol  is information theocratic secure  and error free  **}

\State Initially set  $\CodedSymbols \gets \{\}$;  $\kencode \gets t+1$ 	  
\Statex
 
\Statex   \emph{//   ***** Code for Node~$\thisnodeindex\in \SenderSet$ for $\SenderSet:=[n]$    *****}  
\State {\bf upon} receiving input message  $\wv$, and if  Node~$\thisnodeindex$ is a sender   {\bf do}:
\Indent  
 
		\State  $[\ECCcodedsymbol_{1}, \ECCcodedsymbol_{2}, \cdots, \ECCcodedsymbol_{\networksizen}] \gets \ECCEnc(\networksizen,  \kencode, \wv)$   
		\State   $\send$ $\ltuple   \SYMBOL, \ProtocolID, \ECCcodedsymbol_{j},   \ECCcodedsymbol_{\thisnodeindex}  \rtuple $ to  Node~$j$,    $\forall j \in  [\networksizen]$
		\State   $\send$ $\ltuple   \SYMBOL, \ProtocolID, \defaultvalue,   \ECCcodedsymbol_{\thisnodeindex}  \rtuple $ to  all nodes in $\ReceiverSet$      
		\State  $\Output$  $\wv$

\EndIndent	

\Statex

\State {\bf upon} receiving  $\networkfaultsizet+1$ $\ltuple   \SYMBOL, \ProtocolID, \ECCcodedsymbol_{\thisnodeindex},  * \rtuple $  messages from  distinct nodes in $\SenderSet$, for the same  $\ECCcodedsymbol_{\thisnodeindex}$, and if  Node~$\thisnodeindex$ is not a sender    {\bf do}:
\Indent  

 		\State   $\send$ $\ltuple   \SYMBOL, \ProtocolID,  \defaultvalue,   \ECCcodedsymbol_{\thisnodeindex} \rtuple $ to all   nodes

\EndIndent

\Statex

\Statex   \emph{//   ***** Code for Node~$\thisnodeindex\in \SenderSet \cup \ReceiverSet$    *****}

\State {\bf upon} receiving message  $\ltuple   \SYMBOL, \ProtocolID,  * , \ECCcodedsymbol_{j} \rtuple $ from Node~$j \in \SenderSet$  for the first time, and if   Node~$\thisnodeindex$ is not a sender     {\bf do}:

\Indent  
	\State $\CodedSymbols[j] \gets \ECCcodedsymbol_{j}$    
 	\If  { $|\CodedSymbols|\geq  \kencode + t  $}     \label{line:OECbegin}    \quad    \quad\quad \quad \quad\quad\quad \quad    \quad\quad \quad \quad\quad\quad \quad   \quad\quad \quad \quad\quad\quad \quad   \quad\quad \quad \quad\quad\quad \quad    \emph{//   online error correcting  (OEC)  }  
			\State   $\MVBAOutputMsg  \gets \ECCDec(n,  \kencode , \CodedSymbols)$	     
			\State  $[y_{1}', y_{2}', \cdots, y_{n}'] \gets \ECCEnc (n,  \kencode, \MVBAOutputMsg)$ 
    			\If {at least $\kencode + t$ symbols in $[y_{1}', y_{2}', \cdots, y_{n}']$ match with  those in $\CodedSymbols$}
				\State  $\Output$  $\MVBAOutputMsg$ 
    			\EndIf    

	\EndIf   
		     
\EndIndent

\end{algorithmic}
\end{algorithm}

 \vspace{5pt}

\begin{definition}[{\bf Vector Commitment ($\VC$)}] 
\label{def:VC}
A vector commitment scheme allows one to commit to an entire vector while enabling efficient proofs of membership for individual positions. 
We consider a vector commitment scheme implemented using a Merkle tree based on hashing. 
It consists of the following algorithms:
\begin{itemize}
    \item $\VCCom(\yv) \to (\vectorcommitment, \aux)$:  
    Given an input vector $\yv = [\EncodedSymbol_{1}, \EncodedSymbol_{2}, \ldots, \EncodedSymbol_{n}]$ of length $n$, this algorithm generates a commitment $\vectorcommitment$ and an auxiliary string $\aux$.  
    In a Merkle-tree-based $\VC$, $\vectorcommitment$ corresponds to the Merkle root, which has size $O(\kappa)$ bits. 
    \item $\VCOpen(\vectorcommitment, \EncodedSymbol_{j}, j, \aux) \to \proofpositionvc_{j}$:  
    Given inputs $(\vectorcommitment, \EncodedSymbol_{j}, j, \aux)$, this algorithm outputs a proof $\proofpositionvc_{j}$ demonstrating that $\EncodedSymbol_{j}$ is indeed the $j$-th element of the committed vector. 
    \item $\VCVerify(j, \vectorcommitment, \EncodedSymbol_{j}, \proofpositionvc_{j}) \to \true/\false$:  
    This algorithm outputs $\true$ if and only if $\proofpositionvc_{j}$ is a valid proof showing that $\vectorcommitment$ is a commitment to a vector whose $j$-th element is $\EncodedSymbol_{j}$. 
\end{itemize} 
\end{definition}

\vspace{5pt}
 
\begin{definition}[{\bf Error Correction Code ($\ECC$)}] \label{def:ECC}  
An $(n, \kencode)$ error correction code consists of the following algorithms:
\begin{itemize}
    \item $\ECCEnc: \Alphabet^{\kencode} \to \Alphabet^{n}$: This encoding algorithm maps a message of $\kencode$ symbols to an output of $n$ symbols. Here, $\Alphabet$ denotes the alphabet of each symbol, and $\alphabetsize := |\Alphabet|$ denotes its size.  
    \item $\ECCDec: \Alphabet^{n'} \to \Alphabet^{\kencode}$: This decoding algorithm recovers the original message from an input of $n'$ symbols, for some $n'$.  
\end{itemize}
Reed-Solomon (RS) codes (cf.~\cite{RS:60}) are widely used error correction codes. An $(n, \kencode)$ RS code can correct up to $t$ Byzantine errors and simultaneously detect up to $a$ Byzantine errors from $n'$ symbol observations, provided that  
\[
2t + a + \kencode \leq n' \quad \text{and} \quad n' \leq n.
\]  
Although RS codes are popular, they impose a constraint on the alphabet size, namely $n \leq \alphabetsize - 1$. To overcome this limitation, other error correction codes with constant alphabet size, such as Expander Codes~\cite{SS:96}, can be employed.  
In the asynchronous setting, online error correction ($\OEC$) provides a natural method for decoding the message \cite{BCG:93}. A node may be unable to recover the message from only $n'$ symbol observations; in such cases, it waits for an additional symbol before attempting to decode again. This procedure continues until the node successfully reconstructs the message. 
\end{definition}
 
 \vspace{5pt}
 
\begin{definition}[{\bf Erasure Code  ($\EC$)}] \label{def:EC}  
An $(n, \kencode)$ erasure code consists of the following algorithms:
\begin{itemize}
    \item $\ECEnc: \Alphabet^{\kencode} \to  \Alphabet^{n}$: This encoding algorithm maps a message of $\kencode$ symbols to an output of $n$ symbols.  
    \item $\ECDec: \Alphabet^{\kencode} \to  \Alphabet^{\kencode}$: This decoding algorithm recovers the original message from an input of $\kencode$ symbols.  
\end{itemize}
By using an $(n, \kencode)$ erasure code, the original message can be recovered from any $\kencode$ encoded symbols.   
\end{definition}

\vspace{5pt}

\begin{definition}[\textbf{Reliable Broadcast ($\RBC$)}]
The $\RBC$ protocol allows a designated leader to broadcast an input value to a set of distributed nodes, while ensuring the following properties: 
\begin{itemize}
\item \textbf{Consistency:} If two honest nodes output values $\wv'$ and $\wv''$, then $\wv' = \wv''$.
\item \textbf{Validity:} If the leader is honest and broadcasts   $\wv$, then all honest nodes eventually output $\wv$.
\item \textbf{Totality:} If any honest node outputs a value, then all honest nodes eventually output a value.
\end{itemize}
\end{definition}

  \vspace{5pt}

\emph{Notations}:  In the asynchronous network considered here, we will use the notion of  \emph{asynchronous rounds} when counting the communication rounds, where each round does not need to be synchronous. 
The computation cost is measured in units of cryptographic operations, including signing, signature verification, hashing, and basic arithmetic operations (addition, subtraction, multiplication, and division) on values of signature size.  

\vspace{3pt}

We use $[b]$ to denote the ordered set $\{1,2,3,\dotsc, b\}$, and   $[a,b]$ to denote the ordered set  $\{a, a+1, a+2,\dotsc, b\}$, for any   integers $a$ and  $b$ such that $b >a$ and $b\geq 1$. 
The symbol $:=$ is used to mean ``is defined as.'' The symbol $\defaultvalue$ denotes a default value or an empty value.  
Let $\Group$ be a cyclic group of prime order $\PrimeOrder$, and let $\Group_T$ be a multiplicative cyclic group of the same order $\PrimeOrder$.  
Let $\FieldZ_{\PrimeOrder}$ denote the finite field of order $\PrimeOrder$.  
Let $\Hash: \{0,1\}^* \to \Group$ be a hash function that maps arbitrary-length bit strings to elements of $\Group$, modeled as a random oracle.  
Let $\HashZ: \{0,1\}^* \to \FieldZ_{\PrimeOrder}$ be a hash function that maps arbitrary-length bit strings to elements of $\FieldZ_{\PrimeOrder}$, also modeled as a random oracle.
Here $f(x)=O(g(x))$ implies that ${\lim\sup}_{x \to \infty} |f(x)|/g(x) < \infty$. 
Similarly, $f(x)=\tilde{O}(g(x))$ implies that ${\lim\sup}_{x \to \infty} |f(x)|/(\log x)^a \cdot g(x) < \infty$, for some constant $a\geq 0$.

\newpage

\section{$\OciorBLS$: A Fast  Non-Interactive Threshold Signature  Scheme} \label{sec:OciorBLS}

We propose a novel non-interactive threshold signature  scheme called $\OciorBLS$. It offers fast signature aggregation in \emph{good cases}  of partial signature collection, as defined in    Definition~\ref{def:GoodCasePSC}.   
Moreover, $\OciorBLS$  enables real-time aggregation of partial signatures as they arrive, reducing waiting time and improving responsiveness.  
In addition, $\OciorBLS$ supports weighted signing power or voting, where nodes may possess different signing weights, allowing for more flexible and expressive consensus policies.

\subsection{Non-Interactive Threshold Signature}

Let us at first provide some definitions on the threshold signature scheme.

\begin{definition}[{\bf Non-Interactive Threshold Signature ($\TS$)}]    
A non-interactive $(n, \TSthreshold)$ $\TS$ scheme allows any $\TSthreshold$ valid partial signatures  to collaboratively generate  a final signature, for some $\TSthreshold\in [t+1, n-t]$. It comprises a tuple of algorithms   $\PtotocolSigma_{\TS} = (\TSSetup, \TSGen,  \Vote, \TSVerify, \TSCombine,\TSVerify)$    satisfying the following properties.  
\begin{itemize} 
\item \textbf{$\TSSetup(1^\kappa) \to (\PrimeOrder, \Group, \Group_{T}, \belinearpairing, \randomgenerator, \Hash)$:}  This algorithm generates the public parameters ($\pp$).  All subsequent algorithms take the public parameters as input, but these are omitted in the presentation for simplicity. 
\item \textbf{$\TSGen(1^\kappa, n, \TSthreshold) \to (\pk, \pk_1, \dotsc, \pk_n, \sk_1, \sk_2, \dotsc, \sk_n)$:}
Given the security parameter $\kappa$, this algorithm generates a set of public keys $\pkvector := (\pk, \pk_1, \dotsc, \pk_n)$, which are available to all nodes, and a private key share $\sk_i$ that is available only to Node~$i$, for each $i \in [n]$.  Any $\TSthreshold$ valid private key shares are sufficient to reconstruct a secret key $ \sk$ corresponding to the public key $\pk$. 
The $\TSGen$ algorithm is implemented using an Asynchronous Distributed Key Generation ($\ADKG$) scheme. 
\item \textbf{$\Vote(\sk_i, \Hash(\wv)) \to \partialsig_i$:}  This algorithm produces the $i$-th partial signature $\partialsig_i$ on the input message $\wv$ using the private key share $\sk_i$, for $i \in [n]$. This partial signature can be interpreted as the vote from Node~$i$ on the message $\wv$.   We sometimes denote this operation as $\Vote_i(\Hash(\wv))$.  
$\Hash()$ is a hash function. 
\item \textbf{$\TSVerify(\pk_i,  \partialsig_i, \Hash(\wv)) \to \true/\false$:}  This algorithm verifies whether $\partialsig_i$ is a valid partial signature on message $\wv$ by using the public key $\pk_i$, for $i \in [n]$. It outputs $\true$ if the verification succeeds, and $\false$ otherwise.    
\item \textbf{$\TSCombine (n, \TSthreshold, \{(i, \partialsig_i)\}_{i\in \Tc\subseteq[n], |\Tc| \geq \TSthreshold}, \Hash(\wv)) \to \finalsig$:}  This algorithm combines any $\TSthreshold$ valid partial signatures $\{\partialsig_i \}_{i\in \Tc\subseteq[n], |\Tc|\geq \TSthreshold}$  on the same message $\wv$ to produce the final signature $\finalsig$, for any $\Tc\subseteq [n]$ with  $|\Tc| \geq \TSthreshold$.        

\item \textbf{$\TSVerify(\pk, \finalsig, \Hash(\wv)) \to \true/\false$:}  This algorithm verifies whether $\finalsig$ is a valid final signature on message $\wv$ using the public key $\pk$. It outputs $\true$ if the verification is successful, and $\false$ otherwise.   
\end{itemize} 
\end{definition}

In our setting, we set  $\TSthreshold=\lceil \frac{n+t+1}{2} \rceil$    for the $\TS$ scheme.  
Here, different $\TS$ schemes are used for different epochs. To express this, we extend the notation and define epoch-specific algorithms as follows:  
$\TSGen(\eon, 1^\kappa, n, \TSthreshold)$,   $\Vote (\sk_{\eon,i}, \Hash(\wv))$, $\TSVerify(\pk_{\eon, i},  \partialsig_i, \Hash(\wv))$,   
where the additional parameter $\eon$ denotes the epoch number associated with the corresponding instance of the $\TS$ scheme.  
 The $\TS$ scheme guarantees the properties of \emph{Robustness} and \emph{Unforgeability} defined below.

Robustness  of the  $\TS$ scheme ensures that an adaptive adversary, who controls up to $t$  nodes, cannot prevent the honest nodes from forming a valid final signature on a message of their choice. This definition is presented in the Random Oracle Model ($\ROM$), where the hash function $\Hash$ is modeled as a publicly available random oracle.

\begin{definition}[{\bf Robustness  of $\TS$ under Adaptive Chosen-Message Attack ($\ACMA$)}] \label{def:robustness-acma}
A non-interactive $(n, \TSthreshold)$ $\TS$ scheme $\PtotocolSigma_{\TS} = (\TSSetup, \TSGen, \Vote, \TSVerify, \TSCombine, \TSVerify)$ is robust under adaptive chosen-message attack ($\TSROBACMA$) if for any probabilistic polynomial-time ($\PPT$) adversary $\Adversary$ that corrupts at most $t$ nodes, the following properties hold with overwhelming probability:

\begin{enumerate}
    \item \textbf{Correctness of Partial Signatures:} For any message $\wv$ and any honest node $i \in [n]$, if a partial signature $\partialsig_i$ is produced by running $\partialsig_i \leftarrow \Vote(\sk_i, \Hash(\wv))$, then $\TSVerify(\pk_i, \partialsig_i, \Hash(\wv)) = \true$.  
    \item \textbf{Correctness of Final Signatures:} For any message $\wv$ and any set of partial signatures $\{\partialsig_i\}_{i \in \Tc \subseteq [n]}$ with $|\Tc| \geq \TSthreshold$, if for each $i \in \Tc$, $\TSVerify(\pk_i, \partialsig_i, \Hash(\wv)) = \true$, then combining them must yield a valid final signature. Specifically, if $\finalsig \leftarrow \TSCombine(n, \TSthreshold, \{(i, \partialsig_i)\}_{i\in \Tc}, \Hash(\wv))$, then $\TSVerify(\pk, \finalsig, \Hash(\wv)) = \true$.  
\end{enumerate}
\end{definition}

Unforgeability of the  $\TS$ scheme ensures that an adaptive adversary cannot forge a valid final signature on a new message, even with the ability to corrupt up to $t$  nodes and obtain partial signatures.  

\begin{definition}[{\bf Unforgeability of $\TS$ under Adaptive Chosen-Message Attack}] \label{def:unf-acma}
Let $\PtotocolSigma_{\TS} = (\TSSetup, \TSGen, \Vote, \TSVerify, \TSCombine, \TSVerify)$ be a non-interactive $(n, \TSthreshold)$ $\TS$ scheme. We say that $\PtotocolSigma_{\TS}$ is unforgeable under adaptive chosen-message Aattack ($\TSUNFACMA$) if for any $\PPT$ adversary $\Adversary$, the advantage of $\Adversary$ in the following game is negligible.

\begin{itemize}
    \item \textbf{Setup:} The challenger runs $(\pk, \pk_1, \dotsc, \pk_n, \sk_1, \dotsc, \sk_n) \leftarrow \TSGen(1^\kappa, n, \TSthreshold)$. It gives the public keys $\pk, \pk_1, \dotsc, \pk_n$ to the adversary $\Adversary$ and keeps the private keys. All nodes, including the adversary, have oracle access to a public random oracle $\RandomOracleH$ that models the hash function $\Hash$.
    \item \textbf{Queries:} The adversary $\Adversary$ is given oracle access to the following queries:
    \begin{itemize}
        \item \textbf{RandomOracleQuery($\wv$):} On input a message $\wv$, the challenger computes $\Hash(\wv)$ by querying the random oracle $\RandomOracleH$ and returns the result to $\Adversary$. The challenger maintains a list $\RandomOracleQuerySet$ of all $\wv$ queried to the random oracle.
        \item \textbf{Corrupt($i$):} On input an index $i \in [n]$, the challenger reveals the private key share $\sk_i$ to the adversary. The set of corrupted nodes is denoted by $\CorruptSet \subseteq [n]$. This query can be made at any time. The adversary $\Adversary$ must not be able to corrupt more than $t$ nodes.
        \item \textbf{PartialSign($i, \wv$):} On input an index $i \in [n]$ and a message $\wv$, the challenger computes $\partialsig_i \leftarrow \Vote(\sk_i, \Hash(\wv))$ and returns it to $\Adversary$. The challenger maintains a list $\PartialSignQuerySet$ of all pairs $(i, \wv)$ for which a partial signature was requested.
    \end{itemize}
    \item \textbf{Challenge:} After  querying phase, the adversary $\Adversary$ outputs a message $\wv^{\star}$ and a final signature $\finalsig^{\star}$.
    \item \textbf{Win:} The adversary $\Adversary$ wins if the following conditions are met:
    \begin{enumerate}
        \item $\TSVerify(\pk, \finalsig^{\star}, \Hash(\wv^{\star})) = \true$.
        \item The number of nodes from which the adversary has obtained a secret share or a partial signature for $\wv^{\star}$ is strictly less than the threshold, i.e., $|\CorruptSet \cup \{i \mid (i, \wv^{\star}) \in \PartialSignQuerySet\}| < \TSthreshold$.
    \end{enumerate}
\end{itemize}
The advantage of the adversary is defined as:
$$
\text{Adv}_{\Adversary}^{\TSUNFACMA}(\kappa) = \Pr[\Adversary~\text{wins}]. 
$$
$\PtotocolSigma_{\TS}$ is $\TSUNFACMA$ secure if for any $\PPT$ adversary $\Adversary$, $\text{Adv}_{\Adversary}^{\TSUNFACMA}(\kappa)$ is a negligible function of $\kappa$.
\end{definition}

\subsection{Non-Interactive Layered Threshold Signature  ($\LTS$)}
 
We here introduce  a new primitive: the Layered Threshold Signature  scheme. 
The proposed $\LTS$ scheme achieves an  aggregation computation cost of only $O(n)$ in \emph{good cases}  of partial signature collection, as defined  in Definition~\ref{def:GoodCasePSC}.      
Moreover, $\LTS$ supports the property of \emph{Instantaneous TS Aggregation}. A $\TS$ scheme guarantees this property if it can aggregate partial signatures immediately, without waiting for  $\TSthreshold$ partial signatures, where $\TSthreshold$ is the threshold required to compute the final signature.

 \begin{definition}[{\bf Non-Interactive Layered Threshold Signature ($\LTS$)}]
The $(n, \TSthreshold, \ltslayerMax, \{n_{\ltslayer}, \TSthreshold_{\ltslayer}, \ltslayerTotalNodes_{\ltslayer}\}_{\ltslayer=1}^{\ltslayerMax})$ $\LTS$ scheme generates a signature from \emph{at least} $\TSthreshold$  valid partial signatures, with the parameters constrained by 
\begin{align}
n=\prod_{\ltslayer=1}^{\ltslayerMax} n_{\ltslayer}, \quad   \prod_{\ltslayer=1}^{\ltslayerMax} \TSthreshold_{\ltslayer} \geq \TSthreshold,  \quad \text{and} \quad \ltslayerTotalNodes_{\ltslayer} := \prod_{\ltslayer'=1}^{\ltslayer} n_{\ltslayer'},  \  \forall \ltslayer \in [\ltslayerMax]  \label{eq:LTSContraints}
\end{align} 
for some $\TSthreshold\in [t+1, n-t]$.  
In this $\LTS$ scheme, multiple layers of partial signatures are generated to produce the final signature, with $\ltslayerMax$ representing the total number of layers.  
Let $\partialsig_{\ltslayer, i}$ denotes the $i$-th partial signature at Layer~$\ltslayer$, for $\ltslayer\in [\ltslayerMax]$ and $i\in [\ltslayerTotalNodes_{\ltslayer}]$. 
At Layer~$\ltslayer$, $n_{\ltslayer}$ denotes the number of partial signatures within a group, and any $\TSthreshold_{\ltslayer}$ of them can be combined to generate a valid group signature. This group signature can be considered a partial signature for a ``parent group'' at the upper layer.  
We define the  $\ltsgroup$-th group of partial signatures at Layer~$\ltslayer$ as 
\begin{align}
\PartialSigltsGroupSet_{\ltslayer, \ltsgroup} := \{\partialsig_{\ltslayer, i} \mid    i \in  \ltsGroupSet_{\ltslayer, \ltsgroup}  \}    \label{eq:LTSGroupPartialSig}
\end{align}
where $\ltsGroupSet_{\ltslayer, \ltsgroup}$ denotes the corresponding set of  indices of the $\ltsgroup$-th group of partial signatures at Layer~$\ltslayer$ 
\begin{align}
\ltsGroupSet_{\ltslayer, \ltsgroup} := [(\ltsgroup -1) n_{\ltslayer} + 1, (\ltsgroup -1) n_{\ltslayer} + n_{\ltslayer}]  \label{eq:LTSGroup}
\end{align}
 for $\ltslayer \in [\ltslayerMax]$ and $\ltsgroup \in [\ltslayerTotalNodes_{\ltslayer-1}] $,  where $\ltslayerTotalNodes_0 := 1$.  
The $i$-th  partial signature $\partialsig_{\ltslayer, i}$ at Layer~$\ltslayer$ maps  to  the $\IndexInParentGroup_{\ltslayer, i}$-th  element of the  group $\PartialSigltsGroupSet_{\ltslayer, \ParentIndex_{\ltslayer, i}}$, where 
  \begin{align}
\ParentIndex_{\ltslayer, i}:=\lceil i/ n_{\ltslayer} \rceil, \quad  \IndexInParentGroup_{\ltslayer, i}:=i - (\lceil i/ n_{\ltslayer} \rceil -1)  n_{\ltslayer}  .     \label{eq:LTSparentMap}
\end{align}
   The  $\LTS$ scheme    $\PtotocolSigma_{\LTS} = (\LTSSetup,\LTSGen,  \VoteLTS, \LTSVerify, \LTSCombine,\LTSVerify)$    satisfies the following properties.  
\begin{itemize}
\item \textbf{$\LTSSetup(1^\kappa) \to (\PrimeOrder, \Group, \Group_{T}, \belinearpairing, \randomgenerator, \Hash)$:}  This algorithm generates the public parameters. All subsequent algorithms take the public parameters as input, but these are omitted in the presentation for simplicity. 
\item \textbf{$\LTSGen(1^\kappa, n, \TSthreshold, \ltslayerMax, \{n_{\ltslayer}, \TSthreshold_{\ltslayer}, \ltslayerTotalNodes_{\ltslayer}\}_{\ltslayer=1}^{\ltslayerMax}) \to (\pkl, \pkl_1, \dotsc, \pkl_n, \skl_1, \dotsc, \skl_n)$:} 
Given the security parameter $\kappa$ and other parameters satisfying $n=\prod_{\ltslayer=1}^{\ltslayerMax} n_{\ltslayer}$, $\prod_{\ltslayer=1}^{\ltslayerMax} \TSthreshold_{\ltslayer} \geq \TSthreshold$, and $\ltslayerTotalNodes_{\ltslayer} = \prod_{\ltslayer'=1}^{\ltslayer} n_{\ltslayer'}$ for each $\ltslayer \in [\ltslayerMax]$, this algorithm generates a set of public keys $\pklvector := (\pkl, \pkl_1, \dotsc, \pkl_n)$ available to all nodes, and a private key share $\skl_i$ available only to Node~$i$, for each $i \in [n]$.  
The $\LTS$ key generation is implemented using an $\ADKG$ scheme. In our setting, it is required that $\pkl = \pk$ and $\skl =  \sk$, where $\pk$ and $ \sk$ are the public and secret keys, respectively, for the $\TS$ scheme defined above.

\item \textbf{$\VoteLTS(\skl_i, \Hash(\wv)) \to \partialsig_{\ltslayerMax, i}$:}  
Given a message $\wv$, this algorithm uses the private key share $\skl_i$ to produce the $i$-th partial signature $\partialsig_{\ltslayerMax, i}$ at Layer~$\ltslayerMax$, for $i \in [n]$. This partial signature can be interpreted as Node~$i$'s vote on $\wv$.

\item \textbf{$\LTSVerify(\pkl_{i}, \partialsig_{\ltslayerMax, i}, \Hash(\wv)) \to \true/\false$:}   
This algorithm verifies whether $\partialsig_{\ltslayerMax, i}$ is a valid partial signature on message $\wv$ using the public key $\pkl_{i}$, for $i \in [n]$. It outputs $\true$ if verification succeeds and $\false$ otherwise.

\item \textbf{$\LTSCombine(n_{\ltslayer}, \TSthreshold_{\ltslayer}, \{(\IndexInParentGroup_{\ltslayer, i} , \partialsig_{\ltslayer, i})\}_{i\in \Tc \subseteq \ltsGroupSet_{\ltslayer, \ltsgroup}, |\Tc|\geq \TSthreshold_{\ltslayer}}, \Hash(\wv)) \to \finalsig_{\ltslayer-1, \ltsgroup}$:}  
This algorithm combines any $\TSthreshold_{\ltslayer}$ valid partial signatures from the $\ltsgroup$-th group at Layer~$\ltslayer$, i.e., $\PartialSigltsGroupSet_{\ltslayer, \ltsgroup}$, on the same message $\wv$ to produce a signature $\finalsig_{\ltslayer-1, \ltsgroup}$, for $\ltslayer \in [\ltslayerMax]$ and $\ltsgroup \in [\ltslayerTotalNodes_{\ltslayer-1}]$, where $\IndexInParentGroup_{\ltslayer, i}$ is defined in \eqref{eq:LTSparentMap}. In our setting, $\finalsig_{0, 1}$ denotes the final signature.       

\item \textbf{$\LTSVerify(\pkl, \partialsig_{0, 1}, \Hash(\wv)) \to \true/\false$:}  
This algorithm verifies whether $\partialsig_{0, 1}$ is a valid final signature on message $\wv$ using the public key $\pkl$. It outputs $\true$ if verification succeeds and $\false$ otherwise.
\end{itemize}
\end{definition} 
In our setting, we set $\TSthreshold=\lceil \frac{n+t+1}{2} \rceil$. 
Here, different $\LTS$ schemes are used for different epochs. We define epoch-specific algorithms as follows:  
$\LTSGen(\eon, 1^\kappa, n, \TSthreshold, \ltslayerMax, \{n_{\ltslayer}, \TSthreshold_{\ltslayer}, \ltslayerTotalNodes_{\ltslayer}\}_{\ltslayer=1}^{\ltslayerMax})$, $\VoteLTS (\skl_{\eon, i}, \Hash(\wv))$, $\LTSVerify(\pkl_{\eon, i}, \partialsig_{\ltslayerMax, i}, \Hash(\wv))$,        and  $\LTSVerify(\pkl_{\eon}, \partialsig_{0, 1}, \Hash(\wv))$, 
where the additional parameter $\eon$ denotes the epoch number associated with the corresponding instance of the $\LTS$ scheme.  
Algorithm~\ref{algm:OciorDKGtd} presents a key generation protocol with a trusted dealer for both the $\TS$ and $\LTS$ schemes, referred to as    $\OciorDKGtd$.  
Algorithm~\ref{algm:OciorADKG} presents the proposed $\ADKG$ protocol, referred to as    $\OciorADKG$.  
Fig.~\ref{fig:OciorLTSexample} illustrates a  tree structure of partial signatures of the proposed $\LTS$ scheme for the example with parameters:  $n=1400$, $t=\lfloor \frac{1400-1}{3} \rfloor = 466$,  $\TSthreshold=\lceil \frac{n+t+1}{2} \rceil =934$, $\ltslayerMax=3$,   $n_1=14, n_2=10, n_3=10$,    $\TSthreshold_1=13$,   $\TSthreshold_2=9$,  $\TSthreshold_3=8$,   $n=n_1 n_2 n_3$,  and  $\TSthreshold=\TSthreshold_1 \TSthreshold_2 \TSthreshold_3$.

  \begin{figure}   
\centering
\includegraphics[width=17.8cm]{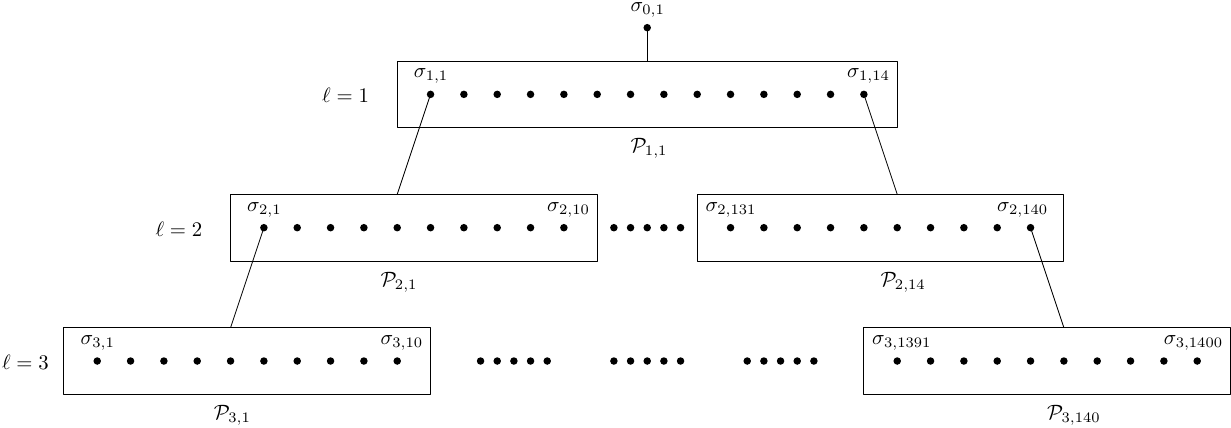}
\caption{A tree structure of partial signatures of the proposed $\LTS$ scheme for the example with parameters:  $n=1400$, $t=\lfloor \frac{1400-1}{3} \rfloor = 466$,  $\TSthreshold=\lceil \frac{n+t+1}{2} \rceil =934$, $\ltslayerMax=3$,   $n_1=14, n_2=10, n_3=10$,    $\TSthreshold_1=13$,   $\TSthreshold_2=9$,  $\TSthreshold_3=8$,   $n=n_1 n_2 n_3$,  and  $\TSthreshold=\TSthreshold_1 \TSthreshold_2 \TSthreshold_3$. In this example, the  partial signatures generated from $n$ distinct nodes at Layer~$\ltslayerMax$ are represented by $\{\partialsig_{3, i}\}_{i\in [n]}$.  The group $\PartialSigltsGroupSet_{3, 1}$  is collection of  $n_3$ corresponding partial signatures,  that is, $\PartialSigltsGroupSet_{3, 1}=\{\partialsig_{3, i} \mid    i \in [10] \}$. Any $\TSthreshold_3$ valid partial signatures in $\PartialSigltsGroupSet_{3, 1}$ can be used to generate the partial signature   $\partialsig_{2, 1}$.  We refer to $\partialsig_{2, 1}$ as the parent of the partial signatures in $\PartialSigltsGroupSet_{3, 1}$.  Similarly,  any $\TSthreshold_2$ valid partial signatures in $\PartialSigltsGroupSet_{2, 1}$ can be used to generate the partial signature   $\partialsig_{1, 1}$, and any $\TSthreshold_1$ valid partial signatures in $\PartialSigltsGroupSet_{1, 1}$ can be used to generate the final signature    $\partialsig_{0, 1}$. 
}
\label{fig:OciorLTSexample}
\end{figure}

 The $\LTS$ scheme guarantees the properties of \emph{Good-Case  Robustness} and \emph{Unforgeability} defined in Definitions~\ref{def:LTSRobustness} and \ref{def:LTSUnforgeability}.
At first, let us provide the following definitions  related to the $\LTS$ scheme.

\begin{definition}[{\bf Faulty Group, Non-Faulty (or Valid) Group, and Valid Partial Signature}]
The $\ltsgroup$-th group of partial signatures at Layer~$\ltslayer$, denoted by $\PartialSigltsGroupSet_{\ltslayer, \ltsgroup}$ as defined in~\eqref{eq:LTSGroupPartialSig}, for $\ltslayer \in [\ltslayerMax]$ and $\ltsgroup \in [\ltslayerTotalNodes_{\ltslayer-1}]$, is said to be \emph{faulty} if the number of faulty, unavailable, or invalid partial signatures in the group exceeds $n_{\ltslayer} - \TSthreshold_{\ltslayer}$. In this case, the group signature it generates is also considered faulty (or invalid). 
Conversely, $\PartialSigltsGroupSet_{\ltslayer, \ltsgroup}$ is said to be \emph{non-faulty} or \emph{valid} if the number of valid partial signatures within the group is greater than or equal to    $\TSthreshold_{\ltslayer}$. In this case, the signature it generates is considered non-faulty  or valid. 
The signature generated by $\PartialSigltsGroupSet_{\ltslayer, \ltsgroup}$ may serve as a partial signature for a ``parent group'' at Layer~$\ltslayer - 1$, and is denoted by $\partialsig_{\ltslayer-1, \ltsgroup}$. Therefore, if $\PartialSigltsGroupSet_{\ltslayer, \ltsgroup}$ is non-faulty, the resulting partial signature $\partialsig_{\ltslayer-1, \ltsgroup}$ is also non-faulty or valid.
\end{definition}

\begin{definition}[{\bf Parent of Partial Signatures}]
The $i$-th partial signature $\partialsig_{\ltslayer, i}$ at Layer~$\ltslayer$ corresponds to the $\IndexInParentGroup_{\ltslayer, i}$-th element of the $\ParentIndex_{\ltslayer, i}$-th group of partial signatures at Layer~$\ltslayer$, denoted by $\PartialSigltsGroupSet_{\ltslayer, \ParentIndex_{\ltslayer, i}}$, where $\ParentIndex_{\ltslayer, i}$ and  $\IndexInParentGroup_{\ltslayer, i}$ are defined in \eqref{eq:LTSparentMap}, 
for $\ltslayer \in [\ltslayerMax]$ and $i \in [\ltslayerTotalNodes_{\ltslayer}]$. 
The partial signature generated by $\PartialSigltsGroupSet_{\ltslayer, \ParentIndex_{\ltslayer, i}}$, denoted by $\partialsig_{\ltslayer-1, \ParentIndex_{\ltslayer, i}}$, is considered the $\ParentIndex_{\ltslayer, i}$-th partial signature at Layer~$\ltslayer - 1$. 
We refer to $\partialsig_{\ltslayer-1, \ParentIndex_{\ltslayer, i}}$ as the \emph{parent} of the partial signature $\partialsig_{\ltslayer, i}$.
\end{definition}

\begin{definition}[{\bf Partial Signature Layer}]
The $\ltslayer$-th layer of partial signatures, denoted by $\PartialSigLayer_{\ltslayer}$, is defined as the collection of all partial signatures at Layer~$\ltslayer$, i.e.,
\[
\PartialSigLayer_{\ltslayer} = \{\partialsig_{\ltslayer, 1}, \partialsig_{\ltslayer, 2}, \ldots, \partialsig_{\ltslayer, \ltslayerTotalNodes_{\ltslayer}} \}
\]
for $\ltslayer \in [\ltslayerMax]$, where $\ltslayerTotalNodes_{\ltslayer} = \prod_{\ltslayer'=1}^{\ltslayer} n_{\ltslayer'}$.
\end{definition}

\begin{definition}[{\bf Partial Signature Tree}]
The full tree of partial signatures, denoted by $\PartialSigTree$, is defined as the union of all partial signature layers, i.e.,
\[
\PartialSigTree = \bigcup_{\ltslayer=1}^{\ltslayerMax} \PartialSigLayer_{\ltslayer}.
\]
\end{definition}

\begin{definition}[{\bf Partial Signature Collection}]  \label{def:PSC}
A partial signature collection, denoted by $\PartialSigCollection$, is a subset of partial signatures at Layer~$\ltslayerMax$, i.e., $\PartialSigCollection \subseteq \PartialSigLayer_{\ltslayerMax}$.
\end{definition}

\begin{definition}[{\bf Good Cases of Partial Signature Collection}]     \label{def:GoodCasePSC}
A partial signature collection $\PartialSigCollection$ is said to be a \emph{good case} of partial signature collection if there exist partial signature subsets $\PartialSigLayerSubset_{\ltslayer} \subseteq \PartialSigLayer_{\ltslayer}$ for each $\ltslayer \in [\ltslayerMax - 1]$, such that:
\begin{itemize}
    \item $\PartialSigLayerSubset_{\ltslayerMax - 1}$ includes all the parents of $\PartialSigCollection$;
    \item for all $\ltslayer \in [\ltslayerMax - 2]$, the subset $\PartialSigLayerSubset_{\ltslayer}$ includes all the parents of $\PartialSigLayerSubset_{\ltslayer + 1}$;
    \item all partial signatures in $\bigcup_{\ltslayer=1}^{\ltslayerMax - 1} \PartialSigLayerSubset_{\ltslayer}$ are valid;
    \item and $|\PartialSigLayerSubset_{1}| \geq \TSthreshold_{1}$.
\end{itemize}
\end{definition}

Good-Case Robustness of the $\LTS$ scheme ensures that an adaptive adversary  cannot prevent the honest nodes from successfully producing a valid final signature  in the good cases of partial signature collection. 

\begin{definition}[{\bf Good-Case Robustness of $\LTS$ under ACMA}] \label{def:LTSRobustness}
A non-interactive $(n, \TSthreshold, \ltslayerMax, \{n_{\ltslayer}, \TSthreshold_{\ltslayer}, \ltslayerTotalNodes_{\ltslayer}\}_{\ltslayer=1}^{\ltslayerMax})$ $\LTS$ scheme $\PtotocolSigma_{\LTS} = (\LTSSetup,\LTSGen, \VoteLTS, \LTSVerify, \LTSCombine,\LTSVerify)$ is good-case robust under adaptive chosen-message attack ($\LTSROBACMA$) if for any $\PPT$ adversary $\Adversary$ that corrupts at most $t$ nodes, the following properties hold with overwhelming probability:

\begin{enumerate}
    \item \textbf{Correctness of   Partial Signatures:} For any message $\wv$ and any  $i \in [n]$, a partial signature $\partialsig_{\ltslayerMax, i} \leftarrow \VoteLTS(\skl_i, \Hash(\wv))$ must always be valid, i.e., $\LTSVerify(\pkl_i, \partialsig_{\ltslayerMax, i}, \Hash(\wv)) = \true$.  
    
    \item \textbf{Correctness of Final Signatures:} For any message $\wv$, let $\PartialSigCollection  \subseteq \PartialSigLayer_{\ltslayerMax}$ be a   partial signature collection (as defined in Definition \ref{def:PSC}) on a message $\wv$. The scheme is robust if $\PartialSigCollection$ is a \emph{good case} of partial signature collection (as defined in Definition \ref{def:GoodCasePSC}), which implies that a valid final signature $\partialsig_{0, 1}$ can be produced through the successive combination of signatures from the lowest layer up to Layer 1, where it can be verified successfully, i.e., $\LTSVerify(\pkl, \partialsig_{0, 1}, \Hash(\wv)) = \true$.  
\end{enumerate}
\end{definition}

Unforgeability of the  $\LTS$ scheme ensures that an adaptive adversary cannot forge a valid final signature on a new message, even with the ability to corrupt up to $t$  nodes and obtain partial signatures.  

\begin{definition}[{\bf Unforgeability of $\LTS$ under ACMA}] \label{def:LTSUnforgeability}
Let $\PtotocolSigma_{\LTS} = (\LTSSetup, \LTSGen, \VoteLTS$, $\LTSVerify, \LTSCombine,\LTSVerify)$ be a non-interactive $(n, \TSthreshold, \ltslayerMax, \{n_{\ltslayer}, \TSthreshold_{\ltslayer}, \ltslayerTotalNodes_{\ltslayer}\}_{\ltslayer=1}^{\ltslayerMax})$ $\LTS$ scheme. We say that $\PtotocolSigma_{\LTS}$ is unforgeable under adaptive chosen-message attack ($\LTSUNFACMA$) if for any $\PPT$ adversary $\Adversary$, the advantage of $\Adversary$ in the following game is negligible.

\begin{itemize}
    \item \textbf{Setup:} The challenger runs $(\pkl, \pkl_1, \dotsc, \pkl_n, \skl_1, \dotsc, \skl_n) \!\!\leftarrow\!\! \LTSGen(1^\kappa\!, \!n, \!\TSthreshold, \!\ltslayerMax, \!\{n_{\ltslayer}, \TSthreshold_{\ltslayer}, \ltslayerTotalNodes_{\ltslayer}\}_{\ltslayer=1}^{\ltslayerMax})$. It gives the public keys $\pkl, \pkl_1, \dotsc, \pkl_n$ to the adversary $\Adversary$ and keeps the private keys. All nodes, including the adversary, have oracle access to a public random oracle $\RandomOracleH$ that models the hash function $\Hash$.
    \item \textbf{Queries:} The adversary $\Adversary$ is given oracle access to the following queries:
    \begin{itemize}
        \item \textbf{RandomOracleQuery($\wv$):} On input a message $\wv$, the challenger computes $\Hash(\wv)$ by querying the random oracle $\RandomOracleH$ and returns the result to $\Adversary$. The challenger maintains a list $\RandomOracleQuerySet$ of all $\wv$ queried to the random oracle.
        \item \textbf{Corrupt($i$):} On input an index $i \in [n]$, the challenger reveals the private key share $\skl_i$ to the adversary. The set of corrupted nodes is denoted by $\CorruptSet \subseteq [n]$. This query can be made at any time. The adversary $\Adversary$ must not be able to corrupt more than $t$ nodes.
        \item \textbf{PartialSign($i, \wv$):} On input an index $i \in [n]$ and a message $\wv$, the challenger computes $\partialsig_{\ltslayerMax, i} \leftarrow \VoteLTS(\skl_i, \Hash(\wv))$ and returns it to $\Adversary$. The challenger maintains a list $\PartialSignQuerySet$ of all pairs $(i, \wv)$ for which a partial signature was requested.
    \end{itemize}
    \item \textbf{Challenge:} After  querying phase, the adversary $\Adversary$ outputs a message $\wv^{\star}$ and a final signature $\finalsig^{\star}$.
    \item \textbf{Win:} The adversary $\Adversary$ wins if the following conditions are met:
    \begin{enumerate}
        \item $\LTSVerify(\pkl, \finalsig^{\star}, \Hash(\wv^{\star})) = \true$.
        \item The number of nodes from which the adversary has obtained a secret share or a partial signature for $\wv^{\star}$ is strictly less than the threshold, i.e., $|\CorruptSet \cup \{i \mid (i, \wv^{\star}) \in \PartialSignQuerySet\}| < \TSthreshold$.
    \end{enumerate}
\end{itemize}
The advantage of the adversary is defined as: $
\text{Adv}_{\Adversary}^{\LTSUNFACMA}(\kappa) = \Pr[\Adversary~\text{wins}]$.  
$\PtotocolSigma_{\LTS}$ is $\LTSUNFACMA$ secure if for any $\PPT$ adversary $\Adversary$, $\text{Adv}_{\Adversary}^{\LTSUNFACMA}(\kappa)$ is a negligible function of $\kappa$.
\end{definition}

\subsection{$\OciorBLS$: A Composition of a Single $\TS$ Scheme and One or More $\LTS$ Schemes}

The proposed $\OciorBLS$ is a composition of a single $\TS$ scheme and one or more $\LTS$ schemes. 
Fig.~\ref{fig:OciorBLS} provides the  description of the algorithms used in the proposed  $\OciorBLS$.   
Algorithm~\ref{algm:OciorBLS} presents $\OciorBLS$ with a single $\TS$ scheme and a single $\LTS$ scheme. 
As discussed in the next subsection, $\OciorBLS$ can be easily extended to support multiple parallel $\LTS$ schemes. 
By combining a single $\TS$ scheme with one or more $\LTS$ schemes, $\OciorBLS$ enables fast signature aggregation in the good cases of partial signature collection, while ensuring both robustness and unforgeability in worst-case scenarios.

While $\OciorBLS$ guarantees both robustness and unforgeability even under asynchronous network conditions and adaptive adversaries, it achieves fast signature aggregation in good cases of partial signature collection.   
Such good cases are likely in scenarios where, for example, network delays are bounded, the actual number of Byzantine nodes is small, and the Byzantine nodes are evenly distributed across groups in the
$\LTS$ schemes. 
To increase the likelihood of such good cases, we incorporate a mechanism that uniformly shuffles nodes into groups at each epoch. 
This promotes a more even distribution of Byzantine nodes across groups for $\LTS$ schemes, thereby improving the chances of good cases of partial signature collection. 
Even in the presence of an \emph{adaptive} adversary, as long as the total number of dishonest nodes remains bounded by $t$ throughout the protocol and shuffling occurs at each epoch, the system will eventually reach a state where the adversary becomes effectively static after $O(t)$ epochs (i.e., it exhausts its corruption budget). 
From that point onward, the probability that good cases of partial signature collection occur becomes very high.

In $\OciorBLS$, an $\ADKG$ protocol is used to generate the public keys and private key shares for both the $\TS$ and $\LTS$ schemes. 
Algorithm~\ref{algm:OciorDKGtd} presents a key generation protocol with a trusted dealer for both the $\TS$ and $\LTS$ schemes, referred to as    $\OciorDKGtd$.  
Algorithm~\ref{algm:OciorADKG} presents the proposed $\ADKG$ protocol, referred to as    $\OciorADKG$. 
It is required that $\pkl = \pk$ and $\skl =  \sk$, where $\pk$ and $ \sk$ are the public key and secret key for the $\TS$ scheme, and $\pkl$ and $\skl$ are the public key and secret key for the $\LTS$ scheme, respectively. 
As described in Algorithm~\ref{algm:OciorBLS}, $\OciorBLS$  involves the following parallel processes for the $\TS$ and $\LTS$ schemes:
\begin{itemize}
\item \textbf{Parallel Signature Generation:}
When a node signs a message $\wv$, it generates a partial signature via $\Vote(\sk_i, \Hash(\wv))$ for the $\TS$ scheme and, in parallel, another partial signature via $\VoteLTS(\skl_i, \Hash(\wv))$ for the $\LTS$ scheme. 
\item \textbf{Parallel  Signature Verification:}
When a verifier checks partial signatures from a signer, it verifies both the $\TS$ and $\LTS$ partial signatures in parallel. 
\item \textbf{Parallel Signature Aggregation:}
When an aggregator collects partial signatures, it performs aggregation for both the $\TS$ and $\LTS$ schemes in parallel. The $\LTS$ scheme supports \emph{instantaneous aggregation}, whereas the $\TS$ scheme must wait for   $\TSthreshold$ partial signatures, plus a  predefined delay period  $\DelayPara$, which allows additional partial signatures to be included in the $\LTS$ scheme, if possible.  
As soon as aggregation completes for either scheme, the process for the other is terminated,  since both schemes produce identical final signatures.
\end{itemize}
 
For the $\LTSCombine$ algorithm described in Fig.~\ref{fig:OciorBLS}, since $n_{\ltslayer}$ and $\TSthreshold_{\ltslayer}$ are both finite numbers (e.g., $n_{\ltslayer} = 10$ and $\TSthreshold_{\ltslayer} = 8$), the nodes can precompute and store all Lagrange coefficients for every possible $\Tc \in [n_{\ltslayer}]$ with $|\Tc| = \TSthreshold_{\ltslayer}$. 
In contrast to traditional $\TS$ signature aggregation, where the computational bottleneck lies in computing Lagrange coefficients for an unpredictable aggregation set, the proposed 
$\LTS$ enables significantly faster aggregation.
The total computation per message for signature aggregation is $O(n)$ in good cases completed via $\LTS$, and up to $O(n \log^2 n)$ in worst-case scenarios completed via the  $\TS$ scheme.

 \subsection{$\OciorBLS$ with Multiple Parallel $\LTS$ Schemes}

One approach to increase the probability of success for the $\LTS$ signature is to allow each node to sign multiple $\LTS$ partial signatures, each generated using a secret key share derived from a different instantiation of  
$\LTSGen(1^\kappa, n, \TSthreshold, \ltslayerMax, \{n_{\ltslayer}, \TSthreshold_{\ltslayer}, \ltslayerTotalNodes_{\ltslayer}\}_{\ltslayer=1}^{\ltslayerMax}) \to (\pkl, \pkl_1, \dotsc, \pkl_n, \skl_1, \dotsc, \skl_n)$, 
with $\pkl = \pk$. In other words, instead of using a single $\LTS$ scheme, we can use multiple (but finite) parallel $\LTS$ schemes, each with node indices \emph{shuffled} differently and \emph{independently}. As long as any one of these $\LTS$ schemes generates a final signature, the signing process is considered complete. For simplicity, we present $\OciorBLS$ using only a single $\LTS$ scheme.  \\

\begin{algorithm}
\caption{$\OciorDKGtd$ protocol for key generation with a trusted dealer, with an identifier $\IDProtocol$.} \label{algm:OciorDKGtd}   
\begin{algorithmic}[1] 
\vspace{5pt}    
\footnotesize

 \Statex Public Parameters: $\pp = (\Group,   \randomgenerator)$;    $(n, \TSthreshold)$ for $\TS$ scheme; and  $(n, \TSthreshold, \ltslayerMax, \{n_{\ltslayer}, \TSthreshold_{\ltslayer}, \ltslayerTotalNodes_{\ltslayer}\}_{\ltslayer=1}^{\ltslayerMax})$ for $\LTS$ scheme under the constraints:  $n = \prod_{\ltslayer=1}^{\ltslayerMax} n_{\ltslayer}$,  $\prod_{\ltslayer=1}^{\ltslayerMax} \TSthreshold_{\ltslayer} \geq \TSthreshold$, and $\ltslayerTotalNodes_{\ltslayer} := \prod_{\ltslayer'=1}^{\ltslayer} n_{\ltslayer'}$ for each  $\ltslayer \in [\ltslayerMax]$,  with $\ltslayerTotalNodes_{0} := 1$,  for some $\TSthreshold\in [t+1, n-t]$, and  $n\geq 3t+1$.  
 Here we set $\TSthreshold=\lceil \frac{n+t+1}{2} \rceil$.     $\LTSIndexBook$ is a book of  index mapping for $\LTS$ scheme, available at all nodes. 
 
 \Statex
\State  Set $\PolyDegree:= \TSthreshold - 1$ and  $\PolyDegree_{\ltslayer} := \TSthreshold_{\ltslayer} - 1$ for $\ltslayer \in [\ltslayerMax]$.     Set     $\jshuffle:= \LTSIndexBook[\ltuple \IDProtocol, j \rtuple], \forall j \in [n]$

\Statex

\Statex   \emph{//   ***** for the $\TS$ scheme   *****} 

 		\State  sample   $\PolyDegree$-degree random polynomials $\PolynomialFunction(\cdot) \in \FieldZ_{\FiniteFieldSize}[x]$, where $\skinput:=\PolynomialFunction(0)   \in \FieldZ_{\FiniteFieldSize}$ is a randomly generated secret

 \State    set $\pk:= \randomgenerator^{\PolynomialFunction(0)}$,  $\sk_{j} :=  \PolynomialFunction(j)$,   $\pk_{j} := \randomgenerator^{\sk_{j}}$,  $\forall    j\in [n]$

 \Statex
 
 \Statex   \emph{//   ***** for the $\LTS$ scheme   *****}

  \State		sample  $\PolyDegree_{\ltslayer}$-degree  random polynomials   $\LTSPolynomialFunction_{\ltslayer, \ltsgroup} (\cdot)\in \FieldZ_{\FiniteFieldSize}[x]$,   
 for each $ \ltslayer\in[\ltslayerMax]$ and  $\ltsgroup \in [\ltslayerTotalNodes_{\ltslayer-1}]$ under the following constraints:  
 \[\LTSPolynomialFunction_{1,1}(0) =\skinput     \quad \text{and}\quad  \LTSPolynomialFunction_{\ltslayer,\ltsgroup}(0)  =\LTSPolynomialFunction_{\ltslayer -1, \ParentIndex_{\ltslayer-1, \ltsgroup} }(\IndexInParentGroup_{\ltslayer-1, \ltsgroup}) ,  \quad  \forall  \ltslayer\in[2, \ltslayerMax], \ltsgroup \in [\ltslayerTotalNodes_{\ltslayer-1}] \]
  where       $\ParentIndex_{\ltslayer, \ltsgroup}:=\lceil \ltsgroup/ n_{\ltslayer} \rceil $,   $\IndexInParentGroup_{\ltslayer, \ltsgroup}:=\ltsgroup - (\lceil \ltsgroup/ n_{\ltslayer} \rceil -1)  n_{\ltslayer}$      \label{line:AVSSPolyFLTS}

 \State    set $\pkl:= \pk$,  $\skl_{j} :=  \LTSPolynomialFunction_{\ltslayerMax,    \ParentIndex_{\ltslayerMax, j}} (\IndexInParentGroup_{\ltslayerMax, j}) $,   $\pkl_{j} := \randomgenerator^{\skl_{j}}$,  $\forall    j\in [n]$

 \Statex

\State    $\return$   $(\sk_{j}, \skl_{\jshuffle},    \pk, \pkl, \{\pk_{i}\}_{i\in [n]}, \{\pkl_{i}\}_{i\in [n]})$  	to Node~$j$, $\forall    j\in [n]$

\end{algorithmic}
\end{algorithm}

\begin{figure}
\centering
\begin{tcolorbox}[title=$\OciorBLS$: A Composition of a Single $\TS$ Scheme and One or More $\LTS$ Schemes, colframe=blue!55, colback=white, width=1\textwidth]
\small   

 {\bf $\underline{\TSSetup(1^\kappa) \to (\PrimeOrder, \Group, \Group_{T}, \belinearpairing, \randomgenerator, \Hash)}$.}  \\ 
 This algorithm generates the public parameters ($\pp$). 
\begin{enumerate}
    \item \emph{Elliptic Curve and Pairing Selection:}
    Let $\Group$ be a cyclic group of prime order $\PrimeOrder$.  
    Let $\randomgenerator$ be a generator of $\Group$.
    Choose a computable, non-degenerate bilinear pairing $\belinearpairing: \Group \times \Group \to \Group_T$, where $\Group_T$ is a multiplicative cyclic group of order $\PrimeOrder$. The pairing $\belinearpairing$ must satisfy the following properties:
    \begin{itemize}
        \item \emph{Bilinearity:} For all $u, v \in \Group$, and $a, b \in \FieldZ_\PrimeOrder$, $\belinearpairing(u^a, v^b) = \belinearpairing(u, v)^{ab}$.
        \item \emph{Non-degeneracy:} $\belinearpairing(\randomgenerator, \randomgenerator) \ne 1$.
        \item \emph{Computability:} There is an efficient algorithm to compute $\belinearpairing(u, v)$ for any $u, v \in \Group$.
    \end{itemize}

    \item \emph{Hash Function:}
    Choose a cryptographic hash function $\Hash: \{0,1\}^* \to \Group$. This hash function maps arbitrary messages to elements in $\Group$. It is often modeled as a random oracle for security proofs.

    \item \emph{Public Parameters:}
    The public parameters are $\pp = (\PrimeOrder, \Group, \Group_{T}, \belinearpairing, \randomgenerator, \Hash)$.
\end{enumerate}

 \vspace{10pt}
 
 {\bf $\underline{\TSGen(1^\kappa, n, \TSthreshold) \to (\pk, \pk_1, \dotsc, \pk_n, \sk_1, \sk_2, \dotsc, \sk_n)}$.}  \\
 Every honest node $i$ eventually outputs a private key share $\sk_i:=\skshare_i \in \FieldZ_{\FiniteFieldSize}$ and a public key vector
    \[
    \pkvector := (\pk, \pk_1, \dotsc, \pk_n),   \quad \text{where}\quad  \pk:=\randomgenerator^{\skshare} \  \text{and} \ \   \pk_i:=\randomgenerator^{\skshare_i},  \  \text{for} \   i \in [n]  
    \]
    where    $\skshare$ is the final secret jointly generated by the $n$ distributed nodes.  Any subset of $\TSthreshold$ valid private key shares from $\{\skshare_1, \skshare_2, \ldots, \skshare_n\}$ can reconstruct the same unique secret $\skshare$.

  \vspace{10pt}

{\bf $\underline{\Vote(\sk_i, \Hash(\wv)) \to \partialsig_i}$.}   \\  
To sign a message $\wv \in \{0,1\}^*$,  the $i$-th signer, who holds the secret key share  $\sk_i$,    first computes the hash of the message: $\Hash(\wv) \in \Group$, and then computes the 
$i$-th partial signature $\partialsig_i$  as: \[\partialsig_i = \Hash(\wv)^{\sk_i} \in \Group.\]
 
   \vspace{10pt}
   
{\bf $\underline{\TSVerify(\pk_i,  \partialsig_i, \Hash(\wv)) \to \true/\false}$.}   \\  
This algorithm verifies  the $i$-th partial signature $\partialsig_i$ on a message $\wv$ using the corresponding public key $\pk_i$. 
It first computes the hash of the message: $\Hash(\wv) \in \Group$, and then checks the pairing equation: \[\belinearpairing(\partialsig_i, \randomgenerator) \stackrel{?}{=} \belinearpairing(\Hash(\wv), \pk_i).\]
If the equality holds, the partial signature is valid and the algorithm returns $\true$. Otherwise, it is invalid and the algorithm returns $\false$.

  \vspace{10pt}
 
  {\bf $\underline{\TSCombine (n, \TSthreshold,    \TSGroupAgg:=\{(i, \partialsig_i)\}_{i\in \Tc'\subseteq[n], |\Tc'|\geq \TSthreshold}, \Hash(\wv)) \to \finalsig}$.}   \\
  It is required that partial signatures $\partialsig_i$ be successfully verified before being added to the set $\TSGroupAgg$.  \\
    It is also required that   $|\TSGroupAgg| \geq \TSthreshold$ when running this algorithm.  \\
Let $\Tc \subseteq \Tc':= \{i \in [n] \mid (i, \partialsig_i) \in \TSGroupAgg\}$ be a subset of indices included in $\TSGroupAgg$, with $|\Tc|=\TSthreshold$. \\
Let $\LagrangeCoefficientIntAtZero_{\Tc, i} = \prod_{j \in \Tc, j \neq i} \frac{j}{j - i}$ be the $i$-th Lagrange coefficient, for $i\in \Tc$. \\
The final signature $\finalsig$ is computed as follows and then returned:
\[\finalsig =\prod_{i\in \Tc}\partialsig_i^{\LagrangeCoefficientIntAtZero_{\Tc, i} }. \] 

    \vspace{10pt}
  
  {\bf $\underline{\TSVerify(\pk, \finalsig, \Hash(\wv)) \to \true/\false}$.}   \\  
This algorithm verifies  the final signature $\finalsig$ on a message $\wv$ using a corresponding public key $\pk$.  This verification  of the final signature is similar to that of a partial signature, by checking   the pairing equation: $\belinearpairing(\finalsig, \randomgenerator) \stackrel{?}{=} \belinearpairing(\Hash(\wv), \pk)$.
If the equality holds,  the algorithm returns $\true$; otherwise, it   returns $\false$.

\end{tcolorbox}
\end{figure}

\begin{figure}
\centering
\begin{tcolorbox}[title=$\OciorBLS$: A Composition of a Single $\TS$ Scheme and One or More $\LTS$ Schemes   (Continued), colframe=blue!55, colback=white, width=1\textwidth]
\small

 {\bf $\underline{\LTSSetup(1^\kappa) \to (\PrimeOrder, \Group, \Group_{T}, \belinearpairing, \randomgenerator, \Hash)}$.}  \\ 
 The $\LTS$ scheme use the same public parameters $(\PrimeOrder, \Group, \Group_{T}, \belinearpairing, \randomgenerator, \Hash)$ as the $\TS$ scheme.  

 \vspace{10pt}
 
 {\bf $\underline{\LTSGen(1^\kappa, n, \TSthreshold, \ltslayerMax, \{n_{\ltslayer}, \TSthreshold_{\ltslayer}, \ltslayerTotalNodes_{\ltslayer}\}_{\ltslayer=1}^{\ltslayerMax}) \to (\pkl, \pkl_1, \dotsc, \pkl_n, \skl_1, \dotsc, \skl_n)}$.}  \\
 Every honest node $i$ eventually outputs a private key share $\skl_i := \skshareLTS_i \in \FieldZ_{\FiniteFieldSize}$ and a public key vector
    \[
    \pklvector := (\pkl, \pkl_1, \dotsc, \pkl_n),   \quad \text{where}\quad  \pkl=\pk=\randomgenerator^{\skshare} \  \text{and} \ \   \pkl_i:=\randomgenerator^{\skshareLTS_i}, \  \text{for} \   i \in [n]  
    \]
    where    $\skshare$  is the same as that used in the $\TS$ scheme, with $\pkl=\pk=\randomgenerator^{\skshare}$.    A  secret $\skshare$ may   be reconstructed only when at least $\TSthreshold$ valid private key shares  from $\{\skshareLTS_1, \skshareLTS_2, \ldots, \skshareLTS_n\}$ are provided; and  all  reconstructed secrets  are identical.

  \vspace{10pt}

{\bf $\underline{\VoteLTS(\skl_i, \Hash(\wv)) \to \partialsig_{\ltslayerMax, i}}$.}   \\  
To sign a message $\wv \in \{0,1\}^*$,  the $i$-th signer, who holds the secret key share  $\skl_i$,    first computes the hash of the message: $\Hash(\wv) \in \Group$, and then computes the 
$i$-th partial signature $\partialsig_{\ltslayerMax, i}$ at Layer~$\ltslayerMax$, for $i \in [n]$:  \[\partialsig_{\ltslayerMax, i} = \Hash(\wv)^{\skl_i} \in \Group.\]

   \vspace{10pt}

 {\bf $\underline{\LTSVerify(\pkl_{i}, \partialsig_{\ltslayerMax, i}, \Hash(\wv)) \to \true/\false}$.}   \\  
This algorithm verifies  the $i$-th partial signature $\partialsig_{\ltslayerMax, i}$ at Layer~$\ltslayerMax$ on a message $\wv$ using the corresponding public key $\pkl_{i}$. 
It first computes the hash of the message: $\Hash(\wv) \in \Group$, and then checks the pairing equation: \[\belinearpairing(\partialsig_{\ltslayerMax, i}, \randomgenerator) \stackrel{?}{=} \belinearpairing(\Hash(\wv), \pkl_i).\]
If the equality holds,  the algorithm returns $\true$; otherwise, it returns $\false$.

  \vspace{10pt}
 
  {\bf $\underline{\LTSCombine(n_{\ltslayer}, \TSthreshold_{\ltslayer}, \LTSGroupAgg[\ltuple\ltslayer, \ltsgroup\rtuple]:=\{(\IndexInParentGroup_{\ltslayer, i} , \partialsig_{\ltslayer, i})\}_{i\in \Tc' \subseteq \ltsGroupSet_{\ltslayer, \ltsgroup}, |\Tc'| \geq \TSthreshold_{\ltslayer}}, \Hash(\wv)) \to \finalsig_{\ltslayer-1, \ltsgroup}}$}  for $\ltslayer \in [\ltslayerMax]$ and $\ltsgroup \in [\ltslayerTotalNodes_{\ltslayer-1}]$. \\
  When $\ltslayer=\ltslayerMax$, it is required that partial signatures $\partialsig_{\ltslayerMax, i}$ be successfully verified before being added to the set $\LTSGroupAgg$.  \\
    It is also required that   $|\LTSGroupAgg[\ltuple\ltslayer, \ltsgroup\rtuple]| \geq \TSthreshold_{\ltslayer}$ when running this algorithm.  \\
Let $\Tc \subseteq \Tc' := \{i \in \ltsGroupSet_{\ltslayer, \ltsgroup}  \mid (\IndexInParentGroup_{\ltslayer, i} , \partialsig_{\ltslayer, i}) \in \LTSGroupAgg[\ltuple\ltslayer, \ltsgroup\rtuple]\}$ be a subset of indices included in $\LTSGroupAgg[\ltuple\ltslayer, \ltsgroup\rtuple]$, with $|\Tc|=\TSthreshold_{\ltslayer}$.   \\
Here $\ltsGroupSet_{\ltslayer, \ltsgroup} := [(\ltsgroup -1) n_{\ltslayer} + 1, (\ltsgroup -1) n_{\ltslayer} + n_{\ltslayer}]$,     and $\IndexInParentGroup_{\ltslayer, i}:=i - (\lceil i/ n_{\ltslayer} \rceil -1)  n_{\ltslayer}$.  \\  
Let $\LagrangeCoefficientIntAtZero_{\Tc, i} = \prod_{j \in \Tc, j \neq i} \frac{j}{j - i}$ be the $i$-th Lagrange coefficient, for $i\in \Tc$.  \\
Since $n_{\ltslayer}$ and $\TSthreshold_{\ltslayer}$ are both finite numbers (e.g., $n_{\ltslayer} = 10$ and $\TSthreshold_{\ltslayer} = 8$), the nodes can precompute and store all Lagrange coefficients for every possible $\Tc \in [n_{\ltslayer}]$ with $|\Tc| = \TSthreshold_{\ltslayer}$.   \\
The combined signature $\finalsig_{\ltslayer-1, \ltsgroup}$ is  computed as follows and then returned:
\[\finalsig_{\ltslayer-1, \ltsgroup} =\prod_{i\in \Tc}\partialsig_{\ltslayer, i}^{\LagrangeCoefficientIntAtZero_{\Tc, i} } .\] 
 
    \vspace{10pt}
  
  {\bf $\underline{\LTSVerify(\pkl, \partialsig_{0, 1}, \Hash(\wv)) \to \true/\false}$.}   \\  
This algorithm verifies  the final signature $\partialsig_{0, 1}$ on a message $\wv$ using a corresponding public key $\pkl=\pk$.  This verification  of the final signature is similar to that of a partial signature, by checking   the pairing equation: $\belinearpairing(\partialsig_{0, 1}, \randomgenerator) \stackrel{?}{=} \belinearpairing(\Hash(\wv), \pkl)$.
If the equality holds,  the algorithm returns $\true$; otherwise, it   returns $\false$.

\end{tcolorbox}
      \vspace{-8pt}
\caption{The description of the algorithms used in the proposed  $\OciorBLS$.  The   proposed  $\OciorBLS$ is  a composition of a single $\TS$ scheme and one or more $\LTS$ schemes. For the $\LTSCombine$ algorithm, since $n_{\ltslayer}$ and $\TSthreshold_{\ltslayer}$ are both finite numbers (e.g., $n_{\ltslayer} = 10$ and $\TSthreshold_{\ltslayer} = 8$), the nodes can precompute and store all Lagrange coefficients for every possible $\Tc \in [n_{\ltslayer}]$ with $|\Tc| = \TSthreshold_{\ltslayer}$. Compared to traditional $\TS$ signature aggregation, whose computational bottleneck lies in computing Lagrange coefficients for an unpredictable aggregated set, the signature aggregation in the proposed $\LTS$ is significantly faster.}
\label{fig:OciorBLS}
\end{figure}

\begin{algorithm}
\caption{$\OciorBLS$ protocol,   with an epoch identity $\eon$. Code is shown for Node~$\thisnodeindex \in [n]$.} \label{algm:OciorBLS}   
\begin{algorithmic}[1] 
\vspace{5pt}    
\footnotesize

\Statex \emph{// ** For simplicity, we present $\OciorBLS$ with a single $\TS$ scheme and a single $\LTS$ scheme. **}
\Statex \emph{// ** $\OciorBLS$ can be extended easily to support multiple parallel $\LTS$ schemes. **}

\Statex

 \Statex Public Parameters: $\pp = (\PrimeOrder, \Group, \Group_{T}, \belinearpairing, \randomgenerator, \Hash)$;  $(n, \TSthreshold)$ for $\TS$ scheme; and  $(n, \TSthreshold, \ltslayerMax, \{n_{\ltslayer}, \TSthreshold_{\ltslayer}, \ltslayerTotalNodes_{\ltslayer}\}_{\ltslayer=1}^{\ltslayerMax})$ for $\LTS$ scheme under the constraints:  $n = \prod_{\ltslayer=1}^{\ltslayerMax} n_{\ltslayer}$,  $\prod_{\ltslayer=1}^{\ltslayerMax} \TSthreshold_{\ltslayer} \geq \TSthreshold$, and $\ltslayerTotalNodes_{\ltslayer} := \prod_{\ltslayer'=1}^{\ltslayer} n_{\ltslayer'}$ for each  $\ltslayer \in [\ltslayerMax]$,  with $\ltslayerTotalNodes_{0} := 1$,  for some $\TSthreshold\in [t+1, n-t]$.    $\DelayPara$ is a preset delay parameter.

\Statex

\Statex Run  a  key generation protocol  to generate keys for this epoch:   
\Statex $\TSGen(\eon, 1^\kappa, n, \TSthreshold) \to (\pk_{\eon}, \pk_{\eon,1}, \dotsc, \pk_{\eon,n}, \sk_{\eon,1}, \sk_{\eon,2}, \dotsc, \sk_{\eon,n})$ 
\Statex $\LTSGen(\eon, 1^\kappa, n, \TSthreshold, \ltslayerMax, \{n_{\ltslayer}, \TSthreshold_{\ltslayer}, \ltslayerTotalNodes_{\ltslayer}\}_{\ltslayer=1}^{\ltslayerMax}) \to (\pkl_{\eon}, \pkl_{\eon,1}, \dotsc, \pkl_{\eon,n}, \skl_{\eon,1}, \dotsc, \skl_{\eon,n})$
\Statex  It is required that  $\pk_{\eon}= \pkl_{\eon}$. 
\Statex  Every honest node $i$ eventually outputs secret key share $\sk_i$ for $\TS$ scheme, and secret key  share $\skl_i$  for for $\LTS$ scheme. 
 \Statex  Every honest node $i$ also eventually outputs   public key vectors $(\pk_{\eon}, \pk_{\eon,1}, \dotsc, \pk_{\eon,n})$ and $(\pkl_{\eon}, \pkl_{\eon,1}, \dotsc, \pkl_{\eon,n})$. 
\Statex Every honest node updates a dictionary $\LTSIndexBook$ that maps node indices to new indices each epoch, based on index shuffling.

\Statex

\Statex  Initialize  $\Global~$   dictionaries    $\TSGroupAgg\gets \{\};  \LTSGroupAgg\gets \{\}$

\Statex $\Global~$    $\thisnodeindexshuffle \gets\LTSIndexBook[\ltuple \eon,  \thisnodeindex\rtuple]$     \quad   \emph{// $\thisnodeindexshuffle$ is a new index of this node for $\LTS$ scheme based on  index shuffling,       changed every epoch} 
 
 \Statex

\Statex   \emph{//   ***** As a Signer (or Voter)     *****} 

 	\State {\bf upon}  the condition that signing a message $\wv$ with identity $\IDProtocol$ is satisfied       {\bf do}:          		
	\Indent
		\State  $\send$ $\ltuple \VOTE, \IDProtocol,  \Vote(\sk_{\eon, \thisnodeindex}, \Hash(\wv)), \VoteLTS(\skl_{\eon, \thisnodeindexshuffle}, \Hash(\wv))   \rtuple$  to the aggregator  
	\EndIndent
 
\Statex

\Statex   \emph{//   ***** As an Aggregator, who has the  message $\wv$ for  identity $\IDProtocol$   *****} 

\State {\bf upon} receiving    $\ltuple \VOTE, \IDProtocol,  \vote, \votelts  \rtuple$ from Node~$j\in [n]$ for the first time  {\bf do}:  	 \label{line:OciorBLStsAggRx}     \quad       \emph{//   parallel processing between Line~\ref{line:OciorBLStsAggRx} and Line~\ref{line:OciorBLStsTSAgg}}    
\Indent  
	 \If {$\sig$ has not yet been delivered for identity $\IDProtocol$}    		 
 		\State    $\contenthash  \gets \Hash(\wv)$
		\State $\jshuffle\gets \LTSIndexBook[\ltuple \eon,  j\rtuple]$
		\If {$\TSVerify(\pk_{\eon, j}, \vote, \contenthash) = \true$ and $\LTSVerify(\pkl_{\eon,  \jshuffle}, \votelts, \contenthash) = \true$}    
			\State   $[\indicator,  \sig]  \gets  \SigAggregationLTS(\IDProtocol, \eon,  \jshuffle, \votelts, \contenthash)$
	 			\If {$\indicator=\true$ and  $\sig$ has not yet been delivered for identity $\IDProtocol$}    		
	 				\State   $\deliver$ $\sig$ for identity $\IDProtocol$ 
			\EndIf

			\IfThenElse {$\IDProtocol \notin \TSGroupAgg$}  {$\TSGroupAgg[\IDProtocol]\gets  \{j: \vote\}$} {$\TSGroupAgg[\IDProtocol][j]\gets   \vote$}    
		\EndIf	
	\EndIf
\EndIndent
 
\Statex

	\State {\bf upon}      $|\TSGroupAgg[\IDProtocol]|  =  n-t $  and   $\sig$ has not yet been delivered for identity $\IDProtocol$ {\bf do}:  	 \label{line:OciorBLStsTSAgg}      \quad    \emph{//   parallel processing between Line~\ref{line:OciorBLStsAggRx} and Line~\ref{line:OciorBLStsTSAgg}}    
	
	\Indent
		\State  $\wait$ for $\DelayPara$ time     \quad   \emph{//  to include  more partial signatures and complete $\LTS$ scheme, if possible, within the limited delay time}
	 	\If {$\sig$ has not yet been delivered for identity $\IDProtocol$}    		 
		\State  $\sig \gets \TSCombine(n, \TSthreshold, \TSGroupAgg[\IDProtocol], \contenthash)$    
	 		\If {$\sig$ has not yet been delivered for identity $\IDProtocol$}    		
	 				\State   $\deliver$ $\sig$ for identity $\IDProtocol$ 
			\EndIf	

		\EndIf
	\EndIndent
	
\Statex

\Procedure{$\SigAggregationLTS$}{$\IDProtocol, \eon,  \jshuffle, \votelts, \contenthash$}      \label{line:OciorBLStsSigAggregationLTS}

			\State  $\ltsgroup\gets \lceil \jshuffle/ n_{\ltslayerMax} \rceil$ \quad     \emph{//  map Node~$j$ to the $\ltsgroup$th group   and its  index $\ltsnodeindex$ within  this group at Layer~$\ltslayerMax$ for this epoch}       
			\State     $\ltsnodeindex\gets  \jshuffle - (\ltsgroup -1)  n_{\ltslayerMax} $         
			\IfThenElse {$\ltuple \IDProtocol, \ltslayerMax, \ltsgroup\rtuple      \notin \LTSGroupAgg$}  {$\LTSGroupAgg[\ltuple \IDProtocol, \ltslayerMax, \ltsgroup\rtuple]\gets  \{\ltsnodeindex: \votelts\}$ } {$\LTSGroupAgg[\ltuple \IDProtocol, \ltslayerMax, \ltsgroup\rtuple] [\ltsnodeindex]\gets \votelts $      }

		\State $\ltslayer \gets  \ltslayerMax$ 	
	\While {$\ltslayer \geq 1$} 
		\If {$|\LTSGroupAgg[\ltuple \IDProtocol, \ltslayer, \ltsgroup\rtuple]| = \TSthreshold_{\ltslayer}$}  
			\State  $\sig \gets \LTSCombine(n_{\ltslayer}, \TSthreshold_{\ltslayer}, \LTSGroupAgg[\ltuple\IDProtocol, \ltslayer, \ltsgroup\rtuple], \contenthash)$     
 			\If {$\ltslayer =1$}   
				\State $\Return$ $[\true, \sig]$ 	
			\EndIf	
		\Else
			\State $\Break$
		\EndIf
		\State $\ltsgroupprime\gets \lceil \ltsgroup/ n_{\ltslayer -1} \rceil$ 
		\State $\ltsnodeindex\gets  \ltsgroup - (\ltsgroupprime -1)  n_{\ltslayer -1} $   
		\State $\ltslayer \gets  \ltslayer-1$ 	
		\State $\ltsgroup\gets  \ltsgroupprime$     
		\IfThenElse {$\ltuple \IDProtocol, \ltslayer, \ltsgroup\rtuple      \notin \LTSGroupAgg$}  {$\LTSGroupAgg[\ltuple \IDProtocol, \ltslayer, \ltsgroup\rtuple]\gets  \{\ltsnodeindex: \sig\}$ } {$\LTSGroupAgg[\ltuple \IDProtocol, \ltslayer, \ltsgroup\rtuple] [\ltsnodeindex]\gets \sig $      }   
	
	\EndWhile

\State $\Return$ $[\false, \defaultvalue]$

\EndProcedure

\end{algorithmic}
\end{algorithm}

 \section{$\Ocior$}   \label{sec:Ocior}

The proposed $\Ocior$ protocol is an \emph{asynchronous} $\BFT$ consensus protocol.  
It is described in Algorithm~\ref{algm:Ocior} and supported by Algorithms~\ref{algm:OciorProcedures} and~\ref{algm:OciorNoFPKLSeed}.   
Before delving into the full protocol, we first introduce some key features and concepts of $\Ocior$.

 \subsection{Key Features and Concepts of $\Ocior$}   \label{sec:OciorKeyFeatures}
   
{\bf \emph{Parallel Chains:}} Unlike traditional blockchain consensus protocols, which typically operate on a single chain, $\Ocior$ is a chained-type protocol consisting of $n$ parallel chains. The $i$-th chain is proposed independently by Node~$i$, in parallel with others, for all $i \in [n]$.  
Fig.~\ref{fig:OciorChains} illustrates these parallel chains in $\Ocior$, where each Chain~$i$, proposed by Node~$i$, grows over heights $\height = 0, 1, 2, \dots$.  
Each Chain~$i$ links a sequence of threshold signatures $\sig_{i,0} \to \sig_{i,1} \to \sig_{i,2} \to \cdots$, where $\sig_{i,\height}$ is a threshold signature generated at  height $\height$ on a transaction $\tx$ proposed by Node~$i$.    
It is possible for multiple signatures, e.g., $\sig_{i,\height}$ and $\sig_{i,\height}'$, to be generated at the same height of the same chain by a dishonest node, each corresponding to a different transaction, $\tx$ and $\tx'$, respectively.

\begin{figure} [h!]
\centering
\includegraphics[width=13cm]{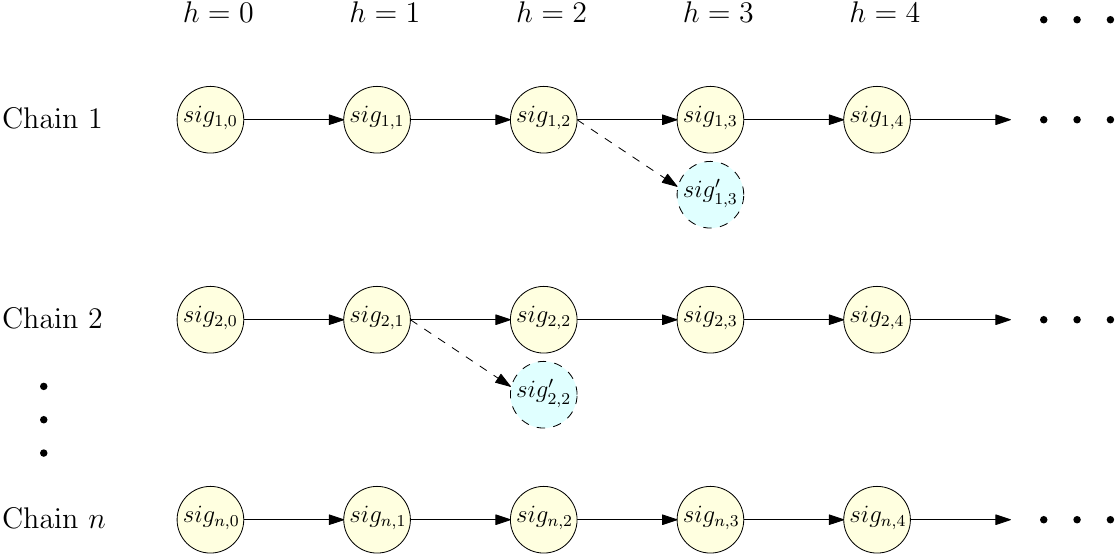}
\caption{The description of the parallel chains in Ocior.   
$\sig_{i,\height}$ and $\sig_{i,\height}'$ are distinct  threshold signatures generated at the same height $\height$ of the chain proposed by a dishonest Node~$i$, with each signature corresponding to a different transaction, for $i \in [n]$ and $\height \geq 0$. 
}
\label{fig:OciorChains}
\end{figure}

{\bf  \emph{Epoch:}} The $\Ocior$ protocol proceeds in epochs. In each epoch, each consensus node is allowed to propose up to $\MaxNumTx$ transactions, where $\MaxNumTx$ is a preset parameter. 
 In $\Ocior$, the consensus nodes receive incoming transactions from connected $\RPC$ nodes, where clients can also serve as $\RPC$ nodes.
The selection of new transactions follows rules that will be described later.  Fig.~\ref{fig:OciorEpoch} illustrates the epoch structure of $\Ocior$, where different nodes may enter a new epoch at different times due to the asynchronous nature of the network.

{\bf \emph{Asynchronous Distributed Key Generation ($\ADKG$):}}  
Before the beginning of a new epoch, an $\ADKG$ scheme is executed to generate new threshold signature keys for the $\TS$ and $\LTS$ schemes for the upcoming epoch, as described in Fig.~\ref{fig:OciorEpoch}.  
Specifically, during the system initialization phase, an $\ADKG[1]$ scheme is executed to generate keys for Epoch~$1$.  
During Epoch~$\eon$, for $\eon \geq 1$, the $\ADKG[\eon+1]$ scheme is executed to generate keys for Epoch~$\eon+1$, with $\ADKG[\eon+1]$ running in parallel to the transaction-chain growth of Epoch~$\eon$. 
When $\MaxNumTx$ is sufficiently large, the cost of $\ADKG$ schemes becomes negligible. 
Algorithm~\ref{algm:OciorADKG} presents the proposed $\ADKG$ protocol, called $\OciorADKG$, which generates the public keys and private key shares for both the $\TS$ and $\LTS$ schemes.  
When $\eon \geq 1$,  $\ADKG[\eon+1]$  can be implemented using an efficient key-refreshing scheme based on the existing $\TS$ keys of Epoch~$\eon$.

\begin{figure}[h!] 
\centering
\includegraphics[width=13cm]{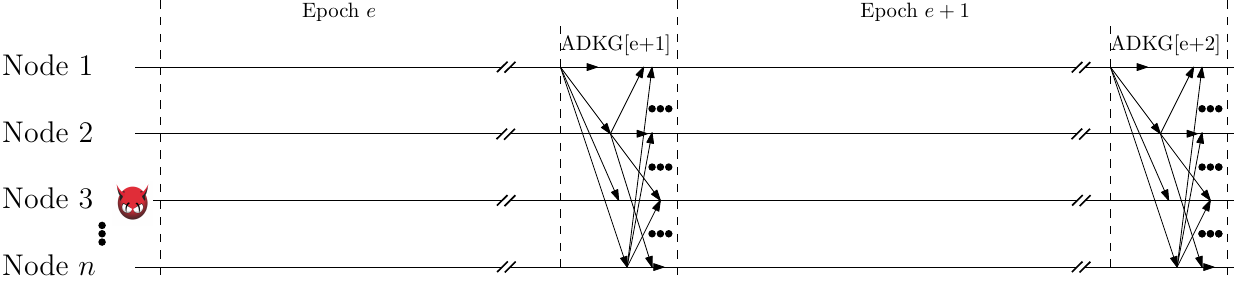}
\caption{The epoch structure of $\Ocior$, where different nodes may enter a new epoch at different times due to the asynchronous nature of the network.   
During Epoch~$\eon$, for $\eon \geq 1$, the $\ADKG[\eon+1]$ scheme is executed to generate keys for Epoch~$\eon+1$, with $\ADKG[\eon+1]$ running in parallel to the transaction-chain growth of Epoch~$\eon$. 
When each Node~$i$ enters a new Epoch~$\eon$, it erases all old private key shares  from previous epochs. 
When $\eon \geq 1$,  $\ADKG[\eon+1]$  can be implemented using an efficient key-refreshing scheme based on the existing $\TS$ keys of Epoch~$\eon$.
 }
\label{fig:OciorEpoch}
\end{figure}

{\bf \emph{Erasing Old Private Key Shares:}}  
When each Node~$i$ enters a new Epoch~$\eon$, it erases all old private key shares $\{\sk_{\eonprime, \thisnodeindex}, \skl_{\eonprime, \cdot}\}_{\eonprime=1}^{\eon-1}$, along with any temporary data related to those secrets from previous epochs. Erasing the old private key shares enhances the \emph{long-term security} of the chains.

{\bf \emph{Signature Content:}} A threshold signature $\sig_{i,\height}$ generated at  height $\height$ of Chain~$i$ during Epoch~$\eon$ is signed on a message of the form \[\ltuple   i, \eon, \NumProposed,  \height,  \tx, \sigvp, \sigoptuple \rtuple\]
for the $\NumProposed$-th transaction $\tx$  proposed by Node~$i$ in  Epoch~$\eon$, where $\sigvp$ is a virtual parent signature and $\sigoptuple$ is a tuple of official parent  signatures, as defined below. 
By our definition, $\sig_{i,\height}$ links to the virtual parent signature $\sigvp$ and to the tuple of official parent signatures $\sigoptuple$.
If there is only one official parent signature, then $\sigoptuple$ reduces to a single signature.

{\bf \emph{Official Parent ($\OP$) and Virtual Parent ($\VP$):}}  
As defined in Definition~\ref{def:parenttx}, if Client~$B$ subsequently initiates a new transaction $\Tx_{B,C}$ to transfer the assets received in $\Tx_{A,B}$ to Client~$C$, then $\Tx_{A,B}$ must be cited as a \emph{parent} transaction.  
The signature on transaction $\Tx_{A,B}$ becomes the official parent signature of the signature on transaction $\Tx_{B,C}$.  
As illustrated in Fig.~\ref{fig:OciorChainsOPVP}, if $\sig_{1,3}'$ is the signature on transaction $\Tx_{A,B}$ and $\sig_{i,5}$ is the signature on transaction $\Tx_{B,C}$, then $\sig_{1,3}'$ is an official parent signature of $\sig_{i,5}$.  
At the same time, as shown in Fig.~\ref{fig:OciorChainsOPVP}, the signature $\sig_{i,5}$, generated at height~$5$, also links to the signature $\sig_{i,4}$ as its virtual parent signature.  
It is possible for a single transaction $\tx$ to have multiple threshold signatures generated on different chains.

\begin{figure} [h!]
\centering
\includegraphics[width=13cm]{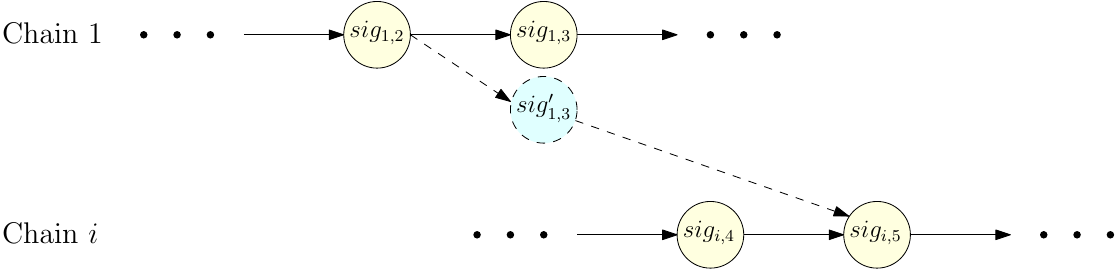}
\caption{Description of the official parent signature and virtual parent signature in $\Ocior$. 
In this example, $\sig_{1,3}'$ is the official parent signature of $\sig_{i,5}$. 
At the same time, the signature $\sig_{i,5}$, generated at height~$5$, links to $\sig_{i,4}$ as its virtual parent signature. }
\label{fig:OciorChainsOPVP}
\end{figure}

{\bf \emph{Acceptance Weight ($\AW$):}}  
Each Chain~$i$, proposed by Node~$i$, grows over heights $\height = 0, 1, 2, \dots$.  
For Chain~$i$, we have $\sig_{i,0} \to \sig_{i,1} \to \sig_{i,2} \to \cdots \to \sig_{i,\heightstar}$, where $\heightstar$ is the top height.  
For any $\sig_{i,\heightprime}$ included in this chain, the \emph{acceptance weight} of $\sig_{i,\heightprime}$ is defined as  
\[
\AW(\sig_{i,\heightprime}) = \heightstar - \heightprime + 1.
\]  
If $\sig_{i,\heightprime}$ is signed on a transaction $\tx$, then the acceptance weight of $\tx$ is \emph{at least} $\heightstar - \heightprime + 1$, i.e.,  
\[
\AW(\tx) \geq \heightstar - \heightprime + 1.
\]

 {\bf \emph{Transaction Identity ($\ID$) and Transaction Clusters:}}  
The identity of a transaction $\tx$, denoted by $\idtx$, is defined as the hash output of $\tx$, i.e., $\idtx = \HashZ(\tx)$, where $\HashZ: \{0,1\}^* \to \FieldZ_{\PrimeOrder}$ is a hash function that maps arbitrary-length bit strings to elements of $\FieldZ_{\PrimeOrder}$.  
A transaction $\tx$ is said to belong to the $\thisnodeindex$-th transaction cluster if $\HashZ(\tx) \mod n = \thisnodeindex -1$, where $\thisnodeindex \in [n]$.

{\bf \emph{Randomized Sets:}}  
A \emph{randomized set} is implemented internally using a list and a dictionary. The average time complexity for adding, removing an element, or randomly selecting a value from a randomized set $R$ is $O(1)$. 
 
{\bf \emph{Rules for New Transaction Selection:}}  
The consensus nodes select new transactions to propose by following a set of designed rules aimed at maximizing system throughput.  
These rules are implemented in $\NewTXProcess$ (Lines~\ref{line:GetNewTxBegin}-\ref{line:GetNewTxEnd} of Algorithm~\ref{algm:OciorProcedures}).  
Specifically, each Node~$\thisnodeindex$ first selects a transaction identity $\idtx$ from $\TnewIDSetOtherProposedLastEpoch$, and then $\TAWOneIDSetOtherProposedLastEpoch$, if these sets  are not empty. Here, $\TAWOneIDSetOtherProposedLastEpoch$ denotes a randomized set of identities of \emph{accepted} transactions (with acceptance weight less than $3$ but greater than $0$) proposed by other nodes in the previous epoch, while $\TnewIDSetOtherProposedLastEpoch$ denotes a randomized set of identities of \emph{pending} transactions proposed by other nodes in the previous epoch. 
If both $\TnewIDSetOtherProposedLastEpoch$  and $\TAWOneIDSetOtherProposedLastEpoch$ are empty, Node~$\thisnodeindex$ selects $\idtx$ from $\TnewselfQueue$ with probability $1/\NumProposedIntevalRandomTxSelf$, where $\NumProposedIntevalRandomTxSelf$ is a preset parameter and $\TnewselfQueue$ is a FIFO queue containing identities of pending transactions within Cluster~$\thisnodeindex$, maintained by node~$\thisnodeindex$.  
Otherwise, Node~$\thisnodeindex$ selects $\idtx$ from $\TnewIDSetOtherProposed$ with probability \[(1 - \frac{1}{\NumProposedIntevalRandomTxSelf}) \cdot \frac{1}{\NumProposedIntevalRandomTxFromOtherProp}\]
where $\NumProposedIntevalRandomTxFromOtherProp$ is a preset parameter and $\TnewIDSetOtherProposed$ is a randomized set of identities of pending transactions proposed by other nodes.   
Finally, Node~$\thisnodeindex$ selects $\idtx$ from $\TnewIDSet$ with the remaining probability, where $\TnewIDSet$ is a randomized set containing identities of pending transactions.

{\bf \emph{Type~I $\APS$:}}  
If a threshold signature $\sig$ is generated at height~$\height$ of Chain~$i$ on a signature content of the form  
\[
\content = \ltuple i, *, *, \height, \tx, *, * \rtuple
\]  
for a transaction $\tx$, then $\ltuple \sig, \content \rtuple$ is called a Type~I $\APS$ for the transaction $\tx$.  

{\bf \emph{Type~II $\APS$:}}  
If two threshold signatures $\sig$ and $\sigprime$ are generated at heights~$\height$ and $\height+1$ of Chain~$i$ on signature contents of the forms  
\[
\content = \ltuple i, *, *, \height, \tx, *, * \rtuple
\]  
and  
\[
\contentprime = \ltuple i, *, *, \height+1, *, \sig, * \rtuple,
\]  
respectively, for a transaction $\tx$, then $\ltuple \sig, \content, \sigprime, \contentprime \rtuple$ is called a Type~II $\APS$ for the transaction $\tx$.

{\bf \emph{Proposal Completion:}}  
A proposal made by Node~$j$ for a transaction $\txdiamond$ is considered completed if a threshold signature on $\txdiamond$ from this proposal has been generated, or if $\txdiamond$ conflicts with another transaction.   
A consensus Node~$i$ must verify that all of Node~$j$'s preceding proposals in the same Epoch~$\eonstar$, as well as all proposals voted on by Node~$\thisnodeindex$ for Chain~$j$, are completed before voting on  Node~$j$'s current proposal.

 \subsection{Basic $\Ocior$}

 Before presenting the chained $\Ocior$ protocol, we first describe the basic $\Ocior$ protocol to clarify the ideas underlying the chained version.  
Figure~\ref{fig:Ocior2Rounds} illustrates the two-round consensus process of the basic $\Ocior$ for two-party transactions. In this example, a transaction $\Tx_{A,B}$ is proposed by an honest node, Node~$1$, and consensus is finalized in two rounds: \emph{Propose} and \emph{Vote}.  

The transaction $\Tx_{A,B}$ represents a transfer from Client~$A$ to Client~$B$. 
Client~$A$ submits $\Tx_{A,B}$ together with $(\sigop, \contentop)$, where $\sigop$ denotes the threshold signature and $\contentop$ denotes the corresponding content of the official parent transaction linked to $\Tx_{A,B}$.
For simplicity, in this example we focus on the transaction where there is only one official parent transaction.  
Client~$A$ can act as an $\RPC$ node to submit $\ltuple \TX,  \Tx_{A,B},  \sigop, \contentop   \rtuple$  directly to connected consensus nodes, such as Node~$1$ in this example.

{\bf \emph{Propose:}} Upon receiving $\ltuple \TX,  \Tx_{A,B},  \sigop, \contentop   \rtuple$, if $\Tx_{A,B}$ is legitimate and its acceptance weight is less than $2$, Node~$1$ proposes $\Tx_{A,B}$ by sending the following proposal message to all consensus nodes:  
\[
\ltuple \PROPOSE, 1, \eon, \NumProposed, \height, \Tx_{A,B}, \sigvp, \sigop, \contentop, *, *, * ,* \rtuple. 
\]
Here, $\eon$ denotes the current epoch number, $\NumProposed$ indicates that $\Tx_{A,B}$ is the $\NumProposed$-th transaction proposed by this node during the current epoch, $\height$ denotes the height of the most recent proposed transaction, and $\sigvp$ denotes the virtual parent signature linked to $\Tx_{A,B}$. The remaining elements, denoted by $*$, will be defined later in the chained protocol.

{\bf \emph{Vote:}} Upon receiving  $\ltuple \PROPOSE, 1, \eonstar, \numproposedstar, \heightstar, \Tx_{A,B}, \sigvp, \sigop, \contentop, *, *, *, * \rtuple$   
from Node~$1$, and after verifying that $\Tx_{A,B}$ is legitimate and that $\eon = \eonstar$, each honest Node~$\thisnodeindex$ sends, in the second round, its $\TS$ and $\LTS$ partial signatures (also called votes) back to Node~$1$. The message sent takes the form:  
\[
\ltuple \VOTE, 1, \eonstar, \numproposedstar, \Vote(\sk_{\eonstar, \thisnodeindex}, \Hash(\content)), \VoteLTS(\skl_{\eonstar, \thisnodeindexshuffle}, \Hash(\content)) \rtuple
\]  
where the partial signatures are computed over:  
\[
\content = \ltuple 1, \eonstar, \numproposedstar, \heightstar, \Tx_{A,B}, \sigvp, \sigop \rtuple
\]  
Here, $\sk_{\eonstar, \thisnodeindex}$ and $\skl_{\eonstar, \thisnodeindexshuffle}$ are Node~$\thisnodeindex$'s secret key shares for the $\TS$ and $\LTS$ schemes, respectively, for Epoch~$\eonstar$.

After receiving  $\TSthreshold = \lceil \frac{n+t+1}{2} \rceil$ valid $\TS$ partial signatures from distinct nodes, the proposer (Node~$1$) generates a threshold signature $\sig$ over the $\content$ of $\Tx_{A,B}$. In good cases of partial signature collection, the threshold signature $\sig$ can be generated from valid $\LTS$ partial signatures.  
The pair $(\sig, \content)$ forms the \emph{$\APS$} for $\Tx_{A,B}$, which can be verified by any node. Node~$1$ then sends $\ltuple \TXDONE, *, *,  \sig, \content  \rtuple$  to  $\RPC$ nodes, which propagate this $\APS$ to clients.

As will be shown later, once an $\APS$ for $\Tx_{A,B}$ is created, no other $\APS$ for a conflicting transaction can be formed. Moreover, if $\Tx_{A,B}$ is legitimate, both Client~$A$ and Client~$B$ will eventually receive its valid $\APS$. Upon receiving a valid $\APS$ for $\Tx_{A,B}$, Client~$B$ may initiate a new transaction $\Tx_{B,*}$ to transfer the assets obtained in $\Tx_{A,B}$, citing $(\sig, \content)$ as the $\APS$ of the \emph{parent} transaction $\Tx_{A,B}$.

 \begin{figure} 
\centering
\includegraphics[width=11cm]{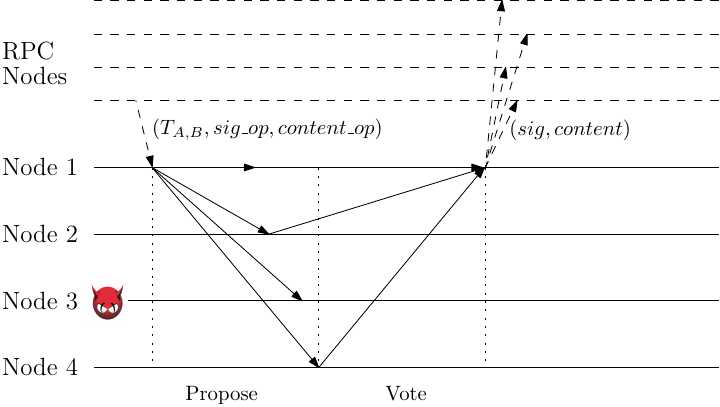}
\caption{The two-round consensus of basic Ocior for two-party transactions. In this description, the transaction $\Tx_{A,B}$ is proposed by an honest node, Node~$1$, and the consensus is finalized in two rounds: \emph{Propose} and \emph{Vote}.}
\label{fig:Ocior2Rounds}
\end{figure}

 \subsection{Chained $\Ocior$}

 The chained $\Ocior$ is described in Fig.~\ref{fig:ChainedOcior} for Chain~$i$ with $i \in [n]$. The other chains follow a similar Propose-Vote structure. When a transaction $\tx$ is proposed by an honest Node~$i$ at height $\height$, this proposal links exactly one fixed threshold signature $\sig_{i, \height-1}$, generated at height $\height-1$, as its virtual parent signature. At the end of the \emph{Vote} round, a threshold signature $\sig_{i, \height}$ for $\tx$ may be generated at height $\height$. This signature $\sig_{i, \height}$ can then be linked as the virtual parent signature to the next proposal at height $\height+1$.   Here, $\sig_{i, 0}$ denotes an initialized threshold signature. 
Chained $\Ocior$ is described in Algorithm~\ref{algm:Ocior} and supported by Algorithms~\ref{algm:OciorProcedures} and~\ref{algm:OciorNoFPKLSeed}.  
Tables~\ref{tb:OciorNotation} and~\ref{tb:OciorNotationContinued} provide some notations for the proposed $\Ocior$ protocol.   
We now present an overview of the chained $\Ocior$ protocol from the perspective of an honest Node~$\thisnodeindex$, with $\thisnodeindex \in [n]$.

\begin{figure}  
\centering
\includegraphics[width=13cm]{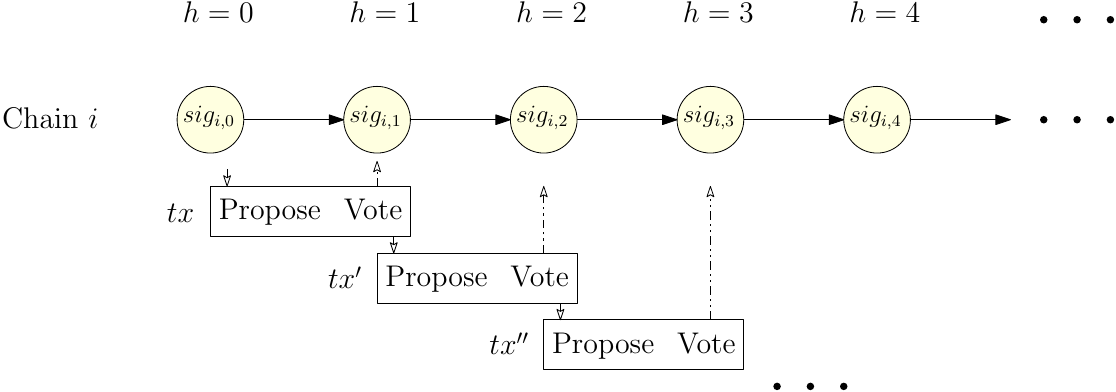}
\caption{The structure of chained $\Ocior$, shown here for Chain~$i$ with $i \in [n]$. The other chains follow a similar Propose-Vote structure.  
}
\label{fig:ChainedOcior}
\end{figure}

{\bf \emph{New Transaction Selection:}}                 
Node~$\thisnodeindex$, for $\thisnodeindex \in [n]$, maintains a set of pending transactions that are legitimate and have an acceptance weight less than $\AWthrehold$, by updating $\TnewIDSet$, $\TnewIDSetOtherProposed, \TAWOneIDSetOtherProposed$,   $\TnewIDSetOtherProposedLastEpoch$, and $\TAWOneIDSetOtherProposedLastEpoch$. 
The parameter $\AWthrehold$ is a  threshold on the acceptance weight, with $\AWthrehold = 2$ for Type~I transactions (see Theorems~\ref{thm:APSIIthreshold2} and \ref{thm:APSIthreshold2} in Section~\ref{sec:OciorAnalysis}) and $\AWthrehold = 3$ for Type~II transactions (see Theorems~\ref{thm:APSII} and \ref{thm:APSI} in Section~\ref{sec:OciorAnalysis}).      
     For simplicity and consistency, we set $\AWthrehold = 3$ for all transactions in the protocol description. 
     Here, $\TnewIDSet$ denotes a randomized set containing the IDs of \emph{pending} transactions. 
The sets $\TnewIDSetOtherProposed$ and $\TnewIDSetOtherProposedLastEpoch$ are randomized sets containing the IDs of \emph{pending} transactions proposed by other nodes in the current epoch and in the previous epoch, respectively. 
Similarly, $\TAWOneIDSetOtherProposed$ and $\TAWOneIDSetOtherProposedLastEpoch$ are randomized sets containing the IDs of \emph{accepted} transactions (with acceptance weight less than $3$ but greater than $0$) proposed by other nodes in the current epoch and in the previous epoch, respectively. 
Node~$\thisnodeindex$ selects new transactions to propose by following a set of rules designed to maximize system throughput. These rules are implemented in $\NewTXProcess$ (Lines~\ref{line:GetNewTxBegin}-\ref{line:GetNewTxEnd} of Algorithm~\ref{algm:OciorProcedures}) and are described in Section~\ref{sec:OciorKeyFeatures}.

{\bf \emph{Propose:}} Node~$\thisnodeindex$ executes the steps of the \emph{Propose} round as specified in Lines~\ref{line:OciorFixstartpro}-\ref{line:OciorProposalEnd} of Algorithm~\ref{algm:Ocior}. 
Specifically,  at the $\NumProposed$-th transaction most recently proposed in the current epoch~$\eon$, Node~$\thisnodeindex$ selects a new transaction $\tx$ that is legitimate and has an acceptance weight less than $\AWthrehold$, following the new transaction selection rules. Node~$\thisnodeindex$ then proposes $\tx$ by sending the following proposal message to all consensus nodes:
\begin{align}
\ltuple \PROPOSE, \thisnodeindex, \eon, \NumProposed, \height, \tx, \sigvp, \sigoptuple, \contentoptuple, \eondiamond, \numproposeddiamond, \myproofofPreviousProposedTx, \myproofProposedTx \rtuple       \label{eq:OciorProposal}
\end{align}  
Here:
 \begin{itemize}
    \item $\height$ is the height of the most recent proposed transaction.
\item $\sigvp$ is the threshold signature generated at height $\height-1$, serving as the virtual parent signature linked to $\tx$.  
  The virtual parent transaction was proposed by Node~$\thisnodeindex$ as the $\numproposeddiamond$-th transaction during Epoch~$\eondiamond$.  
  Consensus nodes must verify a valid and fixed $\sigvp$ generated at height $\height-1$ of Chain~$\thisnodeindex$ before voting for the transaction at height $\height$.

\item  $\sigoptuple$ and $\contentoptuple$ are, respectively, a tuple of official parent signatures and the corresponding contents linked to $\tx$.  
  Consensus nodes must verify valid official parent signatures $\sigoptuple$ before voting for the proposed transaction.

\item  $\myproofofPreviousProposedTx$ is a proof that the previous proposal made by Node~$\thisnodeindex$ for the $(\NumProposed-1)$-th transaction (or a transaction proposed in the previous epoch), denoted $\txdiamond$, has been completed.  
A proposal for a transaction $\txdiamond$ is considered complete if a threshold signature on $\txdiamond$ from this proposal has been generated,   or if $\txdiamond$ conflicts with another transaction.  
Consensus nodes must verify that $\myproofofPreviousProposedTx$ indeed proves the proposal completion  for   $\txdiamond$ before voting on the $\NumProposed$-th transaction, for any $\NumProposed \geq 1$.
The value of $\myproofofPreviousProposedTx$ is set as follows:  
 \begin{itemize}
    \item  If a threshold signature has been generated by this node for $\txdiamond$, then $\myproofofPreviousProposedTx$ is set as follows (Lines~\ref{line:OciorMyProofNormalDoneLTS} and \ref{line:OciorMyProofNormalDoneTS} of Algorithm~\ref{algm:Ocior}):  
\begin{align}
     \myproofofPreviousProposedTx = \defaultvalue .      \label{eq:OciorMyProofDefault}
\end{align}  
  \item If $\txdiamond$ conflicts with another transaction $\txconflict$, then $\myproofofPreviousProposedTx$ is set as follows (Line~\ref{line:OciorMyProofC}):  
\begin{align}
 \myproofofPreviousProposedTx = \ltuple \ProofTypeConflict, \eonccirc, \numproposedcirc,  \txconflict \rtuple       \label{eq:OciorMyProofC}
\end{align} 
where $\ltuple  \eonccirc, \numproposedcirc \rtuple$ denotes a pair of indices of the proposal for $\txdiamond$.  
\end{itemize}
\item  $\myproofProposedTx$ is a Type~I $\APS$ proof for the proposed transaction $\tx$ if it has been proposed by another node and accepted at Node~$\thisnodeindex$, but the acceptance weight is less than $\AWthrehold$.  
If $\tx$ conflicts with another transaction, the consensus nodes must verify that $\myproofProposedTx$ is a valid $\APS$ proof for $\tx$ before voting for it.
The value of $\myproofProposedTx$ is set as follows (see Lines~\ref{line:GetNewTxNoKLStepBegin}-\ref{line:GetNewTxNoKLStepEnd} of Algorithm~\ref{algm:OciorProcedures}):  
\begin{itemize}
    \item If $\tx$ has been accepted at Node~$\thisnodeindex$, but the acceptance weight is less than $\AWthrehold$ (except for a specific case, see Line~\ref{line:GetNewTxNoKLCondition} of  Algorithm~\ref{algm:OciorProcedures}), then $\myproofProposedTx$ is set as follows (Line~\ref{line:proofProposedTxvalue} of Algorithm~\ref{algm:OciorProcedures}):  
    \begin{align}
        \myproofProposedTx = \ltuple \sig, \ltuple j, \eonstar, \numproposedstar, \heightstar, \defaultvalue, \sigvp, \defaultvalue \rtuple \rtuple \label{eq:proofProposedTxvalue}  				
    \end{align}
    where $\ltuple \sig, \ltuple j, \eonstar, \numproposedstar, \heightstar, \tx, \sigvp, \sigoptuple \rtuple \rtuple$ is a Type~I $\APS$ for $\tx$, and the missing elements $\tx$ and $\sigoptuple$ can be obtained from the proposal in \eqref{eq:OciorProposal}. 
    \item Otherwise, $\myproofProposedTx$ is set as follows (Line~\ref{line:GetNewTxNoKLStepBegin} of Algorithm~\ref{algm:OciorProcedures}):  
    \begin{align}
        \myproofProposedTx = \defaultvalue . \label{eq:OciorMyProofProposedTxDefault}
    \end{align}   
\end{itemize}
\end{itemize}
If $\eon>1$ and $\NumProposed =1$, Node~$\thisnodeindex$ proposes the last proposal proposed in the previous epoch (see Lines~\ref{line:OciorProposeLastEA} and \ref{line:OciorProposeLastEB} of Algorithm~\ref{algm:Ocior}).

{\bf \emph{Vote:}} Upon receiving  $\ltuple \PROPOSE, j, \eonstar, \numproposedstar,    \heightstar,  \tx, \sigvp, \sigoptuple, \contentoptuple, \eondiamond, \numproposeddiamond, \proofofPreviousProposedTx$, $\proofProposedTx \rtuple$ from Node~$j$, for $j \in [n]$,  Node~$\thisnodeindex$ executes the steps of the \emph{Vote} round as specified in Lines~\ref{line:OciorVoteBegin}-\ref{line:OciorVoteEnd} of Algorithm~\ref{algm:Ocior}. The main steps are outlined below.  
 \begin{itemize}
    \item  If $\eonstar> \eon$, Node~$\thisnodeindex$	  waits until       $\eon \geq \eonstar$. 
    \item Node~$\thisnodeindex$ waits until the proposal of the $\VP$ transaction has been received (Lines~\ref{line:OciorVoteVPCondition} and \ref{line:OciorVoteVPConditionNeed}).
    \item Node~$\thisnodeindex$ must verify that $\sigvp$ is a valid threshold signature for the $\VP$ transaction before voting on this proposal (Line~\ref{line:OciorVoteVPCheck}). Node~$\thisnodeindex$ must also verify that $\sigvp$ is the \emph{only one} threshold signature accepted at height $\heightstar-1$ of Chain~$j$ (Lines~\ref{line:OciorVoteVPCheckCond}-\ref{line:OciorVoteVPCheck}).  
    \item Node~$\thisnodeindex$ waits until the previous proposals (indexed by $\numproposedstar-1, \numproposedstar-2, \dotsc, 1$) proposed by Node~$j$ during the same Epoch~$\eonstar$ have been received.  
\item Node~$\thisnodeindex$ must verify that all of Node~$j$'s preceding proposals in the same Epoch~$\eonstar$, as well as all proposals voted on by Node~$\thisnodeindex$ for Chain~$j$, are completed before voting on this proposal (Lines~\ref{line:OciorVotePPCheckCond} and~\ref{line:OciorVotePPCheck}).
    \item Node~$\thisnodeindex$ must verify that the transaction  $\tx$  is legitimate  before voting on this proposal (Lines~\ref{line:OciorVoteOPTXCheckBegin}-\ref{line:OciorVoteOPTXCheckEnd}).    
    \item If $\tx$ conflicts with another transaction, Node~$\thisnodeindex$ must verify that $\proofProposedTx$ is a valid $\APS$ proof for $\tx$ before voting on this proposal (Lines~\ref{line:OciorAPSproofTxCheck}-\ref{line:OciorVoteOPTXCheckEnd}). If $\myproofProposedTx$ is a valid $\APS$ for $\tx$, Node~$\thisnodeindex$ can vote for it, even if $\tx$ conflicts with another transaction.
   \end{itemize}
 After that, Node~$\thisnodeindex$ may send a message back to Node~$j$, depending on which conditions are satisfied: 
\begin{itemize}
\item If all of the above conditions are satisfied, then Node~$\thisnodeindex$ sends the following vote message back to Node~$j$:
\begin{align}
\ltuple \VOTE, j, \eonstar, \numproposedstar,  \Vote(\sk_{\eonstar, \thisnodeindex},  \Hash(\content) ), \VoteLTS(\skl_{\eonstar, \thisnodeindexshuffle},  \Hash(\content) )   \rtuple     \label{eq:OciorVoteOK}
\end{align} 
(Line~\ref{line:OciorVoteOK}), where $\content = \ltuple j, \eonstar, \numproposedstar, \heightstar, \tx, \sigvp, \sigoptuple \rtuple$. 	
\item If all of the above conditions are satisfied except that $\tx$ conflicts with another transaction $\txconflict$ recorded by Node~$\thisnodeindex$ (and $\proofProposedTx$ is not a valid  $\APS$ proof for $\tx$), then Node~$\thisnodeindex$ sends the following  message back to Node~$j$ (Line~\ref{line:OciorVoteEnd}): 		
\begin{align}
\ltuple \CONFLICT, j, \eonstar, \numproposedstar, \txconflict \rtuple.    \label{eq:OciorVoteEnd}
\end{align}				
\end{itemize}

{\bf \emph{Proposal Completion:}}  After receiving at least $\TSthreshold = \lceil \frac{n+t+1}{2} \rceil$ messages of valid partial signatures as in~\eqref{eq:OciorVoteOK} from distinct nodes, the proposer (focusing on Node~$i$) generates a threshold signature $\sig$ over the $\content$ of $\tx$ (see Lines~\ref{line:OciorAggRx}-\ref{line:OciorMyProofNormalDoneTS} of Algorithm~\ref{algm:Ocior}). In good cases of partial signature collection, the threshold signature $\sig$ can be generated from valid $\LTS$ partial signatures (Lines~\ref{line:OciorAggRx}-\ref{line:OciorMyProofNormalDoneLTS}).  
Once the threshold signature $\sig$ is generated for the proposed  $\tx$, the proposer sends an $\APS$ message to  $\RPC$ nodes:  
\[
\ltuple \TXDONE, \sigvp, \contentvp, \sig, \content \rtuple
\]  
(Lines~\ref{line:OciorSendAPSLTS} and~\ref{line:OciorSendAPSTS}), where $\contentvp$ is the content of the threshold signature $\sigvp$ for the virtual parent linked to $\tx$. Here, $\ltuple \sig, \content \rtuple$ is a Type~I $\APS$ for $\tx$, and $\ltuple \sigvp, \contentvp, \sig, \content \rtuple$ is a Type~II $\APS$ for the virtual parent linked to $\tx$. 
In this case, the proposal indexed by $(j, \eon, \NumProposed)$ is completed.  
If the proposer receives any message as in~\eqref{eq:OciorVoteEnd} that includes a transaction $\txconflict$ conflicting with the proposed transaction $\tx$, and given $\myproofProposedTx=\defaultvalue$ (see \eqref{eq:OciorMyProofProposedTxDefault} and Line~\ref{line:OciorConfMyproofProposedTXDefault}), then the proposal is also considered completed.   
It is worth noting that if the proposer is honest, and if $\myproofProposedTx \neq \defaultvalue$, then $\myproofProposedTx$ should be a valid $\APS$ for $\tx$, and all honest nodes should vote for it, even if $\tx$ conflicts with another transaction. 
When a proposal is completed, the proposer sets the value of $\myproofofPreviousProposedTx$ accordingly, as in~\eqref{eq:OciorMyProofDefault} and \eqref{eq:OciorMyProofC}, and then proposes the next transaction.

 {\bf \emph{Real-Time and Complete $\APS$ Dissemination:}}  
In $\Ocior$, there are three scenarios that guarantee real-time and complete $\APS$ dissemination to consensus nodes and $\RPC$ nodes: 
\begin{itemize}
    \item \textbf{Dissemination Scenario 1:} If the proposer is honest and generates a valid $\APS$ for a transaction $\tx$, then this $\APS$ is sent to all $\RPC$ nodes \emph{immediately} (Lines~\ref{line:OciorSendAPSLTS} and~\ref{line:OciorSendAPSTS} of Algorithm~\ref{algm:Ocior}).   

    \item \textbf{Dissemination Scenario 2:} If the proposer is dishonest but its chain continues to grow, then the valid $\APS$ generated by this proposer is guaranteed to be propagated in a timely manner through the following mechanism.  
    Specifically, during the \emph{Vote} round on a proposal indexed by $\ltuple j, \eonstar, \numproposedstar, \heightstar\rtuple$ at height $\heightstar$, Node~$\thisnodeindex$ sends an $\APS$ message
    \[
    \ltuple \TXDONE, \sigprime, \contentprime, \sigdiamond, \contentdiamond \rtuple
    \]
    to \emph{one randomly selected} $\RPC$ node, provided this $\APS$ message has not already been sent (Line~\ref{line:OciorAPSIsent} of Algorithm~\ref{algm:Ocior}).  
    Here, $\sigprime$ and $\sigdiamond$ are accepted at heights $\heightstar-2$ and $\heightstar-1$, respectively, of Chain~$j$. Once an $\RPC$ node receives a valid $\APS$, it forwards the message to its connected $\RPC$ nodes via network gossiping.  
    This mechanism ensures that the $\APS$ is propagated \emph{immediately}, even if the proposer fails to disseminate it directly to the $\RPC$ nodes.   

\item \textbf{Dissemination Scenario 3:} If a chain has grown by $\NumHeightMulticast$ new locked heights, the network invokes the $\HMDM$ algorithm to multicast these $\NumHeightMulticast$ signatures and their corresponding contents in the locked chain to all consensus nodes and all $\RPC$ nodes (see Lines~\ref{line:OciorHMDMBegin}-\ref{line:OciorHMDMEndEnd} of Algorithm~\ref{algm:Ocior}).  
If a chain has not grown for $\EpochOutageThreshold$ epochs (for a preset parameter $\EpochOutageThreshold$), the network broadcasts the remaining signatures locked in this chain, together with one additional signature accepted at a height immediately following that of  the top locked signature (see Lines~\ref{line:OciorBroadcastBegin}-\ref{line:OciorHMDMEnd} of Algorithm~\ref{algm:Ocior}).  
\end{itemize}

{\bf \emph{Update of $\TnewIDSetOtherProposed$, $\TnewIDSetOtherProposedLastEpoch$, $\TAWOneIDSetOtherProposed$, and $\TAWOneIDSetOtherProposedLastEpoch$:}}  
In $\Ocior$, at any point in time, each of the sets $\TnewIDSetOtherProposed$ and $\TnewIDSetOtherProposedLastEpoch$ contains at most \emph{three} $\ID$s of transactions proposed at Chain~$j$, while each of the sets $\TAWOneIDSetOtherProposed$ and $\TAWOneIDSetOtherProposedLastEpoch$ contains at most \emph{two} $\ID$s of transactions proposed at Chain~$j$, for every $j \in [n]$. 
These four sets are updated as follows:   
\begin{itemize}
 \item When a node votes for a transaction $\tx$ at height $\height$ of Chain~$j$, it adds the $\ID$ of $\tx$ to $\TnewIDSetOtherProposed$ and adds to $\TAWOneIDSetOtherProposed$ the $\ID$ of a transaction accepted at height $\height-1$ of Chain~$j$. At the same time, it removes from $\TnewIDSetOtherProposed$, $\TnewIDSetOtherProposedLastEpoch$, $\TAWOneIDSetOtherProposed$, and $\TAWOneIDSetOtherProposedLastEpoch$ the $\ID$ of a transaction accepted at height $\height-3$ of Chain~$j$ (see Lines~\ref{line:OciorUpdateTnewIDSetPO}-\ref{line:OciorUpdateTnewIDSetPOLE} of Algorithm~\ref{algm:Ocior}).  

 \item When a proposed transaction $\tx$ conflicts with another transaction, the $\ID$ of $\tx$ is removed from $\TnewIDSetOtherProposed$ and $\TnewIDSetOtherProposedLastEpoch$ (Line~\ref{line:RemoveUpdateTnewIDSetOtherProposed} of Algorithm~\ref{algm:OciorProcedures}).  

 \item When a node proposes a transaction $\tx$, the $\ID$ of $\tx$ is removed from $\TnewIDSetOtherProposed$, $\TnewIDSetOtherProposedLastEpoch$, $\TAWOneIDSetOtherProposed$, and $\TAWOneIDSetOtherProposedLastEpoch$ accordingly (Lines~\ref{line:GetNewTxBegin}-\ref{line:GetNewTxEnd} of Algorithm~\ref{algm:OciorProcedures}).  

 \item At the end of an epoch, the set $\TnewIDSetOtherProposed$ is merged into $\TnewIDSetOtherProposedLastEpoch$, and the set $\TAWOneIDSetOtherProposed$ is merged into $\TAWOneIDSetOtherProposedLastEpoch$ (Line~\ref{line:OciorUpdateEpochTnewIDSetOtherProposedLastEpoch} of Algorithm~\ref{algm:Ocior}).  
\item  Based on the rules for new transaction selection, at the beginning of a new epoch, each node first proposes transactions from $\TnewIDSetOtherProposedLastEpoch$ and $\TAWOneIDSetOtherProposedLastEpoch$ until both sets are empty. 
\end{itemize}  

The $\APS$ dissemination, together with the updates of $\TnewIDSetOtherProposed$, $\TnewIDSetOtherProposedLastEpoch$, $\TAWOneIDSetOtherProposed$, and $\TAWOneIDSetOtherProposedLastEpoch$, guarantees the following two properties:  

\begin{itemize}
    \item In $\Ocior$, if a valid Type~II $\APS$ is generated for a transaction $\tx$, then eventually all honest consensus nodes and $\RPC$ nodes will receive a valid Type~II $\APS$ and accept $\tx$, even if another transaction conflicts with $\tx$ (see Theorem~\ref{thm:APSII}).    
    \item In $\Ocior$, if a valid Type~I $\APS$ is generated for a \emph{legitimate} transaction $\tx$, and $\tx$ remains legitimate, then eventually all honest consensus nodes and $\RPC$ nodes will receive a valid Type~II $\APS$ and accept $\tx$ (see Theorem~\ref{thm:APSI}).  Note that any node that receives a transaction $\tx$ together with a valid Type~I or Type~II $\APS$ will accept $\tx$, even if another transaction conflicts with it (see Theorem~\ref{thm:Safety}). 
\end{itemize}

{\bf \emph{Key Regeneration and Epoch Transition:}} 
In $\Ocior$, before entering a new epoch~$(\eon+1)$, an $\ADKG[\eon+1]$ scheme is executed to generate new threshold signature keys for the $\TS$ and $\LTS$ schemes for the upcoming epoch, as described in Lines~\ref{line:OciorStartSeedGen}-\ref{line:OciorADKGoutput} of Algorithm~\ref{algm:Ocior}.  Specifically, when $\NumProposed=\NumProposedSeed$, Node~$\thisnodeindex$  activates $\SeedGeneration$ protocol with other nodes to generate a random seed (see Lines~\ref{line:OciorStartSeedGen}-\ref{line:OciorStartSeedInput} of Algorithm~\ref{algm:Ocior}, and Algorithm~\ref{algm:OciorNoFPKLSeed}).     The random seed is used to   shuffle the node indices for the $\LTS$ scheme for the next epoch. 
When $\NumProposed = \MaxNumTx$ and $\ADKG[\eon+1]$ outputs the keys for the $\TS$ and $\LTS$ schemes, Node~$\thisnodeindex$ votes for the next epoch by sending $(\EONVOTE, \eon)$ to all nodes (Line~\ref{line:OciorSendEonVote}). After one more round of message exchange (Line~\ref{line:OciorSendEONCONFIRM}), the nodes eventually reach consensus to proceed to the next epoch (Lines~\ref{line:OciorReadyNewEpoch}-\ref{line:OciorVoteNewEpochEnd}).

 \section{Analysis of  $\Ocior$: Safety, Liveness, and Complexities}      \label{sec:OciorAnalysis}
  
Here we provide an analysis of $\Ocior$ with respect to safety, liveness, and its communication, computation, and round complexities.

\begin{theorem}[Safety] \label{thm:Safety}
In $\Ocior$, any two transactions accepted by honest nodes do not conflict. Furthermore, if two valid $\APS$s are generated for two different transactions, then those transactions must also be non-conflicting. 
Finally, any node that receives a transaction together with its valid $\APS$ will accept the transaction. 
\end{theorem}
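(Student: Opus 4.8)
The plan is to reduce all three claims to a single \emph{uniqueness} statement: among any set of mutually conflicting transactions, at most one can ever acquire a valid threshold signature, and hence at most one can carry a valid $\APS$. The third claim is then essentially definitional: verifying a valid $\APS$ reduces to verifying a threshold signature (a Type~I $\APS$ is $\ltuple \sig, \content \rtuple$, and a Type~II $\APS$ contains a Type~I $\APS$ as a sub-object), so by the protocol's acceptance rule any node holding a verifying $\APS$ accepts the transaction. For the first claim I would argue that any transaction accepted by an honest node is backed by a valid $\APS$ (either received directly, or produced locally when that node's own chain generates a threshold signature on it), so the first claim follows from the second. The bulk of the work is therefore the second claim.

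The argument rests on three tools. First, a quorum-intersection fact: with $\ResilianceCondition$ and $\TSthreshold=\lceil \frac{n+t+1}{2}\rceil$, any two voter sets of size $\TSthreshold$ overlap in at least $2\TSthreshold-n\geq t+1$ nodes and thus share an honest node, while each such set itself contains at least $\TSthreshold-t\geq t+1$ honest nodes. Second, the unforgeability of $\OciorBLS$ (Definition~\ref{def:unf-acma} and Definition~\ref{def:LTSUnforgeability}): an adversary controlling at most $t$ nodes cannot manufacture a valid threshold signature on a content without at least $\TSthreshold$ genuine partial signatures on that content, so every valid $\APS$ is witnessed by a quorum of $\TSthreshold$ distinct voters on its content, at least $t+1$ of whom are honest. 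Third, the honest-voting invariant read off the \emph{Vote} round: an honest node records every transaction it votes on, and if a proposed $\tx$ conflicts with a transaction it has already recorded, it emits $\CONFLICT$ rather than $\VOTE$ unless the accompanying $\myproofProposedTx$ is a valid Type~I $\APS$ for $\tx$ itself, i.e.\ unless $\tx$ already possesses a threshold signature.

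For the core argument I would suppose, for contradiction, that conflicting transactions $\tx_1,\tx_2$ both carry valid Type~I $\APS$s. Among all threshold signatures ever generated (in real time) on $\tx_1$ or $\tx_2$, let $\sigma$ be the earliest, say on $\tx_1$, and let $\sigma_2$ be the earliest threshold signature on $\tx_2$, so $\sigma$ precedes $\sigma_2$. Let $Q_1,Q_2$ be the quorums witnessing $\sigma,\sigma_2$ and pick an honest $h\in Q_1\cap Q_2$. I then case on the order of $h$'s two votes. If $h$ voted for $\tx_2$ no later than for $\tx_1$, then $\tx_2$ was already recorded by $h$ when it voted for the conflicting $\tx_1$, so the invariant forces a valid Type~I $\APS$ for $\tx_1$ to exist at that moment; this is impossible since $h$'s vote precedes $\sigma$, the earliest threshold signature on either transaction. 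Symmetrically, if $h$ voted for $\tx_1$ strictly earlier, then $\tx_1$ was recorded when $h$ cast its $\tx_2$-vote (a vote collected into $Q_2$ and hence preceding $\sigma_2$), so the invariant requires a valid Type~I $\APS$ for $\tx_2$ before $\sigma_2$, again impossible since $\sigma_2$ is the earliest threshold signature on $\tx_2$. Both cases contradict, establishing uniqueness and hence the second claim.

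The main obstacle is the re-voting mechanism through $\myproofProposedTx$, which deliberately permits honest nodes to vote for a conflicting transaction once it is already sealed; I must show this can only \emph{add weight to the unique winner} and can never \emph{bootstrap} a second $\APS$. This is exactly what the earliest-signature choice isolates, but it demands care with the asynchronous timeline: I must distinguish the real-time order in which signatures are formed from each node's local processing order, and confirm that a vote counted in a quorum $Q$ is genuinely cast before the signature that $Q$ produces. A secondary point to pin down is that acceptance by an honest node always coincides with possession of a valid $\APS$ (so that the first claim truly reduces to the second) and that $\APS$ verification is sound; both rest on the robustness and unforgeability guarantees of $\OciorBLS$ rather than on any synchrony assumption.
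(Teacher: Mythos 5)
Your core argument is the same quorum argument the paper uses: a valid threshold signature on $\content$ requires $\TSthreshold=\lceil\frac{n+t+1}{2}\rceil$ voters, any two such quorums share at least $2\TSthreshold-n\geq t+1$ nodes and hence an honest node, and an honest node will not vote for two directly conflicting transactions. (The paper phrases this as a counting bound, $n-(\TSthreshold-t)<\TSthreshold$, which is the same arithmetic.) Where you genuinely improve on the paper's own write-up is the treatment of the re-voting exception: the protocol explicitly allows an honest node to vote for a transaction that conflicts with one it has recorded, provided the proposal carries a valid Type~I $\APS$ via $\myproofProposedTx$. The paper's proof states flatly that an honest node ``will not vote'' for a conflicting transaction and does not address this exception; your earliest-signature argument closes that loop cleanly, showing the exception can only reinforce the transaction that already won and can never bootstrap a second $\APS$. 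That refinement is worth keeping.

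There is, however, a gap in coverage. By the paper's definition, any descendant of a conflicting transaction is itself a conflicting transaction, so two transactions can conflict \emph{without} contending for the same parent output --- e.g.\ $\tx_1$ and $\tx_2$ double-spend, and $\tx_3$ is a legitimate-looking child of $\tx_2$; then $\tx_1$ and $\tx_3$ conflict. Your honest-voting invariant does not apply to this pair: the honest node in $Q_1\cap Q_3$ has no recorded entry in $\Txochildren$ that flags $\tx_3$ against $\tx_1$, so it would not emit $\CONFLICT$ on that basis, and your earliest-signature case analysis never fires. What actually blocks $\tx_3$ is a different mechanism --- Condition~1 of legitimacy and the vote-round check that every official parent carries a valid $\APS$ --- combined with the uniqueness you have already proved for the directly conflicting pair $\{\tx_1,\tx_2\}$: if $\tx_1$ obtained the unique $\APS$ among that pair, $\tx_2$ never gets one, so no honest node ever votes for $\tx_3$. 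You need to add this second step (the paper's ``do not share a direct parent'' case), essentially an induction along the chain of descent from the original double-spend, to cover all conflicting pairs rather than only the same-parent ones.
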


\begin{proof}
We prove this result by contradiction. Suppose there exist two conflicting transactions, $\tx$ and $\txprime$, that are accepted by honest nodes. We first consider the case where $\tx$ and $\txprime$ conflict with each other and share the same parent. 
Since $\tx$ and $\txprime$ are accepted by honest nodes, there must exist two valid $\APS$s for $\tx$ and $\txprime$, respectively. Let $\sig$ and $\sigprime$ be the threshold signatures included in the $\APS$s for $\tx$ and $\txprime$, respectively.  

To generate the threshold signature $\sig$, at least $\TSthreshold = \lceil \frac{n+t+1}{2} \rceil$ partial signatures from distinct nodes voting (signing) for $\tx$ are required, which implies that at least $\lceil \frac{n+t+1}{2} \rceil - t$ honest nodes have voted for $\tx$. In $\Ocior$, when an honest node votes for $\tx$, it will not vote for any transaction that conflicts with $\tx$ and shares the same parent.

Therefore, if $\sig$ is generated for $\tx$, then the number of partial signatures voting for a conflicting $\txprime$ is at most
\[
n - \Bigl( \Bigl\lceil \frac{n+t+1}{2} \Bigr\rceil - t \Bigr) < \TSthreshold
\]
given the identity $n+t < 2 \bigl(    \frac{n+t+1}{2}    \bigr)  \leq 2 \bigl( \bigl\lceil \frac{n+t+1}{2} \bigr\rceil \bigr) $, for $\TSthreshold = \lceil \frac{n+t+1}{2} \rceil$. 
This implies that the threshold signature $\sigprime$ for $\txprime$ could not have been generated. 
This contradicts our assumption, and thus we conclude that transactions accepted by   honest nodes and sharing the same parent do not conflict with each other.  

Let us now consider the case where $\tx$ and $\txprime$ conflict with each other but do not share a direct parent, i.e., at least one of them is a descendant of a conflicting transaction. When a node votes for $\tx$ (or $\txprime$), it must verify that the proposal includes a valid $\APS$ for each parent of $\tx$ (or $\txprime$). From the above result, at most one valid $\APS$ will be generated for conflicting transactions that share the same parent. This implies that one of the conflicting transactions $\tx$ or $\txprime$ that do not share a direct parent will not receive any votes from honest nodes due to the lack of a valid $\APS$ for its parent.  

Thus, any two transactions accepted by   honest nodes do not conflict. 
Furthermore, the above result implies that if two valid $\APS$s are generated for two different transactions, then those transactions must also be non-conflicting. 
Finally, in $\Ocior$, any node that receives a transaction $\tx$  together with its valid $\APS$ accepts the transaction $\tx$ (see $\Accept$ function in  Lines~\ref{line:AcceptBegin}-\ref{line:AcceptEnd} of Algorithm~\ref{algm:OciorProcedures}).      
\end{proof}

\begin{theorem} [Liveness]  \label{thm:liveness}
In $\Ocior$, if a legitimate transaction $\tx$  is received and proposed by at least one consensus node that remains uncorrupted throughout the protocol, and  $\tx$  remains legitimate, then a valid $\APS$ for $\tx$ is eventually generated, delivered to, and accepted by all honest consensus nodes and all active $\RPC$ nodes.  
\end{theorem}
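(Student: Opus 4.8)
The plan is to reduce liveness to two claims: (i) every honest consensus node eventually casts a valid vote (partial signature) for $\tx$, and (ii) enough such votes suffice to produce a threshold signature. For (ii) I would first record the counting identity that drives everything: since the adversary corrupts at most $t$ nodes, at least $n-t$ nodes are honest, and under $\ResilianceCondition$ we have
\[
n-t \;\geq\; \left\lceil \tfrac{n+t+1}{2} \right\rceil \;=\; \TSthreshold ,
\]
because $n-t \geq \tfrac{n+t+1}{2} \iff n \geq 3t+1$. Hence, if all honest nodes vote for $\tx$, the honest proposer (Node~$\thisnodeindex$, which is uncorrupted by hypothesis) collects at least $\TSthreshold$ valid partial signatures from distinct nodes and can aggregate them. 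By the robustness of $\OciorBLS$ (Definition~\ref{def:robustness-acma}, together with the good-case robustness of the layered scheme in Definition~\ref{def:LTSRobustness}), $\TSCombine$ (or $\LTSCombine$ in good cases) then yields a valid threshold signature $\sig$ over $\content = \ltuple \thisnodeindex, \eon, \NumProposed, \height, \tx, \sigvp, \sigoptuple \rtuple$, so that $\ltuple \sig, \content \rtuple$ is a valid Type~I $\APS$ for $\tx$.

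The technical heart of the proof is claim (i): that each honest node eventually passes every gate of the \emph{Vote} round and therefore emits the vote message of~\eqref{eq:OciorVoteOK}. I would verify the gates one by one against the hypotheses. First, since messages between honest nodes are eventually delivered and honest epochs advance, every honest node eventually receives the $\PROPOSE$ message for $\tx$ with $\eon \geq \eonstar$. Second, for the virtual-parent check I would argue by induction on the height $\height$ of the proposer's chain: because the proposer is honest it proposes exactly one transaction per height, so the virtual parent signature $\sigvp$ at height $\height-1$ exists, is valid, and is the \emph{unique} signature accepted at that height, which is precisely what honest voters verify. Third, the previous-proposal-completion gate passes because an honest proposer only advances to its $\NumProposed$-th proposal after the $(\NumProposed-1)$-th has completed, setting $\myproofofPreviousProposedTx$ as in~\eqref{eq:OciorMyProofDefault}--\eqref{eq:OciorMyProofC}. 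Fourth, the legitimacy gate passes: by hypothesis $\tx$ is and remains legitimate, so its parent's $\APS$ is attached and verifiable (Condition~1), its signature is valid (Condition~4), and balances are consistent (Condition~3); by Condition~2 it conflicts with no transaction citing the same parent, so no honest node holds a recorded conflict and the conflict gate is vacuous (or, if $\tx$ was already accepted with weight below $\AWthrehold$, the attached $\myproofProposedTx$ supplies a valid $\APS$). Consequently every honest node votes.

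For delivery and acceptance I would invoke the three dissemination mechanisms. Since the proposer is honest, Dissemination Scenario~1 sends the $\APS$ immediately to all active $\RPC$ nodes, and gossiping propagates it among $\RPC$ nodes. To reach \emph{all} honest consensus nodes I would use Dissemination Scenario~3: as the honest proposer keeps proposing legitimate transactions its chain grows, and once it advances by $\NumHeightMulticast$ locked heights the network multicasts the locked signatures (including $\sig$ and $\content$) via $\HMDM$; by the validity property of $\HMDM$ (Definition~\ref{def:HMDM}, with at least $t+1$ honest senders) every honest consensus node and every $\RPC$ node outputs them. Should the chain instead stall, the broadcast-on-outage rule (after $\EpochOutageThreshold$ epochs) delivers the remaining locked signatures. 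In all cases the valid $\APS$ reaches every honest node, and by the $\Accept$ routine (and Theorem~\ref{thm:Safety}) each such node accepts $\tx$.

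The main obstacle I anticipate is establishing claim (i) rigorously, and in particular the interlocking nature of the gates: virtual-parent uniqueness and parent-$\APS$ availability are not local facts but depend on the chain history and on the finality of the parent transaction. I would therefore make the induction explicit --- over chain height for the virtual-parent gate, and over the ancestor depth of $\tx$ for the parent-$\APS$ gate (the parent's finality following from this same liveness statement applied to the parent) --- and I would pin down the precise reading of ``remains legitimate'' so that Condition~2 rules out any recorded conflict that could cause an honest node to answer with a $\CONFLICT$ message~\eqref{eq:OciorVoteEnd} rather than a vote. A secondary subtlety is the epoch-transition liveness used implicitly in the first gate, which I would isolate as a short lemma asserting that honest nodes eventually agree to advance epochs through the $\EONVOTE$ exchange.
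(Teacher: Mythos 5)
Your quorum count $n-t \ge \TSthreshold$ is correct, and your gate-by-gate reading of the \emph{Vote} round is a reasonable sanity check, but the proof rests on two claims that do not hold as stated, and the paper's proof is structured precisely to avoid relying on them. First, it is not true that every honest node eventually casts the vote in~\eqref{eq:OciorVoteOK}: that vote is sent only under the condition $\eonstar=\eon$ (Line~\ref{line:OciorVoteOK}), so an honest node that has already advanced to a later epoch never votes on the proposal, and with $n=3t+1$ we have $n-t=\TSthreshold$ exactly, so the loss of even one honest voter leaves the proposer below threshold. Hence the original proposer's Propose--Vote exchange may never complete, and the paper deliberately does not argue that it does. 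Instead (Lemmas~\ref{lm:livenessEventualPro}, \ref{lm:livenessEventualProtandone}, \ref{lm:livenessEventualProOut}) it shows that every honest node that processes the proposal far enough records $\idtx$ in $\TnewIDSetOtherProposed$ (Line~\ref{line:OciorUpdateTnewIDSetPO}), that at least one such node is guaranteed to complete $\MaxNumTx$ proposals in a future epoch (the counting in Lemma~\ref{lm:livenessEventualProtandone} together with Lemma~\ref{lm:complete}), and that the new-transaction-selection rules then force that node to re-propose $\tx$ until $\AW(\tx)\ge\AWthrehold$.

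Second, even granting that a Type~I $\APS$ is generated, ``generated'' does not yield ``delivered to and accepted by all honest consensus nodes and all active $\RPC$ nodes.'' Dissemination Scenario~1 reaches only $\RPC$ nodes, and Scenarios~2 and~3 disseminate only signatures in the \emph{locked} portion of a chain; a signature becomes locked only after subsequent accepted proposals link it as a $\VP$ and bury it. The paper's dissemination guarantee (Lemma~\ref{lm:APSIIALL}) therefore requires $\AW(\tx)\ge 3$, which is exactly what the re-proposal machinery establishes and what your argument omits: you assert that ``the honest proposer keeps proposing \ldots its chain grows,'' but the particular uncorrupted proposer is not among the nodes guaranteed to complete their proposals, so its chain may stall at $\tx$'s height indefinitely. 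Relatedly, the obstacle you flag as central (the virtual-parent and parent-$\APS$ gates) is not where the difficulty lies; the crux of the paper's liveness proof is the eventual re-proposal of $\tx$ by a node guaranteed to make progress and the resulting acceptance-weight growth, neither of which appears in your proposal.
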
 
\begin{proof}
From Lemma~\ref{lm:livenessEventualProOut},  if a legitimate transaction $\tx$ is proposed by a consensus node that remains uncorrupted throughout the protocol, and $\tx$ remains legitimate, then eventually it will be proposed by at least one honest node with $\AW(\tx) \geq \AWthrehold$, where  $\AWthrehold=3$. 
 From Lemma~\ref{lm:APSIIALL}, if $\AW(\tx) \geq 3$, then eventually all honest consensus nodes and $\RPC$ nodes will receive a valid Type~II $\APS$ and accept $\tx$, even if another transaction conflicts with $\tx$.    
 This completes the proof. 
\end{proof}

\begin{lemma}    \label{lm:livenessEventualPro}
In $\Ocior$, if a transaction $\tx$ has been voted for by an honest node, and this node is guaranteed to propose $\MaxNumTx$ complete transactions in a new epoch, then unless $\tx$ becomes conflicting or $\AW(\tx) \geq \AWthrehold$, this node will eventually propose $\tx$ in its chain with $\AW(\tx)\geq \AWthrehold$, where the parameter $\MaxNumTx$ is typically set as $\MaxNumTx \geq O(n^2)$ and   $\AWthrehold=3$.
\end{lemma}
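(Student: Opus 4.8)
The plan is to track the membership of $\ID(\tx) = \HashZ(\tx)$ through the transaction-selection sets and then combine the priority rule for new-epoch proposals with a counting argument. First I would use the update rules of Algorithm~\ref{algm:Ocior}: since an honest node (call it Node~$\thisnodeindex$) voted for $\tx$ at some height $\height$ of some Chain~$j$, the identifier $\ID(\tx)$ is inserted into $\TnewIDSetOtherProposed$ at the moment of voting (Lines~\ref{line:OciorUpdateTnewIDSetPO}--\ref{line:OciorUpdateTnewIDSetPOLE}). I would then establish the invariant that $\ID(\tx)$ cannot leave the priority family $\{\TnewIDSetOtherProposed, \TAWOneIDSetOtherProposed, \TnewIDSetOtherProposedLastEpoch, \TAWOneIDSetOtherProposedLastEpoch\}$ before Node~$\thisnodeindex$ proposes it, unless one of the two excluded events occurs: $\tx$ becomes conflicting (removal at Line~\ref{line:RemoveUpdateTnewIDSetOtherProposed} of Algorithm~\ref{algm:OciorProcedures}), or a transaction accepted deeper on Chain~$j$ pushes $\AW(\tx)$ to at least $\AWthrehold=3$ (removal via the height-$(\height-3)$ rule at Line~\ref{line:OciorUpdateTnewIDSetPOLE}). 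Because the lemma explicitly excludes these two cases, the invariant is preserved, with the intermediate set-to-set transition (from the pending set $\TnewIDSetOtherProposed$ to the accepted set $\TAWOneIDSetOtherProposed$ as $\AW(\tx)$ grows) and the end-of-epoch merge into the LastEpoch versions (Line~\ref{line:OciorUpdateEpochTnewIDSetOtherProposedLastEpoch}) carrying $\ID(\tx)$ forward into the new epoch.

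Next I would invoke the new-transaction-selection rule implemented in $\NewTXProcess$. At the start of the new epoch, the node deterministically selects identifiers from $\TnewIDSetOtherProposedLastEpoch$ and then $\TAWOneIDSetOtherProposedLastEpoch$ before making any probabilistic choice, and continues until both sets are empty. The crucial quantitative ingredient is the cardinality bound stated in the protocol: these two sets hold at most three and two identifiers per chain respectively, so their combined size is at most $5n = O(n)$. Since Node~$\thisnodeindex$ is guaranteed to make $\MaxNumTx \geq O(n^2)$ complete proposals in the new epoch and $O(n) \ll \MaxNumTx$, the node must exhaust both LastEpoch sets — and hence select and propose $\tx$, whose identifier resides in one of them by the invariant — well within the epoch. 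When it does, it proposes $\tx$ on its own Chain~$\thisnodeindex$; since $\tx$ was legitimate when voted upon and remains non-conflicting by hypothesis, a proposal message of the form \eqref{eq:OciorProposal} with a valid parent $\APS$ can be constructed.

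Finally I would argue that this proposal completes with a genuine threshold signature and that $\AW(\tx)$ subsequently reaches $\AWthrehold$. Because $\tx$ is legitimate and non-conflicting, every honest node that receives the proposal votes for it, and there are at least $n-t \geq \TSthreshold = \lceil \frac{n+t+1}{2}\rceil$ such nodes, using $n \geq 3t+1$ (which yields $2(n-t) \geq n+t+1$). Hence a threshold signature $\sig$ for $\tx$ is generated at some height $\heightprime$ of Chain~$\thisnodeindex$; by the hypothesis that all $\MaxNumTx$ proposals complete, completion here is via signature rather than conflict. Since only $O(n)$ of the $\MaxNumTx \geq O(n^2)$ proposals are consumed up to and including $\tx$, Node~$\thisnodeindex$ completes at least two further proposals afterwards, so the top height $\heightstar$ of Chain~$\thisnodeindex$ satisfies $\heightstar \geq \heightprime + 2$, giving $\AW(\tx) = \heightstar - \heightprime + 1 \geq 3 = \AWthrehold$, as required.

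The hard part will be pinning down the set-membership invariant precisely: I must confirm that no update path other than the two excluded events removes $\ID(\tx)$ from the priority family, and that each set-to-set transition preserves membership rather than dropping it. Equally delicate is the off-by-one bookkeeping relating a transaction's height, its acceptance weight, and the exact heights ($\height-1$, $\height-3$) appearing in the insertion and removal rules, so that the correspondence between ``$\ID(\tx)$ still lies in a priority set'' and ``$0 < \AW(\tx) < \AWthrehold$'' is exact and matches the lemma's exclusion clause.
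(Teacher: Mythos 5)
Your proposal follows essentially the same route as the paper's own proof: track $\ID(\tx)$ through the insertion/removal rules for $\TnewIDSetOtherProposed$ and its companions, use the per-chain cardinality bound and the end-of-epoch merge, and conclude via the selection rule that drains the LastEpoch sets (of total size $O(n)\ll\MaxNumTx$) first. Your final paragraph, arguing that the proposal then completes with a threshold signature and that subsequent completed proposals grow the chain so that $\AW(\tx)\geq 3$, is a step the paper leaves implicit but is consistent with its argument and with the quorum bound $n-t\geq\TSthreshold$.
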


\begin{proof}
When an honest node votes for a transaction $\tx$ at height $\height$ of Chain~$j$, it adds the $\ID$ of $\tx$ to $\TnewIDSetOtherProposed$. At the same time, it removes from $\TnewIDSetOtherProposed$ and $\TnewIDSetOtherProposedLastEpoch$ the $\ID$ of the transaction accepted at height $\height-3$ of Chain~$j$ (see Lines~\ref{line:OciorUpdateTnewIDSetPO}--\ref{line:OciorUpdateTnewIDSetPOLE} of Algorithm~\ref{algm:Ocior}).  

If a proposed transaction $\tx$ conflicts with another transaction, then its $\ID$ is removed from both $\TnewIDSetOtherProposed$ and $\TnewIDSetOtherProposedLastEpoch$ (Line~\ref{line:RemoveUpdateTnewIDSetOtherProposed} of Algorithm~\ref{algm:OciorProcedures}).  

At any point in time, each of the sets $\TnewIDSetOtherProposed$ and $\TnewIDSetOtherProposedLastEpoch$ contains at most \emph{three} $\ID$s of transactions proposed at Chain~$j$, while each of the sets $\TAWOneIDSetOtherProposed$ and $\TAWOneIDSetOtherProposedLastEpoch$ contains at most \emph{two} $\ID$s of transactions proposed at Chain~$j$, for every $j \in [n]$.

At the end of an epoch, $\TnewIDSetOtherProposed$ is merged into $\TnewIDSetOtherProposedLastEpoch$, and $\TAWOneIDSetOtherProposed$ is merged into $\TAWOneIDSetOtherProposedLastEpoch$ (Line~\ref{line:OciorUpdateEpochTnewIDSetOtherProposedLastEpoch} of Algorithm~\ref{algm:Ocior}).  
According to the rules for new transaction selection, each node first selects transactions from $\TnewIDSetOtherProposedLastEpoch$ and $\TAWOneIDSetOtherProposedLastEpoch$ until both sets are empty.  

Therefore, if a transaction $\tx$ has been voted for by a node, and this node is guaranteed to propose $\MaxNumTx$ complete transactions in a new epoch, then unless $\tx$ becomes conflicting or $\AW(\tx)\geq \AWthrehold$, the node will eventually propose $\tx$ in its chain with $\AW(\tx)\geq \AWthrehold$ at the beginning of a new epoch, and subsequently remove the $\ID$ of $\tx$ from $\TnewIDSetOtherProposed$ and $\TnewIDSetOtherProposedLastEpoch$ (Lines~\ref{line:GetNewTxBegin}--\ref{line:GetNewTxEnd} of Algorithm~\ref{algm:OciorProcedures}).  
\end{proof}

 \begin{lemma}    \label{lm:livenessEventualProtandone}
In $\Ocior$, if a valid Type~I $\APS$ is generated for a \emph{legitimate} transaction $\tx$, and $\tx$ remains legitimate, then eventually it will be proposed by at least one honest node with $\AW(\tx) \geq \AWthrehold$.
\end{lemma}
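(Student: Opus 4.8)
The plan is to reduce this statement to the already-established Lemma~\ref{lm:livenessEventualPro}. To do so I would show two things: that at least one honest node has voted for $\tx$, and that the hypotheses of Lemma~\ref{lm:livenessEventualPro} hold for that node, namely that $\tx$ never becomes conflicting and that the node is guaranteed to propose $\MaxNumTx$ complete transactions in a new epoch.

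First I would unpack what a valid Type~I $\APS$ for $\tx$ entails. By definition such an $\APS$ is a pair $\ltuple \sig, \content \rtuple$, where $\sig$ is a threshold signature generated at some height $\height$ of some Chain~$j$ on a content of the form $\content = \ltuple j, *, *, \height, \tx, *, * \rtuple$. To produce $\sig$, at least $\TSthreshold = \lceil \frac{n+t+1}{2} \rceil$ valid partial signatures from distinct nodes voting for $\tx$ are required. Since the adaptive adversary controls at most $t$ nodes, at least $\TSthreshold - t = \lceil \frac{n+t+1}{2} \rceil - t \geq 1$ of those votes come from honest nodes, where the inequality follows from $n \geq 3t+1$. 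Hence at least one honest node, say Node~$i$, has voted for $\tx$.

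Next I would verify the two hypotheses of Lemma~\ref{lm:livenessEventualPro}. For the non-conflicting condition, I would use the assumption that $\tx$ \emph{remains legitimate}: by Condition~2 of the legitimacy definition, $\tx$ never conflicts with any other transaction citing the same parent, so $\tx$ never becomes conflicting throughout the protocol. For the progress condition, I would argue that every honest node is guaranteed to propose $\MaxNumTx$ complete transactions in each new epoch. The crucial observation is that $n - t \geq \TSthreshold$ whenever $n \geq 3t+1$, so the honest nodes alone suffice to supply the $\TSthreshold$ votes needed to complete any honest proposal; combined with eventual message delivery over the reliable channels and the epoch-transition mechanism (the $\EONVOTE$/$\EONCONFIRM$ exchange in Algorithm~\ref{algm:Ocior}), each honest node keeps completing its proposals and advancing across epochs.

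With both hypotheses established, I would apply Lemma~\ref{lm:livenessEventualPro} to Node~$i$: since $\tx$ never becomes conflicting and Node~$i$ has voted for $\tx$, Node~$i$ eventually proposes $\tx$ with $\AW(\tx) \geq \AWthrehold$, which is the claim (recall $\AWthrehold = 3$). I expect the main obstacle to be the progress condition, that is, rigorously establishing that honest nodes continuously complete $\MaxNumTx$ transactions and keep advancing through epochs despite the completion dependencies among successive proposals and the asynchronous, adaptively adversarial setting; the vote-counting and legitimacy arguments of the first two steps are comparatively routine, whereas this chain-growth and epoch-transition liveness likely warrants a dedicated supporting lemma.
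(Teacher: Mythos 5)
Your overall strategy (reduce to Lemma~\ref{lm:livenessEventualPro} by exhibiting an honest node that both voted for $\tx$ and satisfies the progress hypothesis) matches the paper's, but there is a genuine gap in how you establish the progress hypothesis. You assert that \emph{every} honest node is guaranteed to propose $\MaxNumTx$ complete transactions in each epoch, arguing from $n-t\geq \TSthreshold$ and eventual delivery. This is too strong and is not what the protocol guarantees: an honest node advances to epoch $\eon+1$ upon receiving $2t+1$ $\EONCONFIRM$ messages (Line~\ref{line:OciorReadyNewEpoch} of Algorithm~\ref{algm:Ocior}), which requires only that $n-t$ nodes have announced completion of their own $\MaxNumTx$ proposals. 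A lagging honest node is therefore dragged into the new epoch before finishing its own proposals; the most one can conclude is that all honest nodes \emph{except at most $t$ of them} are guaranteed to propose $\MaxNumTx$ complete transactions in a given epoch, which is exactly the weaker claim the paper uses.

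Because of this, your reduction as stated does not close: you only establish that \emph{at least one} honest node voted for $\tx$ (via $\TSthreshold-t\geq 1$), and that single node might be among the up-to-$t$ honest stragglers for which the progress hypothesis of Lemma~\ref{lm:livenessEventualPro} fails. The missing step is a quorum-intersection count: a valid Type~I $\APS$ requires at least $\TSthreshold-t$ honest voters, and subtracting the at most $t$ honest nodes not guaranteed to complete leaves
\[
\TSthreshold-2t \;=\; \Bigl\lceil \tfrac{n+t+1}{2} \Bigr\rceil - 2t \;\geq\; \tfrac{n-3t+1}{2} \;\geq\; 1
\]
honest nodes that both voted for $\tx$ \emph{and} are guaranteed to propose $\MaxNumTx$ complete transactions in a new epoch; Lemma~\ref{lm:livenessEventualPro} is then applied to one of those. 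Your treatment of the non-conflicting condition via the ``remains legitimate'' hypothesis is fine, but the progress condition needs the counting argument above rather than a universal claim about all honest nodes.
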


\begin{proof} 
When a valid Type~I $\APS$ is generated for a \emph{legitimate} transaction $\tx$, at least 
\[
\TSthreshold - |\Fc| \geq \TSthreshold - t = \big\lceil \tfrac{n+t+1}{2} \big\rceil - t
\]
honest nodes must have voted for $\tx$, where $\Fc$ denotes the set of dishonest nodes with $|\Fc|\leq t$, and where the parameter $\TSthreshold=\lceil (n+t+1)/2 \rceil$ is the threshold for the threshold signature.  

Furthermore, all honest nodes, except for at most $t$ of them, are guaranteed to propose $\MaxNumTx$ complete transactions in each epoch, where typically $\MaxNumTx \geq O(n^2)$.  
Note that an honest consensus node enters a new epoch only if it has received confirmation messages from at least $n-t$ distinct nodes, each confirming that they have proposed $\MaxNumTx$ complete transactions.  

Therefore, when a valid Type~I $\APS$ is generated for a \emph{legitimate} transaction $\tx$, at least
\[
\TSthreshold - |\Fc| - t \geq \big\lceil \tfrac{n+t+1}{2} \big\rceil - 2t \geq \tfrac{3t+1+t+1}{2} - 2t > 1
\]
honest node must have voted for $\tx$ and is also guaranteed to propose $\MaxNumTx$ complete transactions in a new epoch.  

By Lemma~\ref{lm:livenessEventualPro}, if a transaction $\tx$ has been voted for by an honest node, and this node is guaranteed to propose $\MaxNumTx$ complete transactions in a new epoch, then unless $\tx$ becomes conflicting or $\AW(\tx)\geq \AWthrehold$, this node will eventually propose $\tx$ in its chain $\AW(\tx) \geq \AWthrehold$.  

Thus, if a valid Type~I $\APS$ is generated for a \emph{legitimate} transaction $\tx$, and $\tx$ remains legitimate, then eventually it will be proposed by at least one honest node with $\AW(\tx)\geq \AWthrehold$. 
\end{proof}

\begin{lemma}    \label{lm:livenessEventualProOut}
In $\Ocior$, if a legitimate transaction $\tx$ is proposed by a consensus node that remains uncorrupted throughout the protocol, and $\tx$ remains legitimate, then eventually it will be proposed by at least one honest node with $\AW(\tx) \geq \AWthrehold$.
\end{lemma}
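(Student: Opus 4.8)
The plan is to reduce this lemma to Lemma~\ref{lm:livenessEventualProtandone}. Once a valid Type~I $\APS$ for the legitimate $\tx$ exists and $\tx$ remains legitimate, Lemma~\ref{lm:livenessEventualProtandone} directly yields that $\tx$ will eventually be proposed by some honest node with $\AW(\tx) \geq \AWthrehold$. Hence the entire burden is to show that when an honest proposer (uncorrupted throughout) proposes $\tx$, that proposal eventually completes with a threshold signature, i.e., a valid Type~I $\APS$ for $\tx$ is generated.

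First I would argue the vote count. The proposer is honest throughout, so it sends a well-formed proposal message as in~\eqref{eq:OciorProposal} to all consensus nodes. I claim every honest node eventually returns a valid vote as in~\eqref{eq:OciorVoteOK}. Because the network is asynchronous but guarantees eventual delivery, it suffices to check that each voting precondition in Lines~\ref{line:OciorVoteBegin}--\ref{line:OciorVoteEnd} of Algorithm~\ref{algm:Ocior} is eventually met: (i) the virtual-parent proposal is received and $\sigvp$ is verified as the unique valid threshold signature at the previous height of the honest chain; (ii) all preceding proposals of the honest proposer in the same epoch are completed, as certified by $\myproofofPreviousProposedTx$; and (iii) $\tx$ is legitimate and, since $\tx$ remains legitimate, does not trigger the conflict branch of~\eqref{eq:OciorVoteEnd}. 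Since there are at least $n-t$ honest nodes, and since $n \geq 3t+1$ together with $\TSthreshold = \lceil (n+t+1)/2 \rceil$ gives $n - t \geq \TSthreshold$ (equivalently $2(n-t) \geq n+t+1$), the proposer collects at least $\TSthreshold$ valid partial signatures and generates a threshold signature $\sig$ over the content of $\tx$, which is exactly a valid Type~I $\APS$ for $\tx$.

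The main obstacle will be discharging conditions (i) and (ii), since they refer to the well-formedness and completion of the honest proposer's \emph{earlier} proposals rather than to $\tx$ alone. I expect to handle this by induction over the proposal index (equivalently the height) of the honest proposer's chain: the base case is the initialized signature $\sig_{i,0}$, and the inductive step uses the fact that an honest node advances to a new height only after a threshold signature has been generated at the current height. Thus, when $\tx$ is proposed at height $\heightstar$, the virtual parent $\sig_{i,\heightstar-1}$ already exists, is unique (an honest node signs exactly one transaction per height of its own chain), and its proposal has already been delivered to and verified by all honest nodes; the completion proof for the immediately preceding proposal is valid for the same reason, so honest voters pass the check in Lines~\ref{line:OciorVotePPCheckCond}--\ref{line:OciorVotePPCheck}. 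Having established that a valid Type~I $\APS$ for $\tx$ is eventually generated, I would invoke Lemma~\ref{lm:livenessEventualProtandone} to conclude that $\tx$ is eventually proposed by at least one honest node with $\AW(\tx) \geq \AWthrehold$, completing the proof.
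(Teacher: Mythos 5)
There is a genuine gap. Your whole argument rests on the claim that the honest proposer's own proposal of $\tx$ always completes with a threshold signature, i.e., that a valid Type~I $\APS$ for $\tx$ is always generated. The protocol does not guarantee this. A voter sends the $\VOTE$ message in \eqref{eq:OciorVoteOK} only when $\eonstar = \eon$ holds at that voter (Line~\ref{line:OciorVoteOK}), and the proposer aggregates votes only while it is itself still at epoch $\eonstar$ with $\numproposedstar = \NumProposed$ (Line~\ref{line:OciorAggRx}). The epoch transition (Lines~\ref{line:OciorVoteNewEpochBegin}--\ref{line:OciorVoteNewEpochEnd}) is not gated on completion of the node's final proposal, so the last proposal a node makes in an epoch --- or the last proposal it ever makes --- may never accumulate $\TSthreshold$ valid votes, because honest voters that have already advanced to a later epoch simply do not vote. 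Your induction over proposal indices addresses the voters' preconditions (i) and (ii), but it cannot repair this, since the failure is not in those checks but in the epoch guard on the vote itself; Lemma~\ref{lm:complete} is explicit that completion is guaranteed ``except possibly the last proposal.''

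The paper therefore argues by a case split. If the proposal of $\tx$ does complete (which, since $\tx$ remains legitimate and hence non-conflicting, must be via a Type~I $\APS$), it invokes Lemma~\ref{lm:livenessEventualProtandone} exactly as you do. If instead the proposal is the last one the node makes, a different mechanism is used: every honest node that processes the proposal adds the $\ID$ of $\tx$ to $\TnewIDSetOtherProposed$ (Line~\ref{line:OciorUpdateTnewIDSetPO}) --- this happens before, and independently of, the $\eonstar=\eon$ vote guard --- these sets are merged into $\TnewIDSetOtherProposedLastEpoch$ at epoch boundaries (Line~\ref{line:OciorUpdateEpochTnewIDSetOtherProposedLastEpoch}), and the new-transaction-selection rules then force at least one honest node that is guaranteed to propose $\MaxNumTx$ complete transactions to eventually re-propose $\tx$ with $\AW(\tx)\geq\AWthrehold$, as in Lemma~\ref{lm:livenessEventualPro}. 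You need to add this second branch, or otherwise justify why every proposal of an uncorrupted node completes, which the protocol as written does not ensure.
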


\begin{proof} 
If a legitimate transaction $\tx$ is proposed by a consensus node that remains uncorrupted throughout the protocol, and $\tx$ continues to be legitimate, then there are two possible cases:  
1) the proposal of $\tx$ is completed with a valid Type~I $\APS$; or  
2) the proposal is the last proposal made by this consensus node.  

For the first case, by Lemma~\ref{lm:livenessEventualProtandone}, if a valid Type~I $\APS$ is generated for a legitimate transaction $\tx$, and $\tx$ remains legitimate, then eventually it will be proposed by at least one honest node with $\AW(\tx) \geq \AWthrehold$.  

We now turn to the second case.  
If a legitimate transaction $\tx$ is proposed by a consensus node that remains uncorrupted throughout the protocol, $\tx$ remains legitimate, and this proposal is the last proposal made by that consensus node, then every honest node eventually adds the $\ID$ of $\tx$ to the set $\TnewIDSetOtherProposed$ (see Line~\ref{line:OciorUpdateTnewIDSetPO} of Algorithm~\ref{algm:OciorProcedures}).

As noted in the proof of Lemma~\ref{lm:livenessEventualPro}, at any point in time, each of the sets $\TnewIDSetOtherProposed$ and $\TnewIDSetOtherProposedLastEpoch$ contains at most \emph{three} $\ID$s of transactions proposed at Chain~$j$, while each of the sets $\TAWOneIDSetOtherProposed$ and $\TAWOneIDSetOtherProposedLastEpoch$ contains at most \emph{two} $\ID$s of transactions proposed at Chain~$j$, for every $j \in [n]$.  

At the end of an epoch, $\TnewIDSetOtherProposed$ is merged into $\TnewIDSetOtherProposedLastEpoch$, and $\TAWOneIDSetOtherProposed$ is merged into $\TAWOneIDSetOtherProposedLastEpoch$ (Line~\ref{line:OciorUpdateEpochTnewIDSetOtherProposedLastEpoch} of Algorithm~\ref{algm:Ocior}).  
According to the rules for new transaction selection, each node first selects transactions from $\TAWOneIDSetOtherProposedLastEpoch$ and $\TnewIDSetOtherProposedLastEpoch$ until both sets are empty.  

As noted in the proof of Lemma~\ref{lm:livenessEventualProtandone}, all honest nodes, except for at most $t$ of them, are guaranteed to propose $\MaxNumTx$ complete transactions in each epoch.  
If the $\ID$ of $\tx$ has been added to $\TnewIDSetOtherProposedLastEpoch$ by an honest node that is guaranteed to propose $\MaxNumTx$ complete transactions in a new epoch, then unless $\tx$ becomes conflicting or $\AW(\tx)\geq \AWthrehold$, this node will eventually propose $\tx$ in its chain with $\AW(\tx) \geq \AWthrehold$.  
\end{proof}

\begin{theorem}  [Type~II $\APS$]  \label{thm:APSII}
In $\Ocior$, if a valid Type~II $\APS$ is generated for a transaction $\tx$, then eventually all honest consensus nodes and active $\RPC$ nodes will receive a valid Type~II $\APS$ and accept $\tx$, even if another transaction conflicts with $\tx$. 
\end{theorem}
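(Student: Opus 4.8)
The plan is to show that the existence of a Type~II $\APS$ for $\tx$ both certifies the seal of $\tx$ and triggers its own propagation through the three $\APS$-dissemination scenarios of $\Ocior$, after which every honest party accepts $\tx$ by the Safety guarantee. First I would fix notation: a valid Type~II $\APS$ for $\tx$ consists of two threshold signatures $\sig$ and $\sigprime$ on some Chain~$i$, where $\sig$ is generated at height~$\height$ on content $\content = \ltuple i, *, *, \height, \tx, *, *\rtuple$ and $\sigprime$ is generated at height~$\height+1$ on content $\contentprime = \ltuple i, *, *, \height+1, *, \sig, *\rtuple$, so that $\sigprime$ cites $\sig$ as its virtual parent.

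The first step is a quorum-intersection argument establishing two facts. (i) Since $\sigprime$ is a threshold signature, at least $\TSthreshold - t \geq \lceil\frac{n+t+1}{2}\rceil - t \geq t+1$ honest nodes voted on the height-$(\height+1)$ proposal; because each such honest voter must verify that $\sigvp = \sig$ is a valid and \emph{unique} height-$\height$ signature before voting (the $\VP$-uniqueness check), each of these $\geq t+1$ honest nodes holds the Type~I $\APS$ $\ltuple\sig,\content\rtuple$ and has accepted $\tx$. (ii) The signature $\sig$ is the unique height-$\height$ signature of Chain~$i$ that can be the virtual parent of a certified height-$(\height+1)$ signature: otherwise two \emph{disjoint} honest quorums each of size $\TSthreshold - t$ would be required (an honest node votes for only one virtual parent per height), which is impossible since $2(\TSthreshold - t) > n - t$ exceeds the number of honest nodes. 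Hence the pair $\ltuple\sig,\content,\sigprime,\contentprime\rtuple$ is unambiguous and is held consistently by a set $\Hc$ of at least $t+1$ honest consensus nodes.

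Next I would carry out the dissemination argument by distinguishing whether Chain~$i$ ever grows past height~$\height+1$. If it does, then the honest voters on the height-$(\height+2)$ proposal trigger Dissemination Scenario~2: each sends the Type~II $\APS$ $\ltuple\sig,\content,\sigprime,\contentprime\rtuple$ (the signatures accepted at heights $\heightstar-2=\height$ and $\heightstar-1=\height+1$) to a randomly chosen $\RPC$ node, which gossips it to all active $\RPC$ nodes; and once the chain accumulates $\NumHeightMulticast$ newly locked heights, Dissemination Scenario~3 invokes the $\HMDM$ protocol, whose Validity property (Definition~\ref{def:HMDM}, honest-majority case) guarantees that every honest node in $\SenderSet$ and $\ReceiverSet$ outputs the locked signatures, since the $\geq t+1$ nodes of $\Hc$ act as honest senders inputting the identical message. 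If instead Chain~$i$ stalls at height~$\height+1$, I would invoke the outage branch of Dissemination Scenario~3: after $\EpochOutageThreshold$ epochs without growth, the network broadcasts the remaining locked signatures (which include $\sig$, locked by its child $\sigprime$) together with the one additional signature accepted immediately above the top locked signature, namely $\sigprime$; this delivers exactly $\ltuple\sig,\content,\sigprime,\contentprime\rtuple$ to all honest consensus nodes and $\RPC$ nodes. In either case the valid Type~II $\APS$ for $\tx$ reaches every honest consensus node and every active $\RPC$ node, and by Theorem~\ref{thm:Safety} together with the $\Accept$ routine each of them accepts $\tx$ irrespective of any conflicting transaction.

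I expect the main obstacle to be the stall case: I must match the protocol's notion of a \emph{locked} signature precisely to the boundary heights $\height$ and $\height+1$ and verify that the outage broadcast indeed emits the not-yet-locked top signature $\sigprime$ (via the ``one additional signature'' clause), so that the two signatures needed to form a Type~II $\APS$ are never split across the multicast/broadcast boundary. A secondary subtlety is confirming that the $\geq t+1$ honest holders in $\Hc$ remain available as $\HMDM$ senders under the adaptive adversary; this follows because the honest-majority requirement only needs $t+1$ honest senders inputting the same message at the time the multicast is invoked, and the corruption budget $t$ cannot erase all of a set of size $\geq t+1$.
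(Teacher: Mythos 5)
Your quorum-intersection step (at least $\TSthreshold - t \geq t+1$ honest nodes voted on the height-$(\height+1)$ proposal and therefore hold the Type~I $\APS$ $\ltuple\sig,\content\rtuple$ for $\tx$) is correct and matches the paper's opening move. But the rest of the argument rests on a false premise: you claim the full pair $\ltuple\sig,\content,\sigprime,\contentprime\rtuple$ ``is held consistently by a set $\Hc$ of at least $t+1$ honest consensus nodes.'' Generating $\sigprime$ only requires honest voters to send \emph{partial} signatures on $\contentprime$ back to the proposer; the aggregated threshold signature $\sigprime$ may exist only at the (possibly adversarial) proposer, and the $t+1$ honest voters at height $\height+1$ need never see it. Consequently, in your stall case, no honest node locks height $\height$: a voter at height $\height+1$ only advances $\heightdictLock[j]$ to $\height-1$ (the protocol requires $\heightdictLock[j]=\max\{\heightstar-2,0\}$ before voting at $\heightstar$, and $\HeightLockUpdate$ needs accepted signatures at both $\heightdiamond+1$ and $\heightdiamond+2$ to lock $\heightdiamond+1$). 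The outage broadcast then covers heights up to $\heightdictLock[j]+1=\height$, so it disseminates $\sig$ but not $\sigprime$ --- exactly the boundary split you flagged as the ``main obstacle,'' and your proposed resolution (``$\sig$, locked by its child $\sigprime$'') does not hold because the honest nodes do not possess $\sigprime$.

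The paper closes this gap with a mechanism your proposal omits entirely: the reproposal machinery. The $t+1$ honest holders of the Type~I $\APS$ record $\tx$ in $\TAWOneIDSetOtherProposed$, which is merged into $\TAWOneIDSetOtherProposedLastEpoch$ at epoch end; the new-transaction-selection rules force these transactions to be reproposed first in the next epoch, with the Type~I $\APS$ attached as $\myproofProposedTx$ so that honest nodes vote for $\tx$ even if a conflicting transaction exists. At least one of the $t+1$ holders is guaranteed to complete $\MaxNumTx$ proposals in a new epoch, so $\tx$ eventually reaches $\AW(\tx)\geq 3$ on \emph{some} chain (not necessarily Chain~$i$). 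Only at that point does Lemma~\ref{lm:APSIIALL} apply: with $\AW(\tx)\geq 3$, a quorum of honest nodes has genuinely locked a height containing $\tx$ together with its child signature, and the $\HMDM$/outage-broadcast mechanisms then deliver a valid Type~II $\APS$ (possibly with a different child signature than the original $\sigprime$) to all honest consensus and $\RPC$ nodes. Your growing-chain branch is essentially a special case of this, but without the reproposal argument the stall case is unprovable, so the proof as proposed does not go through.
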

\begin{proof}
When a valid Type~II $\APS$ has been generated for a transaction $\tx$, at least $t+1$ honest nodes must have accepted $\tx$ with a Type~I $\APS$. 
When an honest Node~$i$ accepts $\tx$ at height $\height-1$ of Chain~$j$ (during a \emph{vote} round for a proposal at height $\height$ proposed by Node~$j$), Node~$i$ adds the $\ID$ of $\tx$ into $\TAWOneIDSetOtherProposed$, and removes from $\TAWOneIDSetOtherProposed$ and $\TAWOneIDSetOtherProposedLastEpoch$ the $\ID$ of the transaction accepted at height $\height-3$ of Chain~$j$ (see Line~\ref{line:OciorUpdateTAWOneIDSetPO} of Algorithm~\ref{algm:Ocior}). 
The number of transaction identities included in each of $\TAWOneIDSetOtherProposed$ and $\TAWOneIDSetOtherProposedLastEpoch$ for Chain~$j$ is always bounded by $2$.  

At the end of Epoch~$\eon$, the set $\TAWOneIDSetOtherProposed$ is merged into $\TAWOneIDSetOtherProposedLastEpoch$ (Line~\ref{line:OciorUpdateEpochTnewIDSetOtherProposedLastEpoch} of Algorithm~\ref{algm:Ocior}). 
According to the rules for new transaction selection, at the beginning of Epoch~$\eon+1$, Node~$i$ first proposes transactions from $\TnewIDSetOtherProposedLastEpoch$ and $\TAWOneIDSetOtherProposedLastEpoch$, until both sets becomes empty (Lines~\ref{line:GetNewTxBegin}-\ref{line:GetNewTxEnd} of Algorithm~\ref{algm:OciorProcedures}).    

When honest Node~$i$ reproposes an accepted transaction $\tx$, it attaches a valid $\APS$ for $\tx$ (Line~\ref{line:proofProposedTxvalue} of Algorithm~\ref{algm:OciorProcedures}). 
Consequently, all honest nodes will eventually accept $\tx$ and vote for it again, even if another transaction conflicts with $\tx$ (see Lines~\ref{line:OciorAPSproofTxCheck}-\ref{line:OciorVoteOPTXCheckEnd} of Algorithm~\ref{algm:Ocior}).  

Moreover, among the $t+1$ honest nodes that have accepted $\tx$ with a Type~I $\APS$, at least $t+1-t=1$ honest node must both have accepted $\tx$ and be guaranteed to propose $\MaxNumTx$ complete transactions in a new epoch (as in the proof of Lemma~\ref{lm:livenessEventualProOut}).  

Therefore, from these $t+1$ honest nodes, unless $\AW(\tx) \geq 3$, at least one honest node that is guaranteed to propose $\MaxNumTx$ complete transactions in a new epoch will eventually repropose $\tx$ with $\AW(\tx) \geq 3$.  
From Lemma~\ref{lm:APSIIALL}, if $\AW(\tx) \geq 3$, then eventually all honest consensus nodes and $\RPC$ nodes will receive a valid Type~II $\APS$ and accept $\tx$, even if another transaction conflicts with $\tx$.  

Thus, if a valid Type~II $\APS$ is generated for a transaction $\tx$, then eventually all honest consensus nodes and $\RPC$ nodes will receive a valid Type~II $\APS$ and accept $\tx$, even if another transaction conflicts with $\tx$. 
\end{proof}

\begin{theorem}[Type~I $\APS$]    \label{thm:APSI}
In $\Ocior$, if a valid Type~I $\APS$ is generated for a \emph{legitimate} transaction $\tx$, and $\tx$ remains legitimate, then eventually all honest consensus nodes and $\RPC$ nodes will receive a valid Type~II $\APS$ and accept $\tx$. 
\end{theorem}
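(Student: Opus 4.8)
The plan is to derive Theorem~\ref{thm:APSI} by chaining two results that are already available: Lemma~\ref{lm:livenessEventualProtandone}, which converts the ``valid Type~I $\APS$'' hypothesis into the existence of an honest proposer whose chain carries $\tx$ at acceptance weight at least $\AWthrehold = 3$, and Lemma~\ref{lm:APSIIALL}, which converts that weight condition into network-wide delivery and acceptance of a valid Type~II $\APS$. This mirrors the structure of the Liveness proof (Theorem~\ref{thm:liveness}), with Lemma~\ref{lm:livenessEventualProtandone} playing the role that Lemma~\ref{lm:livenessEventualProOut} plays there; the only change in hypothesis is that we begin from a Type~I $\APS$ rather than from a proposal made by a permanently uncorrupted node.

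First I would invoke Lemma~\ref{lm:livenessEventualProtandone}, whose hypothesis matches ours exactly: a valid Type~I $\APS$ has been generated for the \emph{legitimate} transaction $\tx$, and $\tx$ remains legitimate. The lemma then yields that $\tx$ is eventually proposed by at least one honest node with $\AW(\tx)\geq\AWthrehold$. Internally this relies on the fact that generating a Type~I $\APS$ forces at least $\TSthreshold-t=\lceil (n+t+1)/2\rceil-t$ honest nodes to have voted for $\tx$, of which---after discarding the at most $t$ honest nodes that may fail to complete $\MaxNumTx$ transactions in an epoch---at least one honest node is guaranteed to propose $\MaxNumTx$ complete transactions and therefore re-proposes $\tx$ via Lemma~\ref{lm:livenessEventualPro}. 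I would simply cite Lemma~\ref{lm:livenessEventualProtandone} for this step rather than reprove it.

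Next I would invoke Lemma~\ref{lm:APSIIALL} with the conclusion $\AW(\tx)\geq 3$ just obtained. This lemma states that whenever $\AW(\tx)\geq 3$, eventually all honest consensus nodes and all active $\RPC$ nodes receive a valid Type~II $\APS$ and accept $\tx$, even in the presence of a conflicting transaction. Composing the two implications yields the desired conclusion of Theorem~\ref{thm:APSI}, completing the proof.

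The main obstacle is not in the composition---that step is routine---but in verifying that the two cited lemmas genuinely cover the stated hypothesis without gaps. In particular, one must confirm that ``$\tx$ remains legitimate'' is exactly the precondition under which Lemma~\ref{lm:livenessEventualProtandone} guarantees an honest re-proposal: legitimacy could in principle be threatened if a conflicting sibling later acquired an $\APS$, but Safety (Theorem~\ref{thm:Safety}) rules this out once any valid $\APS$ for $\tx$ exists. One must also check that the weight threshold $\AWthrehold=3$ adopted uniformly in the protocol description is consistent with the $\AW(\tx)\geq 3$ precondition of Lemma~\ref{lm:APSIIALL}. Provided these alignments hold, the theorem follows immediately.
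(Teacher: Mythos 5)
Your proof is correct. The first step is exactly the paper's: both invoke Lemma~\ref{lm:livenessEventualProtandone} to convert the Type~I $\APS$ hypothesis into the eventual existence of an honest proposal of $\tx$ with $\AW(\tx)\geq\AWthrehold=3$. The second step differs in which result you cite: the paper observes that $\AW(\tx)\geq\AWthrehold$ means a valid Type~II $\APS$ is eventually generated and then routes through Theorem~\ref{thm:APSII}, whereas you go directly to Lemma~\ref{lm:APSIIALL}. Your route is valid and in fact slightly more economical, since the proof of Theorem~\ref{thm:APSII} itself bottoms out in Lemma~\ref{lm:APSIIALL} after an extra detour (the $t+1$ honest acceptors, reproposal with an attached $\APS$, etc.) that is only needed when one starts from a Type~II $\APS$ rather than from the weight condition you already have in hand; your composition mirrors the paper's own proof of Theorem~\ref{thm:liveness}, which likewise chains an ``eventually proposed with $\AW\geq\AWthrehold$'' lemma directly into Lemma~\ref{lm:APSIIALL}. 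Both routes deliver the same conclusion, and your closing sanity checks (that Safety protects legitimacy once an $\APS$ exists, and that $\AWthrehold=3$ matches the precondition of Lemma~\ref{lm:APSIIALL}) are exactly the alignments that need to hold.
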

\begin{proof}
From Lemma~\ref{lm:livenessEventualProtandone},    if a valid Type~I $\APS$ is generated for a \emph{legitimate} transaction $\tx$, and $\tx$ remains legitimate, then eventually it will be proposed by at least one honest node with $\AW(\tx) \geq \AWthrehold$,  which implies that a valid Type~II $\APS$ is eventually generated for $\tx$. 
Then, from Theorem~\ref{thm:APSII}, if a valid Type~II $\APS$ is generated for a transaction $\tx$, then eventually all honest consensus nodes and $\RPC$ nodes will receive a valid Type~II $\APS$ and accept $\tx$. 
\end{proof}

\begin{lemma}    \label{lm:APSIIALL}
If $\AW(\tx) \geq 3$, then eventually all honest consensus nodes and $\RPC$ nodes will receive a valid Type~II $\APS$ for $\tx$ and accept $\tx$, even if another transaction conflicts with $\tx$. 
\end{lemma}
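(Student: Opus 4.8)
The plan is to prove the lemma in three stages: first exhibit a valid Type~II $\APS$ for $\tx$ and show that it is already held by a quorum of honest nodes, then use the dissemination machinery of $\Ocior$ to propagate it to everyone, and finally invoke Safety to conclude acceptance. I would begin by unpacking the hypothesis $\AW(\tx) \geq 3$. By the definition of acceptance weight, $\tx$ is signed by a threshold signature $\sig_{j,\heightprime}$ at some height $\heightprime$ of some Chain~$j$ whose top height $\heightstar$ satisfies $\heightstar \geq \heightprime + 2$. Because the chain links signatures contiguously, there exist $\sig_{j,\heightprime}, \sig_{j,\heightprime+1}, \sig_{j,\heightprime+2}$, each linking the previous as its virtual parent. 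The pair $(\sig_{j,\heightprime}, \sig_{j,\heightprime+1})$ together with their contents $\content = \ltuple j, *, *, \heightprime, \tx, *, * \rtuple$ and $\contentprime = \ltuple j, *, *, \heightprime+1, *, \sig_{j,\heightprime}, * \rtuple$ is precisely a valid Type~II $\APS$ for $\tx$.

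Next I would argue that at least $t+1$ honest nodes hold this \emph{complete} Type~II $\APS$. Since $\sig_{j,\heightprime+2}$ was generated, at least $\TSthreshold = \lceil \frac{n+t+1}{2} \rceil$ distinct nodes voted at height $\heightprime+2$, so at least $\TSthreshold - t \geq t+1$ honest nodes did. In the Vote round at height $\heightprime+2$, each such honest node verified that $\sig_{j,\heightprime+1}$ is the \emph{unique} valid virtual-parent threshold signature at height $\heightprime+1$ and processed the chain down to and including $\sig_{j,\heightprime}$; hence each holds both signatures and therefore the full Type~II $\APS$ for $\tx$. I expect this to be the main obstacle: it requires a careful reading of the Vote-round verification and chain-tracking rules, and it is exactly here that the threshold $\AW \geq 3$ (rather than $\AW \geq 2$) is genuinely used, since $\AW \geq 2$ would only guarantee honest nodes that possess $\sig_{j,\heightprime}$ but not necessarily its child $\sig_{j,\heightprime+1}$, which is itself formed from their votes.

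Then I would invoke Dissemination Scenario~3. The $t+1$ honest holders agree on identical $\sig_{j,\heightprime}$ and $\sig_{j,\heightprime+1}$, because the Vote-round rule forces each honest voter at height $\heightprime+2$ to check that $\sig_{j,\heightprime+1}$ is the only threshold signature accepted at height $\heightprime+1$ of Chain~$j$; thus they share one common Type~II $\APS$ for $\tx$. Since height $\heightprime$ is locked (it has confirming descendants), these signatures and contents lie in the locked portion of Chain~$j$ and are included in the $\HMDM$ multicast once the chain accumulates $\NumHeightMulticast$ locked heights, or in the fallback broadcast triggered after the chain fails to grow for $\EpochOutageThreshold$ epochs (which appends one extra signature above the top locked height to supply the needed child). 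With at least $t+1$ honest senders inputting the same message, this is an honest-majority distributed multicast, so by the Validity property of $\HMDM$ (Definition~\ref{def:HMDM}), every honest node in $\SenderSet$ (consensus nodes) and $\ReceiverSet$ ($\RPC$ nodes) eventually outputs the Type~II $\APS$ for $\tx$.

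Finally I would close by applying Theorem~\ref{thm:Safety}: any node that receives a transaction together with a valid $\APS$ accepts it, even in the presence of a conflicting transaction. Hence every honest consensus node and active $\RPC$ node eventually receives the valid Type~II $\APS$ for $\tx$ and accepts $\tx$, which is the claim.
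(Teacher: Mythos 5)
Your proposal is correct and follows essentially the same route as the paper's proof: derive that the generation of the signature at height $\heightprime+2$ forces at least $\TSthreshold - t \geq t+1$ honest nodes to hold the locked pair $(\sig_{j,\heightprime}, \sig_{j,\heightprime+1})$ constituting the Type~II $\APS$, use the locked-chain consistency (the paper's Lemma~\ref{lm:lockchain}, which you re-derive via the unique-VP check) to get agreement on a common copy, and then rely on the $\HMDM$ multicast plus the stalled-chain fallback broadcast of Dissemination Scenario~3 to reach all consensus and $\RPC$ nodes. Your version merely spells out a few steps the paper leaves terse, such as explicitly invoking $\HMDM$ Validity and Theorem~\ref{thm:Safety} for the final acceptance.
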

\begin{proof}
If $\AW(\tx) \geq 3$, then at least $\TSthreshold - |\Fc| \geq \big\lceil \tfrac{n+t+1}{2} \big\rceil - t \geq t+1$  
honest consensus nodes have accepted $\tx$ with $\AW(\tx) \geq 2$ in a locked chain. 
We say that Chain~$j$ is \emph{locked} at height $\heightstar$ at Node~$i$ if all signatures at heights $\heightprime \leq \heightstar+1$ have been accepted in Chain~$j$.  
From Lemma~\ref{lm:lockchain}, if Chain~$j$ is locked at height $\heightstar$ at both Node~$i$ and Node~$i'$, then Nodes~$i$ and $i'$ hold the same copy of the locked Chain~$j$.  

If a chain grows by $\NumHeightMulticast$ new locked heights, the $\HMDM$ algorithm is invoked to multicast these $\NumHeightMulticast$ signatures and their corresponding contents in the locked chain to all consensus nodes and all $\RPC$ nodes (see Lines~\ref{line:OciorHMDMBegin}--\ref{line:OciorHMDMEndEnd} of Algorithm~\ref{algm:Ocior}).  
If a chain has not grown for $\EpochOutageThreshold$ epochs (for a preset parameter $\EpochOutageThreshold$), the network broadcasts the remaining signatures locked in the chain, along with one additional signature accepted at a height immediately following that of the top locked signature (see Lines~\ref{line:OciorBroadcastBegin}-\ref{line:OciorHMDMEnd} of Algorithm~\ref{algm:Ocior}).  
The broadcast of the last signature ensures that every node receives the signatures needed to construct a valid Type~II $\APS$ for $\tx$ whenever $\AW(\tx) \geq 3$.

Therefore, if $\AW(\tx) \geq 3$, then eventually all honest consensus nodes and $\RPC$ nodes will receive a valid Type~II $\APS$ for $\tx$ and accept $\tx$, even if another transaction conflicts with $\tx$.  
\end{proof}

\begin{lemma}    \label{lm:lockchain}
If Chain~$j$ is locked at height $\heightstar$ at both honest Node~$i$ and honest Node~$i'$, then Nodes~$i$ and $i'$ hold the same copy of the locked Chain~$j$.  
\end{lemma}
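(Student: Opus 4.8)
The plan is to prove the lemma by a downward induction on height, fixing the threshold signature at each height $\height \in [0,\heightstar]$ of the locked chain and showing both honest nodes hold exactly it. The two structural facts I would rely on are: (i) a quorum-intersection bound for the threshold $\TSthreshold = \lceil \frac{n+t+1}{2}\rceil$, so that any two voting quorums of size $\TSthreshold$ overlap in at least $2\TSthreshold - n \geq t+1$ nodes and hence in at least one honest node; and (ii) the per-height uniqueness invariant enforced by honest nodes, namely that an honest node accepts at most one threshold signature at each height of a given chain, and that the virtual-parent check (Lines~\ref{line:OciorVoteVPCheckCond}--\ref{line:OciorVoteVPCheck} of Algorithm~\ref{algm:Ocior}) requires a node voting at height $\height+1$ to have a single accepted signature at height $\height$ equal to the proposed $\sigvp$.

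The heart of the argument concerns the top locked level. Since Chain~$j$ is locked at height $\heightstar$ at both Node~$i$ and Node~$i'$, by definition each has accepted a signature at height $\heightstar+1$; call them $\sig_{j,\heightstar+1}$ and $\sig_{j,\heightstar+1}'$ (possibly equal). Each such signature carries a virtual-parent signature at height $\heightstar$, say $\sig_{j,\heightstar}$ and $\sig_{j,\heightstar}'$. I claim $\sig_{j,\heightstar} = \sig_{j,\heightstar}'$. If the two height-$(\heightstar+1)$ signatures coincide this is immediate; otherwise, each was generated from a quorum of $\TSthreshold$ partial signatures, so by fact~(i) some honest node $a$ voted for both. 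By fact~(ii), at the moment $a$ voted for the first it had accepted $\sig_{j,\heightstar}$ as its unique signature at height $\heightstar$, and at the moment it voted for the second it had accepted $\sig_{j,\heightstar}'$ as its unique signature at height $\heightstar$. Because an honest node never replaces or adds a second distinct accepted signature at a height it has already committed to — its accepted-signature record per height is a singleton and append-only (see the $\Accept$ function, Lines~\ref{line:AcceptBegin}--\ref{line:AcceptEnd} of Algorithm~\ref{algm:OciorProcedures}) — these two records must be identical, forcing $\sig_{j,\heightstar} = \sig_{j,\heightstar}'$. Hence both nodes agree that the unique locked signature at height $\heightstar$ is this common $\sig_{j,\heightstar}$.

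Once the height-$\heightstar$ signature is pinned, the remainder follows by unwinding the embedded virtual parents. The signature content at height $\heightstar$ has the form $\ltuple j, \eon, \NumProposed, \heightstar, \tx, \sigvp, \sigoptuple\rtuple$ and literally embeds the height-$(\heightstar-1)$ signature as $\sigvp$; since both nodes hold the identical accepted signature $\sig_{j,\heightstar}$ together with its content (being locked, they have accepted every height $\le \heightstar+1$), they hold the identical $\sig_{j,\heightstar-1}$, and iterating down to height $0$ shows they hold identical signatures and contents at all heights $0,\dots,\heightstar$, i.e., the same copy of the locked Chain~$j$. The main obstacle, and where I would spend the most care, is fact~(ii): I must argue precisely that the virtual-parent uniqueness check, combined with the append-only single-record acceptance discipline of honest nodes, rules out an honest node ever certifying two distinct virtual parents at the same height across time. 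I would also confirm that forks admitted through a valid $\APS$ cannot circumvent this, since by Theorem~\ref{thm:Safety} any two transactions backed by valid $\APS$s are non-conflicting and the per-height acceptance record remains a singleton.
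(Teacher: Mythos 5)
Your proposal is correct and follows essentially the same route as the paper: the paper's one-line appeal to "network quorum" is exactly your quorum-intersection bound $2\TSthreshold - n \geq t+1$ combined with the fact that an honest node's per-height acceptance record in $\TxHeightAcceptTemp$ is a singleton enforced by the $\VP$ check, after which uniqueness propagates down the chain through the embedded $\sigvp$ links. Your write-up merely makes explicit the steps the paper leaves implicit.
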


\begin{proof}
When a node votes for a proposal at height $\heightprime$ of Chain~$j$, it must verify that the proposal links to a fixed $\VP$ accepted at height $\heightprime - 1$, and that all signatures at heights $\heightprime - 1, \heightprime - 2, \dots, 1$ have been accepted.  
Due to the network quorum, if a signature at height $\heightprime$ is generated, it can only link to a single $\VP$ accepted at height $\heightprime - 1$.  

If Chain~$j$ is \emph{locked} at height $\heightstar$ at Node~$i$, then all signatures at heights $\heightprime \leq \heightstar+1$ have been accepted in Chain~$j$.  
Therefore, if Chain~$j$ is locked at height $\heightstar$ at both Node~$i$ and Node~$i'$, then Nodes~$i$ and $i'$ must hold the same copy of the locked Chain~$j$.  
\end{proof}

\begin{lemma}    \label{lm:complete}
If a node is guaranteed to propose $\MaxNumTx$ transactions in an epoch $\eon$, then each of its proposals is eventually completed.   
\end{lemma}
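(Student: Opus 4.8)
The plan is to exploit the sequential structure of proposals together with a two-case analysis of how a single proposal terminates. Throughout, the node in question---call it Node~$i$---is honest and remains uncorrupted, since it is assumed to carry out all $\MaxNumTx$ proposals of epoch~$\eon$ faithfully. By the protocol, Node~$i$ issues its $(\NumProposed{+}1)$-th proposal only after attaching $\myproofofPreviousProposedTx$, which it sets (see \eqref{eq:OciorMyProofDefault} and \eqref{eq:OciorMyProofC}) precisely when its $\NumProposed$-th proposal has been completed---either because a threshold signature on the proposed transaction has been generated, or because that transaction conflicts with another one. Since an honest node never fabricates this proof, the mere hypothesis that Node~$i$ proposes all $\MaxNumTx$ transactions already forces its first $\MaxNumTx-1$ proposals to be completed. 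The remaining burden is therefore to show that the last ($\MaxNumTx$-th) proposal is also eventually completed; the single-proposal argument below in fact applies uniformly to every proposal, so I would state it once.

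First I would fix the proposal in question, say for transaction $\tx$ at height~$\height$ of Chain~$i$, and split on whether $\tx$ conflicts with another transaction. In the conflict case, completion is immediate by the \emph{Proposal Completion} definition: every honest node that has recorded the conflicting transaction eventually returns a message of the form \eqref{eq:OciorVoteEnd}, and upon receiving any such $\CONFLICT$ message (with $\myproofProposedTx=\defaultvalue$) Node~$i$ records the proposal as completed. In the non-conflict case (and likewise when $\tx$ conflicts but Node~$i$ attaches a valid Type~I $\APS$ through $\myproofProposedTx$, as in \eqref{eq:proofProposedTxvalue}), I would show that every honest node eventually clears all voting preconditions and returns the partial signatures of \eqref{eq:OciorVoteOK}. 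Since there are at least $n-t$ honest nodes and $\ResilianceCondition$ gives $n-t\ge\TSthreshold=\lceil\frac{n+t+1}{2}\rceil$, Node~$i$ then collects $\TSthreshold$ valid partial signatures from distinct nodes and produces the threshold signature, completing the proposal. The base case $\NumProposed=1$ is identical, with the virtual parent being the initialized signature (or, when $\eon>1$, the re-proposed last proposal of the previous epoch).

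The hard part will be discharging the voting preconditions at each honest voter in the non-conflict case: a voter must (i) have received this proposal and all of Node~$i$'s preceding proposals of epoch~$\eon$, (ii) accept the virtual parent $\sigvp$ as the unique valid threshold signature at height~$\height-1$ of Chain~$i$, (iii) confirm that all of Node~$i$'s preceding proposals are completed, and (iv) confirm the legitimacy of $\tx$. Point~(i) follows from eventual message delivery in the asynchronous model. For~(ii), because honest Node~$i$ broadcasts a single fixed proposal (hence a single $\sigvp$) at each height of its chain, honest nodes vote for at most one content there, so the quorum bound $2\TSthreshold>n+t$ rules out a second threshold signature at that height and the uniqueness check cannot fail (this mirrors the quorum-intersection argument of Theorem~\ref{thm:Safety}, with Lemma~\ref{lm:lockchain} supplying the corresponding locked-chain consistency once heights are locked). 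Point~(iii) is where the induction really bites: each honest voter verifies completion of the immediately preceding proposal either through the re-verifiable $\sigvp$ (when that proposal completed via a threshold signature, which then serves as the virtual parent at the next height) or through the attached conflict proof \eqref{eq:OciorMyProofC}, and chaining these verifiable witnesses back over the preceding proposals---each of which the voter has received---discharges the full ``all preceding proposals completed'' requirement. Point~(iv) holds by assumption in the conflict-free case. Finally, I would verify that every ``eventually'' invokes only eventual delivery and never a timing assumption, so that the argument remains valid under full asynchrony and an adaptive adversary.
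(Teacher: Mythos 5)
Your proof is correct and follows essentially the same route as the paper's: the paper likewise argues that each proposal is eventually either accompanied by a threshold signature (Lines~\ref{line:OciorAggRx}--\ref{line:OciorMyProofNormalDoneTS} of Algorithm~\ref{algm:Ocior}) or by a conflict proof (Lines~\ref{line:OciorMyProofCBegin}--\ref{line:OciorMyProofC}), and that either outcome constitutes completion, permitting the next proposal. Your version merely fills in more of the detail (the quorum count $n-t\geq \TSthreshold$, the discharge of the voting preconditions, and the observation that the hypothesis already forces the first $\MaxNumTx-1$ completions), which the paper leaves implicit.
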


\begin{proof}
In $\Ocior$, at least $n-t-|\Fc| \geq t+1$ honest nodes are guaranteed to propose $\MaxNumTx$ transactions in an epoch $\eon$. 
For each such node, and for each proposal it makes in epoch $\eon$, the proposal is eventually either accompanied by a threshold signature for the proposed transaction (see Lines~\ref{line:OciorAggRx}-\ref{line:OciorMyProofNormalDoneTS} of Algorithm~\ref{algm:Ocior}) or by a proof showing that the proposed transaction conflicts with another transaction (see Lines~\ref{line:OciorMyProofCBegin}-\ref{line:OciorMyProofC} of Algorithm~\ref{algm:Ocior}).  
In either case, the proposal is considered completed, allowing the node to attach a completeness proof and subsequently propose a new transaction in epoch $\eon$.  

Note that for any node that is not guaranteed to propose $\MaxNumTx$ transactions in epoch $\eon$, each of its proposals is also eventually completed, except possibly the last proposal if the node transitions to a new epoch $\eon+1$.  
In this case, the last proposal from epoch $\eon$ will be re-proposed at the beginning of epoch $\eon+1$.  
\end{proof}

\begin{lemma}    \label{lm:livenessReceiver}
For a two-party transaction $\Tx_{A,B}$, if the recipient $B$ receives a valid $\APS$ for $\Tx_{A,B}$, then $B$ can initiate a new legitimate transaction $\Tx_{B,*}$ to successfully transfer the assets received in $\Tx_{A,B}$. 
\end{lemma}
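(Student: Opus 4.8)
Lemma~\ref{lm:livenessReceiver} claims that once the recipient $B$ holds a valid $\APS$ for a two-party transaction $\Tx_{A,B}$, then $B$ can construct a *new legitimate* transaction $\Tx_{B,*}$ spending the received assets. So I need to show that all four conditions in the definition of a legitimate transaction can be satisfied by $B$. Let me think about what $B$ actually has and what it needs.

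$B$ has a valid $\APS$ for $\Tx_{A,B}$. The four legitimacy conditions for $\Tx_{B,*}$ citing $\Tx_{A,B}$ as parent are: (1) attach a valid $\APS$ for the parent; (2) non-conflict with other transactions citing the same parent; (3) address/balance consistency with the parent; (4) a valid signature by $B$. So the plan is to verify each condition is achievable.
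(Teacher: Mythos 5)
Your proposal stops at the point where the proof should begin: you identify the four legitimacy conditions and announce ``the plan is to verify each condition is achievable,'' but you never carry out that verification. As written, this is an outline, not a proof, so there is a genuine gap. More importantly, the decomposition you chose misses the actual content of the lemma. Conditions~1, 3, and~4 are essentially under $B$'s control ($B$ holds the $\APS$ to attach, chooses consistent addresses and amounts, and signs with its own key), so verifying them is not where the difficulty lies. The word ``successfully'' in the statement is the real claim: it is not enough that $\Tx_{B,*}$ is syntactically legitimate --- the network must actually accept the parent and vote for the child so that the transfer completes.

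The paper's proof gets this by invoking Theorem~\ref{thm:Safety} (Safety), which supplies the two facts your proposal never touches: (i) no two valid $\APS$s can be generated for conflicting transactions, so the $\APS$ that $B$ holds for $\Tx_{A,B}$ is effectively unique among any conflicting set, and (ii) any node that receives a transaction together with its valid $\APS$ \emph{must accept} that transaction. Fact~(ii) is what guarantees that every honest consensus node, upon seeing $\Tx_{B,*}$ with the parent's $\APS$ attached, will accept $\Tx_{A,B}$ and then vote for $\Tx_{B,*}$, even if some other transaction conflicts with $\Tx_{A,B}$. Without citing Safety (or reproving its consequences), you cannot establish Condition~2 from the network's perspective, nor the ``successfully transfer'' conclusion. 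To repair the proof, keep your Condition~1/3/4 observations as one-line remarks and center the argument on Theorem~\ref{thm:Safety} as the paper does.
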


\begin{proof}
From Theorem~\ref{thm:Safety},  if two valid $\APS$s are generated for two different transactions, then those transactions must be non-conflicting. Furthermore, any node that receives a transaction together with its valid $\APS$ will accept the transaction.  

Therefore, for a two-party transaction $\Tx_{A,B}$, if the recipient $B$ receives a valid $\APS$ for $\Tx_{A,B}$, then $B$ can initiate a new legitimate transaction $\Tx_{B,*}$ to transfer the assets obtained in $\Tx_{A,B}$ by attaching the $\APS$ of $\Tx_{A,B}$ to $\Tx_{B,*}$.  
Any node that receives the $\APS$ for $\Tx_{A,B}$ will accept $\Tx_{A,B}$ and subsequently vote for $\Tx_{B,*}$, even if another transaction conflicts with $\Tx_{A,B}$.
\end{proof}

 \begin{theorem}  [Round complexity]   \label{thm:goodcaseround}
 $\Ocior$ achieves a  good-case latency of \emph{two} asynchronous rounds for Type~I transactions. 
\end{theorem}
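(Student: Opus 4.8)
The plan is to argue directly from the two-round Propose--Vote structure of the basic (and chained) $\Ocior$ protocol, following the definition of good-case round complexity: the transaction is proposed by an honest (not necessarily designated) node, and we count from the round in which that node issues its proposal to the round in which a Type~I $\APS$ for the transaction is generated. For a Type~I transaction, recall that a Type~I $\APS$ is by definition sufficient to establish finality, so it suffices to show that an honest proposer can assemble a valid Type~I $\APS$ after exactly two one-way communication rounds.

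First I would fix an honest Node~$\thisnodeindex$ that proposes a legitimate Type~I transaction $\tx$, and identify Round~1 as the \emph{Propose} round, in which Node~$\thisnodeindex$ sends its proposal message of the form~\eqref{eq:OciorProposal} to all consensus nodes. I would then identify Round~2 as the \emph{Vote} round: upon receiving this proposal, each honest node executes the verification steps of the Vote round and, if all conditions pass, returns its partial signatures $\Vote(\sk_{\eon,\thisnodeindex},\Hash(\content))$ and $\VoteLTS(\skl_{\eon,\thisnodeindexshuffle},\Hash(\content))$ to the proposer, as in~\eqref{eq:OciorVoteOK}. The key claim here is that, because the proposer is honest and follows the protocol, all of the Vote-round conditions are simultaneously satisfiable for every honest voter: the virtual parent signature $\sigvp$ is valid and unique at height $\heightstar-1$, the official parent signatures are valid, the preceding proposals in the epoch are completed, and $\tx$ is legitimate. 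Hence every honest node that receives the proposal will vote for it.

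Next I would carry out the counting argument. There are at least $n-t$ honest nodes, each of which votes in Round~2. Since $\ResilianceCondition$, we have $2(n-t)\geq n+t+1$, and therefore $n-t \geq \lceil \frac{n+t+1}{2}\rceil = \TSthreshold$ (using integrality of $n-t$). Thus the proposer receives at least $\TSthreshold$ valid partial signatures from distinct nodes and, by the Correctness of Final Signatures property of the $\TS$/$\LTS$ schemes (robustness of $\OciorBLS$), aggregates them into a valid threshold signature $\sig$ over $\content$. The pair $\ltuple \sig, \content\rtuple$ is exactly a Type~I $\APS$ for $\tx$. Since the aggregation step is local computation at the proposer and incurs no additional communication round, the $\APS$ is produced at the end of Round~2. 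Therefore the good-case latency for a Type~I transaction is two asynchronous rounds, which establishes the theorem.

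The main obstacle I anticipate is verifying that the full set of Vote-round predicates (the VP-uniqueness check in Lines~\ref{line:OciorVoteVPCheckCond}--\ref{line:OciorVoteVPCheck}, the preceding-proposal-completion check, and the legitimacy check) all pass at every honest node when the proposer is honest, without introducing any extra waiting round that would push finality beyond two rounds. This requires appealing to the honest proposer's correct construction of $\sigvp$, $\sigoptuple$, and the completeness proof $\myproofofPreviousProposedTx$, and arguing that an honest voter's waiting conditions (e.g., awaiting receipt of the VP proposal and of the preceding proposals) are resolved by messages sent in the same or earlier rounds of the honest proposer's own execution, so they do not count as separate latency-contributing rounds under the asynchronous-round convention.
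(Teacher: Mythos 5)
Your proposal is correct and follows essentially the same route as the paper, whose own proof is a one-sentence observation that a Type~I $\APS$ is generated by an honest proposer after the \emph{Propose} and \emph{Vote} rounds. Your version simply fills in the details the paper leaves implicit (the quorum count $2(n-t)\geq n+t+1$ giving $n-t\geq \TSthreshold$, the locality of aggregation, and the concern that all Vote-round predicates pass for an honest proposer), so there is nothing to correct.
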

\begin{proof} 
For a  legitimate \emph{two-party} (Type~I) transaction, a valid Type~I $\APS$ can be generated by an honest node after the \emph{propose} and \emph{vote} rounds. Thus, it can be finalized with a \emph{good-case latency} of \emph{two} asynchronous rounds,   for any $\ResilianceCondition$. The \emph{good case} in terms of latency  refers to the scenario where the transaction is proposed by any (not necessarily designated) honest node. 
\end{proof}

\begin{theorem}[Communication complexity] \label{thm:cc}
In $\Ocior$, the total expected message complexity per transaction is $O(n)$, and the total expected communication in bits per transaction is $O(n \kappa)$, where $\kappa$ denotes the size of a threshold signature.
\end{theorem}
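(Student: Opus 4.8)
The plan is to count the messages and bits exchanged in processing a single transaction, separating the good-case (two-round) path from the amortized overhead of dissemination and epoch maintenance. I would organize the proof around the three phases that a transaction passes through: (i) the \emph{Propose}/\emph{Vote} consensus rounds, (ii) the real-time $\APS$ dissemination to $\RPC$ nodes, and (iii) the amortized cost of $\HMDM$-based chain multicast and $\ADKG$ key regeneration.

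First I would bound the consensus cost. When an honest Node~$i$ proposes a transaction, it sends one $\PROPOSE$ message to all $n$ consensus nodes, contributing $O(n)$ messages and $O(n\kappa)$ bits, since each $\PROPOSE$ message carries a constant number of threshold signatures and contents, each of size $O(\kappa)$. In the \emph{Vote} round, each of the $n$ nodes sends a single $\VOTE$ message (carrying two partial signatures of size $O(\kappa)$) back to the proposer, giving another $O(n)$ messages and $O(n\kappa)$ bits. Thus the core two-round consensus incurs exactly $O(n)$ messages and $O(n\kappa)$ bits per transaction. I would note that the proposer performs aggregation locally (no additional communication), so signature combination does not affect the communication count.

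Next I would handle dissemination and the amortized terms. By Dissemination Scenario~2 in the protocol description, each voting node sends at most one $\APS$ message to a single randomly selected $\RPC$ node during the vote round, adding $O(n)$ messages and $O(n\kappa)$ bits per transaction. The $\HMDM$ multicast (Dissemination Scenario~3) is invoked only once every $\NumHeightMulticast$ locked heights, so by the complexity of $\OciorHMDMHash$ or $\OciorHMDMIT$ stated in Definition~\ref{def:HMDM} and the surrounding discussion, its cost amortizes to $O(n)$ messages and $O(n\kappa)$ bits per transaction when $\NumHeightMulticast$ is chosen appropriately. Likewise, the $\ADKG[\eon+1]$ protocol runs once per epoch, and since each node proposes $\MaxNumTx$ transactions per epoch with $\MaxNumTx$ sufficiently large (typically $\MaxNumTx \geq O(n^2)$), the per-transaction amortized $\ADKG$ cost is negligible relative to $O(n)$. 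Summing these contributions yields the claimed totals.

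The main obstacle I anticipate is rigorously justifying the amortization of the $\HMDM$ and $\ADKG$ costs in the \emph{asynchronous} and \emph{adversarial} setting, rather than merely in expectation under benign conditions. Specifically, I must argue that even dishonest proposers cannot inflate the per-transaction communication beyond $O(n)$ — a faulty node might, for instance, trigger repeated $\HMDM$ invocations or stall a chain. Here I would rely on the fact that $\HMDM$ is only invoked per $\NumHeightMulticast$ \emph{locked} heights (and the fallback broadcast only after $\EpochOutageThreshold$ epochs of no growth), so the number of expensive multicast operations is tied to genuine chain progress, and the "expected" qualifier in the statement absorbs the randomization in Scenario~2 and any residual worst-case batching. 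I would therefore frame the bound as an expected amortized cost, charging each heavy operation against the $\Omega(\NumHeightMulticast)$ or $\Omega(\MaxNumTx)$ transactions it serves, which reduces the remaining work to the routine per-message accounting sketched above.
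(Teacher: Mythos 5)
Your accounting of the per-proposal cost is correct and matches the paper's: one $\PROPOSE$ message to $n$ nodes, $n$ $\VOTE$ replies, local aggregation, plus the amortized $\HMDM$, stalled-chain broadcast, and $\ADKG$ terms, each of which you bound the same way the paper does (the paper fixes $\NumHeightMulticast \geq \lceil n\log n\rceil$, $\EpochOutageThreshold = n$, and $\MaxNumTx > n^2$ to make these negligible or $O(n\kappa)$ amortized). However, there is a genuine gap: the theorem is stated \emph{per transaction}, not per proposal, and in $\Ocior$ a single transaction can be proposed by many different nodes (the liveness and re-proposal machinery in fact depends on this). If $\Theta(n)$ honest nodes each independently proposed the same pending transaction, the per-transaction cost would balloon to $O(n^2)$ messages even though each individual proposal costs only $O(n)$. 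Your proposal never bridges this: you implicitly equate one transaction with one proposal and attribute the word ``expected'' in the statement to the random $\RPC$-node choice in Dissemination Scenario~2 and to amortization, which is a misattribution.

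The paper closes this gap with an argument you omit entirely: it invokes the randomized new-transaction selection rules to show that, when the pool of pending transactions is sufficiently large, each honest node selects a transaction different from those chosen by other honest nodes with constant probability --- the sets $\TAWOneIDSetOtherProposedLastEpoch$ and $\TnewIDSetOtherProposedLastEpoch$ have size $O(n)$, and the remaining selections are drawn from $\TnewselfQueue$ (a per-cluster queue, disjoint across nodes) with probability $1/\NumProposedIntevalRandomTxSelf$ or from the large randomized set $\TnewIDSet$ with the residual probability. This is what makes the \emph{expected} number of proposals charged to any one transaction $O(1)$, and it is the actual source of the ``expected'' qualifier in the theorem. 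To repair your proof you would need to add this step (or an equivalent bound on the expected multiplicity of proposals per transaction); the rest of your per-message bookkeeping then goes through as written.
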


\begin{proof}
For each proposal made by an honest node, the total message complexity is $O(n)$, while the total communication in bits is $O(n \kappa)$, where $\kappa$ denotes the size of a threshold signature.

It is guaranteed that, when proposing a new transaction, each honest node selects a transaction different from those proposed by other honest nodes with constant probability, provided that the pool of pending transactions is sufficiently large.  
More precisely, according to the transaction selection rules, after choosing transactions from  $\TAWOneIDSetOtherProposedLastEpoch$ and $\TnewIDSetOtherProposedLastEpoch$ (each of size at most $O(n)$), each node  selects a new transaction $\tx$ with $\AW(\tx)<3$ from $\TnewselfQueue$ with probability \[1/\NumProposedIntevalRandomTxSelf\] where $\NumProposedIntevalRandomTxSelf>1$ is a preset parameter (e.g.,   $\NumProposedIntevalRandomTxSelf = 2$). Here, $\TnewselfQueue$ is a FIFO queue maintained by node~$\thisnodeindex$, containing the identities of pending transactions within Cluster~$\thisnodeindex$.  
In addition, each honest node  selects a new transaction $\tx$ with $\AW(\tx)<3$ from $\TnewIDSet$ with probability  
\[
\Biggl(1 - \frac{1}{\NumProposedIntevalRandomTxSelf}\Biggr) \cdot \Biggl(1 - \frac{1}{\NumProposedIntevalRandomTxFromOtherProp}\Biggr)
\]
where $\TnewIDSet$ is a randomized set containing the identities of pending transactions, and the parameter    $\NumProposedIntevalRandomTxFromOtherProp > 1$  can be set as   $\NumProposedIntevalRandomTxFromOtherProp = \lceil n/10 \rceil$.

When a Chain~$j$ has grown $\NumHeightMulticast$ new locked heights, the network activates the $\OciorHMDMHash$ protocol to multicast these $\NumHeightMulticast$ signatures and contents locked in the chain, where $\NumHeightMulticast \geq \lceil n \log n \rceil$. The total message complexity of $\OciorHMDMHash$ is $O(n^2)$, while the total communication in bits is $O(n \NumHeightMulticast \kappa + \kappa n^2 \log n)$ for multicasting $\NumHeightMulticast$ signatures and contents. This implies that, on average, for multicasting one signature and its content, the total message complexity is at most $O(n/\log n)$, and the total communication in bits is  $O(n \kappa)$.  

If a chain has not grown for $\EpochOutageThreshold$ epochs, with a preset parameter $\EpochOutageThreshold = n$, the network broadcasts the remaining signatures locked in this chain, together with one additional signature accepted at a height immediately following that of the top locked signature (see Lines~\ref{line:OciorBroadcastBegin}-\ref{line:OciorHMDMEnd} of Algorithm~\ref{algm:Ocior}). By setting a sufficiently large parameter $\EpochOutageThreshold = n$, and given that the maximum number of transactions that can be proposed by a node in an epoch is $\MaxNumTx > n^2$, this cost is negligible in the overall communication complexity.  

There is also a communication cost incurred from the $\ADKG$ scheme in each epoch. However, by setting the parameter $\MaxNumTx > n^2$, this cost becomes negligible in the overall communication complexity.  

Thus, the total expected message complexity per transaction is $O(n)$, and the total expected communication in bits per transaction is $O(n \kappa)$.
\end{proof}

\begin{theorem}[Computation complexity] \label{thm:cp}
In $\Ocior$, the total computation per transaction is $O(n)$ in the best case, and $O(n \log^2 n)$ in the worst case. 
\end{theorem}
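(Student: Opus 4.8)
The plan is to decompose the per-transaction computation into four parts and bound each separately: (i) partial-signature generation by the voters, (ii) verification of the proposal by the voters, (iii) signature verification and aggregation at the proposer/aggregator, and (iv) the amortized cost of chain dissemination and of the $\ADKG$ scheme. Parts (i)--(iii) are incurred once per proposal of the two-party transaction, whereas part (iv) is a batched cost that must be amortized over many transactions. First I would note that for (i), each honest voter computes exactly one $\TS$ partial signature via $\Vote$ and one $\LTS$ partial signature via $\VoteLTS$, each costing $O(1)$ cryptographic operations; since at most $n$ nodes vote on a proposal, the total signing cost is $O(n)$. For (ii), each voter verifies only a constant number of signatures attached to the proposal in \eqref{eq:OciorProposal}, namely the virtual-parent signature $\sigvp$, the (constant-size) tuple of official-parent signatures $\sigoptuple$, and the completeness/$\APS$ proofs, so verification costs $O(1)$ per voter and $O(n)$ in total.

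The crux is part (iii), and this is where the best/worst-case distinction arises. The aggregator first verifies the (up to) $n$ incoming partial signatures through $\TSVerify$ and $\LTSVerify$, which is $O(n)$. The difference then lies entirely in how the final threshold signature is assembled. In the good case of partial-signature collection, aggregation completes through the $\LTS$ tree, and I would argue that the total number of partial signatures in that tree is $\sum_{\ltslayer=1}^{\ltslayerMax} \ltslayerTotalNodes_{\ltslayer} = O(n)$: since $\ltslayerTotalNodes_{\ltslayerMax}=n$ and each lower layer shrinks by a factor of at least $\min_{\ltslayer} n_{\ltslayer}\geq 2$, the geometric sum is bounded by $O(n)$. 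Each call to $\LTSCombine$ acts on a group of constant size $n_{\ltslayer}$, and because $n_{\ltslayer}$ and $\TSthreshold_{\ltslayer}$ are fixed constants the Lagrange coefficients $\LagrangeCoefficientIntAtZero_{\Tc,i}$ are precomputed, so each combine costs $O(1)$ field operations and a constant number of group exponentiations; hence good-case aggregation is $O(n)$. In the worst case, aggregation falls back to $\TSCombine$ over a set $\Tc$ of size $\TSthreshold=\lceil (n+t+1)/2\rceil=O(n)$, and here the bottleneck is computing the Lagrange coefficients $\LagrangeCoefficientIntAtZero_{\Tc,i}$ for an unpredictable index set, which naively costs $O(n^2)$ but can be reduced to $O(n\log^2 n)$ field operations via fast polynomial interpolation and multipoint evaluation (cf.~\cite{TCZAPGD:20}); the accompanying $O(n)$ group exponentiations $\partialsig_i^{\LagrangeCoefficientIntAtZero_{\Tc,i}}$ are of lower order. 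This produces the dominating $O(n\log^2 n)$ worst-case term.

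Finally, for (iv) I would show the batched costs amortize away. When a chain grows by $\NumHeightMulticast \geq \lceil n\log n\rceil$ new locked heights, $\OciorHMDMHash$ multicasts the corresponding $\NumHeightMulticast$ signatures with per-node computation $\tilde{O}(|\MVBAInputMsg| + \kappa n)$ for the whole batch; summing over the $n$ nodes and amortizing over the $\NumHeightMulticast$ signatures (equivalently, transactions) leaves $O(n)$ per transaction under the parameter choice $\NumHeightMulticast \geq \lceil n\log n\rceil$. The broadcast fallback that triggers after $\EpochOutageThreshold$ idle epochs, and the per-epoch $\ADKG$ cost, are each incurred at most once per epoch while at least $\Omega(n^2)$ transactions are processed (since $\MaxNumTx > n^2$), so both are negligible per transaction. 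Summing the contributions of (i)--(iv) then yields $O(n)$ in the best case and $O(n\log^2 n)$ in the worst case.

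The step I expect to be the main obstacle is making the worst-case $O(n\log^2 n)$ bound rigorous: one must invoke the fast-interpolation machinery carefully to compute the threshold-size Lagrange coefficients within this bound, and then verify that no other component, in particular the amortized $\OciorHMDMHash$ dissemination, secretly exceeds it once the hidden logarithmic factors in the $\tilde{O}$ notation are accounted for under the convention that computation is counted in cryptographic operations on signature-sized values.
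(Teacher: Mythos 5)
Your proposal is correct and follows essentially the same route as the paper's proof, which likewise attributes the dominant cost to signature aggregation, obtains the $O(n)$ best case from the $\LTS$ scheme of $\OciorBLS$, the $O(n\log^2 n)$ worst case from falling back to $\TSCombine$ (with the fast-interpolation bound the paper cites from~\cite{TCZAPGD:20}), and dismisses the $\OciorHMDMHash$, broadcast, and $\ADKG$ costs as negligible under the parameter choices $\NumHeightMulticast \geq \lceil n\log n\rceil$, $\EpochOutageThreshold = n$, and $\MaxNumTx > n^2$. Your write-up is considerably more detailed than the paper's terse argument --- in particular the explicit geometric-sum bound on the $\LTS$ tree and the per-part decomposition are filled-in details the paper leaves implicit --- but the underlying approach is the same.
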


\begin{proof}
In $\Ocior$, for each proposal made by an honest node, the total computation is $O(n)$ in the best case, and $O(n \log^2 n)$ in the worst case, dominated by signature aggregation.  
The computation cost is measured in units of cryptographic operations, including signing, signature verification, hashing, and basic arithmetic operations (addition, subtraction, multiplication, and division) on values of signature size.
The best case is achieved with the $\LTS$ scheme, based on the proposed $\OciorBLS$.  

As in the case of communication complexity, the computation costs incurred by $\OciorHMDMHash$, signature broadcasting, and the $\ADKG$ scheme in each epoch are negligible in the overall computation complexity, provided sufficiently large parameters are set, i.e., $\NumHeightMulticast \geq \lceil n \log n \rceil$, $\EpochOutageThreshold = n$, and $\MaxNumTx > n^2$.  
\end{proof}

\subsection{The case of $\AWthrehold=2$}  

In the above analysis, we focused on the default case of $\AWthrehold=3$. The following two theorems are derived under the specific case of $\AWthrehold=2$. 

\begin{theorem}[Type~II $\APS$, $\AWthrehold=2$]  \label{thm:APSIIthreshold2}
Given $\AWthrehold=2$, if a valid Type~II $\APS$ is generated for a transaction $\tx$, then eventually all honest consensus nodes and active $\RPC$ nodes will receive a valid Type~I $\APS$ and accept $\tx$, even if another transaction conflicts with $\tx$. 
\end{theorem}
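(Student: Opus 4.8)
The plan is to mirror the proof of Theorem~\ref{thm:APSII}, but with the acceptance-weight threshold lowered to $\AWthrehold=2$ and with the disseminated object being a Type~I $\APS$ rather than a Type~II $\APS$. The argument proceeds in three stages: first, extract from the Type~II $\APS$ a quorum of honest nodes that have already accepted $\tx$; second, show that one such node eventually reproposes $\tx$ until its acceptance weight reaches $2$; third, invoke a weight-$2$ analog of Lemma~\ref{lm:APSIIALL} to propagate a Type~I $\APS$ to everyone.

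First I would use the definition of a valid Type~II $\APS$: it certifies that at least $t+1$ honest nodes have received a Type~I $\APS$ for $\tx$ and accepted $\tx$. Each such honest Node~$i$, upon accepting $\tx$ during a vote round, inserts $\idtx$ into $\TAWOneIDSetOtherProposed$ (Line~\ref{line:OciorUpdateTAWOneIDSetPO} of Algorithm~\ref{algm:Ocior}), and this identity is carried into the next epoch when $\TAWOneIDSetOtherProposed$ is merged into $\TAWOneIDSetOtherProposedLastEpoch$ (Line~\ref{line:OciorUpdateEpochTnewIDSetOtherProposedLastEpoch}). By the counting argument of Lemma~\ref{lm:livenessEventualProOut}, among these $t+1$ honest acceptors at least $t+1-t=1$ node is also guaranteed to propose $\MaxNumTx$ complete transactions in a new epoch.

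Second, I would argue that this node repeatedly reproposes $\tx$ until $\AW(\tx)\geq 2$. When an honest node reproposes an accepted transaction, it attaches a valid Type~I $\APS$ as $\myproofProposedTx$ (Line~\ref{line:proofProposedTxvalue} of Algorithm~\ref{algm:OciorProcedures}), so every honest node accepts and votes for $\tx$ again even in the presence of a conflicting transaction (Lines~\ref{line:OciorAPSproofTxCheck}-\ref{line:OciorVoteOPTXCheckEnd} of Algorithm~\ref{algm:Ocior}). Since the node first drains $\TnewIDSetOtherProposedLastEpoch$ and $\TAWOneIDSetOtherProposedLastEpoch$ at the start of each epoch and is guaranteed to complete $\MaxNumTx$ proposals, unless $\AW(\tx)\geq 2$ already, it will eventually repropose $\tx$ with $\AW(\tx)\geq 2$.

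The hard part is the final stage: establishing the weight-$2$ analog of Lemma~\ref{lm:APSIIALL}, namely that $\AW(\tx)\geq 2$ suffices for all honest consensus nodes and active $\RPC$ nodes to obtain a valid Type~I $\APS$. When $\AW(\tx)\geq 2$, the threshold signature $\sig$ signing $\tx$ sits at some height $\heightprime$ whose chain top is at least $\heightprime+1$, so every signature at height $\leq \heightprime+1$ has been accepted; hence Chain~$j$ is locked at height $\heightprime$, and by Lemma~\ref{lm:lockchain} all honest nodes that lock Chain~$j$ at height $\heightprime$ agree on the same copy containing $\sig$. Dissemination then guarantees delivery: once the chain grows by $\NumHeightMulticast$ new locked heights the $\HMDM$ algorithm multicasts the locked signatures and contents to all consensus and $\RPC$ nodes (Lines~\ref{line:OciorHMDMBegin}-\ref{line:OciorHMDMEndEnd} of Algorithm~\ref{algm:Ocior}), and if the chain stalls for $\EpochOutageThreshold$ epochs the remaining locked signatures are broadcast (Lines~\ref{line:OciorBroadcastBegin}-\ref{line:OciorHMDMEnd}). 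Either way every party eventually receives $\sig$ together with its content, i.e., a valid Type~I $\APS$ for $\tx$, and accepts $\tx$ even against a conflicting transaction by Theorem~\ref{thm:Safety}. The subtlety to watch is that, unlike the weight-$3$ case where a full Type~II $\APS$ (two consecutive locked signatures) is disseminated, here the single locked signature $\sig$ must itself lie within the disseminated locked prefix whenever $\AW(\tx)\geq 2$; I would therefore verify explicitly that the locked-height definition places $\sig$ (at height $\heightprime$) inside the locked part of Chain~$j$ the moment the chain top exceeds $\heightprime$. Combining the three stages then yields the claim.
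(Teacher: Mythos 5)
Your proposal is correct, and its third stage is essentially the paper's entire proof: a Type~II $\APS$ pins the signature $\sig_{j,\heightstar}$ for $\tx$ as the unique $\VP$ at height $\heightstar$ by quorum, Lemma~\ref{lm:lockchain} gives consistency of locked chains, and the two dissemination mechanisms (the $\HMDM$ multicast of locked prefixes and the timeout broadcast that includes one extra signature at the height immediately above the top locked height) deliver the signature and content, i.e.\ a Type~I $\APS$, to every honest consensus node and $\RPC$ node. The subtlety you flag at the end --- that $\sig$ itself must land inside the disseminated prefix --- is exactly the point the paper addresses with that one additional broadcast signature.

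Where you diverge is that your first two stages are dead weight. You import the reproposal machinery from Theorem~\ref{thm:APSII} (extract $t+1$ honest acceptors, find one guaranteed to complete $\MaxNumTx$ proposals, have it repropose $\tx$ until $\AW(\tx)\geq 2$), but a valid Type~II $\APS$ by construction consists of two consecutive signatures at heights $\height$ and $\height+1$ with the second linking the first as its $\VP$, so $\AW(\tx)\geq 2$ holds the instant the Type~II $\APS$ exists; your own escape clause ``unless $\AW(\tx)\geq 2$ already'' always fires, and no reproposal ever occurs. The paper's proof skips straight to the locking-and-dissemination argument for this reason. Your version is not wrong, just longer than it needs to be; trimming stages one and two would make it coincide with the paper's proof.
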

\begin{proof}
Suppose a valid Type~II $\APS$ has been generated for a transaction $\tx$ on Chain~$j$ at height $\heightstar$, with $\AW(\tx)\geq 2$. Then, at least $t+1$ honest nodes must have accepted $\tx$ with a Type~I $\APS$ and have linked the threshold signature $\sig_{j,\heightstar}$ for $\tx$ as a $\VP$ at height $\heightstar$. Note that an honest node links only one fixed signature as a $\VP$ at a given height of a chain. Thus, due to quorum, the signature $\sig_{j,\heightstar}$ for $\tx$ is locked at height $\heightstar$ of Chain~$j$. 

From Lemma~\ref{lm:lockchain}, if Chain~$j$ is locked at height $\heightstar$ at both honest Node~$i$ and honest Node~$i'$, then Nodes~$i$ and $i'$ hold the same copy of the locked Chain~$j$.  

Similar to the proof of Lemma~\ref{lm:APSIIALL}, if a chain grows by $\NumHeightMulticast$ new locked heights, the $\HMDM$ algorithm is invoked to multicast these $\NumHeightMulticast$ signatures and their corresponding contents in the locked chain to all consensus nodes and all $\RPC$ nodes. 
If a chain does not grow for $\EpochOutageThreshold$ epochs (for a preset parameter $\EpochOutageThreshold$), the network broadcasts the remaining signatures locked in the chain, along with one additional signature accepted at the height immediately following that of the top locked signature. 
The broadcast of this last signature ensures that every node receives the signatures for $\tx$ whenever $\AW(\tx) \geq 2$.

Thus, given $\AWthrehold=2$, if a valid Type~II $\APS$ is generated for a transaction $\tx$, then eventually all honest consensus nodes and active $\RPC$ nodes will receive a valid Type~I $\APS$ and accept $\tx$, even if another transaction conflicts with $\tx$. 
\end{proof}

\begin{theorem}[Type~I $\APS$, $\AWthrehold=2$] \label{thm:APSIthreshold2}
Given $\AWthrehold=2$, if a valid Type~I $\APS$ is generated for a \emph{legitimate} transaction $\tx$, and $\tx$ remains legitimate, then eventually all honest consensus nodes and $\RPC$ nodes will receive a valid Type~I $\APS$ and accept $\tx$. 
\end{theorem}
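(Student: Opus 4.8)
The plan is to mirror the two-step structure used in the proof of Theorem~\ref{thm:APSI}, but substituting the $\AWthrehold=2$ versions of the supporting results. First I would invoke Lemma~\ref{lm:livenessEventualProtandone}, which is stated for a generic acceptance-weight threshold $\AWthrehold$ and therefore applies directly with $\AWthrehold=2$: if a valid Type~I $\APS$ is generated for a legitimate transaction $\tx$ and $\tx$ remains legitimate, then eventually $\tx$ is proposed by at least one honest node reaching $\AW(\tx)\geq 2$.

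Next I would observe that $\AW(\tx)\geq 2$ is precisely the condition under which a valid Type~II $\APS$ for $\tx$ exists. Indeed, by the definition $\AW(\sig_{j,\heightprime})=\heightstar-\heightprime+1$, having $\AW(\tx)\geq 2$ means a threshold signature $\sig$ on $\tx$ sits at some height $\heightprime$ of Chain~$j$ while the top locked height of that chain is at least $\heightprime+1$; the signature $\sigprime$ accepted at height $\heightprime+1$ then links $\sig$ as its virtual parent, so $\ltuple\sig,\content,\sigprime,\contentprime\rtuple$ constitutes a valid Type~II $\APS$ for $\tx$. Thus the eventual proposal with $\AW(\tx)\geq 2$ guarantees that a valid Type~II $\APS$ is eventually generated.

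Finally I would apply Theorem~\ref{thm:APSIIthreshold2} (the $\AWthrehold=2$ Type~II result already established): once a valid Type~II $\APS$ has been generated for $\tx$, eventually all honest consensus nodes and active $\RPC$ nodes receive a valid Type~I $\APS$ and accept $\tx$, even in the presence of a conflicting transaction. Chaining these three observations yields the claim.

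The main point to verify, rather than a genuine obstacle, is that the genericity behind Lemma~\ref{lm:livenessEventualProtandone} (and the Lemma~\ref{lm:livenessEventualPro} on which it rests) survives lowering $\AWthrehold$ from $3$ to $2$. That chain of reasoning relies on bounded counts in the tracking sets such as $\TAWOneIDSetOtherProposed$ and on a quorum-intersection argument producing at least one honest proposer with a full epoch budget; since those set-maintenance invariants and quorum counts are stated independently of the specific value used for eventual acceptance, I expect this check to be routine, so the argument reduces essentially to the definitional observation linking $\AW(\tx)\geq 2$ to the existence of a Type~II $\APS$.
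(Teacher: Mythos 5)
Your proposal matches the paper's proof exactly: the paper also chains Lemma~\ref{lm:livenessEventualProtandone} (giving an eventual honest proposal with $\AW(\tx)\geq\AWthrehold$, hence a Type~II $\APS$) with Theorem~\ref{thm:APSIIthreshold2} to conclude. Your additional remark verifying that the supporting lemmas are stated generically in $\AWthrehold$ is a sound sanity check but not something the paper spells out.
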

\begin{proof}
From Lemma~\ref{lm:livenessEventualProtandone}, if a valid Type~I $\APS$ is generated for a \emph{legitimate} transaction $\tx$, and $\tx$ remains legitimate, then eventually it will be proposed by at least one honest node with $\AW(\tx) \geq \AWthrehold$, which implies that a valid Type~II $\APS$ is eventually generated for $\tx$. 
Then, given $\AWthrehold=2$, by Theorem~\ref{thm:APSIIthreshold2}, if a valid Type~II $\APS$ is generated for a transaction $\tx$, then eventually all honest consensus nodes and $\RPC$ nodes will receive a valid Type~I $\APS$ and accept $\tx$. 
\end{proof}

 {\renewcommand{\arraystretch}{0.6}
\begin{table} [htbp]
\footnotesize
\begin{center}
\caption{{\small Some notations for the proposed $\Ocior$ protocol.}} \label{tb:OciorNotation}
\vspace{-8pt}
\begin{tabu}{c|c}
\toprule
 
Notations & Interpretation \\
\midrule
\midrule
 
	 $n$  & The total number of consensus nodes in the network. \\ 
	 \midrule 
		 $t$  & The maximum number of corrupted nodes that can be tolerated in the network, for $\tResilianceCondition$.  \\ 
	 \midrule 
		 $\TSthreshold$  & The threshold of the threshold signature scheme, defined as 	$\TSthreshold=\lceil \frac{n+t+1}{2} \rceil$.  \\
		 
\midrule		 
 $\MaxNumTx$  & The maximum number of transactions that can be proposed by a node  in an epoch, typically    $\MaxNumTx > n^2$. \\ 
\midrule
 $\NumProposed$ & The number of transactions that have been proposed by  a node as a proposer in an epoch. \\ 
\midrule

 $\height$  & A  height $\height$ (not necessarily accepted yet) of  a   transaction chain.   \\ 
 \midrule
 $\eon$ &  The epoch number.  \\       
 \midrule 
 $\AWthrehold$ & A threshold on the acceptance weight, with $\AWthrehold = 2$ for Type~I transactions and $\AWthrehold = 3$ for Type~II transactions. \\       
              & For simplicity and consistency, we set $\AWthrehold = 3$ for all transactions in the protocol description. \\
\midrule
 $\NumHeightMulticast$ & An interval on a chain for multicasting   signatures and contents locked in the chain,    for $\NumHeightMulticast  \geq     \lceil n \log n \rceil$.  \\       
\midrule
 $\EpochOutageThreshold$ &   If a chain is not growing for  $\EpochOutageThreshold$ epochs,   broadcast   remaining  signatures  locked in this chain, for  $\EpochOutageThreshold\!=\!n$.  \\   

\midrule
   $\tx$ & A transaction. \\ 
\midrule
  $\idtx$ & The transaction ID, which is the hash output of the transaction $\tx$, i.e., $\idtx=\HashZ(\tx)$. \\ 
\midrule

Trans. Cluster~$\thisnodeindex$   &    A transaction $\tx$ is said to belong to    Cluster~$\thisnodeindex$  if   $\HashZ(\tx) \mod n = \thisnodeindex -1$.  \\ 
  
  \midrule

 $\Tx_{A,B}$          & A transaction made from Client~$A$ to Client~$B$,  \\ & where $A$ and $B$ denote the transaction addresses of  clients.   \\
\midrule

  $\OP$ &   Each transaction has one or multiple official parent  ($\OP$) transactions, except for the genesis transactions.    \\	

   &   Genesis transactions were accepted by all nodes either initially or at specific events according to policy. 	  \\
\midrule
  $\VP$ &   Each proposal  for a transaction $\tx$ at height $\height$  is linked to a previous  transaction $\txprime$  \\ 
    & accepted at height $\height-1$ on the same chain.    $\txprime$ is called   the virtual parent $(\VP)$ of $\tx$.  \\
\midrule

$\sig$   &  A threshold signature of a transaction.\\ 
 
$\sigop$   &  A threshold signature of an official parent transaction. \\ 
 
$\sigvp$   &  A threshold signature of a virtual parent transaction. \\ 
\midrule
 $\TxHeightAcceptTemp$  &  $\TxHeightAcceptTemp[\ltuple j, \heightdiamond\rtuple]$   records one and only one $\sig$ at a given height $\heightdiamond$ of  Chain~$j$.  \\ 
\midrule  	
$\TxHeightAcceptLock$  &    $\TxHeightAcceptLock[\ltuple j, \heightdiamond\rtuple]$ records  a  $\sig$  locked at a   height $\heightdiamond$ (with  $\AW\geq 2$), \\    
         & and all previous heights were already locked. \\ 
\midrule   	
$\heightdict$ &  $\heightdict[j]$ denotes the top accepted  height of Chain~$j$.   \\ 
\midrule			
$\heightdictLock$   &  $\heightdictLock[j]=\heightdiamond$ denotes the top locked  height of Chain~$j$, \\   &  meaning that heights $\heightdiamond, \heightdiamond-1, \dotsc, 1$ of Chain~$j$  are locked.  		 \\  	
\midrule
 $\AW$   &   If  $\tx$ is accepted at height $\heightstar$ of Chain~$j$, and   $\heightdict[j]$ denotes the top accepted  height of Chain~$j$,  \\ & then the acceptance weight ($\AW$) of $\tx$ is at least $\heightdict[j]-\heightstar +1$.  \\ 
\midrule

$\NumTxProposedLock$ &   $\NumTxProposedLock[\eonstar][j]$ denotes the  largest  index of  locked proposals   of Chain~$j$ at Epoch~$\eonstar$.  \\ & A proposal is said to be locked if it has been processed and passed the check, \\ & and all preceding proposals in the same epoch and all voted proposals  of Chain~$j$ have also been locked.   \\ 	
\midrule
$\NumTxProposedVoted $    &   $\NumTxProposedVoted[j] \!=\! [ \eonstar, \numproposedstar,    \tx]$ records the information of recent voted proposal of  Chain~$j$.    \\

\midrule
$\SigAccept $    &  A dictionary containing a set of entries of the form  $\{\sig:  \content \}$   for   accepted transactions.   \\ 

\midrule

$\TxAccept $    &  A dictionary containing a set of entries of the form  $\{\idtx:   [\tx, \sig, \sig', \dotsc ] \}$   for   accepted transactions.   \\ 
 
\midrule
 
$\WeightTxTwo $    &  A dictionary containing a set of entries of the form  $\{\idtx:   [j, \heightdiamond, \sigdiamond, \sig ] \}$     with    $\AW \geq 2$.    \\ 

\midrule
$\WeightTxThree $    &  A dictionary containing a set of entries of the form  $\{\idtx:   [j, \heightprime, \sigprime, \sigdiamond, \sig ] \}$    with  $\AW \geq 3$.    \\ 

\midrule

$\Tproposal $    &  A dictionary containing a set of entries of the form  $\{\ltuple j,\eonstar, \numproposedstar\rtuple:   [\contenthash, \content] \}$,  
\\    & where each entry represents a proposal from Node~$j$ at Epoch~$\eonstar$ with index $\numproposedstar$.     \\ 

\midrule

$\PProof $    &  A dictionary containing a set of entries of the form  $\{\ltuple j,\eonstar, \numproposedstar\rtuple :   \ltuple j, \eonstar, \numproposedstar, \eondiamond, \numproposeddiamond,\sigvp,  \proofofPreviousProposedTx  \rtuple\}$,  
\\    & where each entry records a proof $\proofofPreviousProposedTx$ included in the proposal from Node~$j$ at Epoch~$\eonstar$ with index $\numproposedstar$.  \\

 \midrule
 $\Txochildren$ & A dictionary containing a set of entries of the form $\{\ltuple \idop, \opindex\rtuple: [\idtx, \tx]\}$, \\
& where each entry represents a transaction ID $\idop$ with recipient index $\opindex$ that has been validly  \\
& linked  as an official parent by a transaction $\tx$ with ID $\idtx$. \\

 \midrule

\end{tabu}
\end{center}
\end{table}
}

 {\renewcommand{\arraystretch}{0.6}
\begin{table} [htbp]
\footnotesize
\begin{center}
\caption{{\small Some notations for the proposed $\Ocior$ protocol (Continued). }} \label{tb:OciorNotationContinued}
\vspace{-8pt}
\begin{tabu}{c|c}
\toprule
 
Notations & Interpretation \\
\midrule
\midrule

  $\TnewTxDictionary $    &  A dictionary containing    pending transactions.   \\ 

\midrule

$\TnewselfQueue $    &  A  FIFO queue containing IDs of pending transactions  within Cluster~$\thisnodeindex$, maintained by    this node~$\thisnodeindex$.   \\ 

\midrule

$\TxConflictingDic$ & A dictionary containing transactions that conflict with other transactions. \\

\midrule
 	
 $\TproposedIDSet $    &  A  set containing IDs of  transactions that have been proposed by this    node~$\thisnodeindex$.   \\ 
  $\TAWOnePOProposedIDSet $    &  A  set containing IDs of  accepted transactions that have been proposed by this    node~$\thisnodeindex$.   \\ 
 
\midrule
$\TnewIDSet $    &  A  randomized set containing IDs of  pending transactions.   \\ 
 
$\TnewIDSetOtherProposed $    &  A randomized set of IDs of   pending transactions proposed by other nodes.   \\ 
$\TnewIDSetOtherProposedLastEpoch $    &  A randomized set of IDs of   pending transactions proposed by other nodes in the last epoch.   \\ 

$\TAWOneIDSetOtherProposed $    &  A randomized set of IDs of   accepted transactions ($1\!\leq \!\AW\!<\!3$) proposed by other nodes.   \\ 
$\TAWOneIDSetOtherProposedLastEpoch $    &  A randomized set of IDs of   accepted transactions ($1\!\leq \!\AW\!<\!3$) proposed by other nodes in   last epoch.   \\

\midrule

  $\txinitial$   & An  initial transaction.     \\
   $\siginitial$  & An  initial  signature for $\txinitial$.  	\\
  $\contentinitial$ &  $ \contentinitial = \ltuple  0, 0, 0,  0,  \txinitial, \defaultvalue, \defaultvalue\rtuple$ represents an  initial  content of  signature  $\siginitial$.  \\
 $\idtxinitial$   &  $\idtxinitial= \HashZ(\txinitial)$. \\ 
 $\contenthashinitia$&  $\contenthashinitia = \Hash(\contentinitial)$.	  	 \\	
\midrule

$\DoneDict$ & $\DoneDict[\ltuple \eon, \NumProposed\rtuple] = 1$ indicates   the $\NumProposed$-th proposal is complete; otherwise, it is incomplete,   $\NumProposed \! \in [0, \MaxNumTx]$. \\

\midrule		   	  

  	   $\ltslayerMax$  & The  number of layers for $\LTS$ scheme. \\        
    $n_{\ltslayer}$  & The size of each group at Layer~$\ltslayer$,  for $\ltslayer\in[\ltslayerMax]$.       \\   
      	  $\TSthreshold_{\ltslayer}$  &  A threshold on the number partial signatures    within a group at Layer~$\ltslayer$,  for $\ltslayer\in[\ltslayerMax]$. \\  
  	   $t_{\ltslayer}$  &  $ t_{\ltslayer} =  n_{\ltslayer} - \TSthreshold_{\ltslayer} $,  for $\ltslayer\in[\ltslayerMax]$ . \\

\midrule

   $\LTS$ Requirements &   $n=\prod_{\ltslayer=1}^{\ltslayerMax} n_{\ltslayer}$    and   $\prod_{\ltslayer=1}^{\ltslayerMax} \TSthreshold_{\ltslayer}   \geq \TSthreshold$  	  \\     	
 
   	\midrule

$\LTSIndexBook$  &   $\LTSIndexBook[\ltuple \eonstar,  j\rtuple]\to   \jshuffle$   maps Node~$j$  to a new index~$\jshuffle$ for  Epoch~$\eonstar$, \\  & based on the index shuffling of    Epoch~$\eonstar$. \\   	   	
 \midrule
   $\thisnodeindexshuffle$  & $\thisnodeindexshuffle$  is the new index of this node $\thisnodeindex$ for $\LTS$ scheme, \\      &   based on  index shuffling, which will be changed every epoch.    \\
   
   \midrule

    $D\pop(key)$  &    $D\pop(key)$ returns the value associated with the key $key$ from a dictionary $D$, \\ & and then removes this key-value pair from $D$.     \\ 
    \midrule
     $D\pop(key, \None)$  &   $D\pop(key, \None)$  returns the value  $\None$ if the key $key$ is not in a dictionary $D$; \\ & otherwise, it behaves the same as $D.\text{pop}(key)$.    \\ 
\midrule
    Time        &       $D\pop(key)$:  $O(1)$ on average.  \\ 
    Complexities      &      $D\pop(key, \None)$:   $O(1)$ on average.  \\ 
\midrule 

Randomized Set    &   A randomized set is implemented internally using a list and a dictionary.  \\      
  \midrule
$R\Getrandom()$  &    The operation $R\Getrandom()$  returns a randomly selected value from a randomized set $R$.   \\   
  \midrule
Time    &     $R\Add()$: $O(1)$ on average,  in a randomized set $R$.  \\   
Complexities &    $R\Remove()$:   $O(1)$ on average,   in a randomized set $R$.   \\ 
      &   $R\Getrandom()$:   $O(1)$, in a randomized set $R$.    \\
    
    \midrule 
     
$Q\append()$ &      $Q\append()$ adds an item to the  back of a FIFO queue $Q$.  \\ 

 $Q\Popleft()$  &   $Q\Popleft()$    returns and removes an item from the front of  a FIFO queue  $Q$.     \\ 
 \midrule
 
 Time    &      $Q\append()$:  $O(1)$ amortized,  in a FIFO queue.  \\   
Complexities &    $R\Popleft()$:   $O(1)$,   in a FIFO queue.   \\ 
  
  \midrule
  
 Parameter $\NumProposedSeed $    &  When $\NumProposed=\NumProposedSeed$, this node   activates $\SeedGeneration$ protocol with other nodes to generate a random seed.   \\ 
  & The random seed is used to   shuffle the node indices for the $\LTS$ scheme for the next epoch. \\ 
\midrule

  Parameter  $\NumProposedIntevalRandomTxSelf $    &  A   parameter (positive integer) that is set as, e.g.,  $\NumProposedIntevalRandomTxSelf=2$.  \\ 
  \midrule
   Parameter    $\NumProposedIntevalRandomTxFromOtherProp $    &  A   parameter (positive integer) that is set as, e.g.,  $\NumProposedIntevalRandomTxFromOtherProp=\lceil n/10 \rceil$.  \\ 
    \midrule 
    New & First, select $\idtx$ from  $\TnewIDSetOtherProposedLastEpoch$ or $\TAWOneIDSetOtherProposedLastEpoch$, if they are not empty. \\  
Transaction & If they are both empty, select $\idtx$ from $\TnewselfQueue$ with probability $1/\NumProposedIntevalRandomTxSelf$. \\
Selection & Otherwise, select $\idtx$ from $\TnewIDSetOtherProposed$ with probability $(1 - \frac{1}{\NumProposedIntevalRandomTxSelf}) \cdot \frac{1}{\NumProposedIntevalRandomTxFromOtherProp}$. \\
Rule & Finally, select $\idtx$ from $\TnewIDSet$ with the remaining probability. \\
\midrule
        $\Hash $    &   A  hash function $\Hash: \{0,1\}^* \!\to\! \Group$  maps   messages to elements in $\Group$ and is modeled as a random oracle.  \\ 
\midrule
        $\HashZ $    &   A  hash function $\HashZ: \{0,1\}^*\! \to \!\FieldZ_\PrimeOrder$   maps   messages to elements in $\FieldZ_\PrimeOrder$ and is modeled as a random oracle.    \\

\midrule

         $\DelayPara $    &  A preset delay parameter.     \\ 
      
      \midrule

\end{tabu}
\end{center}
\end{table}
}

\begin{algorithm}
\caption{$\Ocior$ protocol. Code is shown for Node~$\thisnodeindex$. }    \label{algm:Ocior} 
\begin{algorithmic}[1]
\vspace{5pt}    
\footnotesize

 \Statex {\bf Initialization}:
\Indent
    	   	 	 	 
  			\State $\Global~$ Sets or Dictionaries $\SigAccept \gets \{\}; \TxAccept\gets \{\};   \WeightTxTwo\gets \{\}; \WeightTxThree\gets \{\};$  
  	$\heightdict\gets \{\};   \heightdictLock\gets \{\}; \TxHeightAcceptLock \gets \{\}$; $\TxHeightAcceptTemp\gets \{\}; \Txochildren\gets \{\};   \Tproposal\gets \{\}; \TMyproposal\gets\{\};  \TproposalPending \gets \{\}; 
  	\TSGroupAgg\gets \{\};  \LTSGroupAgg\gets \{\}; \LTSIndexBook\gets \{\};   \ADKGCertificate\gets \{\};   \SeedRecord\gets\{\}; \SeedVoteRecord\gets \{\};   \TnewTxDictionary\gets \{\};    \TproposedIDSet\gets \{\};   \TAWOnePOProposedIDSet\gets \{\}; \TxConflictingDic \gets \{\};\DoneDict\gets \{\};  \NumTxProposedLock\gets \{\}; \NumTxProposedVoted \gets \{\}; \NNeedProposalVP\gets \{\}; \NNeedProposalHeightLock\gets \{\}; \NNeedProposalHeightLockMLock\gets \{\}; \PProof\gets \{\}; \APSISent\gets \{\}$

  \State $\Global~$  Randomized Sets $\TnewIDSet\gets \{\}; \TnewIDSetOtherProposed\gets \{\}; \TnewIDSetOtherProposedLastEpoch\gets \{\};  \TAWOneIDSetOtherProposed\gets \{\};  \TAWOneIDSetOtherProposedLastEpoch\gets \{\}$   
 
  \State $\Global~$  FIFO Queue  $\TnewselfQueue\gets []$

 	 \State  $\Global~\eon \gets 1$;   $\height\gets 0; \NumProposed\gets 0$;  \  $\DelayPara \gets$ a preset delay parameter   
 	 \State $\Global~n \gets$ the total number of consensus nodes;  $t \gets $ the maximum number of corrupted nodes tolerated, for $\tResilianceCondition$; $\TSthreshold=\lceil \frac{n+t+1}{2} \rceil$	
 	 \State $\Global~ \MaxNumTx \gets  $  the  maximum number of transactions that can be proposed by a node  in an epoch, typically set as $\MaxNumTx > n^2$ 
 	 \State $\Global~  \NumProposedSeed \gets  \lceil \MaxNumTx/2 \rceil; \   \NumProposedIntevalRandomTxSelf\gets 2; \ \NumProposedIntevalRandomTxFromOtherProp\gets  \lceil n/10 \rceil$;  $\NumHeightMulticast \gets    \lceil n \log n \rceil$;   $\EpochOutageThreshold \gets  n$   
  	\State $\Global~$   $\ltslayerMax \gets$ number of layers for $\LTS$ scheme        \quad \emph{// $\LTS$ parameter requirements: $n=\prod_{\ltslayer=1}^{\ltslayerMax} n_{\ltslayer}$    and   $\prod_{\ltslayer=1}^{\ltslayerMax} \TSthreshold_{\ltslayer}   \geq \TSthreshold$} 
  	\State $\Global~$   $n_{\ltslayer} \gets$ the size of each group at Layer~$\ltslayer$,  for $\ltslayer\in[\ltslayerMax]$      
  	\State $\Global~$    $\TSthreshold_{\ltslayer} \gets $ the  threshold on the number partial signatures    within a group at Layer~$\ltslayer$,  for $\ltslayer\in[\ltslayerMax]$ 
  	\State $\Global~$    $t_{\ltslayer} \gets  n_{\ltslayer} - \TSthreshold_{\ltslayer}$,  for $\ltslayer\in[\ltslayerMax]$;    $\thisnodeindexshuffle \gets \thisnodeindex $;   $\myproofofPreviousProposedTx \gets  \defaultvalue $; $\myproofProposedTx \gets  \defaultvalue$;     $\myLastProposal \gets  \defaultvalue$; 
   	\State  $\Global~$  $\txinitial \gets$  an  initial transaction; $\siginitial \gets $   an  initial  signature;  $\contentinitial \gets \ltuple  0, 0, 0,  0,  \txinitial, \defaultvalue, \defaultvalue\rtuple$  
   	\State  $\Global~$  $\idtxinitial \gets \HashZ(\txinitial)$;  $\contenthashinitia \gets \Hash(\contentinitial)$; $\NumTxProposedLock[\eon]\gets \{\}$

  	\State    $\LTSIndexBook[\ltuple \eon, j \rtuple]\gets j; \  \heightdict[j] \gets 0;  \ \heightdictLock[j] \gets 0;   \NumTxProposedLock[\eon][j] \gets 0,  \  \forall j \in [n]$ 

	\State    $\SigAccept[\siginitial] \gets  \contentinitial;  \TxAccept[\idtxinitial]\gets   [\txinitial, \siginitial]$     
	\State $\TxHeightAcceptLock[\ltuple j, 0  \rtuple] \gets  [\idtxinitial, \siginitial,  \defaultvalue];  \  \TxHeightAcceptTemp[\ltuple j, 0 \rtuple]\gets  [\idtxinitial, \siginitial,  \defaultvalue],      \  \forall j \in [n]$   						
	\State $\Tproposal[\ltuple j, 0, 0  \rtuple] \gets  [\contenthashinitia, \contentinitial],   \  \forall j \in [n]$

	\State run  $\OciorADKG[\eon]$   protocol with other  nodes to generate threshold signature keys for   Epoch~$\eon$
 
	\State wait until   $\OciorADKG[\eon]$   outputs   $\sk_{\eon, \thisnodeindex}$, $\skl_{\eon, \thisnodeindexshuffle}$, $[\pk_{\eon}, \pk_{\eon, 1}, \dotsc, \pk_{\eon, n}]$, and $[\pkl_{\eon}, \pkl_{\eon, 1}, \dotsc, \pkl_{\eon, n}]$   as global parameters

  	\State  $\ADKGCertificate\gets \ADKGCertificate\cup\{1\}$;    $\DoneDict[\ltuple \eon, \numproposedstar\rtuple   ] \gets 0, \forall \numproposedstar \in [\MaxNumTx]$; $\DoneDict[\ltuple \eon, \NumProposed\rtuple ] \gets 1$

\EndIndent

 \Statex
 
	\Statex \  \emph{//  ************************************************  As a Proposer  ***********************************************}

	\State {\bf upon} $(\DoneDict[\ltuple \eon, \NumProposed\rtuple ] =1)\AND( \NumProposed < \MaxNumTx)$ {\bf do}:   \label{line:OciorFixstartpro}    
	\Indent
       
 		\State $\NumProposed \gets \NumProposed+1$  
		\If { $(\NumProposed=1)\AND(\eon>1)$}    \emph{// in this case, re-propose the last proposal (in the previous epoch)  but replace the indices with $\eon, \NumProposed$} 	\label{line:OciorProposeLastEA}    
   		\State $ \ltuple \PROPOSE, *, *, *,    \height,  \tx, \sigvp, \sigoptuple, \contentoptuple,  \eondiamond, \numproposeddiamond, \myproofofPreviousProposedTx, \myproofProposedTx   \rtuple \gets \myLastProposal$ 	\label{line:OciorProposeLastEB}	
		  
		\Else   \quad  \emph{// in this case,  propose a new   proposal} 
	 		 \State $[*, \sigvp, *]\gets \TxHeightAcceptTemp[\ltuple \thisnodeindex, \heightdict[\thisnodeindex]\rtuple]$           
	 		 \State $\ltuple   *, \eondiamond, \numproposeddiamond,  *,  *, *, *\rtuple     \gets \SigAccept[\sigvp] $             		 
			\State $\height \gets \heightdict[\thisnodeindex] +1$   	\       \emph{//  $\sigvp$ is  a recently accepted sig  of this Chain~$\thisnodeindex$, accepted at   $\heightdict[\thisnodeindex] $, at Epoch~$\eondiamond$ with index $\numproposeddiamond$} 	
			\State $[ \tx, \sigoptuple, \contentoptuple, \myproofProposedTx] \gets \GetNewTxNoKL()$    \quad     \emph{//   return a new legitimate $\tx$ with   $\AW(\tx)< \AWthrehold$}  
			
		\EndIf

		\State  $\content\gets \ltuple  \thisnodeindex, \eon, \NumProposed,  \height,  \tx, \sigvp, \sigoptuple \rtuple$;  $ \contenthash\gets \Hash(\content)$     	  
		\State  $\TMyproposal[\ltuple \thisnodeindex,\eon, \NumProposed\rtuple]\gets  [\contenthash, \content] $

   		\State $\myLastProposal\gets \ltuple \PROPOSE, \thisnodeindex, \eon, \NumProposed,    \height,  \tx, \sigvp, \sigoptuple, \contentoptuple,  \eondiamond, \numproposeddiamond, \myproofofPreviousProposedTx, \myproofProposedTx   \rtuple$ 
 		
 		\State $\send$ $\myLastProposal$ to  Node~$j$,  $\forall j \in [n]$      	 	   \label{line:OciorProposalEnd}    

	\EndIndent

\Statex

	\Statex \  \emph{//  ************************************************  As a Voter  ************************************************ }

	\State {\bf upon} receiving  $\ltuple \PROPOSE, j, \eonstar, \numproposedstar,    \heightstar,  \tx, \sigvp, \sigoptuple, \contentoptuple, \eondiamond, \numproposeddiamond, \proofofPreviousProposedTx, \proofProposedTx \rtuple$ from Node~$j \in [n]$   {\bf do}:      \label{line:OciorVoteBegin}       		
	\Indent

	\If { $\eonstar> \eon$}   
		\State  wait until       $\eon \geq \eonstar$	\quad 	  \emph{// record this pending proposal in  $\TproposalPending[\eonstar]$, and then open it for processing when $\eon \geq \eonstar$}  	  
	\EndIf

	\If {$\ltuple j,\eonstar, \numproposedstar\rtuple \notin \Tproposal$   and $\numproposedstar\in[\MaxNumTx]$ and $\heightstar\geq 1$}

		\State  $\content\gets \ltuple   j, \eonstar, \numproposedstar,  \heightstar,  \tx, \sigvp, \sigoptuple \rtuple$     	
		\State  $ \contenthash\gets \Hash(\content)$     	  
		\State  $\Tproposal[\ltuple j,\eonstar, \numproposedstar\rtuple]\gets  [\contenthash, \content] $  
		\State  $\PProof[\ltuple j,\eonstar, \numproposedstar\rtuple]\gets  \ltuple j, \eonstar, \numproposedstar, \eondiamond, \numproposeddiamond,\sigvp,  \proofofPreviousProposedTx  \rtuple $  
 	  	\State  $\ppinputtuple\gets \ltuple j, \eonstar, \numproposedstar,    \heightstar,  \tx, \sigvp, \sigoptuple, \contentoptuple, \eondiamond, \numproposeddiamond, \contenthash, \proofProposedTx \rtuple$			
		\If { $\ltuple j,\eondiamond, \numproposeddiamond\rtuple \in \Tproposal$}      \label{line:OciorVoteVPCondition}       	
			\State  $\ProposalProcess \ltuple  \ppinputtuple \rtuple$
 
		\ElsIf{$\ltuple j,\eondiamond, \numproposeddiamond\rtuple \in \NNeedProposalVP$}
			\State $\ltuple *, \eonprime, \numproposedprime,    *,  *, *, *, *, *, *, *, * \rtuple\gets   \NNeedProposalVP [\ltuple j,\eondiamond, \numproposeddiamond\rtuple] $	
			\If {$(\eonstar > \eonprime)\OR ((\eonstar =\eonprime)\AND(\numproposedstar>\numproposedprime))$}   
				\State $\NNeedProposalVP [\ltuple j,\eondiamond, \numproposeddiamond\rtuple] \gets \ppinputtuple$			
			\EndIf	
		\Else
			\State $\NNeedProposalVP [\ltuple j,\eondiamond, \numproposeddiamond\rtuple] \gets \ppinputtuple$								
		\EndIf		
		\If { $\ltuple j,\eonstar, \numproposedstar\rtuple \in \NNeedProposalVP$}      \label{line:OciorVoteVPConditionNeed} 
			\State $\inputtuple \gets \NNeedProposalVP\pop(\ltuple j,\eonstar, \numproposedstar\rtuple)$		       \emph{//return value associated with  the key and remove key-value   from dictionary}   
			\State  $\ProposalProcess (\inputtuple)$	
 		
		\EndIf

	\EndIf

	\EndIndent

\algstore{OciorTwoRoundA}

\end{algorithmic}
\end{algorithm}

\begin{algorithm}
\begin{algorithmic}[1]
\algrestore{OciorTwoRoundA}
\vspace{5pt}    
\footnotesize

\Procedure{$\ProposalProcess$}{$ j, \eonstar, \numproposedstar,    \heightstar,  \tx, \sigvp, \sigoptuple, \contentoptuple, \eondiamond, \numproposeddiamond, \contenthash, \proofProposedTx $}   
\Statex	    \emph{// ** Before running this, make sure $\ltuple j,\eondiamond, \numproposeddiamond\rtuple \in \Tproposal$. **}   

\Statex	    \emph{// ** Run this   to make sure $\sigvp$ is accepted and   the only one  $\VP$ accepted at   $\heightstar-1$ of Chain~$j$, before further executions. **}    
\Statex	    \emph{// ** Also make sure $\heightdictLock[j]=  \max\{\heightstar -2, 0\}$, before further executions. **}

	\If {$\ltuple j,\eondiamond, \numproposeddiamond\rtuple \in \Tproposal$}     
		\State $[*, \contentvp]\gets \Tproposal [\ltuple j,\eondiamond, \numproposeddiamond\rtuple]$ 	
	 	\State $\acceptcheckindicator \gets \Accept(\sigvp, \contentvp)$ 
 		\State  $ \ltuple   \jdiamond, *, *,  \heightdiamond,  *, *, * \rtuple       \gets \contentvp$ 	
			\If {$(\heightdiamond= \heightstar-1) \AND (\acceptcheckindicator=\true)\AND (\ltuple j, \heightdiamond\rtuple\in \TxHeightAcceptTemp)\AND( \jdiamond=j)$}     \label{line:OciorVoteVPCheckCond} 
  			\State $[*, \sigdiamond, *]\gets \TxHeightAcceptTemp[\ltuple j, \heightdiamond\rtuple]$
			\If {$\sigdiamond=\sigvp$}      \quad   \emph{// $VP$ check is good at this point}         \label{line:OciorVoteVPCheck}  			
				\State  $\HeightLockUpdate \ltuple j \rtuple$		   \       \emph{// interactively update   $\!\TxHeightAcceptLock$,	  $\!\heightdictLock[j]$,   $\!\WeightTxTwo$,  $\!\WeightTxThree$ for Chain~$j$}

  				\State $\heightprime  \gets \heightdictLock[j];$  $\pphlinputtuple\gets  \ltuple j, \eonstar, \numproposedstar,    \heightstar,  \tx,  \sigoptuple, \contentoptuple, \contenthash, \proofProposedTx \rtuple$ 				
	  			\If {$\heightprime=  \max\{\heightstar -2, 0\}$  }       
		  			\State  $\ProposalProcessHeightLock  \ltuple \pphlinputtuple \rtuple$ 
	  			\ElsIf{$\heightprime < \heightstar -2$}
	  			
					\If{$\ltuple j, \heightstar -2\rtuple \notin \NNeedProposalHeightLock$}  
						\State $\NNeedProposalHeightLock [\ltuple j, \heightstar -2\rtuple] \gets \pphlinputtuple$	
					\Else			
						\State $\ltuple *, \eonprime, \numproposedprime,    *,  *, *, *, *, * \rtuple\gets   \NNeedProposalHeightLock [\ltuple j,\heightstar -2\rtuple] $	
						\If {$(\eonstar > \eonprime)\OR ((\eonstar =\eonprime)\AND(\numproposedstar>\numproposedprime))$}   
							\State $\NNeedProposalHeightLock [\ltuple j, \heightstar -2\rtuple] \gets \pphlinputtuple$	
	  					\EndIf
	  				\EndIf

				\ElsIf{$(\heightprime> \heightstar -2)\AND (\ltuple j, \heightprime \rtuple \in \NNeedProposalHeightLock)$}
					\State $\inputtuple \gets \NNeedProposalHeightLock\pop(\ltuple j, \heightprime \rtuple) $		 	 	
				  	\State  $\ProposalProcessHeightLock (\inputtuple)$   
	  			\EndIf

	  		\EndIf     
 		\EndIf
 	\EndIf	
 		
	\State $\Return$ 
									
\EndProcedure

\Statex

\vspace{-9pt}

\Procedure{$\ProposalProcessHeightLock$}{$ j, \eonstar, \numproposedstar,    \heightstar,  \tx,  \sigoptuple, \contentoptuple, \contenthash, \proofProposedTx$}

\Statex	    \emph{// ** Before running this, make sure $\heightdictLock[j]=  \max\{\heightstar -2, 0\}$ and $\sigvp$ has been accepted at   $\heightstar-1$. **}   

\Statex	    \emph{// ** Run this   to make sure $\NumTxProposedLock[\eonstar][j] = \numproposedstar-1$ before further executions. **}

 \If {$\heightdictLock[j]=  \max\{\heightstar -2, 0\}$  }       		
 
  	  	\State   $\ProofCheckNumTxProposedLockUpdate (j, \eonstar)$ \   \emph{//  interactively  check  $\proofofPreviousProposedTx$ and update  $\NumTxProposedLock[\eonstar][j]$   }

		\State $ \numproposedprimeprime \gets \NumTxProposedLock[\eonstar][j]$; $\pphlinputtuple\gets  \ltuple j, \eonstar, \numproposedstar,    \heightstar,  \tx,  \sigoptuple, \contentoptuple, \contenthash, \proofProposedTx \rtuple$ 		  	  	   \label{line:OciorVotePPCheckCond} 
  		\If {$\numproposedprimeprime= \numproposedstar-1 $  }        \quad   \emph{// previous proposals proposed by Node~$j$ have been checked at this point}         \label{line:OciorVotePPCheck} 
 
	  		\State  $\ProposalProcessHeightLockMLock  \ltuple \pphlinputtuple \rtuple$

  		\Else
			\State $\NNeedProposalHeightLockMLock [\ltuple j, \eonstar, \numproposedstar-1\rtuple] \gets \pphlinputtuple$				  			
  		\EndIf  
  			
		\If { $(\ltuple j, \eonstar, \numproposedprimeprime \rtuple \in \NNeedProposalHeightLockMLock)\AND (\numproposedprimeprime \geq \numproposedstar)$}   \quad    \emph{//  just vote for the most recent proposal proposed from Node~$j$}  
			\State $\inputtuple \gets \NNeedProposalHeightLockMLock\pop(\ltuple j,\eonstar, \numproposedprimeprime \rtuple)$		 
			\State  $\ProposalProcessHeightLockMLock (\inputtuple)$	
 		
		\EndIf

\EndIf

\State $\Return$

\EndProcedure

\Statex

\vspace{-9pt}

\Procedure{$\ProposalProcessHeightLockMLock$}{$ j, \eonstar, \numproposedstar,    \heightstar,  \tx,  \sigoptuple, \contentoptuple, \contenthash, \proofProposedTx$}  
\Statex	    \emph{// ** To vote,  make sure $\sigvp$ is accepted and is the right $\VP$ (and the only one $\VP$) accepted at   $\heightstar-1$ of Chain~$j$. **}    
\Statex	    \emph{// **  To vote, also make sure   $\heightdictLock[j]=  \max\{\heightstar -2, 0\}$ and $\NumTxProposedLock[\eonstar][j] = \numproposedstar-1$. **}

 \If {$\heightdictLock[j]\!= \! \max\{\heightstar \!-\!2, 0\}$ and $\NumTxProposedLock[\eonstar][j] \!=\!\numproposedstar\!-\!1$ and $\size(\sigoptuple)\!=\!\size(\contentoptuple)$}         \label{line:OciorVoteOPTXCheckBegin}       
 
 			\State $\numsigop \gets \size(\sigoptuple)$;     $\allacceptcheckindicator\gets \true$;    $\idtx \gets \HashZ(\tx)$     
			\For {$\indexinset \inset \range(\numsigop)$}                                                       
				\State $\acceptcheckindicator\gets \Accept(\sigoptuple[\indexinset], \contentoptuple[\indexinset])$ 
 				\State $\allacceptcheckindicator\gets \allacceptcheckindicator \AND \acceptcheckindicator$   
			\EndFor

 			\If {$\allacceptcheckindicator $}    
 				\State $[\checkindicator, \txconflict ]\gets \CheckTx(\tx)$        \    \emph{//return true if accepted already, otherwise, make sure $\tx$ is legitimate }
 				\State $\checkprooftxindicator \!\gets\!\CheckProofTx(\checkindicator, \txconflict, \tx, \sigoptuple, \proofProposedTx)$            \emph{//vote if  $\tx$ has     $\APS$}    \label{line:OciorAPSproofTxCheck}       
 				
	 			\If {$(\checkindicator =\true)\OR(\checkprooftxindicator=\true)$}                     \label{line:OciorVoteOPTXCheckEnd}       
	 		 		\State $[\idtxprimeprime, *, *]\gets \TxHeightAcceptLock[\ltuple j, \max\{\heightstar -3, 0\} \rtuple]$;  	    $[*, \sigprime, *]\gets \TxHeightAcceptLock[\ltuple j, \max\{\heightstar -2, 0\} \rtuple]$ 	     
  			 		\State $[\idtxdiamond, \sigdiamond, \sigvpdiamond]\gets \TxHeightAcceptTemp[\ltuple j, \heightstar -1\rtuple]$ 						
			 		\If {$(\ltuple j, \heightstar -1 \rtuple \notin  \APSISent)\AND(\heightstar -2 \geq 0)\AND(\sigvpdiamond=\sigprime)$}     	
						\State   $\contentdiamond \gets \SigAccept[\sigdiamond]$;       $\contentprime \gets \SigAccept[\sigprime]$;   $\APSISent\gets \APSISent\cup \{\ltuple j,\heightstar -1 \rtuple \}$ 		 			
						\State $\send$ $\ltuple \TXDONE,   \sigprime, \contentprime, \sigdiamond, \contentdiamond  \rtuple$  to one  randomly selected  $\RPC$ node    \label{line:OciorAPSIsent}  
					\EndIf	 
					\State  $\TnewIDSetOtherProposed\Remove(\idtxprimeprime)$; $\TnewTxDictionary\pop(\idtxprimeprime, \None)$;  	$\TnewIDSetOtherProposed\Add(\idtx)$; $\TnewTxDictionary[\idtx] \gets \tx$     \label{line:OciorUpdateTnewIDSetPO}  
					\State  $\TAWOneIDSetOtherProposed\Remove(\idtxprimeprime)$;  	 $\TAWOneIDSetOtherProposedLastEpoch\Remove(\idtxprimeprime)$;		$\TAWOneIDSetOtherProposed\Add(\idtxdiamond)$;		    \label{line:OciorUpdateTAWOneIDSetPO}  
					\State   $\TnewIDSetOtherProposedLastEpoch\Remove(\idtxprimeprime)$  \label{line:OciorUpdateTnewIDSetPOLE}

			 		\If {$\eonstar=\eon$ and $\NumTxProposedLock[\eonstar][j] \!=\!\numproposedstar\!-\!1$}    \quad  \quad  \emph{// vote for the current epoch only}  
						\State  $\send$ $\ltuple \VOTE, j, \eonstar, \numproposedstar,  \Vote(\sk_{\eonstar, \thisnodeindex}, \contenthash), \VoteLTS(\skl_{\eonstar, \thisnodeindexshuffle}, \contenthash)   \rtuple$  to Node~$j$    \label{line:OciorVoteOK}       
						\State  $\NumTxProposedVoted[j] \gets [ \eonstar, \numproposedstar,    \tx]$         
					\EndIf
				\ElsIf{$(\txconflict \neq\defaultvalue)\AND(\eonstar=\eon)$}
					\State  $\send$ $\ltuple \CONFLICT, j, \eonstar, \numproposedstar,      \txconflict  \rtuple$    to Node~$j$			       \label{line:OciorVoteEnd}       			
				\EndIf
			\EndIf

\EndIf  
\State $\Return$ 	
									
\EndProcedure

\algstore{OciorNoFPNOKLAA}

\end{algorithmic}
\end{algorithm}

\begin{algorithm}
\begin{algorithmic}[1]
\algrestore{OciorNoFPNOKLAA}
\vspace{5pt}    
\footnotesize

	\Statex \  \emph{//  ************************************************  As a Proposer  *********************************************** }

	\State {\bf upon} receiving  $\ltuple \CONFLICT, \thisnodeindex, \eonstar, \numproposedstar,      \txconflict  \rtuple$       from a node,  and  $\ltuple \thisnodeindex,\eonstar, \numproposedstar\rtuple \!\in\! \TMyproposal$,    $\DoneDict[\ltuple \eonstar, \numproposedstar\rtuple ]= 0$,   $\eonstar\!=\!\eon$, $\numproposedstar\!=\!\NumProposed$     {\bf do}:     	 \label{line:OciorMyProofCBegin} 
\Indent
 	\If {$\myproofProposedTx=\defaultvalue$}       	 \emph{//   if  $\myproofProposedTx \neq \defaultvalue$, then $\myproofProposedTx$ should be a valid $\APS$ for $\tx$;   honest nodes should vote for it}	    \label{line:OciorConfMyproofProposedTXDefault} 
		\State  $[*, \content]\gets \TMyproposal[\ltuple \thisnodeindex,\eonstar, \numproposedstar\rtuple]$;   
		   $ \ltuple   *, *, *,  *,  \tx, *, * \rtuple \gets \content$     	
  		\State $\conflictcheckindicator \gets \ConflictTxCheck(\tx, \txconflict)$        \quad    \emph{// $\conflictcheckindicator=\true$  means:  $\tx$ conflicts with  $\txconflict$}		
 		\If {$\conflictcheckindicator = \true  $}       
 			\State $\myproofofPreviousProposedTx\gets \ltuple \ProofTypeConflict,    \eonstar, \numproposedstar,    \txconflict \rtuple $; 
			  $\DoneDict[\ltuple \eonstar, \numproposedstar\rtuple ] \gets 1$       \label{line:OciorMyProofC} 
		\EndIf
	\EndIf
\EndIndent

	\State {\bf upon} receiving    $\ltuple \VOTE, \thisnodeindex, \eonstar, \numproposedstar,  \vote, \votelts  \rtuple$ from Node~$j$,     and $\ltuple \thisnodeindex,\eonstar, \numproposedstar\rtuple \in \TMyproposal$, $\DoneDict[\ltuple \eonstar, \numproposedstar\rtuple ]  = 0$,   $\eonstar=\eon$, $\numproposedstar=\NumProposed\!$  {\bf do}:      \label{line:OciorAggRx}

	\Indent
		\State    $[\contenthash, \content] \gets \TMyproposal [\ltuple \thisnodeindex,\eonstar, \numproposedstar\rtuple]$; 
		  $\jshuffle\gets \LTSIndexBook[\ltuple \eonstar,  j\rtuple]$

		\If {$\TSVerify(\pk_{\eonstar, j}, \vote, \contenthash) = \true$ and $\LTSVerify(\pkl_{\eonstar,  \jshuffle}, \votelts, \contenthash) = \true$}    
			\State   $[\indicator,  \sig]  \gets  \SigAggregationLTS(\ltuple \eonstar, \numproposedstar\rtuple, \eonstar,  \jshuffle, \votelts, \contenthash)$  \quad     \emph{//  see Line~\ref{line:OciorBLStsSigAggregationLTS}  of Algorithm~\ref{algm:OciorBLS}}        
	 		\If {$\indicator=\true$ and  $\DoneDict[\ltuple \eonstar, \numproposedstar\rtuple ]  = 0$}    		
	 			\State $\Accept(\sig, \content)$;  	
	 			   $ \ltuple   *, *, *,  *,  *, \sigvp, * \rtuple      \gets \content$;  	
				   $\contentvp\gets \SigAccept[\sigvp]$       			
				\State $\send$ $\ltuple \TXDONE,   \sigvp, \contentvp, \sig, \content  \rtuple$  to $\RPC$ nodes   \label{line:OciorSendAPSLTS} 
 
 				\State $\myproofofPreviousProposedTx\gets  \defaultvalue $; 
				  $\DoneDict[\ltuple \eonstar, \numproposedstar\rtuple ] \gets 1$       \label{line:OciorMyProofNormalDoneLTS} 
			\EndIf

    		 	\IfThenElse {$\ltuple \eonstar, \numproposedstar\rtuple\notin \TSGroupAgg$}  {$\TSGroupAgg[\ltuple \eonstar, \numproposedstar\rtuple]\gets  \{j: \vote\}$} {$\TSGroupAgg[\ltuple \eonstar, \numproposedstar\rtuple][j]\gets   \vote$}

		\EndIf		
			 			
	\EndIndent

	\State {\bf upon}      $|\TSGroupAgg[\ltuple \eonstar, \numproposedstar\rtuple ]|  =  n-t $,  and $\ltuple \thisnodeindex,\eonstar, \numproposedstar\rtuple \in \TMyproposal$,  and   $\DoneDict[\ltuple \eonstar, \numproposedstar\rtuple ]  = 0$, and  $\eonstar=\eon$, $\numproposedstar=\NumProposed$     {\bf do}:  	 \label{line:OciorTSAgg}

	\Indent
		\State  $\wait$ for $\DelayPara$ time     \quad   \emph{//  to include  more partial signatures and complete $\LTS$ scheme, if possible, within the limited delay time}
	 	\If {$\DoneDict[\ltuple \eonstar, \numproposedstar\rtuple ]  = 0$}    	
		\State    $[\contenthash, \content] \gets \TMyproposal [\ltuple \thisnodeindex,\eonstar, \numproposedstar\rtuple]$	 
		\State  $\sig \gets \TSCombine(n, \TSthreshold, \TSGroupAgg[\IDProtocol], \contenthash)$    
	 		\If {$\DoneDict[\ltuple \eonstar, \numproposedstar\rtuple ]  = 0$}    		
	 			\State $\Accept(\sig, \content)$;  	
	 			   $ \ltuple   *, *, *,  *,  *, \sigvp, * \rtuple      \gets \content$;  	
				   $\contentvp\gets \SigAccept[\sigvp]$       			
				\State $\send$ $\ltuple \TXDONE,   \sigvp, \contentvp, \sig, \content  \rtuple$  to $\RPC$ nodes     \label{line:OciorSendAPSTS} 
 
 				\State $\myproofofPreviousProposedTx\gets  \defaultvalue $; 
				  $\DoneDict[\ltuple \eonstar, \numproposedstar\rtuple ] \gets 1$      \label{line:OciorMyProofNormalDoneTS} 
			\EndIf	

		\EndIf
	\EndIndent

 	\Statex \  \emph{//  ***********************************  As a Node ************* (Process New Transactions and $\HMDM$) *********** }

\State {\bf upon} receiving   $\ltuple \TX,  \tx,  \sigoptuple, \contentoptuple   \rtuple$    message  the first time      {\bf do}:            \quad   \emph{//  process new transactions}
	\Indent
		\State $\NewTXProcess(\tx,   \sigoptuple, \contentoptuple)$      
	\EndIndent

\State {\bf upon}     $\heightdictLock[j]=\heightdiamond$  such that $\heightdiamond \mod \NumHeightMulticast =0$ and $\heightdiamond > 0$    for $j\in[n]$    {\bf do}:               \emph{// $\HMDM$ sig $\&$ contents in  locked chains}    \label{line:OciorHMDMBegin} 						
	\Indent
		\State $\ProtocolID\gets \ltuple j , \heightdiamond - \NumHeightMulticast+1,  \heightdiamond \rtuple$; $\HMDMMsg \gets [   ]$   \quad  \emph{// the index of $\HMDMMsg$ begins with $0$}    
		
		\For {$\heightprime\in  [\heightdiamond - \NumHeightMulticast+1,  \heightdiamond]$}    
			\State $[*, \sig, *]\gets \TxHeightAcceptLock[\ltuple j, \heightprime  \rtuple]$; $\content \gets\SigAccept[\sig]$;  	$\HMDMMsg [\heightprime- (\heightdiamond - \NumHeightMulticast+1)] \gets \ltuple    \sig, \content\rtuple$			
		\EndFor   
	 	 \State  $\Pass$    $\HMDMMsg$  into $\OciorHMDMHash[\ProtocolID]$  as an input     \quad  \emph{// see Algorithm~\ref{algm:OciorHMDMHash}}    	     \label{line:OciorHMDMHashinput} 						
	\EndIndent

\State {\bf upon} $\OciorHMDMHash[ \ltuple j , \heightdiamond \!-\! \NumHeightMulticast\!+\!1,  \heightdiamond\! \rtuple]$ outputting    $\HMDMMsg\!:= \!\![ \ltuple    \sig_{j, \heightdiamond \!-\NumHeightMulticast\!+\!1}, \content_{j, \heightdiamond \!- \NumHeightMulticast\!+\!1} \rtuple , \dotsc, \! \ltuple    \sig_{j, \heightdiamond}, \content_{j, \heightdiamond} \rtuple ]$,   $\!j\!\in\! [n]$     {\bf do}:         
	\Indent
		\State accept the signatures and contents from $\HMDMMsg$ with height $> \heightdictLock[j]$ if any are missing in Chain~$j$ at this node     \label{line:OciorHMDMEndEnd}
	\EndIndent

\State {\bf upon} Chain~$j$ not growing for $\EpochOutageThreshold$ epochs, with $\heightdictLock[j]=\heightdiamond$ and $\heightdiamond >0$,  where $j \in [n]$, {\bf do}:     \label{line:OciorBroadcastBegin} 
	\Indent
		\For {$\heightprime\in  [\heightdiamond - \lfloor \heightdiamond/\NumHeightMulticast \rfloor \cdot \NumHeightMulticast ,  \heightdiamond+1]$}       \quad  \emph{// $\heightdictLock[j]=\heightdiamond$ implies $\ltuple j, \heightprime  \rtuple\in \TxHeightAcceptTemp$ for $1\leq \heightprime \leq \heightdiamond+1$}   
			\State $[*, \sig, *]\gets \TxHeightAcceptTemp[\ltuple j, \heightprime  \rtuple]$; $\content \gets\SigAccept[\sig]$    
			\State $\send$ $\ltuple \APSI,   \sig, \content  \rtuple$  to all consensus nodes and all $\RPC$ nodes  
		\EndFor
	\EndIndent

\State {\bf upon} receiving   $\ltuple \APSI,   \sig, \content  \rtuple$     message  the first time      {\bf do}:          
	\Indent
		 	\If {$\sig\notin \SigAccept$}   $\Accept(\sig, \content)$      \label{line:OciorHMDMEnd} 
		 	\EndIf	
 
	\EndIndent

 	\Statex \  \emph{//  ************************************ As a Node  ************  (Key Regeneration and Epoch Transition)   ************ }

    \State {\bf upon} $\NumProposed=\NumProposedSeed$, at the current epoch  $\eon$ {\bf do}:     \label{line:OciorStartSeedGen} 
	\Indent
	 	 \State  $\Pass$  a value $\numproposedstar=\NumProposedSeed$  into $\SeedGeneration[\eon]$  as an input        \label{line:OciorStartSeedInput} 
	\EndIndent

    \State {\bf upon} $\SeedGeneration[\eon]$ outputs  a  random value  $\seed$ at the current epoch  $\eon$ {\bf do}:         
	\Indent

  		\State  update $\LTSIndexBook$  for Epoch~$\eon+1$ based on  $\seed$; and    
	 	   $\Pass$ updated $\LTSIndexBook$ into $\OciorADKG[\eon+1]$    
	 	 
	\EndIndent

\State {\bf upon}   $\OciorADKG[\eon+1]$   outputs    $\sk_{\eon+1, \thisnodeindex}$, $\skl_{\eon+1, \LTSIndexBook[\eon+1, \thisnodeindex] }$, $[\pk_{\eon+1}, \pk_{\eon+1, 1}, \!\dotsc\!, \pk_{\eon+1, n}]$,   $[\pkl_{\eon+1}, \pkl_{\eon+1, 1},\! \dotsc\!, \pkl_{\eon+1, n}]$ {\bf do}:    
	\Indent
 
  		\State  $\ADKGCertificate\gets \ADKGCertificate\cup\{\eon+1\}$       \label{line:OciorADKGoutput} 
	\EndIndent

    \State {\bf upon} $\NumProposed=\MaxNumTx$   and $\eon+1 \in \ADKGCertificate$ and   $|\TAWOneIDSetOtherProposed| \geq 4$  and  $(\EONVOTE, \eon)$  not yet sent {\bf do}:       \label{line:OciorVoteNewEpochBegin} 
	\Indent
		\State $\send$ $(\EONVOTE, \eon)$ to  all nodes       	   \label{line:OciorSendEonVote} 
	\EndIndent

\State {\bf upon} receiving   $n-t$  $(\EONVOTE, \eon)$  messages from distinct nodes and $\eon+1 \in \ADKGCertificate$ and  $(\EONCONFIRM, \eon)$  not yet sent {\bf do}:     
\Indent  
	\State $\send$ $(\EONCONFIRM, \eon)$ to  all nodes       	 \label{line:OciorSendEONCONFIRM} 
 
\EndIndent

\State {\bf upon} receiving   $t+1$  $(\EONCONFIRM, \eonstar)$  messages from distinct nodes and $\eonstar+1 \in \ADKGCertificate$ and  $(\EONCONFIRM, \eonstar)$  not yet sent, for $\eonstar\geq 1$ {\bf do}:  
\Indent  
	\State $\send$ $(\EONCONFIRM, \eonstar)$ to  all nodes       	
 
\EndIndent

\State {\bf upon} receiving   $2t+1$  $(\EONCONFIRM, \eon)$  messages from distinct nodes  and $\eon+1 \in \ADKGCertificate$  {\bf do}:       \label{line:OciorReadyNewEpoch} 
\Indent  
	\If {$(\EONCONFIRM, \eon)$   not yet sent }     
		 $\send$ $(\EONCONFIRM, \eon)$ to  all nodes       	
	\EndIf
	\State   $\eon\gets \eon+1$; $\NumProposed\gets 0$;  update parameter $\DelayPara$   
	\State $\DoneDict[\ltuple \eon, \numproposedstar\rtuple ]  \gets 0, \forall \numproposedstar \in [\MaxNumTx]$;  $\NumTxProposedLock[\eon] \gets \{\}$; $\NumTxProposedLock[\eon][j] \gets -1, \forall j\in [n]$; 
  	     $\thisnodeindexstar \gets \LTSIndexBook[\eon, \thisnodeindex] $
  	\State $\TnewIDSetOtherProposedLastEpoch\gets \TnewIDSetOtherProposedLastEpoch\cup\TnewIDSetOtherProposed$;
  		   $\TAWOneIDSetOtherProposedLastEpoch\gets \TAWOneIDSetOtherProposedLastEpoch\cup\TAWOneIDSetOtherProposed$    \label{line:OciorUpdateEpochTnewIDSetOtherProposedLastEpoch} 
	\State $\erase$ all  old  private key shares  $\{\sk_{\eonprime, \thisnodeindex}, \skl_{\eonprime, *}\}_{\eonprime=1}^{\eon-1}$  and any temporary data related to those secrets of old epochs  
	\State $\DoneDict[\ltuple \eon, \NumProposed\rtuple ]   \gets 1$;  then go to a new epoch                \label{line:OciorVoteNewEpochEnd}

\EndIndent

\end{algorithmic}
\end{algorithm}

\begin{algorithm}
\caption{Algorithms for  $\Ocior$ protocol. Code is shown for Node~$\thisnodeindex$. }    \label{algm:OciorProcedures} 
\begin{algorithmic}[1]
\vspace{5pt}    
\footnotesize

 \Procedure{$\HeightLockUpdate$}{$ j$}  
\Statex	    \emph{// **  Interactively update   $\TxHeightAcceptLock$,	  $\heightdictLock[j]$,   $\WeightTxTwo$, and $\WeightTxThree$ for Chain~$j$. **}    
 
\State $ \heightdiamond  \gets  \heightdictLock[j]$

	\While {$(\ltuple j,  \heightdiamond +1\rtuple\in \TxHeightAcceptTemp)\AND(\ltuple j,  \heightdiamond +2\rtuple\in \TxHeightAcceptTemp)$} 		
	
		 \State $[*, \sig, \sigvp]\gets \TxHeightAcceptTemp[\ltuple j, \heightdiamond+2\rtuple]$   
 		 \State $[\idtxdiamond, \sigdiamond, \sigvpdiamond]\gets \TxHeightAcceptTemp[\ltuple j, \heightdiamond +1\rtuple]$ 	
		\State $[\idtxprime, \sigprime, *]\gets \TxHeightAcceptLock[\ltuple j, \heightdiamond  \rtuple]$   
  		\If {$(\sigvp=\sigdiamond)\AND(\sigvpdiamond=\sigprime)$}     
			\State $\TxHeightAcceptLock[\ltuple j, \heightdiamond+1\rtuple]\gets  [\idtxdiamond, \sigdiamond, \sigvpdiamond]$    
			\State $\heightdictLock[j] \gets \heightdiamond  +1$   
			\State $\WeightTxTwo[\idtxdiamond] \gets [j, \heightdiamond +1, \sigdiamond, \sig]$       \quad \emph{// $\sigdiamond$  (locked) $\to$ $\sig$ (not   locked)  accepted at heights $\heightdiamond+1$ and $\heightdiamond+2$ }
			\State $\WeightTxThree[\idtxprime] \gets [j, \heightdiamond , \sigprime, \sigdiamond, \sig]$ 		   \quad \emph{// $\sigprime \to \sigdiamond \to \sig$   are  accepted in Chain $j$; the first two  are  locked }		 
		\Else
			\State $\Return$    				
		\EndIf  	
		\State  $\heightdiamond \gets \heightdiamond+1  $  		 	 			
  	 	
	\EndWhile
	
\State $\Return$

\EndProcedure

 \Statex

\Procedure{$\ProofCheckNumTxProposedLockUpdate$}{$ j, \eonstar$}

\If {$(\eonstar = \eon)\AND (j\in \NumTxProposedVoted)$}   
 
 \State $ [ \eonvartriangle, \numproposedvartriangle,  \txvartriangle]\gets \NumTxProposedVoted[j]$

\If {$\NumTxProposedLock[\eonstar][j]  = -1$}  \quad      \emph{//  $\NumTxProposedLock[\eonstar][j]$ is initialized to $0$ when $\eonstar = 1$, and to $-1$ when $\eonstar > 1$}   

		\State $\ProofCheckNumTxProposedLockUpdateNormal( j, \eonvartriangle)$

		\If {$\NumTxProposedLock[\eonvartriangle][j]  \geq  \numproposedvartriangle$}   
			\State $\NumTxProposedLock[\eonstar][j] \gets   \max\{ \NumTxProposedLock[\eonstar][j], 0  \} $     
		\Else
 
			\State $\eonprime  \gets  \eonvartriangle+1 $     
			
			\While {$\eonprime \leq \eonstar$} 		
	
				\State $\ProofCheckNumTxProposedLockUpdateSpecicial( j, \eonprime)$
		
				\If {$\NumTxProposedLock[\eonprime][j]  \geq  1$}   
					\State $\NumTxProposedLock[\eonstar][j] \gets   \max\{ \NumTxProposedLock[\eonstar][j], 0  \} $    
					\State $\Break$      
				\EndIf
				\State $\eonprime \gets \eonprime+1$	 	 	
			\EndWhile		
		
		\EndIf

\EndIf

		\If {$\NumTxProposedLock[\eonstar][j]  \geq  0$}   
			\State $\ProofCheckNumTxProposedLockUpdateNormal( j, \eonstar)$
		\EndIf

\EndIf
	\State $\Return$    		 	      									
\EndProcedure

\Statex

\Procedure{$\ProofCheckNumTxProposedLockUpdateNormal$}{$ j, \eonstar$}

\State $ \numproposedstar  \gets \NumTxProposedLock[\eonstar][j]  +2$     

	\While {$(\ltuple j,\eonstar, \numproposedstar \rtuple \in \PProof)\AND(\ltuple j,\eonstar, \numproposedstar-1\rtuple \in \Tproposal)\AND(\numproposedstar\geq 2)\AND(\NumTxProposedLock[\eonstar][j] \geq 0)$} 		
		\State  $\inputtuple\gets \PProof[\ltuple j,\eonstar, \numproposedstar\rtuple] $  
				
	 	\State $\proofcheckindicator \gets \ProofCheck(\inputtuple)$         \emph{// check on  $\proofofPreviousProposedTx$ for  a previously proposed transaction, only for $\numproposedstar \geq 2$ }				
 		\If {$\proofcheckindicator=\true$}   
	 			\State $\NumTxProposedLock[\eonstar][j] \gets   \NumTxProposedLock[\eonstar][j]+1 $    
		\Else
			\State $\Return$    		 	      
		\EndIf 
		\State  $\numproposedstar \gets   \NumTxProposedLock[\eonstar][j]  +2$  		 	 	
	\EndWhile

	\State $\Return$    		 	      									
\EndProcedure

\Statex

\Procedure{$\ProofCheckNumTxProposedLockUpdateSpecicial$}{$ j, \eonstar$}

\If {$(\NumTxProposedLock[\eonstar][j]  \!<\!1)\AND   (\ltuple j,\eonstar, 2 \rtuple \!\in\! \PProof)\AND(\ltuple j,\eonstar, 1\rtuple\! \in \!\Tproposal)\AND(j\!\in\! \NumTxProposedVoted)$}   
 	\State $ [ *, *,  \txvartriangle]\gets \NumTxProposedVoted[j]$; $\inputtupleprimeprime\gets \PProof[\ltuple j,\eonstar, 2\rtuple] $ 	    		
	\State  $[*, \content]\gets \Tproposal[\ltuple j,\eonstar, 1\rtuple]$;  $ \ltuple   *, *, *,  *,  \tx, *, * \rtuple \gets \content$     	
 	\If {$\txvartriangle=\tx$}   		
		\State $\NumTxProposedLock[\eonstar][j] \gets   \max\{ \NumTxProposedLock[\eonstar][j], 0  \} $     	
	\EndIf 	

 	\If {$\NumTxProposedLock[\eonstar][j] = 0$}

	 	\State $\proofcheckindicator \gets \ProofCheck(\inputtupleprimeprime)$    
 		\If {$\proofcheckindicator=\true$}   
	 			\State $\NumTxProposedLock[\eonstar][j] \gets   \NumTxProposedLock[\eonstar][j] +1 $    
		\Else
			\State $\Return$    		 	      
		\EndIf 
	\EndIf
 
\EndIf

	\State $\Return$    		 	      									
\EndProcedure

\algstore{OciorTwoRoundAA}

\end{algorithmic}
\end{algorithm}

\begin{algorithm}
\begin{algorithmic}[1]
\algrestore{OciorTwoRoundAA}
\vspace{5pt}    
\footnotesize

 \Procedure{$\ProofCheck$}{$j, \eonstar, \numproposedstar, \eondiamond, \numproposeddiamond,\sigvp,  \proofofPreviousProposedTx$}       \quad \quad   \emph{// only for $\numproposedstar \geq 2$} 
 \Statex     \emph{//   ** Check on the $\proofofPreviousProposedTx$ for  a  transaction  $\tx$ previously proposed   by Node~$j$  **} 	
 \Statex     \emph{//   ** If the  proposed $\tx$  conflicts with   other transaction,  then  remove it from  $\TnewTxDictionary$,  $\TnewIDSetOtherProposed$ and $\TnewIDSetOtherProposedLastEpoch$. **}

\State  $\proofcheckindicator\gets  \false $

\If {$\numproposedstar \geq 2$}     
	
	\If {$(\eonstar=\eondiamond)\AND(\numproposedstar=\numproposeddiamond+1)\AND ( \ltuple j,\eondiamond, \numproposeddiamond\rtuple \in \Tproposal)$}     
		\State $[\contenthash, \content]\gets \Tproposal [\ltuple j,\eondiamond, \numproposeddiamond\rtuple]$ 
 		\If {$\TSVerify(\pk_{\eonstar}, \sigvp, \contenthash) = \true$}   	   $\proofcheckindicator\gets \true$         		 	    	 	    
		\EndIf	
		\State $\Return$ $\proofcheckindicator$          		 	    
	\EndIf

	\If {$\proofofPreviousProposedTx$ takes the form   $\proofofPreviousProposedTx\!:=\!\ltuple \ProofTypeConflict,   \eonccirc, \numproposedcirc,   \txconflict \rtuple$ and $(\ltuple j,\eonstar, \numproposedstar \!-\!1\rtuple \in \Tproposal)\AND(\eonccirc\!=\!\eonstar)\AND(\numproposedcirc\!=\!\numproposedstar\! -\!1)$}   
	
		\State  $[*, \content]\gets \Tproposal[\ltuple j,\eonstar, \numproposedstar -1\rtuple]$;  $ \ltuple   *, *, *,  *,  \tx, *, * \rtuple \gets \content$     	
		
  		\State $\conflictcheckindicator \gets \ConflictTxCheck(\tx, \txconflict)$         
 	
 		\If {$\conflictcheckindicator = \true$}         \quad  \emph{// $\conflictcheckindicator=\true$  means:  $\tx$ conflicts with  $\txconflict$}	
			\State $\proofcheckindicator\gets \true$;  $\idtx \gets \HashZ(\tx)$  
			\State  $\TnewIDSetOtherProposed\Remove(\idtx)$;  $\TnewTxDictionary\pop(\idtx, \None)$;   $\TnewIDSetOtherProposedLastEpoch\Remove(\idtx)$    \label{line:RemoveUpdateTnewIDSetOtherProposed}
		\EndIf
		\State $\Return$ $\proofcheckindicator$          			 	    	 	    		
	\EndIf

\EndIf
 
\State $\Return$ $\proofcheckindicator$

\EndProcedure

\Statex

\Procedure{$\CheckTx$}{$\tx$}   
	\Statex     \emph{//   ** Return true if $\tx$ has been already accepted.  **} 	
	\Statex     \emph{//   ** Return true if 1) the signature is valid; 2) the address matches;  3)   the amount matches; and 4) no  double spending.  **} 	
  	\State $\idtx\gets \HashZ(\tx)$

	\If {$\idtx \in  \TxAccept$} 
		 $\Return$ $[\true, \defaultvalue]$ 	
	\EndIf

	\State $\idoptuple  \gets $	 the tuple of 	IDs  of official parents  of  $\tx$, obtained from $\tx$ 		     	     
	\State $\opindextuple  \gets $	 the tuple of indices  referring to   the sender address of $\tx$ in  the corresponding $\op$s, obtained from $\tx$ 
	\State $\sender  \gets $	  the sender address of $\tx$, obtained from $\tx$; 
	  $\outamount  \gets $	  the total output amount  of $\tx$, obtained from $\tx$ 
	\State $\feeamount  \gets $	  the   amount of fee  of $\tx$,   obtained from $\tx$; 
 	 $\inamount \gets 0$     			
	\If {$($the signature in $\tx$ is  NOT valid$)\OR(\size(\idoptuple) \neq \size(\opindextuple))$} 
		  $\Return$ $[\false, \defaultvalue]$  			
	\EndIf	  
	\If {$\idtx\in \TxConflictingDic$} 
		\State $[*,  \txconflict] \gets \TxConflictingDic[\idtx]$   
		\State $\Return$ $[\false, \txconflict]$  	
	\EndIf		
	\For {$\indexinset \inset \range(\size(\idoptuple))$} 
		\State $\idop\gets \idoptuple[\indexinset];   \opindex\gets \opindextuple[\indexinset]$   
		\If {$\idop \notin \TxAccept $} 
			 $\Return$ $[\false, \defaultvalue]$  					
		\EndIf
		\State $\txop \gets \TxAccept[\idop][0]$; 		
		  $\opreceiver \gets $ the address of the $\opindex$ th recipient  in   $\txop$  	   
		\State $\opamount \gets $  the output amount of the $\opindex$ th recipient   in   $\txop$;   	 
		 $\inamount \gets \inamount + \opamount$  	 
		\If {$\opreceiver \neq  \sender$} 
			 $\Return$ $[\false, \defaultvalue]$  					
		\EndIf

		\If {$\ltuple \idop, \opindex\rtuple \notin \Txochildren$}   
			\State $\Txochildren[\ltuple \idop, \opindex\rtuple]  \gets  [\idtx, \tx]$        \emph{//record to avoid voting for double spending from $\txop$; record only  one time}
		\Else
			\State $[\idtxconflict, \txconflict] \gets \Txochildren[\ltuple \idop, \opindex\rtuple]$      
			\If {$\idtxconflict \neq \idtx$}    \quad\quad   \emph{//   this means that $\tx$ conflicts with another transaction} 
				\State $\TxConflictingDic[\idtx] \gets [\tx,  \txconflict]$  \quad  \emph{//   record  $\tx$  as a transaction that conflicts with   another transaction } 
				\State $\Return$ $[\false, \txconflict]$  
			\EndIf 
		\EndIf 		
	\EndFor
    	
    	\IfThenElse {$\inamount= \outamount +\feeamount$}  {$\Return$ $[\true, \defaultvalue]$} {$\Return$ $[\false, \defaultvalue]$}

\EndProcedure
 
\Statex
 
\Procedure{$\ConflictTxCheck$}{$\tx, \txconflict$}  \quad   \emph{//     Return $\true$ if $\tx$ conflicts with  $\txconflict$.   } 	
 	\Statex  \emph{//   ** If $\tx$ conflicts with  $\txconflict$, record  $\TxConflictingDic[\idtx] \gets [\tx,  \txconflict]$.	 **}

  	\State $\OPSetTemporary \gets \{\}$	

	\If {$($the signature in $\tx$ is  valid$)\AND($the signature in $\txconflict$ is  valid$)\AND (\tx \neq  \txconflict)$} 
		\State $\sender  \gets $	  the sender address of $\tx$;
		  $\senderprime  \gets $	  the sender address of $\txconflict$ 
		\If {$\sender = \senderprime$} 	
			
		  	\State $\idtx\gets \HashZ(\tx)$; 
			  $\idoptuple  \gets $	 the tuple of 	IDs  of official parents  of  $\tx$, obtained from $\tx$ 		     	     
			\State $\opindextuple  \gets $	 the tuple of indices  referring to   the sender address of $\tx$ in  the corresponding $\op$s, obtained from $\tx$

		 	\State $\idoptupleprime  \gets $	 the tuple of 	IDs  of official parents  of  $\txconflict$, obtained from $\txconflict$ 		     	     
			\State $\opindextupleprime  \gets $	 the tuple of indices  referring to   the sender address of $\txconflict$ in  the corresponding $\op$s 
 
			\If {$(\size(\idoptuple) = \size(\opindextuple)) \AND (\size(\idoptupleprime) = \size(\opindextupleprime))$} 		 
				\For {$\indexinset \inset \range(\size(\idoptuple))$} 
					\State $\idop\gets \idoptuple[\indexinset]$; $\opindex\gets \opindextuple[\indexinset]$;   				
	  				  $\OPSetTemporary \Add (\ltuple  \idop,  \opindex \rtuple)$					
				\EndFor

	 			\For {$\indexinset \inset \range(\size(\idoptupleprime))$} 
					\State $\idop\gets \idoptupleprime[\indexinset]$;   
					 $\opindex\gets \opindextupleprime[\indexinset]$   
					\If {$ \ltuple  \idop,  \opindex \rtuple \in    \OPSetTemporary$} 					
						\State $\TxConflictingDic[\idtx] \gets [\tx,  \txconflict]$
						\State $\Return$ $\true$  		
					\EndIf
				\EndFor 
			\EndIf
		\EndIf		
	\EndIf
	
	\State $\Return$ $\false$

\EndProcedure

\algstore{OciorNoFPNOKLCCCC}

\end{algorithmic}
\end{algorithm}

\begin{algorithm}
\begin{algorithmic}[1]
\algrestore{OciorNoFPNOKLCCCC}
\vspace{5pt}    
\footnotesize

\Procedure{$\Accept$}{$\sig, \content$}        \label{line:AcceptBegin}
\Statex     \emph{//   ** Return true if $\sig$ is already accepted or is successfully accepted here at the end.  **} 	
\Statex	    \emph{// ** $\TxHeightAcceptTemp[\ltuple j, \heightdiamond\rtuple]$   accepts only one $\sig$ at a given height $\heightdiamond$ of  Chain~$j$ **}

 	\State  $\acceptcheckindicator\gets  \false $

	\If {$\content$ takes the form   $\content:=\ltuple j, \eondiamond, \numproposeddiamond,  \heightdiamond,  \tx, \sigvp, \sigoptuple\rtuple$,   and $\eondiamond\leq \eon$}   
 		 
 		\State  $ \contenthash\gets \Hash(\content)$   
 		\If {$\TSVerify(\pk_{\eondiamond}, \sig, \contenthash) = \true$}     
 
	 		\If {$\sig \notin  \SigAccept$}      
	  			\State $\SigAccept[\sig] \gets  \content  $;   \  $\idtx\gets \HashZ(\tx)$             \label{line:AcceptIntoSet}	
	    			\IfThenElse {$\idtx \notin \TxAccept$}  {$\TxAccept[\idtx]\gets  [\tx, \sig]$} {$\TxAccept[\idtx]\append(\sig)$ }     
	 			\If {$\ltuple j, \heightdiamond\rtuple \notin  \TxHeightAcceptTemp$}      
					\State $\TxHeightAcceptTemp[\ltuple j, \heightdiamond\rtuple]\gets   [\idtx, \sig, \sigvp]$;     $\heightdict[\nodeindexj] \gets \max\{\heightdict[\nodeindexj], \heightdiamond\}$   	
				\EndIf
 
				\State  $\Tproposal[\ltuple j,\eondiamond, \numproposeddiamond\rtuple]\gets  [\contenthash, \content] $  
 
				\State $\idoptuple  \gets $	 the tuple of 	IDs  of official parents  of  $\tx$, obtained from $\tx$ 		     	     
				\State $\opindextuple  \gets $	 the tuple of indices  referring to     sender address of $\tx$ in  the corresponding $\op$s, obtained from $\tx$ 	  
				\For {$\indexinset \in  \range(\size(\idoptuple))$}   
						\State $\Txochildren[\ltuple \idoptuple[\indexinset], \opindextuple[\indexinset] \rtuple ] \gets  [\idtx, \tx]$             \emph{// record   to avoid voting for double spending } 
				\EndFor		 					
			\EndIf  
		 	\State  $\acceptcheckindicator\gets  \true $ 		
		\EndIf
	\EndIf 
	\State $\Return$ $\acceptcheckindicator$          		    \label{line:AcceptEnd}	
\EndProcedure

 \vspace{4pt}

 \Procedure{$\NewTXProcess$}{$\tx,   \sigoptuple, \contentoptuple$}

	\State $\numsigop \gets \size(\sigoptuple)$; 
	  $\idtx \gets   \HashZ(\tx)$ 	
	\If {$(\numsigop = \size(\contentoptuple))\AND (\numsigop \geq 1)\AND(\idtx \notin \WeightTxThree) \AND(\idtx\notin\TproposedIDSet)\AND(\idtx\notin\TnewIDSet)\AND(\idtx\notin \TxConflictingDic)$  }

 			\State $\allacceptcheckindicator\gets \true$;  $\checkindicator \gets \false$   
			\For {$\indexinset \inset \range(\numsigop)$}           
				\State $\acceptcheckindicator\gets \Accept(\sigoptuple[\indexinset], \contentoptuple[\indexinset])$ 
 				\State $\allacceptcheckindicator\gets \allacceptcheckindicator \AND \acceptcheckindicator$   
			\EndFor
 			\If {$\allacceptcheckindicator$}       
 				\State $[\checkindicator, \txconflict ]\gets \CheckTx(\tx)$        \quad    \emph{// make sure $\tx$ is legitimate and no double spending from $\txop$}
 			\EndIf			
		
	 		\If {$\checkindicator =\true$}     
				\State  $\TnewTxDictionary[\idtx] \gets  \tx$;      $\TnewIDSet\Add(\idtx)$ 				
	 			\If {$\idtx \mod n = \thisnodeindex -1$ } 
					\State   $\TnewselfQueue\append(\idtx)$
				\EndIf	
			\EndIf			
	\EndIf		
	
\State $\Return$
							
\EndProcedure

 \vspace{4pt}

\Procedure{$\GetNewTxNoKL$}{$\empty$}             \label{line:GetNewTxBegin}

\While {$|\TnewIDSetOtherProposedLastEpoch| >0$} 
	\State $\idtx \gets \TnewIDSetOtherProposedLastEpoch\Getrandom()$; \  $\TnewIDSetOtherProposedLastEpoch\Remove(\idtx)$;\  $\TnewIDSet\Remove(\idtx)$ 
 
	\State $[\BinaryIndicator, \tx, \sigoptuple, \contentoptuple, \proofProposedTx] \gets \GetNewTxNoKLCheck(\idtx)$  
	\If {$\BinaryIndicator$}  
		\State $\TproposedIDSet\Add(\idtx)$         	
		\State $\Return$ $[\tx, \sigoptuple, \contentoptuple,\proofProposedTx]$         
	\EndIf	
\EndWhile

\While {$|\TAWOneIDSetOtherProposedLastEpoch| >0$} 
	\State $\idtx \gets \TAWOneIDSetOtherProposedLastEpoch\Getrandom()$; \  $\TAWOneIDSetOtherProposedLastEpoch\Remove(\idtx)$ \  $\TnewIDSet\Remove(\idtx)$ 
 
	\State $[\BinaryIndicator, \tx, \sigoptuple, \contentoptuple, \proofProposedTx] \gets \GetNewTxNoKLCheck(\idtx)$  
	\If {$\BinaryIndicator$}  
		\State $\TproposedIDSet\Add(\idtx)$;  $\TAWOnePOProposedIDSet\Add(\idtx)$         	      	
		\State $\Return$ $[\tx, \sigoptuple, \contentoptuple,\proofProposedTx]$         
	\EndIf	
\EndWhile

\While {$(|\TnewselfQueue| >0)\AND (\NumProposed \mod  \NumProposedIntevalRandomTxSelf  =  0)$} 
	\State $\idtx \gets \TnewselfQueue\Popleft()$;  \      $\TnewIDSet\Remove(\idtx)$
 
	\State $[\BinaryIndicator, \tx, \sigoptuple, \contentoptuple, \proofProposedTx] \gets \GetNewTxNoKLCheck(\idtx)$
	\If {$\BinaryIndicator$}  
		\State $\TproposedIDSet\Add(\idtx)$         	
		\State $\Return$ $[\tx, \sigoptuple, \contentoptuple, \proofProposedTx]$        
	\EndIf	
\EndWhile

\While {$(|\TnewIDSetOtherProposed| >0)\AND (\NumProposed \mod  \NumProposedIntevalRandomTxFromOtherProp  =  0)$} 
	\State $\idtx \gets \TnewIDSetOtherProposed\Getrandom()$;  \  $\TnewIDSetOtherProposed\Remove(\idtx)$; \  $\TnewIDSet\Remove(\idtx)$     \label{line:GetUpdateTnewIDSetOtherProposed} 
 
	\State $[\BinaryIndicator, \tx, \sigoptuple, \contentoptuple,\proofProposedTx] \gets \GetNewTxNoKLCheck(\idtx)$
	\If {$\BinaryIndicator$}  
		\State $\TproposedIDSet\Add(\idtx)$         	
		\State $\Return$ $[\tx, \sigoptuple, \contentoptuple,\proofProposedTx]$        
	\EndIf	
\EndWhile

\While {$\true$} 
	\State $\idtx \gets \TnewIDSet\Getrandom()$;  \  $\TnewIDSet\Remove(\idtx)$ 
 
	\State $[\BinaryIndicator, \tx, \sigoptuple, \contentoptuple,\proofProposedTx] \gets \GetNewTxNoKLCheck(\idtx)$
	\If {$\BinaryIndicator$}  
		\State $\TproposedIDSet\Add(\idtx)$         	
		\State $\Return$ $[\tx, \sigoptuple, \contentoptuple,\proofProposedTx]$             \label{line:GetNewTxEnd} 
	\EndIf	
\EndWhile

\EndProcedure

\algstore{OciorNoFPNOKLCCDD}

\end{algorithmic}
\end{algorithm}

\begin{algorithm}
\begin{algorithmic}[1]
\algrestore{OciorNoFPNOKLCCDD}
\vspace{5pt}    
\footnotesize   

\Procedure{$\GetNewTxNoKLCheck$}{$\idtx$}          
\Statex  \emph{// ** The acceptance weight of selected  $\idtx$ needs to be less than  $\AWthrehold$, i.e., $\idtx \notin \WeightTxThree$.   **}   
\Statex  \emph{// ** Make  sure   $\idop\in \TxAccept$, where $\idop$ is the ID of official parent of  selected $\tx$. **}  
 
 	\State  $\sigoplist\gets  [ \ ] ;   \contentoplist\gets  [ \ ]; \sigoptuple\gets \ltuple \rtuple;  \contentoptuple\gets \ltuple \rtuple$;   $\BinaryIndicator\gets  \false ; \proofProposedTx\gets  \defaultvalue $      \label{line:GetNewTxNoKLStepBegin} 	
		\If {$((\idtx \notin \WeightTxThree)\AND(\idtx\notin \TAWOnePOProposedIDSet))\OR((|\TnewIDSetOtherProposedLastEpoch| =0)\AND(1\leq |\TAWOneIDSetOtherProposedLastEpoch| \leq 3))$}         \label{line:GetNewTxNoKLCondition} 		 	  
		 	\If {$\idtx \in \TxAccept$}         
				\State  $\BinaryIndicator\gets  \true;  \sig\gets \TxAccept[\idtx][1]$; $\TnewTxDictionary\pop(\idtx, \None)$ 
				\State $\content \gets\SigAccept[\sig]$;  $\ltuple   j, \eonstar, \numproposedstar,  \heightstar,  \tx, \sigvp, \sigoptuple \rtuple \gets  \content$ 				
 				\State $\proofProposedTx \gets \ltuple\sig,   \ltuple   j, \eonstar, \numproposedstar,  \heightstar,  \defaultvalue, \sigvp, \defaultvalue \rtuple \rtuple$ 	     \label{line:proofProposedTxvalue} 				
				\For {$\sigop\in  \sigoptuple$}    	 \quad     \emph{//  interactively get  from the fist element to the last element in  $\sigoptuple$}     
					\State $\contentop \gets\SigAccept[\sigop]; \contentoplist\append(\contentop)$ 				
				\EndFor   			
				\State  $\contentoptuple\gets  \tuple(\contentoplist)$ 	
		 	\ElsIf{$(\idtx \in \TnewTxDictionary)\AND(\idtx\notin \TproposedIDSet)$}     	 	    
			 		\State $\tx\gets \TnewTxDictionary\pop(\idtx)$          
						\State $[\checkindicator, * ]\gets \CheckTx(\tx)$          \    \emph{// make sure $\tx$   legitimate and no double spending}
						 \If {$\checkindicator$}    
							\State $\idoptuple  \gets $	 the tuple of 	IDs  of official parents  of  $\tx$, obtained from $\tx$

							\For {$\idop\in  \idoptuple$} 
								 \If {$\idop\in \TxAccept$}   
							 		\State $\sigop \gets \TxAccept[\idop][1]$          \emph{//  if $\sigop$ is in $\TxAccept[\idop][1]$,  it  should be in $\SigAccept$ (Line~\ref{line:AcceptIntoSet})}     
								 		 \State $\contentop \!\gets\!\!\SigAccept\![\sigop]; \sigoplist\append(\sigop); \contentoplist\append(\contentop)$ 
								\Else
									\State $\Break$
								\EndIf 					
						 	\EndFor
						 	\If {$\size(\sigoplist)= \size(\idoptuple)$}   
									\State  $\BinaryIndicator\gets  \true ; \sigoptuple\gets  \tuple(\sigoplist);    \contentoptuple\gets  \tuple(\contentoplist)$ 						 
			 
							\EndIf
						\EndIf 
									
			\EndIf	
		\EndIf
	\State $\Return$ $[\BinaryIndicator, \tx, \sigoptuple , \contentoptuple ,\proofProposedTx]$         \label{line:GetNewTxNoKLStepEnd} 
 
\EndProcedure

 \Statex

\Procedure{$\CheckProofTx$}{$\checkindicator, \txconflict, \tx, \sigoptuple, \proofProposedTx$}          
 
 \Statex  \emph{// ** Return $\true$ if    $\tx$ has   a valid   $\APS$, or if $\checkindicator=\true$, or if $\tx$ has been accepted already. **}  
 
	\State $\idtx \gets   \HashZ(\tx)$ 	
 	\If {$(\checkindicator =\true)\OR(\idtx\in \TxAccept)$}     	 	     		
		\State  $\Return$ $\true$ 		 
	\EndIf

 	\If {$\txconflict \neq   \defaultvalue$}     
 		
		\If {$\proofProposedTx$ takes the form $\proofProposedTx:= \ltuple \sig,   \content\rtuple$  and $\content:= \ltuple   j, \eonstar, \numproposedstar,  \heightstar,  *, \sigvp, * \rtuple$ }   
	 		\State $\contentprime \gets  \ltuple   j, \eonstar, \numproposedstar,  \heightstar,  \tx, \sigvp, \sigoptuple \rtuple$;

	 	 	\State $\acceptcheckindicator \gets \Accept(\sig, \contentprime)$

	 		\If {$\acceptcheckindicator= \true$}     
		 	
				\State  $\Return$ $\true$ 		 	 	
				
			\EndIf
		\EndIf	
	\EndIf
	\State $\Return$ $\false$          
 
\EndProcedure

\end{algorithmic}
\end{algorithm}

\begin{algorithm}  
\caption{$\SeedGeneration$  protocol, with an identifier $\eonstar$. Code is shown for Node~$\thisnodeindex$.}    \label{algm:OciorNoFPKLSeed} 
\begin{algorithmic}[1]
\vspace{5pt}    
\footnotesize
 
\Statex   \emph{//   **  Generate a random seed to  shuffle the indices of  nodes, recorded  at $\LTSIndexBook$, for the $\LTS$ scheme for the next epoch.  **}

		\State  initially set $\SeedVoteRecord\gets \{\}; \SeedVoteRecord[\eonstar] \gets \{\}; \SeedRecord\gets \{\}$  
\State {\bf upon} receiving an input  value  $\numproposedstar$, for $\numproposedstar=\NumProposedSeed$   {\bf do}:    
	\Indent
		\State  $\contenthash \gets \Hash \ltuple \SEED, \eonstar \rtuple$  
		\State  $\send$ $\ltuple \SEEDVOTE, \thisnodeindex, \eonstar,   \Vote(\sk_{\eonstar, \thisnodeindex}, \contenthash) \rtuple$  to all nodes
	\EndIndent

    \State {\bf upon} receiving $\ltuple \SEEDVOTE, j, \eonstar,   \vote \rtuple$  from Node~$j$, and  $\eonstar\notin \SeedRecord$ {\bf do}:  
	\Indent
		\State  $\contenthash \gets \Hash \ltuple \SEED, \eonstar \rtuple$  
		\If {$\TSVerify(\pk_{\eonstar, j}, \vote, \contenthash) = \true$}     
			\State $\SeedVoteRecord[\eonstar][j]\gets \vote$
			\If {$|\SeedVoteRecord[\eonstar]| =n-t$}     
		 		\State  $\seed \gets \TSCombine(n, \TSthreshold, \SeedVoteRecord[\eonstar], \contenthash)$   
		 		\State  $\SeedRecord[\eonstar] \gets \seed$  
				\State $\send$ $(\SEED, \eonstar, \seed)$  to  all nodes       	
				\State $\Output$  $\seed$   
	 		\EndIf
 		\EndIf
	\EndIndent

    \State {\bf upon} receiving  a $(\SEED, \eonstar, \seed)$  message,  and  $\eonstar\notin \SeedRecord$ {\bf do}:   
	\Indent
		\If {$\TSVerify(\pk_{\eonstar}, \seed, \Hash \ltuple \SEED, \eonstar \rtuple) = \true)$}     
	 		\State  $\SeedRecord[\eonstar] \gets \seed$  
			\State $\send$ $(\SEED, \eonstar, \seed)$  to  all nodes       	
			\State $\Output$  $\seed$   
		\EndIf
	\EndIndent

\end{algorithmic}
\end{algorithm}

 \section{$\OciorADKG$}   \label{sec:OciorADKG}

We propose an $\ADKG$ protocol, called $\OciorADKG$, to generate the keys for both the $(n, \TSthreshold)$ $\TS$ scheme and the 
$(n, \TSthreshold, \ltslayerMax, \{n_{\ltslayer}, \TSthreshold_{\ltslayer}, \ltslayerTotalNodes_{\ltslayer}\}_{\ltslayer=1}^{\ltslayerMax})$ 
$\LTS$ scheme under the following constraints: 
\[
n = \prod_{\ltslayer=1}^{\ltslayerMax} n_{\ltslayer}, 
\quad 
\prod_{\ltslayer=1}^{\ltslayerMax} \TSthreshold_{\ltslayer} \geq \TSthreshold, 
\quad 
\ltslayerTotalNodes_{\ltslayer} := \prod_{\ltslayer'=1}^{\ltslayer} n_{\ltslayer'} 
\ \text{ for each } \ltslayer \in [\ltslayerMax], 
\]
with $\ltslayerTotalNodes_{0} := 1$, for some $\TSthreshold \in [t+1, n-t]$, and $n \geq 3t+1$.

The proposed $\OciorADKG$ protocol is described in Algorithm~\ref{algm:OciorADKG}, and is supported by Algorithm~\ref{algm:OciorASHVSS}. 
We also introduce a simple strictly-hiding polynomial commitment ($\SHPC$) scheme (see Definition~\ref{def:SHPC} and Fig.~\ref{fig:SPC}).  
The proposed $\SHPC$ scheme guarantees the \emph{Strict Secrecy} property: if the prover is honest, then no adversary can obtain any information about the secret $\secret$ or the corresponding public key $\randomgenerator^\secret$ until a specified timing condition is satisfied. 
This lightweight $\SHPC$ scheme is well-suited for designing efficient and simpler $\ADKG$ protocols.

In this work, we focus on describing the proposed $\OciorADKG$ protocol and the introduced primitives, while leaving detailed proofs to the extended version of this paper.

\subsection{Definitions and New Preliminaries}   \label{sec:definitions}

\begin{definition}[{\bf Strictly-Hiding Verifiable Secret Sharing ($\SHVSS$)}]    \label{def:SHVSS}
We introduce a new primitive, $\SHVSS$.  The $(n, t, \TSthreshold)$ $\SHVSS$ protocol  consists of a \emph{sharing} phase and a \emph{reconstruction} phase.  In the sharing phase, a dealer $\Dealer$ distributes a secret $\secret \in \FieldZ_{\FiniteFieldSize}$ into $n$ shares, each sent to a corresponding node, where up to $t$ nodes may be corrupted by an adversary. 
In the reconstruction phase, the protocol guarantees that any subset of at most $\TSthreshold-1$ shares reveals no information about $\secret$, while any $\TSthreshold$ shares are sufficient to reconstruct $\secret$, for $\TSthreshold \in [t+1, n - t]$. 
Unlike traditional verifiable secret sharing ($\VSS$), the $\SHVSS$ protocol guarantees an additional property: Strict Secrecy, defined below.  
Specifically, the $\SHVSS$ protocol guarantees the following properties, with probability $1 - \negligible(\kappa)$ against any probabilistic polynomial-time ($\PPT$) adversary:
\begin{itemize} 
\item \textbf{Global Secrecy:} If the dealer is honest, any coalition of up to $\TSthreshold-1$ nodes learns no information about the secret $\secret$. 
\item \textbf{Private Secrecy:} If the dealer is honest, the share held by any honest node remains hidden from the adversary. 
\item \textbf{Correctness:} If the dealer is honest and has shared a secret $\secret$,   any set of $\TSthreshold$ shares can reconstruct $\secret$. 
\item \textbf{Termination:} If the dealer is honest, then every honest node eventually terminates the sharing phase of $\SHVSS$ protocol. Furthermore, if any honest node terminates the sharing phase, then all honest nodes eventually terminate the sharing phase.   
\item \textbf{Completeness:} If an honest node terminates in the sharing phase, then there exists a  $(\TSthreshold-1)$-degree polynomial $\PolynomialFunction(\cdot) \in \FieldZ_{\FiniteFieldSize}[x]$ such that $\PolynomialFunction(0)=\secret'$ and every node~$i$, for $i\in[n]$, eventually outputs a key share  $\secret_i = \PolynomialFunction(i)$, as well as commitments of $\{\secret_j\}_{j\in[n]}$. Furthermore, if the dealer is honest and has shared a secret $\secret$, then $\secret'=\secret$.   
\item \textbf{Homomorphic Commitment:} If some honest nodes output commitments, then these commitments are additively homomorphic across different $\SHVSS$ instances. 
\item \textbf{Strict Secrecy:} If the dealer is honest, then no adversary can gain any information about the secret $\secret$ or the corresponding public key $\randomgenerator^\secret$ until a specified timing condition is satisfied, where $\randomgenerator \in \Group$ is a randomly chosen generator of a group $\Group$.  
\end{itemize} 
\end{definition}
\noindent In the asynchronous setting, we focus on asynchronous $\SHVSS$  ($\ASHVSS$).     The proposed $\ASHVSS$ protocol, called $\OciorASHVSS$, is described in Algorithm~\ref{algm:OciorASHVSS}.

\begin{definition}[{\bf Strictly-Hiding Polynomial Commitment ($\SHPC$) Scheme}]      \label{def:SHPC}
We propose a simple  strictly-hiding polynomial commitment scheme.   The proposed $\SHPC$ scheme, called $\OciorSHPC$, is presented in Fig.~\ref{fig:SPC}.  
The proposed $\SHPC$  scheme guarantees a Strict Secrecy  property:  if the prover is honest, then no adversary can gain any information about the secret $\secret:=\PolynomialFunction(0)$ or the corresponding public key $\randomgenerator^\secret$ until a specified timing condition is satisfied, where $\randomgenerator \in \Group$ is a randomly chosen generator of a group $\Group$ and  $\PolynomialFunction(\cdot) \in \FieldZ_{\FiniteFieldSize}[x]$ is  the committed polynomial.
The   $\SHPC$ scheme consists of  the following algorithms. 
\begin{itemize} 
\item {\bf $\SPCSetup(1^\kappa) \to \pp$.} This algorithm generates the public parameters ($\pp$) based on the security parameter $\kappa$.    The algorithms below all input the public parameters (and omitted in the presentation for simplicity).  
\item {\bf $\SPCCommit( \PolynomialFunction(\cdot), \CommitPolySecret, \PolyDegree, n) \to  \CommitmentVector$.}   Given the public parameters $\pp$, a polynomial $\PolynomialFunction(\cdot)$ of  degree $\PolyDegree$, the number of evaluation points $n$, and the  random witness $\CommitPolySecret$, this algorithm outputs a commitment vector $\CommitmentVector$ to a polynomial $\PolynomialFunction(\cdot)+ \HashZ(\CommitPolySecret)$.    Here,  $\HashZ(): \Mc \to \FieldZ_{\FiniteFieldSize}$ is a hash function.     

\item {\bf $\SPCWitnessCommit(   \CommitPolySecret, t, n) \to (\HashCommitVector, \WitnessVectorR)$.}   Given    the  random witness $\CommitPolySecret$, this algorithm outputs  a commitment vector $\HashCommitVector$ to the  witness $\CommitPolySecret$, along with a  witness share vector $\WitnessVectorR$.  The witness $\CommitPolySecret$ can be reconstructed by any $t+1$ valid witness shares.

 \item \textbf{$\SPCOpen( \PolynomialFunction(\cdot), \CommitPolySecret, i) \to \skshareAddWitness_{i}$.}   Given   $i\in [n]$, this algorithm outputs the evaluation $\skshareAddWitness_{i}:=\PolynomialFunction(i)+ \HashZ(\CommitPolySecret)$ of the polynomial $\PolynomialFunction(\cdot)+\HashZ(\CommitPolySecret)$. 
 
  \item \textbf{$\SPCWitnessOpen(    \WitnessVectorR, i) \to   \CommitPolySecret_{i}$.}  
  Given  $i\in [n]$, this algorithm outputs a valid  witness share $\CommitPolySecret_{i}$,  for $\WitnessVectorR=[\CommitPolySecret_1,\CommitPolySecret_2, \dotsc, \CommitPolySecret_n]$.

\item \textbf{$\SPCDegCheck( \CommitmentVector, \IndexSet, \PolyDegree) \to \true/\false$.}
Given a set of  evaluation points $\IndexSet$ and a commitment vector $\CommitmentVector$,  this algorithm returns $\true$ if $\CommitmentVector$ is  a commitment to a polynomial of degree at most $\PolyDegree$; otherwise, it returns $\false$. 
  
\item \textbf{$\SPCVerify( \CommitmentSymbol_{i},  \skshareAddWitness_{i}, \HashCommit_{i},   \CommitPolySecret_{i},  i ) \to \true/\false$.}  Given the index $i\in [n]$, the  commitment $\CommitmentSymbol_{i}$ to  the $i$-th  evaluation of a polynomial $\PolynomialFunction(\cdot)+ \HashZ(\CommitPolySecret)$ for some random witness $\CommitPolySecret$, the commitment  $\HashCommit_{i}$ to the $i$-th share of the  witness $\CommitPolySecret$,    this algorithm returns $\true$ if $\skshareAddWitness_{i}= \PolynomialFunction(i)+ \HashZ(\CommitPolySecret)$ and $\HashZ(\CommitPolySecret_{i}) =\HashCommit_{i}$; otherwise, it returns $\false$.  

\item \textbf{$\SPCWitnessReconstruct( \{(j, \HashCommit_{j})\}_{j\in[n]},   \{(j, \CommitPolySecret_{j})\}_{j\in \IndexSet}) \to \CommitPolySecret/\defaultvalue$.}    
This algorithm returns a witness $\CommitPolySecret$  if $\{\CommitPolySecret_{j}\}_{j\in T}$  includes $|T|\geq t+1$ valid witness shares that are matched to the commitments $\{\HashCommit_{j}\}_{j\in[n]}$;   otherwise, it returns a default value $\defaultvalue$.  

\item \textbf{$\SPCPolyReconstruct( \CommitmentVector, \CommitPolySecret, \skshareAddWitness_{i}, i ) \to (\skshare_{i},  \CommitmentVectorOutput)$.}    
Given the index $i\in [n]$, the decoded witness $\CommitPolySecret$, the commitment vector $\CommitmentVector$ and  the  evaluation $\skshareAddWitness_{i}=\PolynomialFunction(i)+ \HashZ(\CommitPolySecret)$ of the polynomial $\PolynomialFunction(\cdot)+\HashZ(\CommitPolySecret)$,  this algorithm outputs the evaluation $\skshare_{i}=\PolynomialFunction(i)$   and the commitment vector $\CommitmentVectorOutput$  of the polynomial $\PolynomialFunction(\cdot)$.

\end{itemize} 
\end{definition}

\begin{figure} 
\centering
\begin{tcolorbox}[title=$\OciorSHPC$ Strictly-Hiding Polynomial Commitment  Scheme, colframe=blue!55, colback=white, width=1\textwidth]
 
\small   

 {\bf $\underline{\SPCSetup(1^\kappa) \to \pp}$.}  \\ This algorithm generates the public parameters $\pp = (\Group, \FieldZ_{\FiniteFieldSize}, \randomgenerator)$ based on the security parameter $\kappa$. Here, $\Group$ is a  group of prime order $\PrimeOrder$,   $\randomgenerator \in \Group$ is a randomly chosen generator of the group, and $\FieldZ_{\FiniteFieldSize}$ is a finite field of order $\FiniteFieldSize$.    
  
 \vspace{10pt}
 
 {\bf $\underline{\SPCCommit( \PolynomialFunction(\cdot), \CommitPolySecret, \PolyDegree, n) \to \CommitmentVector}$.}\\   
 Here $\PolynomialFunction(\cdot) \in \FieldZ_{\FiniteFieldSize}[x]$ is the input polynomial of degree $\PolyDegree$, and  $\CommitPolySecret \in \FieldZ_{\FiniteFieldSize}$ is the    input random witness.  \\
  Let   $\PolynomialFunctionAddWitness(x): =  \PolynomialFunction(x)+\HashZ(\CommitPolySecret)$ be a new polynomial.  \\
 Compute  $\skshare_{i}:=\PolynomialFunction(i)$,   $\skshareAddWitness_{i}:=\PolynomialFunctionAddWitness(i)= \skshare_{i}+ \HashZ(\CommitPolySecret)$, $\forall i\in [0,n]$.   \\
    Compute and output the   commitment vector $\CommitmentVector$ for  the  polynomial   $\PolynomialFunctionAddWitness(\cdot)$: 
     \[\CommitmentVector = \bigl[ \randomgenerator^{\PolynomialFunctionAddWitness(0)}, \randomgenerator^{\PolynomialFunctionAddWitness(1)}, \randomgenerator^{\PolynomialFunctionAddWitness(2)},  \dotsc, \randomgenerator^{\PolynomialFunctionAddWitness(n)}    \bigr] = \bigl[\randomgenerator^{\skshare_{0}+ \HashZ(\CommitPolySecret)},  \randomgenerator^{\skshare_{1}+ \HashZ(\CommitPolySecret)}, \randomgenerator^{\skshare_{2}+ \HashZ(\CommitPolySecret)},  \dotsc, \randomgenerator^{\skshare_{n}+ \HashZ(\CommitPolySecret)}    \bigr].\]

  \vspace{10pt}
 
 {\bf $\underline{\SPCWitnessCommit(   \CommitPolySecret, t, n) \to (\HashCommitVector, \WitnessVectorR)}$.}  \\
 Sample   $t$-degree random polynomial   $\PolynomialFunctionNew(\cdot) \in \FieldZ_{\FiniteFieldSize}[x]$  with    $\PolynomialFunctionNew(0)= \CommitPolySecret$. \\
 Compute  $\CommitPolySecret_{i}:=\PolynomialFunctionNew(i)$, $\forall i\in [n]$.   \\
Compute    the witness share vector:  
     $ \WitnessVectorR=[\CommitPolySecret_{1},   \CommitPolySecret_{2}, \dotsc, \CommitPolySecret_{n}]$.    \\    
  Compute  hash-based  commitment vector  $\HashCommitVector$ for the  shares    of the witness $\CommitPolySecret$:   
    $\HashCommitVector=[\HashZ(\CommitPolySecret_{1}),   \HashZ(\CommitPolySecret_{2}), \dotsc, \HashZ(\CommitPolySecret_{n})]$.\\
The algorithm outputs  $(\HashCommitVector, \WitnessVectorR)$.
   
    \vspace{10pt}

{\bf $\underline{\SPCOpen( \PolynomialFunction(\cdot), \CommitPolySecret, i) \to \skshareAddWitness_{i}}$.}   \\
This algorithm outputs the evaluation $\skshareAddWitness_{i}:=\PolynomialFunction(i)+ \HashZ(\CommitPolySecret)$ of   polynomial $\PolynomialFunction(\cdot)+\HashZ(\CommitPolySecret)$. 
 
     \vspace{10pt}
     
     {\bf $\underline{\SPCWitnessOpen(    \WitnessVectorR=[\CommitPolySecret_1,\CommitPolySecret_2, \dotsc, \CommitPolySecret_n], i) \to   \CommitPolySecret_{i}}$.}   \\
Given $i\in [n]$, this algorithm outputs a valid  witness share $\CommitPolySecret_{i}$, for $\WitnessVectorR=[\CommitPolySecret_1,\CommitPolySecret_2, \dotsc, \CommitPolySecret_n]$. 
 
     \vspace{10pt}

{\bf $\underline{\SPCDegCheck( \CommitmentVector=[\CommitmentSymbol_0,\CommitmentSymbol_1, \dotsc, \CommitmentSymbol_n], \IndexSet=\{0,1,2,\dotsc, n\}, \PolyDegree) \to \true/\false}$.}   \\   
Given a set  of  evaluation points $ \IndexSet=\{0,1,2,\dotsc, n\}$ and a commitment vector $\CommitmentVector=[\CommitmentSymbol_0,\CommitmentSymbol_1, \dotsc, \CommitmentSymbol_n]$,  this algorithm   samples a random polynomial $\RandomPolyFunction(\cdot) \in \FieldZ_{\FiniteFieldSize}[x]$ with $\deg(\RandomPolyFunction) = |\IndexSet| - 2 - \PolyDegree$, and then checks the following condition:
\[
\prod_{i \in \IndexSet} \CommitmentSymbol_i^{\RandomPolyFunction(i) \cdot \LagrangeCoefficient_{i}} \stackrel{?}{=} 1_{\Group},
\]   
where $\LagrangeCoefficient_{i} = \prod_{j \in \IndexSet, j \neq i} \frac{1}{i - j}$; and the correct $\CommitmentSymbol_i$ takes the form of $\CommitmentSymbol_i=\randomgenerator^{\PolynomialFunctionAddWitness(i)}$ for a polynomial $\PolynomialFunctionAddWitness(\cdot)$. 
The algorithm returns $\true$ if the above condition is satisfied; otherwise, it returns $\false$. It is worth noting that for any polynomial $f(x)$ with $\deg(f) \leq |\IndexSet|  - 2$, it holds that:
$\sum_{i \in \IndexSet} f(i) \cdot \LagrangeCoefficient_i = 0$ (see Lemma~\ref{lm:degreecheck} in  Appendix~\ref{sec:degreecheck}).     Furthermore, given $\deg(\RandomPolyFunction) = |\IndexSet| - 2 - \PolyDegree$ and $\deg(\PolynomialFunctionAddWitness) = \PolyDegree$, and let $f(x) : = \PolynomialFunctionAddWitness(x) \cdot \RandomPolyFunction(x) \in \FieldZ_{\FiniteFieldSize}[x]$, then it is true that $\deg(f) \leq\deg(\PolynomialFunctionAddWitness)  + \deg(\RandomPolyFunction)  =|\IndexSet| - 2$.

\vspace{10pt}

{\bf $\underline{\SPCVerify( \CommitmentSymbol_{i}, \skshareAddWitness_{i},  \HashCommit_{i},   \CommitPolySecret_{i},  i ) \to \true/\false}$.}   \\
  {\bf if} $\CommitmentSymbol_{i}=  \randomgenerator^{\skshareAddWitness_{i}}$ and $\HashZ(\CommitPolySecret_{i}) =\HashCommit_{i}$  {\bf then}
  	$\return$ $\true$
 {\bf else}    
   	$\return$ $\false$

      \vspace{10pt}
      
 {\bf $\underline{\SPCWitnessReconstruct( \{(j, \HashCommit_{j})\}_{j\in[n]},    \WitnessShareValidSet:=\{(j, \CommitPolySecret_{j})\}_{j\in \IndexSet}) \to \CommitPolySecret/\defaultvalue}$.}   \\ 
 let $\IndexSet:= \{j \in [n] \mid (j, \CommitPolySecret_j) \in \WitnessShareValidSet\}$ \\
   {\bf if} $|\IndexSet|  <  t+1$  {\bf then}  $\return$ $\defaultvalue$  \\
      {\bf if} $\HashZ(\CommitPolySecret_{j}) =\HashCommit_{j}$, $\forall j \in \IndexSet$  {\bf then}   \\
         \hspace*{2em}  interpolate $\CommitPolySecret=\PolynomialFunctionNew(0)$ and all missing $\CommitPolySecret_{i}=\PolynomialFunctionNew(i)$ from $\{\CommitPolySecret_{j}\}_{j\in \IndexSet}$ using Lagrange interpolation \\
         \hspace*{2em} {\bf if} $\HashZ(\CommitPolySecret_{j}) =\HashCommit_{j}$, $\forall j \in [n]$   {\bf then}   $\return$ $\CommitPolySecret$   {\bf else} $\return$ $\defaultvalue$  \\
        {\bf else}     \\ 
        \hspace*{2em}   $\return$ $\defaultvalue$

      \vspace{10pt}
      
 {\bf $\underline{\SPCPolyReconstruct( \CommitmentVector, \CommitPolySecret, \skshareAddWitness_{i}, i ) \to (\skshare_{i},  \CommitmentVectorOutput)}$.}   \\ 
 Set $\skshare_{i}= \skshareAddWitness_{i} - \HashZ(\CommitPolySecret)$ \\
  Set $\CommitmentVectorOutput= \CommitmentVector \cdot \randomgenerator^{-\HashZ(\CommitPolySecret)}$ \\
  $\return$ $(\skshare_{i},  \CommitmentVectorOutput)$

\end{tcolorbox}
      \vspace{-8pt}
\caption{The description of the proposed strictly-hiding polynomial commitment  scheme $\OciorSHPC$. Here $t$ denotes the maximum number of dishonest nodes controlled by the adversary. $\HashZ(): \Mc \to \FieldZ_{\FiniteFieldSize}$ is a hash function. We use the degree-checking technique from~\cite{CD:17}.}
\label{fig:SPC}
\end{figure}

\begin{definition}[{\bf  $\PKI$ Digital Signatures}]    
Under the Public Key Infrastructure ($\PKI$) setup,   Node~$i$ holds a public-private key pair $(\pkpki_i, \skpki_i)$ for digital  signatures,  a public-private key  pair $(\ek_i, \dk_i)$ for verifiable encryption, and   all public keys  $\{\pkpki_j, \ek_j\}_j$.  The signing and verification algorithms for digital signatures are defined as follows:  
\begin{itemize} 
\item \textbf{$\PKISign(\skpki_i, \Hash(\wv)) \to \digitalsig_i$:}  Given an input message $\wv$, this algorithm produces the signature $\digitalsig_i$ from Node~$i$ using its private key $\skpki_i$, 
for $i \in [n]$. Here, $\Hash(\cdot)$ denotes a hash function.  
\item \textbf{$\PKIVerify(\pkpki_i,  \digitalsig_i, \Hash(\wv)) \to \true/\false$:}  
This algorithm verifies whether $\digitalsig_i$ is a valid signature from Node~$i$ on   $\wv$ using a public key $\pkpki_i$. 
It outputs $\true$ if   verification succeeds, and $\false$ otherwise.   
\end{itemize} 
\end{definition}

\begin{definition}[{\bf Verifiable Encryption ($\VE$) Scheme \cite{Groth:21}}]  
A verifiable encryption scheme for a committed message consists of the following algorithms.  
Each Node~$i$ holds a public-private key pair $(\ek_i, \dk_i)$, with $\ek_i = \randomgenerator^{\dk_i}$,  for $i \in [n]$.  Public keys $\{\ek_j\}_j$ are available to all nodes. 
The scheme employs the Feldman commitment scheme.  

\begin{itemize} 
\item {\bf $\VEbEncProve(I, \{\ek_i\}_{i\in I}, \{\secret_i\}_{i\in I}, \{\CommitmentSymbol_i\}_{i\in I}) \to (\CiphertextVector := \{(i, \CiphertextSymbol_i)\}_{i\in I}, \NIZKProofVector)$.}  
This algorithm takes as input a set of public keys $\{\ek_i\}_{i\in I}$, messages $\{\secret_i\}_{i\in I}$, and commitments $\{\CommitmentSymbol_i\}_{i\in I}$.  
It encrypts each secret $\secret_i$ with ElGamal encryption under the corresponding public key $\ek_i = \randomgenerator^{\dk_i}$, producing ciphertexts $\CiphertextSymbol_i$, where $\dk_i$ is the private decryption key held by Node~$i$. 
It also outputs a non-interactive zero-knowledge ($\NIZK$) proof $\NIZKProofVector$ attesting that, for all $i \in I$, the commitment satisfies $\CommitmentSymbol_i = \randomgenerator^{\secret_i}$ and that $\CiphertextSymbol_i$ is a correct encryption of $\secret_i$.  

\item \textbf{$\VEbVerify(I, \{\ek_i\}_{i\in I}, \{\CommitmentSymbol_i\}_{i\in I}, \CiphertextVector, \NIZKProofVector) \to \true/\false$.}  
This algorithm verifies the $\NIZK$ proof.  
It returns $\true$ if $\NIZKProofVector$ is a valid proof that, for each $i \in I$, there exists a value $\secret_i$ such that $\CommitmentSymbol_i = \randomgenerator^{\secret_i}$ and $\CiphertextSymbol_i$ is a correct encryption of $\secret_i$, where $\CiphertextVector:=\{(i, \CiphertextSymbol_i)\}_{i\in I}$.  
Otherwise, it returns $\false$.  

\item \textbf{$\VEbDec(\dk_i, \CiphertextSymbol_i) \to \secret_i$.}  
This algorithm decrypts $\CiphertextSymbol_i$ using the private key $\dk_i$ to recover the original secret $\secret_i$.  
\end{itemize} 
\end{definition}

\begin{definition}[{\bf Asynchronous Partial Vector Agreement ($\APVA$ \cite{ChenOciorABA:25})}] \label{def:APVA}
The $\APVA$ problem involves $n$ nodes, where each node~$i \in [n]$ starts with an input vector $\Me_i$ of length $n$. Each entry of $\Me_i$ belongs to the set $\Vc \cup \{\missing\}$, where $\Vc$ is a non-empty alphabet, and $\missing$ denotes a missing or unknown value (with $\missing \notin \Vc$). If $\Me_i[j] = \missing$ for some $j \in [n]$, it means node~$i$ lacks a value for the $j$-th position. 
Over time, nodes may learn missing entries and thus reduce the number of $\missing$ symbols in their vectors. The objective of the protocol is for all honest nodes to eventually agree on a common output vector $\Me$, which may still contain missing values. 
The $\APVA$ protocol must satisfy the following properties:

\begin{itemize}
    \item {\bf Consistency:} If any honest node outputs a vector $\wv$, then every honest node eventually outputs the same vector $\wv$.

    \item {\bf Validity:} For any non-missing entry $\Me[j] \neq \missing$ in the output vector of an honest node, there must exist at least one honest node~$i$ such that $\Me_i[j] = \Me[j] \neq \missing$. In addition, the output vector must contain at least $n - t$ non-missing entries.

    \item {\bf Termination:} If all honest nodes have at least $n - t$ common non-missing entries in their input vectors, then all honest nodes eventually produce an output and terminate.
\end{itemize}
\end{definition} 
In this setting, we consider the alphabet to be $\Vc = \{1\}$ and represent missing values with $\missing = 0$.
We use the efficient $\APVA$ protocol proposed in \cite{ChenOciorABA:25}, which achieves $\APVA$ consensus with an expected communication complexity of $O(n^3 \log n)$ bits and an expected round complexity of $O(1)$ rounds.
The $\APVA$ protocol in \cite{ChenOciorABA:25} uses an expected constant number of common coins, which can be generated by efficient coin generation protocols, for example, the protocol in \cite{DDL+:24}, which has an expected communication cost of $O(\kappa n^3)$ and an expected round complexity of $O(1)$.

\subsection{Overview of $\OciorADKG$}   \label{sec:OverviewOciorADKG}

The proposed $\OciorADKG$ protocol is described in Algorithm~\ref{algm:OciorADKG}, while the $\OciorASHVSS$ protocol invoked by $\OciorADKG$ is described in Algorithm~\ref{algm:OciorASHVSS}.  
In addition, $\OciorADKG$ invokes the $\APVA$ protocol proposed in~\cite{ChenOciorABA:25}.  
Fig.~\ref{fig:OciorADKG} presents a block diagram of the proposed $\OciorADKG$ protocol.  
The protocol consists of four phases: (i) $\ASHVSS$ phase, (ii) $\APVA$ selection phase, (iii) witness reconstruction phase, and (iv) key derivation phase.  
The main steps of $\OciorADKG$ are outlined below.

\subsubsection{$\OciorASHVSS$}  
The goal of this phase is to allow each honest Node~$i$ to distribute shares of a random secret $\skinput^{(i)}$ that it generates, for $i \in [n]$.  
The final secret corresponds to the agreed-upon set of secrets generated by the distributed nodes.  
Each honest node executes the following steps:  

\begin{itemize}
\item Each node~$\thisnodeindex$, $\thisnodeindex \in [n]$, randomly samples a secret $\skinput^{(\thisnodeindex)} \in \FieldZ_{\FiniteFieldSize}$ and runs $\OciorASHVSS[\ltuple \IDProtocol, \thisnodeindex\rtuple](\skinput^{(\thisnodeindex)})$.
\item In parallel, Node~$\thisnodeindex$ also participates in $\OciorASHVSS[\ltuple \IDProtocol, j\rtuple]$, $\forall j \in [n] \setminus \{\thisnodeindex\}$, along with the other nodes.
\end{itemize}

In our protocol, we employ the $\SHPC$ scheme.  
Specifically, instead of directly sharing $\skinput$, Node~$i$ shares the hidden secret
\[
\skshareAddWitness = \skinput + \HashZ(\CommitPolySecret),
\]
where $\CommitPolySecret$ is a randomly generated witness.  
Each node receives its share of $\skshareAddWitness$ and the corresponding polynomial commitments, as well as its share of the witness $\CommitPolySecret$ and the associated hash-based polynomial commitments.  

 \begin{figure} 
\centering
\includegraphics[width=17.5cm]{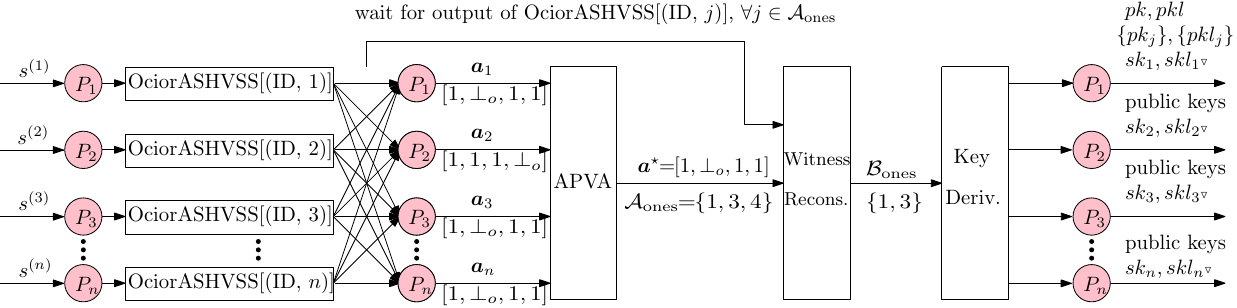}
\caption{A block diagram of the proposed $\OciorADKG$ protocol with an identifier $\IDProtocol$, where $\ABAOneSet:=\{j \in [n] \mid \APBVAoutputNew[j] = 1\}$ and $\BOneSet$ is obtained from $\ABAOneSet$ by removing the indices of nodes that did not  honestly share their secrets in the $\OciorASHVSS$ phase.  In the description, we focus on the example with $n=4$ and $t=1$. 
}
\label{fig:OciorADKG}
\end{figure}

\subsubsection{$\APVA$}
The goal of this phase is to agree on the set of $\OciorASHVSS[\ltuple \IDProtocol, j\rtuple]$ instances that correctly distribute shares of random secrets $\skinput^{(j)}$ generated by Node~$j$, for $j \in [n]$.  
Due to the $\SHPC$ scheme, agreement on the selected set is completely independent of the values of the secrets $\skinput^{(j)}$.  
In particular, the adaptive adversary cannot infer the secrets of honest nodes in order to bias or manipulate the selection set.  
Each honest node executes the following steps:  
\begin{itemize}
\item Each honest node waits for $\APVA[\IDProtocol]$ to output $\APBVAoutputNew$ such that $\sum_{j \in [n]} \APBVAoutputNew[j] \geq n-t$, and then sets
\[
\ABAOneSet := \{ j \in [n] \mid \APBVAoutputNew[j] = 1 \}.
\]
Here, $\APBVAoutputNew[j] = 1$ indicates that Node~$j$ has correctly shared its secret.  

\item Each honest node then waits for $\OciorASHVSS[\ltuple \IDProtocol, \AVSSindex\rtuple]$ to output  
\[(\{\CommitmentSymbol_{j}^{(\AVSSindex)} \}_{j\in [0, n]},   \{ \LTSCommitmentSymbol_{j}^{(\AVSSindex)}\}_{j\in [n]},   \{ \HashCommit_{j}^{(\AVSSindex)}\}_{j\in [n]},    \skshareAddWitness_{\thisnodeindex}^{(\AVSSindex)},   \skshareLTS_{\thisnodeindexshuffle}^{(\AVSSindex)},    \CommitPolySecret_{\thisnodeindex}^{(\AVSSindex)})\]  
for every $\AVSSindex \in \ABAOneSet$.  
Here $\thisnodeindexshuffle$ denotes the index of Node~$\thisnodeindex$ in the $\LTS$ scheme after index shuffling.  

The following notions are defined:  
\begin{itemize}
\item $\CommitmentSymbol_{j}^{(\AVSSindex)}$: the $j$-th evaluation of the polynomial commitment of the hidden secret
\[
\skshareAddWitness^{(\AVSSindex)} = \skinput^{(\AVSSindex)} + \HashZ(\CommitPolySecret^{(\AVSSindex)})
\]
shared by Node~$\AVSSindex$, for the $\TS$ scheme.  

\item $\LTSCommitmentSymbol_{j}^{(\AVSSindex)}$: the $j$-th commitment of $\skshareAddWitness^{(\AVSSindex)}$ for the $\LTS$ scheme.  

\item $\HashCommit_{j}^{(\AVSSindex)}$: the $j$-th commitment of the polynomial for the witness $\CommitPolySecret^{(\AVSSindex)}$ generated by Node~$\AVSSindex$. Specifically, during the $\OciorASHVSS$ phase,  Node~$\AVSSindex$ samples a $t$-degree random polynomial $\PolynomialFunctionNew^{(\AVSSindex)}(x) \in \FieldZ_{\FiniteFieldSize}[x]$ with $\PolynomialFunctionNew^{(\AVSSindex)}(0) = \CommitPolySecret^{(\AVSSindex)}$.  
It then computes the shares
\[
\CommitPolySecret_{j}^{(\AVSSindex)} := \PolynomialFunctionNew^{(\AVSSindex)}(j), \quad \forall j \in [n],
\]
and the corresponding commitments
\[
\HashCommit_{j}^{(\AVSSindex)} := \HashZ(\CommitPolySecret_{j}^{(\AVSSindex)}).
\]

\item $\skshareAddWitness_{\thisnodeindex}^{(\AVSSindex)}$: the $i$-th share of $\skshareAddWitness^{(\AVSSindex)}$ for the $\TS$ scheme.  

\item $\skshareLTS_{\thisnodeindexshuffle}^{(\AVSSindex)}$: the $\thisnodeindexshuffle$-th share of $\skshareAddWitness^{(\AVSSindex)}$ for the $\LTS$ scheme.  

\item $\CommitPolySecret_{\thisnodeindex}^{(\AVSSindex)}$: the $i$-th share of the witness $\CommitPolySecret^{(\AVSSindex)}$.  
\end{itemize}
\end{itemize}

\subsubsection{Witness Reconstruction}
In this phase, each honest Node~$\thisnodeindex$ sends its shares $\{(\AVSSindex, \CommitPolySecret_{\thisnodeindex}^{(\AVSSindex)})\}_{\AVSSindex \in \ABAOneSet}$ to all other nodes.  
This exchange enables reconstruction of the witness $\CommitPolySecret^{(\AVSSindex)}$ for all $\AVSSindex \in \ABAOneSet$.  
Since at least $t+1$ honest nodes provide correct shares matched with   publicly available commitments, each honest node can reconstruct every $\CommitPolySecret^{(\AVSSindex)}$.  
If $\CommitPolySecret^{(\AVSSindex)}$ matches all of its commitments, then all honest nodes consistently accept it and add $\AVSSindex$ to the set $\BOneSet$.  
It is guaranteed that $|\BOneSet| \geq t+1$, ensuring that at least one honest node contributes to the final secret.  

\subsubsection{Key Derivation}
In this phase each honest Node~$\thisnodeindex$ derives the keys as follows: 
 \[\sk_{\thisnodeindex} = \sum_{\AVSSindex\in \BOneSet}( \skshareAddWitness_{\thisnodeindex}^{(\AVSSindex)} -\HashZ(\CommitPolySecret^{(\AVSSindex)})); \quad  \skl_{\thisnodeindexshuffle} = \sum_{\AVSSindex\in \BOneSet}(\skshareLTS_{\thisnodeindexshuffle}^{(\AVSSindex)} -\HashZ(\CommitPolySecret^{(\AVSSindex)})) \]
\[\pk_{j} =  \prod_{\AVSSindex\in \BOneSet} \CommitmentSymbol^{(\AVSSindex)}_{j} \cdot \randomgenerator^{-\HashZ(\CommitPolySecret^{(\AVSSindex)})},  \ \forall j \in [0, n]\]   	 
\[\pkl_{j} =  \prod_{\AVSSindex\in \BOneSet} \LTSCommitmentSymbol^{(\AVSSindex)}_{j}\cdot \randomgenerator^{-\HashZ(\CommitPolySecret^{(\AVSSindex)})},  \ \forall j \in [n]\]  		
\[ \pk= \pk_{0}; \quad   \pkl= \pk_{0}.\]	 				
 Finally Node~$\thisnodeindex$ outputs   
 \[(\sk_{\thisnodeindex}, \skl_{\thisnodeindexshuffle},    \pk, \pkl, \{\pk_{j}\}_{j\in [n]}, \{\pkl_{j}\}_{j\in [n]}).\]	   	
 
The proposed $\OciorADKG$ protocol is adaptively secure under the algebraic group model and the hardness assumption of the one-more discrete logarithm problem.  
In this work, we focus on describing the proposed $\OciorADKG$ protocol and the introduced primitives, while leaving detailed proofs to the extended version of this paper.

\begin{algorithm}
\caption{$\OciorASHVSS$ protocol, with an  identifier $\IDProtocol$. Code is shown for Node~$\thisnodeindex \in [n]$.} \label{algm:OciorASHVSS}   
\begin{algorithmic}[1] 
\vspace{5pt}    
\footnotesize   
 
\Statex   \emph{//   ** This  can serve as  the $\ASHVSS$ protocol for the  $\TS$  scheme only  by simply removing the parts associated with $\LTS$   **} 
 
 \Statex
\Statex   \emph{//   ***** Public and Private Parameters     *****}

 \Statex {\bf Public Parameters:} $\pp = (\Group,  \randomgenerator)$;   $\{ \pkpki_j\}_{j\in [n]}$, $\{ \ek_j\}_{j\in [n]}$, $(n, \TSthreshold)$ for $\TS$ scheme; and  $(n, \TSthreshold, \ltslayerMax, \{n_{\ltslayer}, \TSthreshold_{\ltslayer}, \ltslayerTotalNodes_{\ltslayer}\}_{\ltslayer=1}^{\ltslayerMax})$ for $\LTS$ scheme under the constraints:  $n = \prod_{\ltslayer=1}^{\ltslayerMax} n_{\ltslayer}$,  $\prod_{\ltslayer=1}^{\ltslayerMax} \TSthreshold_{\ltslayer} \geq \TSthreshold$, and $\ltslayerTotalNodes_{\ltslayer} := \prod_{\ltslayer'=1}^{\ltslayer} n_{\ltslayer'}$ for each  $\ltslayer \in [\ltslayerMax]$,  with $\ltslayerTotalNodes_{0} := 1$,  for some $\TSthreshold\in [t+1, n-t]$, and  $n\geq 3t+1$.  $\AVSSVDelay$ is a preset delay parameter.   $\LTSIndexBook$ is a book of  index mapping for $\LTS$ scheme, available at all nodes.     
\Statex Set $\PolyDegree:= \TSthreshold - 1$ and  $\PolyDegree_{\ltslayer} := \TSthreshold_{\ltslayer} - 1$ for $\ltslayer \in [\ltslayerMax]$.    
  Set     $\jshuffle:= \LTSIndexBook[\ltuple \IDProtocol, j \rtuple], \forall j \in [n]$.

 \Statex {\bf Private Inputs:}   $\skpki_i$ and $\dk_i$     
\State  initially set $\AVSSValidSet\gets \{\}$

 \Statex
 
\Statex   \emph{//   *****  Code run by Dealer~$\Dealer$  with input $\skinput$  *****}

\Statex   \emph{//   ***** for the $\TS$ scheme   *****}

\State  sample  a   $\PolyDegree$-degree random polynomial $\PolynomialFunction(\cdot) \in \FieldZ_{\FiniteFieldSize}[x]$  with $\PolynomialFunction(0)= \skinput$, where $\skinput \in \FieldZ_{\FiniteFieldSize}$ is an input secret    
 
\State randomly generate a witness $\CommitPolySecret \in  \FieldZ_{\FiniteFieldSize}$ such that $\CommitPolySecret\neq \defaultvalue$      \label{line:OciorASHVSSWitnessBegin}

\State   $ \CommitmentVector:=[\CommitmentSymbol_0,\CommitmentSymbol_1, \dotsc, \CommitmentSymbol_n] \gets \SPCCommit( \PolynomialFunction(\cdot), \CommitPolySecret, \PolyDegree, n) $   
 
 \State   $(\HashCommitVector:=[\HashCommit_1,\HashCommit_2, \dotsc, \HashCommit_n], \WitnessVectorR:=[\CommitPolySecret_1,\CommitPolySecret_2, \dotsc, \CommitPolySecret_n]) \gets \SPCWitnessCommit(   \CommitPolySecret, t, n) $

\State   $\skshareAddWitness_{j}  \gets \SPCOpen( \PolynomialFunction(\cdot), \CommitPolySecret, j)$, $\forall j \in [n]$

\State   $ \CommitPolySecret_{j} \gets \SPCWitnessOpen( \WitnessVectorR, j)$, $\forall j \in [n]$

  \Statex		
 \Statex   \emph{//   ***** for the $\LTS$ scheme   *****}

  \State		sample  $\PolyDegree_{\ltslayer}$-degree  random polynomials   $\LTSPolynomialFunction_{\ltslayer, \ltsgroup} (\cdot)\in \FieldZ_{\FiniteFieldSize}[x]$  for each $ \ltslayer\in[\ltslayerMax]$ and  $\ltsgroup \in [\ltslayerTotalNodes_{\ltslayer-1}]$, such that:    
  \[\LTSPolynomialFunction_{1,1}(0) =\skinput ,   \quad \text{and}\quad  \LTSPolynomialFunction_{\ltslayer,\ltsgroup}(0)  =\LTSPolynomialFunction_{\ltslayer -1, \ParentIndex_{\ltslayer-1, \ltsgroup} }(\IndexInParentGroup_{\ltslayer-1, \ltsgroup}), \quad  \forall  \ltslayer\in[2, \ltslayerMax], \ltsgroup \in [\ltslayerTotalNodes_{\ltslayer-1}] \] 
   where       $\ParentIndex_{\ltslayer, \ltsgroup}:=\lceil \ltsgroup/ n_{\ltslayer} \rceil $,   $\IndexInParentGroup_{\ltslayer, \ltsgroup}:=\ltsgroup - (\lceil \ltsgroup/ n_{\ltslayer} \rceil -1)  n_{\ltslayer}$

\State   $\LTSCommitmentVector_{\ltslayer, \ltsgroup} :=[\CommitmentSymbol_{\ltslayer, \ltsgroup,0},\CommitmentSymbol_{\ltslayer, \ltsgroup,1}, \dotsc, \CommitmentSymbol_{\ltslayer, \ltsgroup, n_{\ltslayer}}]  \gets \SPCCommit( \LTSPolynomialFunction_{\ltslayer, \ltsgroup} (\cdot), \CommitPolySecret, \PolyDegree_{\ltslayer}, n_{\ltslayer}) $,  $ \forall \ltslayer\in[\ltslayerMax]$,  $\forall\ltsgroup \in [\ltslayerTotalNodes_{\ltslayer-1}]$

\State    set $\skshareLTS_{j} := \LTSPolynomialFunction_{\ltslayerMax,    \ParentIndex_{\ltslayerMax, j}} (\IndexInParentGroup_{\ltslayerMax, j})  + \HashZ(\CommitPolySecret)$,   and  $\LTSCommitmentSymbol_{j}  :=\LTSCommitmentSymbol_{\ltslayerMax, \ParentIndex_{\ltslayerMax, j}, \IndexInParentGroup_{\ltslayerMax, j}}$,    $\forall    j\in [n]$   

 \Statex
		
\State   $\send$ $\ltuple \SHARE, \IDProtocol, \CommitmentSymbol_j,  \skshareAddWitness_{j},    \HashCommit_j,      \CommitPolySecret_{j}, \LTSCommitmentSymbol_{\jshuffle}, \skshareLTS_{\jshuffle}  \rtuple$  to  Node~$j$, $\forall    j\in [n]$

\State {\bf upon}  receiving  $\ltuple \ACK, \IDProtocol, \digitalsig_j  \rtuple$  from Node~$j$ for the first time  {\bf do}:         \label{line:AVSSRxFeedbacks}
\Indent

    		\If {$\PKIVerify( \pkpki_j,  \partialsig_j, \Hash\ltuple \CommitmentSymbol_{j}, \HashCommit_{j}, \LTSCommitmentSymbol_{\jshuffle}\rtuple ) =\true$  }   
			\State   $\AVSSValidSet\gets \AVSSValidSet\cup\{\ltuple j,  \partialsig_j\rtuple \}$
 	
		\EndIf

\EndIndent

\State {\bf upon}  $|\AVSSValidSet| = n-t$   {\bf do}:       
\Indent
  
				\State  $\wait$ for $\AVSSVDelay$ time     \quad   \emph{//  to allow more valid signatures to be included, if possible, within the limited delay time}
				\State  update $\AVSSValidSet$ to include all valid signatures received from distinct nodes			 
				\State  let $\AVSSValidSetIndex:= \{j \in [n] \mid (j, \digitalsig_j) \in \AVSSValidSet\}$    and  $\AVSSMissSetIndex:=[n]\setminus \AVSSValidSetIndex$    
				\State  $(\CiphertextVector^{(0)},  \NIZKProofVector^{(0)}) \gets \VEbEncProve( \AVSSMissSetIndex,  \{ \ek_j\}_{j\in \AVSSMissSetIndex},  \{ \skshareAddWitness_j\}_{j\in \AVSSMissSetIndex},  \{ \CommitmentSymbol_j\}_{j\in \AVSSMissSetIndex} )$		  	 
 				\State  $(\CiphertextVector^{(1)},  \NIZKProofVector^{(1)}) \gets \VEbEncProve( \AVSSMissSetIndex,  \{ \ek_j\}_{j\in \AVSSMissSetIndex},  \{ \skshareLTS_{\jshuffle}\}_{j\in \AVSSMissSetIndex},  \{ \LTSCommitmentSymbol_{\jshuffle}\}_{j\in \AVSSMissSetIndex} )$	 
    				\State   $\broadcast$ $\ltuple \RBCSEND, \IDProtocol, \CommitmentVector,   \{\LTSCommitmentVector_{\ltslayer, \ltsgroup} \}_{\ltslayer\in[\ltslayerMax], \ltsgroup \in [\ltslayerTotalNodes_{\ltslayer-1}]},    \HashCommitVector,  \AVSSMissSetIndex, \AVSSValidSet$, $\{\CiphertextVector^{(\AVSSindex)},  \NIZKProofVector^{(\AVSSindex)} \}_{\AVSSindex\in \{0,1\}} \rtuple$    with  a $\RBC$     \label{line:AVSSRBC}

\EndIndent

 \Statex
  
\Statex   \emph{//   *****  Code run by each node  *****}   
 		
\State {\bf upon}  receiving  $\ltuple \SHARE, \IDProtocol,  \CommitmentSymbol_{\thisnodeindex},  \skshareAddWitness_{\thisnodeindex}, \HashCommit_{\thisnodeindex},      \CommitPolySecret_{\thisnodeindex},  \LTSCommitmentSymbol_{\thisnodeindexshuffle}, \skshareLTS_{\thisnodeindexshuffle}   \rtuple$  from Dealer~$\Dealer$ for the first time {\bf do}:

\Indent

    		\If {$\SPCVerify( \CommitmentSymbol_{\thisnodeindex}, \skshareAddWitness_{\thisnodeindex}, \HashCommit_{\thisnodeindex},    \CommitPolySecret_{\thisnodeindex},  \thisnodeindex ) =\true$ and $\SPCVerify( \LTSCommitmentSymbol_{\thisnodeindexshuffle}, \skshareLTS_{\thisnodeindexshuffle}, \HashCommit_{\thisnodeindex},    \CommitPolySecret_{\thisnodeindex},  \thisnodeindex ) =\true$ }   
			\State  $\digitalsig_\thisnodeindex \gets \PKISign(\skpki_\thisnodeindex, \Hash\ltuple \CommitmentSymbol_{\thisnodeindex}, \HashCommit_{\thisnodeindex}, \LTSCommitmentSymbol_{\thisnodeindexshuffle}\rtuple)$   
   			\State   $\send$ $\ltuple \ACK, \IDProtocol, \digitalsig_\thisnodeindex  \rtuple$  to  Dealer~$\Dealer$   
		\EndIf
		
\EndIndent

\State {\bf upon}  outputting   $\ltuple \RBCSEND, \IDProtocol, \CommitmentVector,   \{\LTSCommitmentVector_{\ltslayer, \ltsgroup} \}_{\ltslayer\in[\ltslayerMax], \ltsgroup \in [\ltslayerTotalNodes_{\ltslayer-1}]},   \HashCommitVector, \AVSSMissSetIndex, \AVSSValidSet, \{\CiphertextVector^{(\AVSSindex)},  \NIZKProofVector^{(\AVSSindex)} \}_{\AVSSindex \in \{0,1\}} \rtuple$  from a  $\RBC$ {\bf do}:         \label{line:AVSSRBCcheckBegin}
\Indent
		\State   parse  $\CommitmentVector$  as  $\bigl[ \CommitmentSymbol_0, \CommitmentSymbol_1,  \dotsc, \CommitmentSymbol_n   \bigr]$;    
		   \  parse  $\LTSCommitmentVector_{\ltslayer, \ltsgroup}$ as $\bigl[ \LTSCommitmentSymbol_{\ltslayer, \ltsgroup,0}, \LTSCommitmentSymbol_{\ltslayer, \ltsgroup,1},  \dotsc, \LTSCommitmentSymbol_{\ltslayer, \ltsgroup, n_{\ltslayer}}  \bigr]$, $\forall \ltslayer\in[\ltslayerMax], \ltsgroup \in [\ltslayerTotalNodes_{\ltslayer-1}]$		
 		\State   parse  $\HashCommitVector$  as  $\bigl[ \HashCommit_1,  \HashCommit_2,  \dotsc, \HashCommit_n   \bigr]$;   \  parse  $\AVSSValidSet$  as  $\{(j,\digitalsig_j)\}_j$;  \   parse  $\CiphertextVector^{(\AVSSindex)}$  as  $\{(j,\CiphertextSymbol_{j}^{(\AVSSindex)})\}_j$ for $\AVSSindex\in \{0,1\}$, 
 		\State         set   $\LTSCommitmentSymbol_{j}  :=\LTSCommitmentSymbol_{\ltslayerMax, \ParentIndex_{\ltslayerMax, j}, \IndexInParentGroup_{\ltslayerMax, j}}$,    $\forall    j\in [n]$   
 		\State   $\checkNew$  $\PCDegCheck( \CommitmentVector, \PolyDegree) = \true$  \label{line:AVSSDegreecheck}
 		\State   $\checkNew$  $\PCDegCheck( \LTSCommitmentVector_{\ltslayer, \ltsgroup}, \PolyDegree_{\ltslayer}) = \true$, $\forall  \ltslayer\in[\ltslayerMax]$ and  $\ltsgroup \in [\ltslayerTotalNodes_{\ltslayer-1}]$    \label{line:AVSSDegreecheckLTS}

		\State    $\checkNew$    $\PKIVerify( \pkpki_j,  \partialsig_j, \Hash\ltuple \CommitmentSymbol_{j}, \HashCommit_{j}, \LTSCommitmentSymbol_{\jshuffle}\rtuple ) =\true$,    $\forall j \in \AVSSValidSetIndex  :=[n]\setminus \AVSSMissSetIndex$ 
		\State    $\checkNew$     $\VEbVerify( \AVSSMissSetIndex,  \{ \ek_j\}_{j\in \AVSSMissSetIndex},    \{ \CommitmentSymbol_j\}_{j\in \AVSSMissSetIndex},  \CiphertextVector^{(0)},  \NIZKProofVector^{(0)}) = \true$

		\State    $\checkNew$     $\VEbVerify( \AVSSMissSetIndex,  \{ \ek_j\}_{j\in \AVSSMissSetIndex},    \{ \LTSCommitmentSymbol_{\jshuffle}\}_{j\in \AVSSMissSetIndex},  \CiphertextVector^{(1)},  \NIZKProofVector^{(1)}) = \true$   \label{line:AVSSVEVerify2}

		\State   $\checkNew$ 				 $\CommitmentSymbol_0  =   \LTSCommitmentSymbol_{1, 1,0}$;   and $\LTSCommitmentSymbol_{\ltslayer,\ltsgroup,0}  =\LTSCommitmentSymbol_{\ltslayer -1, \ParentIndex_{\ltslayer-1, \ltsgroup}, \IndexInParentGroup_{\ltslayer-1, \ltsgroup}}$,  $\forall  \ltslayer\in[2, \ltslayerMax], \ltsgroup \in [\ltslayerTotalNodes_{\ltslayer-1}]$

    		\If {all the above \emph{parallel} checks pass  }   
			\If {this node $\thisnodeindex \in \AVSSMissSetIndex$}
				\State     $\skshareAddWitness_{\thisnodeindex} \gets \VEbDec( \dk_{\thisnodeindex},   \CiphertextSymbol_{\thisnodeindex}^{(0)})$;    \      $\skshareLTS_{\thisnodeindexshuffle} \gets \VEbDec( \dk_{\thisnodeindex},   \CiphertextSymbol_{\thisnodeindex}^{(1)})$; \  			$\CommitPolySecret_{\thisnodeindex}\gets \defaultvalue$	  
			\EndIf  
			\State    $\Output$   $(\{\CommitmentSymbol_{j}\}_{j\in [0,n]},   \{ \LTSCommitmentSymbol_{j}\}_{j\in [n]},   \{ \HashCommit_{j}\}_{j\in [n]},    \skshareAddWitness_{\thisnodeindex},   \skshareLTS_{\thisnodeindexshuffle},    \CommitPolySecret_{\thisnodeindex}  )$ and $\return$        \label{line:AVSSTerminate}
		\EndIf
\EndIndent

\end{algorithmic}
\end{algorithm}

\begin{algorithm}
\caption{$\OciorADKG$ protocol, with an identifier $\IDProtocol$.   Code is shown for Node~$\thisnodeindex \in [n]$. } \label{algm:OciorADKG}   
\begin{algorithmic}[1] 
\vspace{5pt}    
\footnotesize   

\Statex   \emph{//   ***** Public and Private Parameters     *****}

 \Statex {\bf Public Parameters:} $\pp = (\Group,  \randomgenerator)$;   $\{ \pkpki_j\}_{j\in [n]}$, $\{ \ek_j\}_{j\in [n]}$, $(n, \TSthreshold)$ for $\TS$ scheme; and  $(n, \TSthreshold, \ltslayerMax, \{n_{\ltslayer}, \TSthreshold_{\ltslayer}, \ltslayerTotalNodes_{\ltslayer}\}_{\ltslayer=1}^{\ltslayerMax})$ for $\LTS$ scheme under the constraints:  $n = \prod_{\ltslayer=1}^{\ltslayerMax} n_{\ltslayer}$,  $\prod_{\ltslayer=1}^{\ltslayerMax} \TSthreshold_{\ltslayer} \geq \TSthreshold$, and $\ltslayerTotalNodes_{\ltslayer} := \prod_{\ltslayer'=1}^{\ltslayer} n_{\ltslayer'}$ for each  $\ltslayer \in [\ltslayerMax]$,  with $\ltslayerTotalNodes_{0} := 1$,  for some $\TSthreshold\in [t+1, n-t]$, and  $n\geq 3t+1$.  $\AVSSVDelay$ is a preset delay parameter.   $\LTSIndexBook$ is a book of  index mapping for $\LTS$ scheme, available at all nodes.     
\Statex Set $\PolyDegree:= \TSthreshold - 1$ and  $\PolyDegree_{\ltslayer} := \TSthreshold_{\ltslayer} - 1$ for $\ltslayer \in [\ltslayerMax]$.    
Set $\missing:=0$. 
  Set     $\jshuffle:= \LTSIndexBook[\ltuple \IDProtocol, j \rtuple], \forall j \in [n]$.

 \Statex {\bf Private Inputs:}   $\skpki_i$ and $\dk_i$

   \State  Initially set     $\APBVAinputNew_i\gets [\missing]*n$;  $\WitnessShareValidSet\gets \{\}$; $\WitnessShareValidSet[j]\gets \{\}, \forall j\in [n]$;  $\BOneSet\gets \{\}$,  $\MarkSet\gets \{\}$

\Statex   \emph{//   ***********************************************************************************  } 
\Statex   \emph{//   ***** $\OciorASHVSS$ Phase     *****} 
\Statex   \emph{//   ***********************************************************************************  }

 \State randomly sample a secret $\skinput \in \FieldZ_{\FiniteFieldSize}$  	
 		
 \State run $\OciorASHVSS[\ltuple \IDProtocol, \thisnodeindex\rtuple](\skinput)$  	\quad \quad    \emph{//  also run  $\OciorASHVSS[\ltuple \IDProtocol, j\rtuple],  \forall j\in [n]\setminus \{\thisnodeindex\}$ in parallel with all other nodes}

 \Statex
 
\Statex   \emph{//   ***********************************************************************************  } 
\Statex   \emph{//   ***** $\APVA$ Selection Phase     *****} 
\Statex   \emph{//   ***********************************************************************************  }

\State {\bf upon}  $\OciorASHVSS[\ltuple \IDProtocol, j\rtuple]$ outputting  values, for $j\in [n]$  {\bf do}:     
	\Indent  
 
		\State  input $\APBVAinputNew_{\thisnodeindex}[j] = 1$ to $\APVA[\IDProtocol]$
	\EndIndent 

 \State  $\wait$ for  $\APVA[\IDProtocol]$  to output $\APBVAoutputNew$  such that $\sum_{j\in[n]}\APBVAoutputNew[j] \geq n-t$; and then let $\ABAOneSet:=\{j \in [n] \mid \APBVAoutputNew[j] = 1\}$        		    \label{line:keyderivationAPVAoutput}

\State  $\wait$ for $\OciorASHVSS[\ltuple \IDProtocol, \AVSSindex\rtuple]$ to output  $(\{\CommitmentSymbol_{j}^{(\AVSSindex)} \}_{j\in [0, n]},   \{ \LTSCommitmentSymbol_{j}^{(\AVSSindex)}\}_{j\in [n]},   \{ \HashCommit_{j}^{(\AVSSindex)}\}_{j\in [n]},    \skshareAddWitness_{\thisnodeindex}^{(\AVSSindex)},   \skshareLTS_{\thisnodeindexshuffle}^{(\AVSSindex)},    \CommitPolySecret_{\thisnodeindex}^{(\AVSSindex)})$,   $\forall \AVSSindex \in \ABAOneSet $   

\Statex

\Statex   \emph{//   ***********************************************************************************  } 
\Statex   \emph{//   ***** Witness Reconstruction Phase     *****} 
\Statex   \emph{//   ***********************************************************************************  } 

\State   $\send$ $\ltuple \WitnessSHARE, \IDProtocol,  \{(\AVSSindex,  \CommitPolySecret_{\thisnodeindex}^{(\AVSSindex)})\}_{\AVSSindex \in \ABAOneSet}  \rtuple$  to  all nodes

\State {\bf upon}  receiving $\ltuple \WitnessSHARE, \IDProtocol,  \{(\AVSSindex,  \CommitPolySecret_{j}^{(\AVSSindex)})\}_{\AVSSindex \in \ABAOneSet}  \rtuple$   from Node~$j$ for the first time {\bf do}:    
\Indent		
 
 \For {$\AVSSindex\in  \ABAOneSet$}   
 	\If { $ \HashZ(\CommitPolySecret_{j}^{(\AVSSindex)}) =  \HashCommit_{j}^{(\AVSSindex)}$}
		\State   $\WitnessShareValidSet[\AVSSindex]\gets \WitnessShareValidSet[\AVSSindex]\cup\{(j, \CommitPolySecret_{j}^{(\AVSSindex)})\}$
		 \If {$(|\WitnessShareValidSet[\AVSSindex]| \geq t+1)\AND (\AVSSindex\notin \MarkSet)$}
		 		\State $\CommitPolySecret^{(\AVSSindex)}  \gets \SPCWitnessReconstruct( \{(j, \HashCommit_{j})\}_{j\in[n]},    \WitnessShareValidSet[\AVSSindex]) $
		 		\State $\MarkSet  \gets \MarkSet\cup\{\AVSSindex\}  $
		 		\If {$\CommitPolySecret^{(\AVSSindex)}  \neq \defaultvalue$}
		 			  $\BOneSet\gets \BOneSet\cup\{\AVSSindex\}$
		 		\EndIf
		 \EndIf
	\EndIf
\EndFor

\EndIndent						

\State  $\wait$ for   $|\MarkSet| = |\ABAOneSet|$

\Statex

\Statex   \emph{//   ***********************************************************************************  } 
\Statex   \emph{//   ***** Key Derivation Phase     *****} 
\Statex   \emph{//   ***********************************************************************************  }

		\State  $\sk_{\thisnodeindex} \gets \sum_{\AVSSindex\in \BOneSet}( \skshareAddWitness_{\thisnodeindex}^{(\AVSSindex)} -\HashZ(\CommitPolySecret^{(\AVSSindex)}))$   

		\State  $\skl_{\thisnodeindexshuffle} \gets \sum_{\AVSSindex\in \BOneSet}(\skshareLTS_{\thisnodeindexshuffle}^{(\AVSSindex)} -\HashZ(\CommitPolySecret^{(\AVSSindex)}))$

		\State    $\pk_{j} \gets  \prod_{\AVSSindex\in \BOneSet} \CommitmentSymbol^{(\AVSSindex)}_{j} \cdot \randomgenerator^{-\HashZ(\CommitPolySecret^{(\AVSSindex)})}$,  \ $\forall j \in [0, n]$   	 

		\State    $\pkl_{j} \gets  \prod_{\AVSSindex\in \BOneSet} \LTSCommitmentSymbol^{(\AVSSindex)}_{j}\cdot \randomgenerator^{-\HashZ(\CommitPolySecret^{(\AVSSindex)})}$,  \ $\forall j \in [n]$    		
				
		\State    $\pk\gets \pk_{0}$;   $\pkl\gets \pk_{0}$   	 				
 
 		\State    $\Output$   $(\sk_{\thisnodeindex}, \skl_{\thisnodeindexshuffle},    \pk, \pkl, \{\pk_{j}\}_{j\in [n]}, \{\pkl_{j}\}_{j\in [n]})$

\Statex

\end{algorithmic}
\end{algorithm}

\appendices

\section{Lagrange Coefficient Sum Lemma}  \label{sec:degreecheck}

\begin{lemma}  \label{lm:degreecheck}
Let $\IndexSet = \{x_1, x_2, \dots, x_{\NumPoints}\} \subset \FieldZ_{\FiniteFieldSize}$  be a set of  $\NumPoints:=|\IndexSet|$ distinct points, and let $f(x) \in \FieldZ_{\FiniteFieldSize}[x]$ be a polynomial such that $\deg(f) \leq \NumPoints - 2$. Define the Lagrange coefficient at each point $x_i \in \IndexSet$ as:
\begin{align}
\LagrangeCoefficient_i := \frac{1}{\prod\limits_{j \in [\NumPoints], j \ne i} (x_i - x_j)}.   \label{eq:LagrangeCoefficient}
\end{align}
Then, the following identity holds true:
\[
\sum_{i=1}^{\NumPoints} f(x_i) \cdot \LagrangeCoefficient_i = 0.
\]
\end{lemma}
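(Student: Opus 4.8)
The plan is to derive the identity directly from the uniqueness of polynomial interpolation over the field $\FieldZ_{\FiniteFieldSize}$, by comparing the coefficients of the top-degree monomial on the two sides of the Lagrange interpolation formula. This is the shortest self-contained route and avoids any appeal to analysis (divided differences, mean-value forms), which would be inappropriate over a finite field.

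First I would recall the Lagrange basis polynomials associated with the distinct nodes $x_1, x_2, \dots, x_{\NumPoints} \in \FieldZ_{\FiniteFieldSize}$:
\[
L_i(x) := \prod_{j \in [\NumPoints],\, j \ne i} \frac{x - x_j}{x_i - x_j}, \qquad i \in [\NumPoints].
\]
Since the $x_i$ are distinct and $\FieldZ_{\FiniteFieldSize}$ is a field, each denominator $\prod_{j\ne i}(x_i - x_j)$ is a nonzero field element, so $L_i$ is well-defined; moreover $\deg L_i = \NumPoints - 1$ and $L_i(x_k) = \delta_{ik}$. Because $\deg f \le \NumPoints - 2 < \NumPoints$, the polynomial $f$ is uniquely determined by its values at the $\NumPoints$ nodes and therefore coincides with its own interpolant, giving the identity $f(x) = \sum_{i=1}^{\NumPoints} f(x_i)\, L_i(x)$.

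Next I would extract the coefficient of $x^{\NumPoints - 1}$ from each side of this identity. On the right-hand side, the leading coefficient of $L_i$ is exactly $1 / \prod_{j \ne i}(x_i - x_j) = \LagrangeCoefficient_i$, matching the definition in \eqref{eq:LagrangeCoefficient}; hence the coefficient of $x^{\NumPoints-1}$ on the right equals $\sum_{i=1}^{\NumPoints} f(x_i)\, \LagrangeCoefficient_i$. On the left-hand side, since $\deg f \le \NumPoints - 2$, the coefficient of $x^{\NumPoints-1}$ in $f$ is $0$. Equating the two coefficients yields $\sum_{i=1}^{\NumPoints} f(x_i)\,\LagrangeCoefficient_i = 0$, as claimed.

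I do not expect a genuine obstacle here; the proof is essentially routine. The only points meriting a sentence of care are that the interpolation and the divisions by $x_i - x_j$ are legitimate precisely because $\FieldZ_{\FiniteFieldSize}$ is a field and the nodes are distinct, and that the leading coefficient of $L_i$ is correctly identified as $\LagrangeCoefficient_i$. As a remark, one could instead prove the statement on the monomial basis $f(x)=x^k$ for $0 \le k \le \NumPoints-2$ (where $\sum_i x_i^k \LagrangeCoefficient_i$ is a complete homogeneous symmetric polynomial of negative index and thus vanishes) and extend by $\FieldZ_{\FiniteFieldSize}$-linearity, but the interpolation argument above is cleaner and requires no symmetric-function machinery.
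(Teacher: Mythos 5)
Your proof is correct, and it takes a genuinely different route from the paper's. The paper forms the rational function $f(x)/P(x)$ with $P(x)=\prod_{i=1}^{\NumPoints}(x-x_i)$, computes its partial fraction decomposition (identifying $A_i = f(x_i)\LagrangeCoefficient_i$ via $\LagrangeCoefficient_i = 1/P'(x_i)$), multiplies by $x$, and extracts the identity by letting $x\to\infty$ on both sides. You instead invoke uniqueness of interpolation to write $f = \sum_i f(x_i) L_i$ and compare coefficients of $x^{\NumPoints-1}$, which must vanish on the left because $\deg f \le \NumPoints-2$. The two arguments are close cousins, since extracting the top-degree coefficient of the interpolant is essentially computing the residue at infinity of $f/P$, but yours has a concrete advantage in this setting: it is purely algebraic and literally valid over $\FieldZ_{\FiniteFieldSize}$, whereas the paper's limits as $x\to\infty$ do not make literal sense over a finite field and must be reinterpreted (e.g., as degree comparisons in the rational function field, or by noting that an identity of rational functions in $x$ forces the polynomial parts and the proper-fraction parts to match separately). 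Your coefficient-comparison argument sidesteps that reinterpretation entirely, at the cost of no additional machinery, and your closing remark about reducing to monomials is a fine alternative but, as you say, unnecessary.
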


\begin{proof}
We will prove this lemma using the method of partial fraction decomposition. 
First, let $P(x)$ be the polynomial defined by the product of linear factors corresponding to the points in $\IndexSet$:
\[
P(x) = \prod_{\indexi=1}^{\NumPoints} (x - x_{\indexi}).
\]
It is true that the degree of $P(x)$ is $\NumPoints$.
The   derivative of $P(x)$ is expressed as 
\[
P'(x) = \sum_{\indexi=1}^{\NumPoints} \left( \prod_{\substack{j=1 \\ j \ne \indexi}}^{\NumPoints} (x - x_j) \right).
\]
By evaluating $P'(x)$ at a specific point $x_i \in \IndexSet$, we have
\begin{align}
P'(x_i) = \prod_{\substack{j=1 \\ j \ne i}}^{\NumPoints} (x_i - x_j).       \label{eq:evaluationPx}
\end{align}
Comparing \eqref{eq:evaluationPx} with the definition of the Lagrange coefficient $\LagrangeCoefficient_i$ in \eqref{eq:LagrangeCoefficient}, it holds true that
\[
\LagrangeCoefficient_i = \frac{1}{P'(x_i)}.
\]

Next, consider the rational function $R(x) = \frac{f(x)}{P(x)}$. Since   $\deg(f) \leq \NumPoints - 2$, and  $\deg(P) =\NumPoints$, it implies that as $x \to \infty$, the rational function $R(x)$ approaches 0:
\[
\lim_{x \to \infty} R(x) = \lim_{x \to \infty} \frac{f(x)}{P(x)} = 0.
\]
We can decompose the rational function $R(x)$ into partial fractions. Since the roots $x_1, x_2, \dots, x_{\NumPoints}$ of $P(x)$ are distinct, the partial fraction decomposition takes the form:
\begin{align}
\frac{f(x)}{P(x)} = \frac{f(x)}{\prod_{\indexi=1}^{\NumPoints} (x - x_{\indexi})} = \sum_{\indexi=1}^{\NumPoints} \frac{A_{\indexi}}{x - x_{\indexi}}.      \label{eq:partialFD}
\end{align}

By multiplying  both sides of the equation \eqref{eq:partialFD}  by $(x - x_{\indexi})$ and then taking the limit as $x \to x_{\indexi}$, we compute the coefficients $A_{\indexi}$ as:
\[
A_{\indexi} = \lim_{x \to x_{\indexi}} \frac{f(x)}{\prod_{\substack{j=1 \\ j \ne \indexi}}^{\NumPoints} (x - x_j)}.
\]
Since $f(x)$ is continuous and the denominator $\prod_{\substack{j=1 \\ j \ne \indexi}}^{\NumPoints} (x - x_j)$ is non-zero at $x=x_{\indexi}$, we can substitute $x=x_{\indexi}$:
\[
A_{\indexi} = \frac{f(x_{\indexi})}{\prod_{\substack{j=1 \\ j \ne \indexi}}^{\NumPoints} (x_{\indexi} - x_j)}.
\]
By the definition of $\LagrangeCoefficient_{\indexi}$, we have $A_{\indexi} = f(x_{\indexi}) \LagrangeCoefficient_{\indexi}$. 
By substituting $A_{\indexi}$ back into the partial fraction decomposition in \eqref{eq:partialFD}, we get: 
\begin{align}
\frac{f(x)}{P(x)} = \sum_{\indexi=1}^{\NumPoints} \frac{f(x_{\indexi}) \LagrangeCoefficient_{\indexi}}{x - x_{\indexi}}.   \label{eq:partialFDNew}
\end{align} 
Now, multiply both sides of  equation \eqref{eq:partialFDNew} by $x$, it gives 
\begin{align}
\frac{x f(x)}{P(x)} = \sum_{\indexi=1}^{\NumPoints} \frac{x f(x_{\indexi}) \LagrangeCoefficient_{\indexi}}{x - x_{\indexi}}.   \label{eq:partialFDNewX}
\end{align} 
We can rewrite the term on the right-hand side  of  equation \eqref{eq:partialFDNewX} as follows:
\[
\frac{x f(x_{\indexi}) \LagrangeCoefficient_{\indexi}}{x - x_{\indexi}} = \frac{(x - x_{\indexi} + x_{\indexi}) f(x_{\indexi}) \LagrangeCoefficient_{\indexi}}{x - x_{\indexi}} = \frac{(x - x_{\indexi}) f(x_{\indexi}) \LagrangeCoefficient_{\indexi}}{x - x_{\indexi}} + \frac{x_{\indexi} f(x_{\indexi}) \LagrangeCoefficient_{\indexi}}{x - x_{\indexi}} = f(x_{\indexi}) \LagrangeCoefficient_{\indexi} + \frac{x_{\indexi} f(x_{\indexi}) \LagrangeCoefficient_{\indexi}}{x - x_{\indexi}}.
\]
Then, the equation \eqref{eq:partialFDNewX} can be rewritten as:
\begin{align}
\frac{x f(x)}{P(x)} = \sum_{\indexi=1}^{\NumPoints} \left( f(x_{\indexi}) \LagrangeCoefficient_{\indexi} + \frac{x_{\indexi} f(x_{\indexi}) \LagrangeCoefficient_{\indexi}}{x - x_{\indexi}} \right) = \sum_{\indexi=1}^{\NumPoints} f(x_{\indexi}) \LagrangeCoefficient_{\indexi} + \sum_{\indexi=1}^{\NumPoints} \frac{x_{\indexi} f(x_{\indexi}) \LagrangeCoefficient_{\indexi}}{x - x_{\indexi}}.     \label{eq:partialFDNewXNew}
\end{align} 
In the following, let's take the limit as $x \to \infty$ for both sides of the equation \eqref{eq:partialFDNewXNew}. 
For the left-hand side of \eqref{eq:partialFDNewXNew},  given that  $\deg(f) \leq \NumPoints - 2$ and $\deg(P) = \NumPoints$,    the limit as $x \to \infty$ is 0:
\begin{align}
\lim_{x \to \infty} \frac{x f(x)}{P(x)} = 0.  \label{eq:limitLeft}
\end{align} 
Let us now look at the  second sum on the right-hand side of \eqref{eq:partialFDNewXNew}. 
For each term $\frac{x_{\indexi} f(x_{\indexi}) \LagrangeCoefficient_{\indexi}}{x - x_{\indexi}}$, as $x \to \infty$, the denominator $x - x_{\indexi}$ approaches $\infty$, while the numerator $x_{\indexi} f(x_{\indexi}) \LagrangeCoefficient_{\indexi}$ is a constant. Thus, each term $\frac{x_{\indexi} f(x_{\indexi}) \LagrangeCoefficient_{\indexi}}{x - x_{\indexi}}$ approaches 0:
\[
\lim_{x \to \infty} \frac{x_{\indexi} f(x_{\indexi}) \LagrangeCoefficient_{\indexi}}{x - x_{\indexi}} = 0.
\]
Therefore, second sum on the right-hand side of \eqref{eq:partialFDNewXNew} approaches 0:
\begin{align}
\lim_{x \to \infty} \sum_{\indexi=1}^{\NumPoints} \frac{x_{\indexi} f(x_{\indexi}) \LagrangeCoefficient_{\indexi}}{x - x_{\indexi}}   = 0.   \label{eq:limitRightSecond}
\end{align}  
By substituting the limits in \eqref{eq:limitLeft} and  \eqref{eq:limitRightSecond}  back into the   equation \eqref{eq:partialFDNewXNew}, we have :
\[
0 = \sum_{\indexi=1}^{\NumPoints} f(x_{\indexi}) \LagrangeCoefficient_{\indexi} + 0.
\]
This implies the desired identity:
\[
\sum_{\indexi=1}^{\NumPoints} f(x_{\indexi}) \LagrangeCoefficient_{\indexi} = 0
\]
which completes the proof.
\end{proof}



\end{document}